\crefname{hypothesis}{Hypothesis}{Hypotheses}
\title{Dynamic Hierarchical j-Tree Decomposition and Its Applications}
\date{}
\author{Gramoz Goranci\thanks{Faculty of Computer Science, University of Vienna, Austria.}
\and Monika Henzinger\thanks{Institute of Science and Technology Austria (ISTA), Klosterneuburg, Austria.}
\and Peter Kiss\footnotemark[1]
\and Ali Momeni\thanks{Faculty of Computer Science, UniVie Doctoral School Computer Science DoCS, University of Vienna, Austria.}
\and Gernot Z\"ocklein\footnotemark[2]
}
    \pgfarrowshullpoint{\pgfarrowlength}{0pt}
\else\pgfsetlinewidth{+\pgfarrowlinewidth}\fi
    \pgfarrowshullpoint{\pgfarrowlength}{0pt}
    \pgfarrowshullpoint{\pgfarrowinset}{0pt}
\else\pgfsetlinewidth{+\pgfarrowlinewidth}\fi
\newdimen\ipeminipagewidth
\tikzstyle{ipe import} = [
\tikzset{
  rgb color/.code args={#1=#2}{%
    \definecolor{tempcolor-#1}{rgb}{#2}%
    \tikzset{#1=tempcolor-#1}%
  },
}
\tikzstyle{ipe stylesheet} = [
\definecolor{red}{rgb}{1,0,0}
\definecolor{blue}{rgb}{0,0,1}
\definecolor{green}{rgb}{0,1,0}
\definecolor{yellow}{rgb}{1,1,0}
\definecolor{orange}{rgb}{1,0.5,0}
\definecolor{purple}{rgb}{0.75,0,0.25}
\definecolor{gray}{rgb}{0.5,0.5,0.5}
\definecolor{brown}{rgb}{0.75,0.5,0.25}
\definecolor{pink}{rgb}{1,0.75,0.75}
\definecolor{violet}{rgb}{0.5,0,0.5}
\definecolor{darkgray}{rgb}{0.25,0.25,0.25}
\definecolor{lightgray}{rgb}{0.75,0.75,0.75}
\definecolor{cyan}{rgb}{0,1,1}
\definecolor{lime}{rgb}{0.75,1,0}
\definecolor{magenta}{rgb}{1,0,1}
\definecolor{olive}{rgb}{0.5,0.5,0}
\definecolor{teal}{rgb}{0,0.5,0.5}
\definecolor{OI black}{rgb}{0,0,0}
\definecolor{OI dark blue}{rgb}{0,0.448,0.699}
\definecolor{OI green}{rgb}{0,0.618,0.452}
\definecolor{OI light blue}{rgb}{0.337,0.706,0.915}
\definecolor{OI orange}{rgb}{0.901,0.625,0}
\definecolor{OI purple}{rgb}{0.799,0.476,0.655}
\definecolor{OI red}{rgb}{0.835,0.367,0}
\definecolor{OI white}{rgb}{1,1,1}
\definecolor{OI yellow}{rgb}{0.942,0.894,0.26}
\definecolor{black}{rgb}{0,0,0}
\definecolor{white}{rgb}{1,1,1}
\definecolor{gold}{rgb}{1,0.843,0}
\definecolor{navy}{rgb}{0,0,0.502}
\definecolor{seagreen}{rgb}{0.18,0.545,0.341}
\definecolor{turquoise}{rgb}{0.251,0.878,0.816}
\definecolor{dark blue}{rgb}{0.121,0.47,0.705}
\definecolor{dark brown}{rgb}{0.651,0.337,0.157}
\definecolor{dark cyan}{rgb}{0.106,0.62,0.467}
\definecolor{dark gray}{rgb}{0.5,0.5,0.5}
\definecolor{dark green}{rgb}{0.2,0.627,0.172}
\definecolor{dark magenta}{rgb}{0.545,0,0.545}
\definecolor{dark orange}{rgb}{1,0.498,0}
\definecolor{dark pink}{rgb}{0.969,0.506,0.749}
\definecolor{dark purple}{rgb}{0.415,0.239,0.603}
\definecolor{dark red}{rgb}{0.63,0.066,0.078}
\definecolor{dark yellow}{rgb}{1,1,0.2}
\definecolor{light blue}{rgb}{0.651,0.807,0.89}
\definecolor{light brown}{rgb}{0.898,0.847,0.741}
\definecolor{light cyan}{rgb}{0.553,0.827,0.78}
\definecolor{light gray}{rgb}{0.8,0.8,0.8}
\definecolor{light green}{rgb}{0.698,0.874,0.541}
\definecolor{light orange}{rgb}{0.992,0.749,0.435}
\definecolor{light pink}{rgb}{0.992,0.855,0.925}
\definecolor{light purple}{rgb}{0.792,0.698,0.839}
\definecolor{light red}{rgb}{0.984,0.603,0.6}
\definecolor{light yellow}{rgb}{1,1,0.8}
\definecolor{mid red}{rgb}{0.89,0.102,0.109}
\definecolor{darkblue}{rgb}{0,0,0.545}
\definecolor{darkcyan}{rgb}{0,0.545,0.545}
\definecolor{darkgreen}{rgb}{0,0.392,0}
\definecolor{darkmagenta}{rgb}{0.545,0,0.545}
\definecolor{darkorange}{rgb}{1,0.549,0}
\definecolor{darkred}{rgb}{0.545,0,0}
\definecolor{lightblue}{rgb}{0.678,0.847,0.902}
\definecolor{lightcyan}{rgb}{0.878,1,1}
\definecolor{lightgreen}{rgb}{0.565,0.933,0.565}
\definecolor{lightyellow}{rgb}{1,1,0.878}
\definecolor{myOrange}{RGB}{230, 159, 0}
\definecolor{myLightBlue}{RGB}{86, 180, 233}
\definecolor{myGreen}{RGB}{0, 158, 115}
\definecolor{myYellow}{RGB}{240, 228, 66}
\definecolor{myDarkBlue}{RGB}{0, 114, 178}
\definecolor{myRed}{RGB}{213, 94, 0}
\definecolor{myPink}{RGB}{204, 121, 167}
\newtheorem{observation}{Observation}[section]
\newtheorem{question}{\color{red}{Question}}[section]
\newcommand{\vect}[1]{\ensuremath{\boldsymbol{#1}}}
\newcommand{\atmost}[1]{\ensuremath{ O\mleft( #1 \mright)}\xspace}
\newcommand{\littleo}[1]{\ensuremath{o\mleft( #1 \mright)}\xspace}
\newcommand{\atmosttilde}[1]{\ensuremath{ \widetilde{O}\mleft( #1 \mright)}\xspace}
\newcommand{\poly}[1]{\ensuremath{\operatorname{poly}\mleft( #1 \mright)}\xspace}
\newcommand{\rec}{\ensuremath{\gamma_{\text{rec}}}\xspace}
\def\setof#1{\left\{#1  \right\}}
\newcommand\f{\boldsymbol{f}}
\def\tn#1{\textnormal{#1}}
\def\norm#1{\left\| #1 \right\|}
\def\rbrack#1{\left( #1 \right)}
\newcommand{\proj}{\mathsf{proj}}
\newcommand{\fcong}{\mathsf{cong}}
\newcommand{\cross}{\mathsf{cross}}
\newcommand{\str}{\textnormal{str}}
\def\cD{\mathcal{D}}
\def\cP{\mathcal{P}}
\newcommand\vecone{\boldsymbol{1}}
\renewcommand{\deg}{\textnormal{deg}}
\newcommand\R{\mathbb{R}}
\newcommand{\E}[1]{\mathop{{}\mathbb{E}}\left[#1\right]}
\newcommand{\polylog}{\textnormal{polylog}}
\renewcommand{\root}{\ensuremath{\mathsf{root}}}
\renewcommand{\tilde}{\widetilde}
\newcommand{\cH}{\mathcal{H}}
\newcommand{\cE}{\mathcal{E}}
\newcommand{\Otil}{\tilde{O}}
\renewcommand{\epsilon}{\ensuremath\varepsilon}
\renewcommand{\phi}{\ensuremath{\varphi}}
\newcommand{\gernot}[1]{{ \textbf{\color{purple} Gernot: #1}}}
\crefname{problem}{problem}{problems}
\crefname{claim}{claim}{claims}
\crefname{lemma}{lemma}{lemmas}
\Crefname{algocf}{Algorithm}{Algorithms}
\crefname{proof}{proof}{proofs}
\crefname{observation}{observation}{observations}
\crefname{invariant}{invariant}{invariants}
\begin{document}
\maketitle
\pagenumbering{roman}

\begin{abstract}

We develop a new algorithmic framework for designing approximation algorithms for cut-based optimization problems on capacitated undirected graphs that undergo edge insertions and deletions. Specifically, our framework dynamically maintains a variant of the hierarchical $j$-tree decomposition of [Madry FOCS’10], achieving a poly-logarithmic approximation factor to the graph's cut structure and supporting edge updates in $O(n^{\epsilon})$ amortized update time, for any arbitrarily small constant $\epsilon \in (0,1)$.


Consequently, we obtain new trade-offs between approximation and update/query time for fundamental cut-based optimization problems in the fully dynamic setting, including all-pairs minimum cuts, sparsest cut, multi-way cut, and multi-cut. For the last three problems, these trade-offs give the first fully-dynamic algorithms achieving poly-logarithmic approximation in sub-linear time per operation.   

The main technical ingredient behind our dynamic hierarchy is a dynamic cut-sparsifier algorithm that can handle vertex splits with low recourse. This is achieved by white-boxing the dynamic cut sparsifier construction of [Abraham et al. FOCS’16], based on forest packing, together with new structural insights about the maintenance of these forests under vertex splits. Given the versatility of cut sparsification in both the static and dynamic graph algorithms literature, we believe this construction may be of independent interest.


\end{abstract}


\pagenumbering{arabic}
\section{Introduction}

Cut-based graph problems lie at the heart of combinatorial optimization and theoretical computer science. They have been extensively studied over the years from both theoretical and practical perspectives, with applications in areas such as VLSI design, community detection, clustering, and graph partitioning~\cite{Shmoys:1996aa}. Notable examples include the $s$-$t$ minimum cut (and its dual $s$-$t$ maximum flow), global minimum cut, and sparsest cut problems. Research efforts to solve these problems have led to the development of several algorithmic gems, such as Karger's global min-cut algorithm based on tree packing~\cite{Karger:2000aa}, the seminal work on expander flow of Arora, Rao, and Vazirani~\cite{Arora:2009aa} for approximating the sparsest cut, and interior point methods for solving structured linear programs arising from flow problems on graphs \cite{Daitch:2008aa,Madry:2016aa,Brand:2021aa,Dong:2022aa,Brand:2023aa,Brand:2023ab}. 

Designing algorithms for cut-based problems that are both fast and offer competitive approximation guarantees is a recurring theme in algorithm design. Key to enabling the majority of these fast algorithms is the algorithmic framework of approximating graphs with simpler ones. One of the most powerful notions in this context is that of a \emph{tree flow sparsifier}~\cite{Racke:2002aa}; a tree $T$ that approximates the values of all cuts of an undirected graph $G$ within a factor $q$, known as the \emph{quality} of the sparsifier. A closely related concept is the \emph{$j$-tree decomposition}~\cite{Madry:2010ab}, which is a family of $j$-trees that together approximate the cut structure of $G$ (for a formal definition, see \Cref{def:jtree}). The seminal work of R{\"{a}}cke~\cite{Racke:2002aa} showed the existence of tree flow sparsifiers in the context of oblivious routing. Since then, a series of works~\cite{Bienkowski:2003aa,Harrelson:2003aa,Racke:2008aa,Madry:2010ab,Racke:2014aa,Goranci:2021ab, Li:2025aa} has resulted in fast constructions of both tree flow sparsifiers and $j$-tree decompositions of poly-logarithmic quality that run in close to linear time.\footnote{Concretely, \cite{Racke:2014aa,Li:2025aa} construct a tree flow sparsifier of poly-logarithmic quality in $\tilde{O}(m)$ time, and \cite{Madry:2010ab} gives a $j$-tree decomposition algorithm with quality $\tilde{O}(\log n)^{L}$ in $O(m^{1+2/L})$ time, for any integer $L \geq 1$.} These structures have led to polynomial-time approximation algorithms for many fundamental problems, including minimum bisection, $k$-multicut, and min-max partitioning. They have also played a pivotal role in recent breakthrough advances in fast algorithms for both undirected~\cite{sherman2013nearlymaximumflowsnearly,Kelner:2014aa,peng2015approximateundirectedmaximumflows,Sherman:2017aa} and directed~\cite{Chen:2023aa} $s$-$t$ maximum flow problems.

Motivated by the wide applicability of tree flow sparsifiers and $j$-tree decompositions, there has been a recent surge of interest in designing dynamic algorithms that maintain these structures under edge insertions and deletions. Although constructing such structures is already non-trivial in the static setting, Goranci, R{\"{a}}cke, Saranurak, and Tan~\cite{Goranci:2021ab} were able to obtain a dynamic tree flow sparsifier for unit capacity graphs, achieving quality $n^{o(1)}$ and handling edge updates in $n^{o(1)}$ update time. This result was later extended to weighted graphs by van den Brand, Chen, Kyng, Liu, Meierhans, Gutenberg, and Sachdeva~\cite{brand2024almostlineartimealgorithmsdecremental}, achieving the same guarantees. In parallel, Chen, Goranci, Henzinger, Peng, and Saranurak~\cite{Chen:2020aa} designed a fully dynamic algorithm for maintaining $j$-tree decompositions with quality $\tilde{O}(\log n)$ and update time $\tilde{O}(n^{2/3})$. Unfortunately, if we seek to approximate cuts with poly-logarithmic quality, none of these approaches currently overcome the $n^{2/3}$ update time barrier. This raises the following natural question: 
\begin{center}
\emph{Is there a fully dynamic algorithm for cut-based optimization problems that achieves \emph{poly-logarithmic} approximation with very small polynomial update time?}
\end{center}

We answer this question in the affirmative by designing an algorithm that dynamically maintains a variant of the \emph{hierarchical} $j$-tree decomposition of Madry~\cite{Madry:2010ab}. As a consequence, we obtain new trade-offs between approximation and update time for fundamental cut-based optimization problems in the fully dynamic setting, including all-pairs minimum cuts, sparsest cut, multi-way cut, and multi-cut. 

\paragraph{Our results: The Dynamic Hierarchical $j$-tree Decomposition}
In order to be able to state our main result, we need to first review some notation. We start with the notion of $j$-trees due to Madry~\cite{Madry:2010ab}. 
For an integer $j \geq 1$, an undirected graph $H$ is called a \emph{$j$-tree} if it is the union of a (1) rooted forest $F \subseteq H$ with root set $R$ such that $|R| \leq j$ and a (2) a \emph{core} $C(H) = H[R]$, which is the induced subgraph on the root set $R$. This notion generalizes that of a tree, since every $1$-tree is a tree, and as we will show, it offers a natural trade-off between the complexity of the decomposition structure and the time required to maintain it.

We next formalize what it means for a collection of $j$-trees to approximately preserve the cut structure of graphs. Given an undirected graph $G = (V, E, u)$ with an edge capacity function $u$,  and a collection $\cal{H}$ of $j$-trees whose vertex set is $V$, we say $\cal{H}$ \emph{$\alpha$-preserves} the cuts of $G$ with probability $p$, if 
\begin{itemize}
    \item[(1)]  for every cut $(S, V \setminus S)$, we have $U_G(S) \leq \min_{H \in \cH} U_H(S)$,\footnote{For an undirected, capacitated graph $G= (V, E,u)$, and any subset of vertices $\emptyset \neq S \subset V$, we let $U_G(S)$ denote the total capacity of edges crossing the cut $(S, V \setminus S).$} and
    \item[(2)] for any cut $S$, $\min_{H \in \cH} U_H(S) \leq \alpha \cdot U_G(S)$ with probability at least $p$. 
\end{itemize}

We remark that $\cal{H}$ does not preserve exponentially many cuts of $G$. Instead, it only guarantees that \emph{every fixed} cut is preserved with probability $p$. Since in our construction $\cal{H}$ will be used with $p = 1-1/\textrm{poly}(n)$, we can always use $\cal{H}$ to preserve any set of polynomially many cuts by simply applying a union bound over them. 

Our main contribution is the first fully dynamic algorithm for maintaining a collection of \emph{hierarchical} $j$-trees that approximate the cut structure of graphs, as summarized in the theorem below.  

\begin{restatable}{theorem}{main} \label{th:main}
Let $c \ge 1$ be a constant.
Given an \(n\)-vertex capacitated, undirected graph \(G = (V, E, u)\) with capacity ratio\footnote{Recall that the \emph{capacity ratio} of a graph is the ratio between the largest and the smallest edge capacity.} \(U = \poly{n}\) undergoing edge insertions and deletions, and integers \(j \geq 1\) and  \(1 \leq L  \leq \littleo{ \sqrt{ \log (n/j) / \log \log n }}\),
there is a data structure that 
maintains
a collection \(\mathcal H\) of \atmost{j}-trees of \(G\)
that $\Otil(\log n)^L$-preserves the cuts of \(G\) with probability at least $1-1/n^{c}$.

The data structure explicitly maintains \(\mathcal H\)
in \((n/j )^{2/L} \log^{\atmost{L}} n \) amortized update time,
guarantees that \(|\mathcal H| = \log ^{\atmost{L}} n\)
and that the capacity ratio of each \atmost{j}-tree \(H \in \mathcal H\) is \(n^{2L+1}U\log ^{\atmost{L}} n\).
%
\end{restatable}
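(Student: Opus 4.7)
The plan is to build a recursive $L$-level hierarchy, at each level of which a dynamic primitive reduces the current graph by a factor of roughly $(n/j)^{1/L}$ in size and introduces a multiplicative cut distortion of $\atmosttilde{\log n}$. I would begin with a one-level static primitive that, given a graph $G'$ on $n'$ vertices, outputs a rooted forest $F'$ whose root set $R'$ has size $\max(j, n'(j/n)^{1/L})$ together with a weighted core graph $C'$ on $R'$; the $j'$-tree $H' = F' \cup C'$ must $\atmosttilde{\log n}$-preserve cuts of $G'$ with probability $1 - n^{-(c+1)}$. Such a primitive can be obtained from the low-stretch / AKPW-style decomposition underlying Madry's static construction, with $C'$ the standard contracted graph (capacities rescaled to absorb non-tree edges).

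Then I would compose $L$ such primitives into a hierarchy. Setting $G_0 = G$, I would iteratively apply the primitive to obtain pairs $(F_\ell, C_\ell)$ with $G_{\ell+1} = C_\ell$, until $\abs{V(G_L)} = \atmost{j}$. For each $\ell \in \{0, \ldots, L-1\}$, define $H^{(\ell)}$ on vertex set $V$ as the union of the forests $F_0, \ldots, F_\ell$ together with the level-$\ell$ core $C_\ell$. Each $H^{(\ell)}$ is an $\atmost{j}$-tree because the forests telescope into a single forest whose root set has size at most $j$. Cut domination holds because contractions never decrease cut capacities; the approximation direction composes the per-level $\atmosttilde{\log n}$ guarantee across the $L$ levels, and a union bound over $L$ per cut yields overall failure probability at most $1/n^c$ and quality $\atmosttilde{\log n}^L$.

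For the dynamic maintenance, I would replace each static primitive by the dynamic cut-sparsifier-with-vertex-splits developed earlier in the paper (a white-boxing of the forest-packing construction of Abraham et al.). An edge update to $G_0$ triggers a polylogarithmic number of reported changes at level $0$; each such change manifests as an edge update, a vertex split, or a vertex merge at $G_1$, and propagates recursively through all $L$ levels. By tuning the sparsifier so that its amortized recourse per level is polylogarithmic, and by performing periodic partial rebuilds, the recursive cascade yields an amortized cost of $(n/j)^{2/L} \log^{\atmost{L}} n$ per edge update, dynamizing Madry's static $O(m^{1+2/L})$ bound. The capacity-ratio bound follows by tracking that each level's rescaling can multiply the ratio by at most $n^2$ and compounding across $L$ levels to get $n^{2L+1} U \log^{\atmost{L}} n$.

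The hard part is controlling the cascade of updates through the $L$ levels: absent a polylogarithmic recourse bound for the sparsifier under vertex splits, a single bottom-level edge update could propagate into a polynomial number of changes up the hierarchy, destroying the sublinear update time. This is precisely why the paper opens up the Abraham et al.\ construction to extract a low-recourse version supporting vertex splits, and why some care is needed to express each reported sparsifier change as a well-defined local modification of the next level's forest and core so that the hierarchy remains a valid collection of $\atmost{j}$-trees after every update. The admissible range $L = \littleo{\sqrt{\log(n/j)/\log\log n}}$ is calibrated precisely so that the compounded $\log^{\atmost{L}} n$ factors remain subpolynomial in $n/j$, keeping the amortized update time below $(n/j)^{\epsilon}$ for any constant $\epsilon$.
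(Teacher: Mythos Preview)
Your high-level architecture is right --- an $L$-level hierarchy with per-level size reduction $(n/j)^{1/L}$, sparsified cores, low-recourse maintenance under vertex splits, and periodic rebuilds --- but there is a genuine structural gap in how you form the collection $\mathcal H$. You describe a \emph{single} chain $(F_0,C_0),(F_1,C_1),\ldots$ and then define graphs $H^{(\ell)}$ for each $\ell$. This cannot work: first, $H^{(\ell)}$ is an $O(j_\ell)$-tree, not an $O(j)$-tree, for $\ell<L$, so your claim that ``each $H^{(\ell)}$ is an $\atmost{j}$-tree'' is false. Second, and more importantly, the one-level primitive you posit --- a \emph{single} randomized $j'$-tree that $\atmosttilde{\log n}$-preserves cuts of $G'$ with probability $1-n^{-(c+1)}$ --- does not exist. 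A single $j$-tree sampled from Madry's distribution preserves a \emph{fixed} cut only with constant probability (Markov applied to the average-embedding guarantee); boosting to $1-n^{-(c+1)}$ requires taking the minimum over $O(\log n)$ independent samples, i.e., a \emph{collection}, not one tree.

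The paper's construction therefore branches: at every level $i$ and every node $H_{i-1}$, it samples $O(\log n)$ many $O(j_i)$-trees to form $\mathcal H(H_{i-1})$, and the final $\mathcal H$ contains one $O(j)$-tree per \emph{chain} $H_0,H_1,\ldots,H_L$ through this branching hierarchy, namely $\widetilde C_L\cup F_L\cup\cdots\cup F_1$. This is precisely why $|\mathcal H|=\log^{O(L)} n$, and it is what makes the probabilistic guarantee go through: for a fixed cut $S$, with high probability at each level at least one child preserves $S$, so by induction some chain preserves $S$ end-to-end. Your union bound over $L$ levels would only be valid once this $O(\log n)$-fold branching is in place. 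The rest of your outline --- propagating updates via the low-recourse cut sparsifier under vertex splits, rebuilding level $i$ when a core exceeds $cj_i$, the $(n/j)^{2/L}\log^{O(L)}n$ amortized cost, and the $n^{O(L)}$ blow-up in capacity ratio --- matches the paper.
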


For any constant $\epsilon = O(1/L) \in (0,1)$, our result achieves poly-logarithmic quality in the cut approximation and $\tilde{O}(n^{\epsilon})$ amortized update time. In contrast, state-of-the-art dynamic tree flow sparsifier constructions~\cite{Goranci:2021ab,brand2024almostlineartimealgorithmsdecremental} are based on dynamically maintaining an \emph{expander hierarchy}. This is a bottom-up clustering approach that proceeds by repeatedly (1) finding a variant of an expander decomposition,\footnote{The so-called \emph{boundary-linked} expander decomposition~\cite{Goranci:2021ab}.} (2) contracting the identified expanders, and (3) recursing on the resulting contracted graph. However, similar to other hierarchical bottom-up clustering techniques~\cite{Alon:1995aa,Forster:2019aa,Chechik:2020aa}, these constructions can only guarantee \emph{sub-polynomial} approximation quality. 

In the static setting, a recent result by Li, Rao, and Wang~\cite{Li:2025aa} addressed this limitation by proposing a different bottom-up hierarchical construction that achieves poly-logarithmic approximation. Yet, their method appears inherently sequential and difficult to extend to the dynamic setting.

Finally, we emphasize that our algorithm avoids the use of expander decompositions or their dynamic maintenance. Instead, it relies solely on fast static algorithms for computing spanning trees with low average stretch~\cite{Abraham:2019aa}, which we use as a black-box subroutine.

\paragraph{Applications.} We use Theorem~\ref{th:main} to obtain new dynamic algorithms for fundamental cut-based problems such as all-pairs minimum cuts, sparsest cut, multi-way cut, and multi-cut. At a high level, all of these algorithms exploit the fact that the core of $j$-trees is much smaller compared to the size of the underlying graph and that these optimization problems in the forest of \(j\)-trees admit simpler solutions in the dynamic setting. The exact guarantees for these applications are given in the theorem below.

\begin{theorem} \label{thm:applications}
Let $c \geq 1$ be a constant. Given an $n$-vertex capacitated, undirected graph $G=(V,E,u)$ with capacity ratio \(U = \poly{n}\) undergoing edge insertions and deletions, and any constant $\epsilon \in (0,1)$, there is a data structure that supports updates in $\tilde{O}(n^{\epsilon})$ amortized time and returns a \emph{poly-logarithmic} approximation, with probability at least $1-1/n^{c}$, to the following queries:
\begin{itemize}
    \item[(1)] for any $(s,t) \in V^2$: $s$-$t$ minimum cut value (and thus the  $s$-$t$ maximum flow value), i.e., all-pairs minimum cut; and 
    \item[(2)] sparsest cut value; and
    \item[(3)] multi-way cut, multi-cut values.
\end{itemize}

The first two queries are supported in amortized time $\tilde{O}(n^{\epsilon})$. The last queries are supported in $\tilde{O}(kn^{\epsilon})$ amortized time for multi-way cut, where  $k$ is the number of terminals in the multi-way cut, and $\tilde{O}(kn^{\epsilon} + k^2)$ amortized time for multi-cut, where $k$ is the number of required sets in the multi-cut.\footnote{We can also report the corresponding approximate $s$-$t$ min-cut and sparsest cut in time proportional to the smaller size of the cut.}
\end{theorem}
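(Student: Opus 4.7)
The plan is to instantiate Theorem~\ref{th:main} with $L = \lceil 2/\epsilon\rceil$ and $j = 1$, so that every $H \in \cH$ is a tree, $|\cH| = \polylog(n)$, the amortized update time becomes $n^{2/L}\polylog(n) = \tilde O(n^\epsilon)$, and the cut-quality factor is $q = \polylog(n)$ with per-cut failure probability at most $n^{-(c+O(1))}$. On top of each tree $T \in \cH$ I keep a link-cut/top-tree that supports, in polylogarithmic time, edge-capacity updates together with the basic primitives used below: minimum-capacity edge on an $(s,t)$-path, globally minimum-sparsity edge, contraction of a vertex set, and path aggregates. Since Theorem~\ref{th:main} explicitly maintains $\cH$ and performs only $\tilde O(n^\epsilon)$ work per edge change, these auxiliary structures add only a polylogarithmic factor, keeping the overall amortized update time at $\tilde O(n^\epsilon)$.

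All queries reduce to running a tree-algorithm on each $T \in \cH$ and aggregating. An $(s,t)$ min-cut query returns the minimum over $\cH$ of the minimum-capacity edge on the $(s,t)$-path of $T$; since this equals $\min_T U_T(S^\ast_{s,t})$ for the min-cut $S^\ast_{s,t}$ in $T$, properties~(1)--(2) of $\cH$ combined with a union bound over the $O(n^2)$ pairs give a $q$-approximation with probability $\geq 1 - n^{-c}$. A sparsest-cut query is analogous: the sparsest cut of a tree is induced by a single edge, and the minimum over $\cH$ of these edge-cut sparsities is a $q$-approximation to the $G$-sparsest cut by applying properties~(1)--(2) to the single optimal cut. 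For multi-way cut with $k$ terminals I use the classical isolating-cut reduction: for each terminal $t_i$ invoke the min-cut routine after contracting $\{t_j : j \ne i\}$ in every tree of $\cH$, and output the union of the resulting cut edges. Exact isolating cuts yield a $2(1-1/k)$-approximation; with $q$-approximate cuts the guarantee becomes $O(q)$, in a total of $k$ min-cut queries, i.e., $\tilde O(k n^\epsilon)$. For multi-cut with $k$ pairs I run a polynomial-time constant-factor approximation for multi-cut on each tree (trees admit such an algorithm in $\tilde O(k^2 + k\log n)$) and return the minimum, giving an $O(q)$-approximation in total time $|\cH|\cdot\tilde O(k^2 + k\log n) = \tilde O(k n^\epsilon + k^2)$.

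The main obstacle is that Theorem~\ref{th:main} provides only a \emph{per-cut} probabilistic guarantee: different cuts of $G$ may be best-approximated by different trees in $\cH$. For min-cut, sparsest cut, and every isolating cut in the multi-way cut reduction this is harmless, because each query concerns a single cut. For multi-cut the analysis is more delicate: one uses property~(1) to observe that any multi-cut partition found on some tree $T \in \cH$ is automatically a feasible multi-cut in $G$ of cost bounded by its $T$-cost, and then bounds $\min_{T\in\cH}$ of the tree-multi-cut by $\polylog(n)$ times the $G$-optimum via a union bound of property~(2) over the $O(k)$ pieces of the optimal $G$-partition. Choosing the hidden constant in Theorem~\ref{th:main}'s failure probability sufficiently large then yields overall success probability at least $1 - n^{-c}$ across all polynomially many cuts invoked, and the reported \emph{cut} (rather than merely its value) can be recovered in time proportional to its size by reading off the partition on the best tree.
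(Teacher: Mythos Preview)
Your plan of setting $j=1$ so that every $H\in\cH$ is (essentially) a tree is a genuinely different route from the paper, which keeps $j$ polynomial (roughly $n^{\Theta(\epsilon)}$) and answers each query by a static computation on the $\tilde{O}(j)$-size sparsified core together with simple forest queries. The idea is attractive because cut problems are easy on trees, but several of the tree primitives you invoke are not standard and, as stated, do not yield the claimed running times.

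The most serious gap is the sparsest-cut query. You assert that a link-cut/top-tree maintains the ``globally minimum-sparsity edge'' in polylogarithmic time, but the sparsity $u(e)/\min(|S(e)|,n-|S(e)|)$ of an edge depends on the subtree sizes on both sides, and a single structural change to the tree (the graphs in $\cH$ \emph{do} change structurally, not only in edge capacities) alters these sizes for every edge on a root-to-leaf path, which may be $\Theta(n)$ edges. Maintaining the minimum of such ratios under link/cut is not a standard top-tree operation. The paper circumvents this entirely: it takes $j=n^{1/(1+L)}$, enforces that every root-to-leaf path in each forest $F_i$ has length $\tilde{O}(n/j)$ by adding extra roots at initialization (an idea from~\cite{Chen:2020aa}), and after each forest-edge deletion simply recomputes the $\tilde{O}(n/j)$ affected sparsities; the core contribution is handled by a static weighted-sparsest-cut call on $\tilde{C}_L$ at query time. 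With $j=1$ this path-length bound is vacuous, so neither your approach nor the paper's technique gives $\tilde{O}(n^{\epsilon})$ here.

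For multi-way cut, contracting $\{t_j:j\neq i\}$ in a tree does \emph{not} yield a tree (any two non-adjacent terminals create a cycle after contraction), so the isolating cut is not a single path-minimum and cannot be read off by your min-cut routine in polylogarithmic time. For multi-cut, union-bounding property~(2) over the $O(k)$ pieces of the optimal partition only shows that each piece is well-approximated by \emph{some} tree in $\cH$; it does not produce a single tree approximating the whole partition, which is what you need to bound $\min_{T\in\cH}\text{OPT}_T$. The paper resolves both issues by (i) first moving all $O(k)$ terminals into the core of every $H\in\cH$ at amortized cost $\tilde{O}(kn^{\epsilon})$, after which the optimal multi-(way-)cut provably lives entirely in the $\tilde{O}(j+k)$-size core and can be solved there statically, and (ii) observing, via the averaging guarantee underlying Theorem~\ref{th:main} and Markov's inequality, that $\cH$ in fact $\tilde{O}(\log n)^L$-preserves any fixed \emph{partition}, which is strictly stronger than the per-cut statement you invoke.
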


Previous fully dynamic algorithms for all-pairs minimum cuts achieve either:
(1) a $\tilde{O}(\log n)$ approximation with $\tilde{O}(n^{2/3})$ amortized update and query time~\cite{Chen:2020aa}, or
(2) an $n^{o(1)}$ approximation with $n^{o(1)}$ worst-case amortized update time and $\tilde{O}(1)$ query time~\cite{Goranci:2021ab,brand2024almostlineartimealgorithmsdecremental}. 
Our result is the first to break the $n^{2/3}$ update time barrier while still retaining poly-logarithmic approximation guarantees. 

For sparsest cut, multi-way cut, and multi-cut, all prior fully dynamic algorithms achieved only $n^{o(1)}$ approximation~\cite{Goranci:2021ab,brand2024almostlineartimealgorithmsdecremental}. Our algorithms are thus the first to provide poly-logarithmic approximation guarantees for these problems in the fully dynamic setting.


\section{Technical Overview}

\paragraph{Cut-Based Graph Decompositions.} Many cut-based problems become easier to solve on trees than on general graphs. Building on this insight, Räcke~\cite{Racke:2008aa} showed that, for any $m$-edge undirected graph $G$, there is a polynomial time algorithm to compute a collection of $k := \Otil(m)$ trees $T_1, \ldots, T_k$ along with a probability distribution $\{\lambda_i\}_i$ over such trees such that, for any cut $S$ and a tree $T$ sampled from the distribution, it holds that $U_G(S) \leq U_{T}(S) \leq O(\log n) \cdot U_G(S)$ with constant probability. This result paved the way for simple and efficient approximation algorithms for a broad class of cut-based problems by using the following natural algorithmic approach: (a) sample a small number of trees from the distribution, and (b) solve the problem efficiently on those trees.

In his seminal paper, Madry~\cite{Madry:2010ab} introduced the notion of $j$-trees and showed that there is a natural trade-off between structural and computational complexity. Specifically, Madry showed how to compute a collection of only $\Otil(m/j)$ many $j$-trees, along with a probability distribution over them such that, cuts are preserved up to $\tilde{O}(\log n)$ factor, which is slightly worse than the guarantee in the tree-based setting. Importantly, solving cut-based problems on a $j$-tree typically reduces -- with linear overhead in runtime -- to solving them on their so-called \emph{cores} $C$,  which are subgraphs induced on just $j < n$ vertices, rather than the full vertex set. 

Similar to~\cite{Racke:2008aa}, the algorithm in~\cite{Madry:2010ab} computes the collection $H_1, \ldots, H_k$, $k=\tilde{O}(m/j)$ of $j$-trees using the iterative scheme of \cite{young}, which itself is a variant of the multiplicative weight update (MWU) method \cite{arora2012multiplicative}. In each iteration, a low-stretch spanning tree $T_i$ is computed with respect to some maintained length function on the edges of $G$, using the algorithm of~\cite{AbrahamBN08, Abraham:2019aa}. Madry then employs a rather involved bucketing argument to identify a set $E_i$ of $\Otil(j \log U)$ high-congestion edges in $T_i$ such that, upon removal, the capacity function of the resulting forest $F_i$ has desirable properties that reduce the number of iterations of this iterative scheme to $\Otil(m/j)$. Using the sequence of obtained forests $\{F_i\}$, the algorithm in~\cite{Madry:2010ab} follows a $3$-step construction process on each forest $F_i$: it first builds two different intermediate graphs, which then serve as the foundation for the final construction of the $j$-tree $\{H_i\}$.

In this paper, we present a more modular implementation of Madry's construction. Concretely, we show that with a careful modification of Madry's algorithm,\footnote{We directly bound the \emph{width} of the MWU oracle call.} a much more standard application of the MWU framework suffices. Our approach also simplifies the overall design by avoiding the complicated bucketing scheme of~\cite{Madry:2010ab}; instead, we directly remove the $\Otil(j)$ edges with the highest congestion in $T_i$. Finally, drawing inspiration from \cite{chen2022maximumflowminimumcostflow}, we show that Madry's $3$-step construction can be bypassed entirely: $j$-trees $H_i$ can be computed directly from the forests $F_i$ by leveraging the so-called \emph{branch-free} root sets.

\paragraph{Dynamic Cut-Based Decompositions.} 

Chen et al.~\cite{Chen:2020aa} leveraged the powerful paradigm of vertex sparsification to dynamically maintain Madry's graph decompositions into $j$-trees. A key operation that their algorithm supports is the ability of moving vertices $v$ to the core of $O(j)$ trees 
-- that is, increasing the core $C$ of the $O(j)$-tree to also contain vertex $v$ -- without compromising the cut-approximation guarantees. This operation alone is sufficient to support dynamic edge updates; whenever an edge is updated, its endpoints are simply moved into the core.

 To support the operation of moving vertices to the core, the algorithm in~\cite{Chen:2020aa} relies on two main components: (i) re-routing the capacities of the maintained $O(j)$ tree forests and (ii) expanding the core by adding extra vertices (which can be viewed as computing a hitting set over the forests) so that the leaf to root paths in the forests $\{F_i\}$ are short. 
 Another contribution of our work is that we eliminate the need for (i) and (ii); we avoid re-routing entirely by showing that as the number of components in the forests increases, the embedding quality of the resulting $j$-trees cannot decrease. Furthermore, we employ refined amortization arguments to sidestep the need for augmenting the core with additional vertices. In \Cref{sec:madry}, we prove the following theorem.

\begin{theorem}\label{thm:overview_jtree}
    Let $j$ be a parameter and let $G$ be a dynamic graph that initially contains $m$ edges and undergoes up to $O(j)$ edge insertions and deletions. There exists an algorithm that initializes and maintains a set of $i = 1, 2, \ldots, k = \Otil(m/j)$ many $O(j)$-trees $H_i = C_i \cup F_i$ with cores $C_i$ and forests $F_i$. At all times, it holds that:
    \begin{enumerate}
        \item for all $i$ and all cuts $S$: $U_G(S) \leq U_{H_i}(S)$,
        \item for all cuts $S$: $\frac{1}{k} \sum_{i=1}^k U_{H_i}(S) \leq \Otil(\log n) \cdot U_G(S)$, and
        \item after $t$ edge updates have been performed on $G$, we have $|C_i^{(t)}| \leq |C_i^{(0)}| + O(t)$ for all $i$.
    \end{enumerate}
    The algorithm runs in total time $\Otil(m^2 / j)$.
\end{theorem}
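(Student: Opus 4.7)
My plan is to prove the theorem in three phases: (a) an initial construction via a simplified Madry-style multiplicative-weights (MWU) scheme; (b) a minimal dynamic update rule that promotes the endpoints of each updated edge into the cores; (c) a structural monotonicity argument showing this rule preserves all guarantees.

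For the initialization (a), I would run the MWU framework for $k = \tilde{O}(m/j)$ iterations. In iteration $i$, compute a low-stretch spanning tree $T_i$ of $G$ with respect to the current length function via \cite{Abraham:2019aa}, delete the $\tilde{O}(j)$ highest-congestion edges of $T_i$ to obtain a forest $F_i$, and---following the \emph{branch-free} root-set approach inspired by \cite{chen2022maximumflowminimumcostflow} rather than Madry's three-step construction---choose a root set $R_i$ of size $O(j)$ so that $H_i := F_i \cup G[R_i]$ is an $O(j)$-tree. By directly bounding the MWU oracle width (avoiding Madry's bucketing), the standard analysis yields (1), because forest edges are scaled by their congestion to dominate every $G$-edge's image, and (2), because the $\tilde{O}(\log n)$ average stretch of low-stretch trees passes through the MWU potential analysis. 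Each iteration costs $\tilde{O}(m)$ for the tree plus $\tilde{O}(j)$ for root-set extraction, giving total initialization cost $\tilde{O}(m^2/j)$.

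For (b), whenever an edge $e = (u,v)$ is inserted or deleted in $G$, I update each $H_i$ by adding $u,v$ to the root set $R_i$, refreshing the core $C_i = G[R_i]$ on the updated $G$, and splitting $u$ and $v$ off from their current forest components so they become new roots of $F_i$. Property (3) is immediate, since at most two vertices join each core per update, so $|C_i^{(t)}| \leq |C_i^{(0)}| + 2t$. The per-update work is $\tilde{O}(1)$ per tree and thus $\tilde{O}(m/j)$ across all $k$ trees, so cumulative update work across the $O(j)$ allowed updates is $\tilde{O}(m)$, which is absorbed into the $\tilde{O}(m^2/j)$ initialization budget.

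The heart of the argument, and the step I expect to be the main obstacle, is (c): verifying that this minimal update rule preserves (1) and (2) without the forest-capacity rerouting or hitting-set-style root augmentation used in \cite{Chen:2020aa}. I would establish a monotonicity lemma of the form: if $H = F \cup G[R]$ is a $j$-tree satisfying (1)--(2) for the pre-update graph, then the $H' = F' \cup G[R \cup \{u,v\}]$ produced by our update rule still satisfies (1)--(2) for the post-update graph, with the quality unchanged up to absorbable constants. The combinatorial content is that any $G$-edge formerly absorbed by a now-removed forest edge incident to $v$ is either directly present as a core edge in $G[R \cup \{u,v\}]$ (because its other endpoint now belongs to the root set) or is still routed via the remaining forest edges; together with the fact that each insertion contributes at most $U_G(S)$ of additional cut capacity per tree, this preserves (1) outright and grows the average-cut bound in (2) by only an additive $O(U_G(S))$ per update, absorbed into the $\tilde{O}(\log n)$ factor. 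A refined amortization over the $O(j)$-update budget then replaces the need for Madry-style core augmentation and completes the proof.
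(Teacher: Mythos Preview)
Your high-level plan matches the paper's: MWU initialization producing $k=\Otil(m/j)$ spanning trees $T_i$ with $j$ designated high-congestion edges, branch-free root sets, and an update rule that promotes the endpoints of each updated edge into every root set. But step~(c) has two real gaps.

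\textbf{Gap in preserving (1).} You say you ``split $u$ and $v$ off from their current forest components,'' which reads as deleting the forest edge from $u$ (resp.\ $v$) to its parent. That does not work. The paper instead removes, on the tree path from $u$ to its current root, the edge $e_{\min}$ of \emph{minimum induced capacity} $u^{T_i}$, and fixes the forest capacities once and for all as $u_{F_i}(e)=2u^{T_i}(e)$. The reason is that when a forest edge $e$ is cut, the canonical embedding reroutes, through the remaining path edges, exactly the $u^{T_i}(e)$ units of flow that formerly crossed $e$; if $e=e_{\min}$ then each surviving path edge $e'$ absorbs at most $u^{T_i}(e_{\min})\le u^{T_i}(e')$ extra flow, so $2u^{T_i}(e')$ suffices. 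If instead you cut the parent edge of $u$, a low-capacity edge further up the path can be forced to carry far more than twice its $u^{T_i}$-capacity, and $G\preceq H_i$ fails. (Concretely: path $\ell\!-\!u\!-\!a\!-\!r$ with $u^{T_i}$-capacities $1,10,2$; removing $(u,a)$ reroutes up to $10$ units through $(a,r)$.) The paper packages this as a clean lemma: for any branch-free root set containing the endpoints of all high-congestion edges, removing $e_{\min}$ on each root-to-root path and setting capacities $2u^{T_i}$ gives $G\preceq H_i$. You also need to add up to two extra vertices per update to keep $R_i$ branch-free, which your step~(b) omits.

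\textbf{Gap in preserving (2).} Your ``additive $O(U_G(S))$ per update, absorbed into the $\Otil(\log n)$ factor'' does not type-check: over $O(j)$ updates that would add $O(j)\cdot U_G(S)$, not $\Otil(\log n)\cdot U_G(S)$. The correct argument is purely monotone and needs no amortization: the forests are decremental, $E_{F_i}^{(t)}\subseteq E_{T_i}\setminus S_i$ for all $t$, and the embedding of $k^{-1}\sum_i H_i$ into $G$ congests each edge $e$ by at most $1+\tfrac{4}{k}\sum_{i:\,e\in F_i}\fcong_G^{T_i}(e)\le 1+\tfrac{4}{k}\sum_{i:\,e\notin S_i}\fcong_G^{T_i}(e)$, which is $\Otil(\log n)$ by the MWU guarantee at initialization and never increases.

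A smaller point: the per-update cost is not $\Otil(1)$ per tree. Adding a root triggers a vertex split in the core $G/F_i$, and maintaining the cores across all $O(j)$ updates costs $\Otil(m)$ per tree (via the standard $\sum_v \deg(v)\log\deg(v)$ potential), hence $\Otil(m^2/j)$ in total---still within budget, but not for the reason you give.
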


\paragraph{A Two-Level Scheme.} Building on the guarantees of \Cref{thm:overview_jtree}, consider a natural two-level scheme: let $\cH$ be a set consisting of $O(\log n)$ many $j$-trees sampled uniformly at random from~\Cref{thm:overview_jtree}. Then, item $2$ of the theorem together with Markov's inequality implies that the set $\cH$ $\Otil(\log n)$-preserves the cuts of $G$ with probability $1-1/\poly{n}$. By running a dynamic (edge) cut-sparsification algorithm from \cite{7782947} or \cite{bernstein_et_al}, we can furthermore ensure that the cores $C$ of the graphs $H \in \cH$ contain only $\Otil(j)$ many edges. 

This combination of random subsampling and sparsification yields a powerful size reduction. For example, an $s$–$t$ min-cut query in $G$ reduces to solving $s$–$t$ min-cut within the sparse cores of the $O(j)$-trees in $\cH$. Each such core is a graph with $O(j)$ vertices and $\Otil(j)$ edges, and there are only $O(\log n)$ of them. As a result, the query complexity reduces to solving a small number of subproblems on graphs of size $\Otil(j)$. By setting $j = n^{2/3}$, we recover the dynamic algorithm of~\cite{Chen:2020aa}, which achieves $\Otil(\log n)$ approximation with $\Otil(n^{2/3})$ amortized update- and query time.
\begin{theorem}\label{theorem:overview_2level}
    Given a parameter $j \geq 1$, we can dynamically maintain a set $\cH$ such that \begin{itemize}
        \item[(1)] $\cH$ $\Otil(\log n)$-preserves the cuts of $G$ with probability $1 - 1/\poly{n}$, and 
        \item[(2)] $\cH$ consists of $O(\log n)$ many $O(j)$-trees. 
    \end{itemize} 
    Moreover, $\cH$ can be maintained in $\Otil(m^2 / j^2)$ amortized time.
\end{theorem}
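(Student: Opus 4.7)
The plan is to invoke Theorem~\ref{thm:overview_jtree} as a black box to maintain a pool of $k = \Otil(m/j)$ many $O(j)$-trees $H_1, \dots, H_k$, and then to form $\cH$ by independently subsampling $s = \Theta(c \log n)$ of them uniformly at random from $[k]$. Property~(2) of the conclusion, namely $|\cH| = O(\log n)$ and each $H \in \cH$ being an $O(j)$-tree, is immediate from the construction and item~3 of Theorem~\ref{thm:overview_jtree}, so the real work is concentrated on establishing the cut-preservation claim in property~(1) and the amortized update time.

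For cut preservation, I would fix an arbitrary cut $S$ and argue as follows. Property~(1) of Theorem~\ref{thm:overview_jtree} gives $U_G(S) \leq U_{H_i}(S)$ for every $i$ deterministically, so the lower-bound half of the preservation definition holds without any randomness. For the upper bound, property~(2) can be rewritten as $\mathbb{E}_{i \sim [k]}[U_{H_i}(S)] \leq \Otil(\log n) \cdot U_G(S)$, so Markov's inequality yields $U_{H_i}(S) \leq 2 \cdot \Otil(\log n) \cdot U_G(S)$ for a single uniform draw with probability at least $1/2$. Independence of the $s$ draws then implies that at least one sampled tree satisfies this upper bound except with probability $2^{-s} = 1/\poly{n}$ (choosing the constant in $s$ appropriately relative to $c$), and this event is exactly $\min_{H \in \cH} U_H(S) \leq \Otil(\log n) \cdot U_G(S)$, as required.

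For dynamic maintenance, I would rebuild the entire pool from scratch every $\Theta(j)$ updates using fresh independent randomness, and forward individual edge updates to the algorithm of Theorem~\ref{thm:overview_jtree} between rebuilds. The initialization cost of $\Otil(m^2/j)$ amortized over $\Theta(j)$ updates gives the claimed $\Otil(m^2/j^2)$ per update, and item~3 of Theorem~\ref{thm:overview_jtree} guarantees that within each window every core grows by at most $O(j)$, keeping every $H_i$ an $O(j)$-tree throughout the window and thus maintaining property~(2) uniformly in time.

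The main subtlety lies in propagating the per-cut preservation bound across a long sequence of dynamic queries. Since each rebuild draws fresh randomness that is independent of the adversary's past moves, the per-query probability $1 - 1/\poly{n}$ extends by a union bound over polynomially many queries per window under the standard oblivious-adversary model; outside of this bookkeeping, I do not anticipate any deeper technical obstacle, since both the sampling step and the rebuild schedule are entirely generic and rely only on the guarantees that Theorem~\ref{thm:overview_jtree} already provides.
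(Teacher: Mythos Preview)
Your proposal is correct and matches the paper's approach essentially verbatim: the paper likewise maintains the full pool of $\Otil(m/j)$ trees from Theorem~\ref{thm:overview_jtree}, subsamples $O(\log n)$ of them uniformly at random, applies Markov's inequality to each draw to get the $\Otil(\log n)$ upper bound with probability $1/2$ per sample, and rebuilds from scratch every $\Theta(j)$ updates to amortize the $\Otil(m^2/j)$ initialization cost. The only cosmetic difference is that the paper does not dwell on the union bound over queries in the overview, deferring that bookkeeping to the formal version.
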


\paragraph{A First Attempt at a Hierarchical Scheme.} 
A natural attempt to improve the update time of our algorithm is to apply \Cref{theorem:overview_2level} recursively on the cores of the $O(j)$ trees, as outlined below.  Assume that $G$ has $\Otil(n)$ edges, and let $k = (n/j)^{1/2}$ be a size-reduction parameter. We begin by sampling and maintaining a set $\cH_1$ of $O(\log n)$ many $n/k =: j_1$-trees using \Cref{theorem:overview_2level}. The set $\cH$ $\Otil(\log n)$-preserves the cuts of $G$ with high probability. Next, we apply a dynamic cut-sparsification algorithm on the cores $C_1$ of the graphs $H_1 \in \cH_1$ to maintain sparsified cores $\Tilde{C}_1$ that each contain only $\Otil(j_1)$ many edges. 

Then, for each $H_1 \in \cH_1$, we recursively apply the algorithm from \Cref{theorem:overview_2level} on the (sparsified) cores $\Tilde{C}_1$ of $H_1$ to receive a set $\cH(H_1)$ of $O(\log n)$ many $n/k^2:=j_2$-trees. The set $\cH(H_1)$ $\Otil(\log n)$-preserves the cuts of the sparsified core $\Tilde{C}_1$. We refer to these graphs as the children of $H_1$, and let $\cH_2 := \bigcup_{H_1 \in \cH_1} \cH(H_1)$.

Now consider $H_1 = C_1 \cup F_1 \in \cH_1$ and $H_2 = C_2 \cup F_2 \in \cH(H_1)$, with $\Tilde{C}_2$ as the sparsified core of $H_2$. The graph 
\[H^{H_1, H_2} := \Tilde{C}_2 \cup F_2 \cup F_1\]
is then an $O(j_2)$-tree, whose core is $\Tilde{C}_2$ and whose forest is $F_1 \cup F_2$. If we bring together all such graphs in the set 
\[\cH := \{H^{H_1, H_2}: H_1 \in \cH_1, H_2 \in \cH(H_1)\},\]
one can show that $\cH$ $\Otil(\log^2 n)$ preserves the cuts of $G$ with probability $1 - 1 / \poly{n}$. 

The initialization of $\cH_1$ takes time $\Otil(n^2 / j_1) = \Otil(n \cdot (n/j)^{1/2})$. Similarly, initializing the set $\cH_2$ requires $O(\log n) \cdot \Otil(j_1^2 / j_2) = \Otil(n \cdot (n/j)^{1/2})$ time. Thus, the total time to initialize $\cH$ is $\Otil(n \cdot (n/j)^{1/2})$, which already constitutes a significant improvement over the initialization time of $\Otil(n^2 / j)$ achieved by \Cref{theorem:overview_2level}. 

In general, by instead setting $k = (n/j)^{1/L}$ and repeating this procedure $L$ times, we can construct a set $\cH$ that $\Otil(\log n)^L$-preserves the cuts of $G$ with probability $1 - 1 / \poly{n}$ such that $\cH$ consists of $O(\log^{O(L)})$ many $O(j)$-trees; the algorithm takes initialization time $O(n \cdot (n/j)^{2/L} \cdot \log^{O(L)} n)$.

Unfortunately, this approach breaks down as soon as we move to the dynamic setting, where edge insertions and deletions to $G$ must be handled. Consider again the case $L = 2$ from above. A single update to $G$ can cause the cores  $C_1$ of the $O(j_1)$ trees from \Cref{thm:overview_jtree} to change by up to $\Omega(j_1)$ edge insertions and deletions. Even if we use the cut sparsifiers from \cite{7782947} or \cite{bernstein_et_al} as black-boxes to handle these updates, the sparsified cores $\Tilde{C}_1$ may still change by $\Omega(j_1)$ edge insertions and deletions. These $\Omega(j_1)$ updates must then be propagated to the data structure from \Cref{thm:2level} that runs on $\Tilde{C}_1$ to maintain the set $\cH_2$. This incurs \[\Omega(j_1 \cdot j_1^2 / j_2^2) = \Omega(n \cdot (n/j)^{1/2})\] amortized time, which is worse than the original $\Otil(n^2 / j^2)$ update time from \Cref{theorem:overview_2level}. In fact, for any choice of $j$, this simple recursive scheme results in an update time of $\Omega(n)$, which defeats the purpose of the recursion.

\paragraph{Dynamic Cut-Sparsification under Vertex Splits to The Rescue.} 
We have already seen that running a cut sparsification algorithm on the cores $C$ of the $j$-trees is essential. While the cores are only on $O(j)$ vertices, they might still contain $\Theta(m)$ many edges, preventing us from efficient recursion unless we employ an edge-sparsification procedure. However, as our earlier calculations reveal, merely maintaining a sparse cut-sparsifier dynamically is not sufficient. In order to make the hierarchical construction efficient in the dynamic setting, we also crucially need to develop a cut sparsification algorithm that \emph{sparsifies recourse}. Specifically, while a single update to $G$ might cause $\Omega(j)$ changes in the dense core $C$, we show how to efficiently maintain a sparsified core $\Tilde{C}$ that undergoes only  $O(\polylog(n))$ many edge updates after processing the updates in $C$. Consequently, the recourse of $\Tilde{C}$ is significantly smaller than the number of edges that are moved in the underlying dense graph $C$. This crucially enables us to avoid suffering from a large blow-up in recourse on higher levels in the hierarchy.

In order to achieve this, we exploit the structure of the updates of the cores.  Although $C(H)$ may experience $\Omega(j)$ edge insertions and deletions per update, these changes have a highly structured form. Specifically, they correspond to \emph{vertex splits}, a type of structured update that was also crucially exploited in several hierarchical graph-based data structures \cite{Forster:2019aa, Chechik:2020aa, Goranci:2021ab, chen2022maximumflowminimumcostflow}. 
 
\begin{definition}[Vertex Split]
    Given a $G = (V, E)$ and a set of vertices $U \subseteq N_G(v)$ satisfying $|U| \leq |N_G(v)|/2$. Then we say that the vertex $v$ is split in $G$ if the vertex set of $G$ changes to $V \cup \setof{v'}$, such that $N_{G}(v) = N_G(v) \setminus U$ as well as $N_{G}(v') = U$.
\end{definition}
In other words, a vertex split moves a subset $U$ of $v$'s neighbors to a newly created vertex $v'$. We then establish a key structural property: the minimum spanning forest $F'$ of a graph $G'$ obtained by a vertex split differs from the original minimum spanning forest $F$ by at most a single extra edge.

This crucial observation serves as the main building block of our approach. Leveraging it, we can modify the dynamic cut-sparsifier data structure of \cite{7782947}. The algorithm maintains so-called bundles $B$ of spanning forests packed into the graph $G$. It then alternates between sub-sampling the edges that are not in the maintained spanning forests and again maintaining a packing for the remaining edges.

By exploiting the low-recourse property of spanning forests under vertex splits, we show that it is possible to maintain such a packing with extremely low recourse. This, in turn, leads to a dynamic cut-sparsification algorithm with similarly low recourse guarantees. We state the main theorem below. Due to the versatility of dynamic cut-sparsification algorithms, we believe this result is of independent interest. See~\Cref{sec:cut-sparsifier} for a detailed explanation. 
\begin{theorem}\label{thm:overview_cutsparsify} Let $G$ be a fully dynamic graph on initially $n$ vertices and $m$ edges. Then, for a sequence of up to $m$ edge insertions, $n$ vertex splits and $D$ edge deletions, there is a randomized algorithm that explicitly maintains a $(1 \pm \epsilon)$-cut sparsifier $\Tilde{G}$ of $G$ on $\Otil(n)$ many edges. 

    The algorithm takes $\Otil(m)$  total time and the sparsifier $\Tilde{G}$ undergoes $O(\polylog(n)) \cdot (n+D)$ total recourse (i.e., number of edge deletions and edge insertions) for all the updates. The algorithm is correct with high probability.
\end{theorem}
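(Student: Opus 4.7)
The plan is to white-box the dynamic cut sparsifier of Abraham, Durfee, Koutis, Krinninger, and Peng~\cite{7782947} and adapt it to support vertex splits with provably low recourse. First, I would recall the framework: the edges of $G$ are organized into $O(\log n)$ levels, and at each level $\ell$ the algorithm extracts $O(\log n)$ edge-disjoint \emph{bundles}, where each bundle is a union of $O(\log n)$ edge-disjoint spanning forests of the current residual graph $G_\ell$. The non-bundle edges are independently subsampled with probability $1/2$, re-weighted by $2$, and become $G_{\ell+1}$. The sparsifier $\tilde{G}$ is the union of all bundled edges across all levels plus the base case; by~\cite{7782947} it is a $(1\pm\epsilon)$-cut sparsifier with high probability and contains $\tilde{O}(n)$ edges. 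The workhorse of the static analysis is that spanning forests form a ``conductor'' whose cuts certify a good cut-approximation when combined with random sampling on the residual.

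Second, I would establish the key structural lemma: if $F$ is a minimum spanning forest (MSF) of a weighted graph $H$, and $H'$ is obtained from $H$ by splitting a vertex $v$ into $\{v, v'\}$ (moving a subset $U \subseteq N_H(v)$ to $v'$), then the MSF $F'$ of $H'$ differs from $F$ by at most one edge. The argument is short: $F$ viewed in $H'$ (with edges $(v,u)$ for $u \in U$ relabeled as $(v',u)$) is still acyclic, hence a forest. It fails to span only when $v$ and $v'$ lie in the same connected component of $H'$ but in different components of this relabeled $F$; in that case, adding the minimum-weight reconnecting edge together with the classical cycle/cut optimality argument gives $F' = F \cup \{e\}$.

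Third, I would use this lemma to maintain each of the $O(\log^2 n)$ spanning forests dynamically. Edge insertions and deletions are handled with a standard dynamic MSF data structure (Holm--Lichtenberg--Thorup or equivalent) supporting polylog-amortized-time updates with $O(1)$ amortized change to the forest. Vertex splits are handled atomically: I relabel the forest edges incident to $v$ whose other endpoint lies in $U$ as $(v',u)$, then query a link-cut tree to test whether $v$ and $v'$ remain connected in the forest; if not but they are connected in $H'$, I insert the single reconnecting edge guaranteed by the structural lemma. A non-bundle edge update propagates to level $\ell+1$ only with probability $1/2$, so in expectation an update at the top cascades to $O(1)$ forests per level; summing over $O(\log n)$ levels with $O(\log n)$ bundles each yields $\polylog(n)$ recourse per original operation in $G$.

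The main obstacle I expect is preserving the cut-approximation guarantee of~\cite{7782947} under the new operation, because their analysis hinges on the independence of the subsampling coins and can appear sensitive to edge re-indexing. I would handle this by pinning each subsampling coin $\pi_e \in [0,1]$ to the \emph{edge identity} (fixed at the edge's creation and never regenerated); vertex splits only relabel endpoints, so every $\pi_e$ persists and the collection of sampling decisions is structurally unchanged, letting the cut-Chernoff arguments of~\cite{7782947} transfer verbatim. Aggregating the $O(\polylog n)$ per-operation recourse over the $n$ vertex splits and $D$ edge deletions yields the claimed $O(\polylog n) \cdot (n+D)$ recourse bound, while the $\tilde{O}(m)$ total time follows by charging each of the $\tilde{O}(m)$ dynamic MSF operations polylog amortized cost.
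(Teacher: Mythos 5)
Your approach is essentially the paper's: the same bundle-of-spanning-forests hierarchy white-boxed from \cite{7782947}, the same key structural lemma that a minimum spanning forest changes by at most one extra edge under a vertex split, dynamic MSF maintenance via \cite{10.1145/502090.502095}, and the same resolution of the randomness issue (per-edge sampling coins fixed at edge creation, correctness against an oblivious adversary). The structural lemma and its proof sketch match the paper's split-recourse lemma essentially verbatim.

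There is, however, a genuine gap in your recourse accounting. You establish a per-operation recourse of $O(\polylog(n))$ and then ``aggregate over the $n$ vertex splits and $D$ edge deletions,'' silently discarding the $m$ edge insertions. But an inserted edge can immediately enter a spanning forest of some residual graph (e.g.\ if it connects two components) and hence enter $\Tilde{G}$, so per-operation accounting over \emph{all} updates gives only $\Otil(m+n+D)$ total recourse --- which defeats the purpose of the theorem, whose entire point is that the recourse is independent of the possibly much larger number of insertions. The paper closes this with a global counting argument that your proposal lacks: edge insertions and vertex splits cause \emph{only edge insertions} into $\Tilde{G}$ (never deletions); $|E(\Tilde{G})| = \Otil(n)$ holds at all times; and an edge leaves $\Tilde{G}$ only when it is deleted from $G$. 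Hence the total number of deletions from $\Tilde{G}$ is $O(\polylog(n) \cdot D)$, and the total number of insertions is at most the final size plus the total number of deletions, giving $O(\polylog(n))\cdot(n+D)$ overall. Relatedly, your justification of the cascade (``propagates with probability $1/2$, so in expectation $O(1)$ forests per level'') conflates the inter-level subsampling with the deterministic cascade \emph{within} a single bundle: a deletion can ripple through all $\Theta(\log^2 n)$ forests of one bundle, and the correct argument is that the intermediate replacements telescope, so the union of the forests nets out to one deletion plus one insertion. Both fixes are short, but without them the claimed recourse bound is not established.
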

We again emphasize that while a vertex split causes up to $\Omega(|N(v)|)$ edges to move, the (amortized) recourse in the sparsifier is limited to $O(\polylog(n))$, which is independent of $|N(v)|$. This constitutes a key improvement over directly using \cite{7782947}, which would only guarantee a recourse of $\Otil(|N(v)|)$ -- unacceptably large for our purposes, as noted earlier. 

When applied to the graphs from \Cref{theorem:overview_2level}, we are able to show the following result.
\begin{lemma}\label{thm:overview_sparsifiedcore}
    Let $H$ be an $O(j)$-tree maintained by \Cref{theorem:overview_2level}, and $C = C(H)$ be its core. Then we can initialize and maintain a sparsified core $\Tilde{C}(H)$ over the full sequence of $O(j)$ updates to $G$ in total time $\Otil(m)$ and total recourse (i.e., edge insertions/deletions in $\Tilde{C}(H)$) of $\Otil(j)$. 
\end{lemma}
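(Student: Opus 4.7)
The plan is to instantiate the dynamic cut-sparsification algorithm of \Cref{thm:overview_cutsparsify} on the core $C(H)$, after recasting the updates caused by edge modifications in $G$ as a combination of edge insertions/deletions and vertex splits. The low-recourse guarantee of \Cref{thm:overview_cutsparsify} then yields the claimed $\tilde{O}(j)$ bound essentially for free.

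First, I would analyze how the core $C(H)$ evolves under a single edge update to $G$. By item 3 of \Cref{thm:overview_jtree}, $|C^{(t)}| \leq |C^{(0)}| + O(t)$, so after $O(j)$ updates to $G$ the core gains at most $O(j)$ new vertices. Edge updates whose endpoints are already in the root set $R$ translate into at most $O(1)$ edge insertions/deletions in $C$. The interesting case is when a vertex $v \in V \setminus R$ is promoted to $R$: here the structural point is that, when $C$ is viewed as the graph obtained from $H$ by contracting each $F$-subtree into its root, this promotion corresponds exactly to a vertex split of the core vertex $r \in R$ whose $F$-subtree previously contained $v$. Concretely, $r$ is split into $r$ and $v$, with the previously $r$-incident core edges redistributed between $r$ and $v$ according to which side of the re-rooted forest their $G$-representative now attaches to, up to $O(1)$ additional edge insertions that account for the direct $F$-edge from $v$ to its parent in $R$.

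Next, I would feed $C(H)$ together with this reformulated stream of updates into the data structure of \Cref{thm:overview_cutsparsify}. Initialization on the starting core of at most $m$ edges costs $\tilde{O}(m)$. Over the full horizon of $O(j)$ updates to $G$, the core $C(H)$ then incurs at most $O(j)$ vertex splits together with $O(j)$ direct edge insertions/deletions. Plugging these parameters into \Cref{thm:overview_cutsparsify} gives total running time $\tilde{O}(m)$ and total recourse on $\tilde{C}(H)$ bounded by $O(\polylog(n)) \cdot O(j) = \tilde{O}(j)$, which is exactly what the lemma claims.

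The main obstacle is carrying out the structural reduction carefully: one has to specify how $F$ is re-rooted when $v$ joins $R$, check that the resulting change on the contracted core is literally a vertex split (plus $O(1)$ auxiliary edge updates), and, in particular, verify that the neighbor-partition size constraint in the definition of vertex split is respected, or that the algorithm of \Cref{thm:overview_cutsparsify} can be invoked with the roles of $r$ and $v$ symmetrically interchanged when needed. One also has to confirm that the data structure of \Cref{thm:overview_jtree} exposes, for each promotion of $v$, the partition of $r$'s neighbors that the sparsifier needs in order to perform this split online. Once this translation layer is in place, the quantitative claim follows as a direct instantiation of \Cref{thm:overview_cutsparsify}.
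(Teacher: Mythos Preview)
Your approach is essentially the same as the paper's: recast the evolution of $C(H)$ as vertex splits plus edge updates, then invoke \Cref{thm:overview_cutsparsify}. There is, however, one inaccuracy in your accounting that is worth flagging.

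You claim that promoting a vertex $v$ into the root set causes a vertex split of $r$ together with ``up to $O(1)$ additional edge insertions.'' This undercounts the insertions. When the forest component rooted at $r$ splits into two pieces $T_r$ and $T_v$, every edge of $G$ that was previously internal to that component but now has one endpoint in $T_r$ and the other in $T_v$ becomes a \emph{new} crossing edge and must be inserted into the core. Over the whole sequence of $O(j)$ promotions this yields up to $O(m)$ edge insertions into $C(H)$, not $O(j)$; the paper records exactly this in the final sentence of \Cref{thm:overview_jtree}'s formal version (Theorem~\ref{thm:jtree}): the core undergoes $O(j)$ vertex splits, $O(j)$ edge deletions, and $O(m)$ edge insertions.

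This does not break your conclusion, but the reason it does not is a point you should state explicitly: the recourse bound in \Cref{thm:overview_cutsparsify} is $O(\polylog(n))\cdot(n+D)$ and is \emph{independent of the number of edge insertions}. With $n=O(j)$ vertex splits and $D=O(j)$ deletions, the recourse is $\tilde O(j)$ regardless of the $O(m)$ insertions, and the running time is still $\tilde O(m)$. The paper singles out precisely this insertion-independence as ``crucial'' for the application. So your plan is right, but the structural reduction should be stated as ``$O(j)$ vertex splits, $O(j)$ deletions, and up to $O(m)$ insertions,'' and you should then appeal to the fact that insertions do not contribute to recourse.
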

We can now re-visit the analysis of the case $L=2$ from before. A single update in $G$ first triggers an update to the data structure from \Cref{theorem:overview_2level}, which maintains the set $\cH_1$. Processing the update takes amortized time $\Otil(n^2 / j_1^2) = \Otil(n/j)$. The cores $C_1$ of the graphs $H_1$ then change by $O(1)$ vertex splits. These in turn are then processed by \Cref{thm:overview_sparsifiedcore} to maintain sparsified cores $\Tilde{C}_1$ in amortized time $\Otil(n/j_1)$. After processing the update, $\Tilde{C}_1$ will have changed by $O(\polylog(n))$ edge insertions and deletions. Finally, these update are then forwarded to the data structure of \Cref{theorem:overview_2level} to maintain the $O(j_2)$-trees in $\cH_2$. This is done in amortized time $\Otil(j_1^2 / j_2^2) = \Otil(n/j)$. In total, the amortized update time of the whole hierarchy can be bounded as $\Otil(n/j)$ -- already significantly improving the anortized update time of $\Otil(n^2 / j^2)$ from the two-level scheme.

With some care, the above analysis extends to a hierarchy consisting of $L \geq 3$ levels. Similar to before, the key insight is that, rather than incurring a recourse of $\Omega(j_L j_{L_-1} \ldots j_1)$ as would arise by a naive cut-sparsification approach, the use of our low-recourse sparsifier from \Cref{thm:overview_cutsparsify} allows us to bound the additional recourse at each level by only $O(\log^{O(L)}n)$. Additionally, we periodically rebuild parts of the hierarchy to ensure the cores remain within the desired size. We perform the full analysis of the hierarchical construction in \Cref{sec:hierarchy}.

\paragraph{Applications.}
To answer \(s\)-\(t\)-min-cut, multi-way-cut, and multi-cut queries, it suffices to move the relevant vertices into the cores \(C(H)\) of all \(O(j)\)-trees \(H \in \mathcal{H}\), and then apply static algorithms to solve these problems within the cores. In contrast, maintaining query access to an approximate sparsest cut value is technically more challenging. This difficulty arises because the sparsest cut is a \emph{global property} of the entire \(j\)-tree \(H\), and thus cannot be resolved by restricting computation to the core alone.

We address this challenge by observing that any sparsest cut in \(H\) must either (a) cut only core edges or (b) cut only forest edges. Based on this observation, we separately maintain the sparsest core-edges-only cut and forest-edges-only cut. Upon query, we can use the standard sparsest cut approximation algorithm to compute an approximate sparsest cut inside the core. At the same time, we must also track the sparsest cut value in the dynamically changing forest \(F\), which spans \(\Omega(n)\) vertices. To this end, we extend the $2$-level scheme from \Cref{theorem:overview_2level} in a careful manner to maintain this value efficiently. Finally, upon query we return the minimum of these two candidate values. See \Cref{sec:applications}, \Cref{appendix:multicut} and \Cref{appendix:sparest-cut-proof} for details.
\subsection{Related Works}
\label{sec:related-works}

\paragraph{Related Technical Works.} Our dynamic algorithm closely resembles the fully dynamic approximate min-ratio cycle data structure from \cite{chen2022maximumflowminimumcostflow}. Conceptually, one can view our approach as an analogue for dynamic $\ell_{\infty}$ approximation problems of their dynamic $\ell_1$ optimization framework, albeit while avoiding many of the technical difficulties. Whereas their data structure required an algorithm for dynamically maintaining a low-recourse spanner under vertex splits, we instead need to maintain a low-recourse cut-sparsifier under vertex splits. Because they also need to explicitly maintain short path embeddings, their sparsification procedure is much more involved than ours. Our algorithm also is structurally similar to the fully dynamic APSP data structure from \cite{KyngMG24}, although again we avoid most of the difficulties as our cut-sparsifier does not require the input graphs to be of low degree.

\paragraph{Partially Dynamic Maximum Flow.} For incremental graphs, it is known how to obtain $(1+\epsilon$)-approximation for undirected $p$-norm flows~\cite{BrandCKLPGSS24} and directed min-cost flow~\cite{chen2024almost} in $O(m^{1+o(1)}\epsilon^{-1})$ total update time. A previous work of Goranci and Henzinger~\cite{GoranciH23} have obtained $(1+\epsilon)$-approximation in $O(m^{3/2 + o(1)} \epsilon^{-1/2})$ total update time for directed unweighted maximum flow. A variant of their algorithm can also be extended to maintain an exact solution in $O(n^{5/2 + o(1)})$ total time, albeit only for undirected graphs. Similarly, for the exact case, Gupta and Khan~\cite{GuptaK21} have shown how to maintain a directed maximum flow in $\tilde{O}(mF^*)$ total time, where $F^*$ stands for the max-flow value at the end of the insertion sequence. Recently, Goranci, Henzinger, R{\"{a}}cke, and Sricharan~\cite{GoranciHRS25} has shown an algorithm for $(1+\epsilon)$-approximate $s$-$t$ maximum flow in undirected, uncapacitated graphs in $O(m+nF^* \epsilon^{-1})$ total time which is the first algorithm for the problem with poly-logarithmic amortized update time for dense graphs. For decremental graphs, van den Brand, Chen, Kyng, Liu, Meierhans, Gutenberg, and Sachdeva~\cite{brand2024almostlineartimealgorithmsdecremental} have shown a $(1+\epsilon)$-approximate directed min-cost flow algorithm with $O(m^{1+o(1)} \epsilon^{-1})$ total update time.

\paragraph{Dynamic Algorithms for Global Minimum Cut.} For incremental graphs Goranci, Henzinger and Thorup \cite{GoranciHT16} have shown a dynamic algorithm exactly maintaining the size of a global minimum-cut in $\tilde{O}(1)$ amortized update time. Known results for the fully dynamic setting obtain weaker guarantees. Goranci, Henzinger, Nanongkai, Saranurak, Thorup, and Wulff{-}Nilsen~\cite{GoranciHNSTW23} have obtained an exact algorithm for the problem with $\tilde{O}(n)$ worst-case and $\tilde{O}(m^{30/31})$ amortized update time with and without randomization respectively. In the approximate regime the state of the art result was shown by El-Hayek, Henzinger and Li~\cite{El-HayekH025} which achieves $(1+o(1))$-approximation in $O(n^{o(1)})$ amortized update time.

\paragraph{Fast Static Algorithms for Cut-Based Problems.} The sparsest cut problem has received significant attention in the static setting \cite{AlonM85, AndersenP09, LeightonR99,AroraHK10, KhandekarRV09, AroraK16, OrecchiaSVV08}. The classical work of Arora, Rao, and Vazirani~\cite{Arora:2009aa} presented the first algorithm for the problem which achieves $O(\sqrt{\log n})$-approximation for the problem in polynomial time. Sherman \cite{Sherman:2009aa} has shown that in order to obtain $O(\sqrt{\epsilon^{-1}\log n })$-approximation for the problem its sufficient to solve $\tilde{O}(n^{\epsilon})$ maximum flow instances. An influential sparsifier related to cut-based problems is the Gomory-Hu tree \cite{gomory1961multi}, which is a weighted tree on the nodes of of the graph preserving any minimum $s-t$ cut values. It has been shown to be a near-optimal data structure for answering $s-t$ cut queries by Abboud, Krauthgamer and Trabelsi \cite{AbboudKT21}. For more than 50 years no faster algorithm was known for finding such trees than the initial paper of Gomory and Hu computing the structure through $n-1$ max-flow computations. A recent line of work has reduced its construction time to $O(n^{2+o(1)})$ \cite{AbboudKT21b, AbboudKT21a, 0006PS21, Zhang22}, also showing that the all pairs min-cut problem can be solved in the same amount of time. Finally, Abboud, Panigrahi, and Saranurak~\cite{Abboud:2023aa} achieved a construction time of \(m^{1+\littleo{1}}\) for Gomory-Hu trees by employing the almost-linear time max-flow algorithm from \cite{chen2022maximumflowminimumcostflow}.


\section{Preliminaries}
\paragraph{Graph Theory.}
Throughout this paper, all the graphs $G = (V, E_G)$ we study will be undirected multi-graphs. We denote by $n = |V|$ its number of vertices, and $m = |E_G|$ its number of edges. Most of the times, the graphs are capacitated, i.e., equipped with a function $u_G \in \R_{>0}^{E}$ on their edges. For a capacitated graph, we usually denote by $U$ the capacity ratio of $G$, defined as $U = \max_{e, e' \in E_G} u_G(e) / u_G(e')$. In our main results, we will assume that $U$ is polynomially bounded, i.e., that $U = O(\poly{n})$. For a vertex $v \in V$ we denote by $N_G(v)$ the neighborhood of $v$ in $G$, and treat it as a multi-set. Unless otherwise mentioned, the basic data structure we will use to access and modify graphs will be an adjacency list representation of the graph.

For a cut $(S, V \setminus S)$ of a graph $G$, we define $E_G(S, V \setminus S)$ as the set of edges of $G$ with exactly one endpoint in the set $S$, and also define $U_G(S) := \sum_{e \in E_G(S, V \setminus S)} u_G(e)$ as the total capacity crossing the cut.

Given two graphs $G_1 = (V, E_1, u_1), G_2 = (V, E_2, u_2)$ on the same vertex set, scalars $\alpha, \beta > 0$, we define the linear combination graph $G := \alpha G_1 + \beta G_2$ as a graph on vertex set $V$ with edge set $E_G = E_1 \cup E_2$ and capacities $u_G(e) = \alpha u_{1}(e) + \beta u_{2}(e)$. 

Let $G_1 = (V_1, E_1, u_1), G_2 = (V_2, E_2, u_2)$ be two graphs such that $V_2 \subseteq V_1$, i.e., $G_2$ is defined on a subset of the vertices of $G_1$. Then we define the union $G := G_1 \cup G_2$ as a (multi-)graph on vertex set $V_1$ and edge set $E_1 \cup E_2$, where the capacities of the multi-edges correspond to their capacities in $G_1$ or $G_2$ respectively.

\paragraph{Adversary Model.} All our results assume an \emph{oblivious adversary}, i.e., that the full update sequence is independent of the outputs of the algorithm as well as the internal randomness used inside the algorithm.

\paragraph{Dynamic Graphs.} 
We call a sequence of graphs $G^{(t)}$ indexed by time $t$ a \emph{dynamic graph}. We will be interested in algorithms that run on a dynamic input graph $G^{(t)}$ such that for all times $t$, $G^{(t+1)}$ differs from $G^{(t)}$ by a single edge insertion or edge deletion. In this paper, whenever we used use superscript $t$, it always denotes the time after all data structures have processed the $t$'th update to $G$. We remark that while the input graph $G$ changes by a single edge insertion / deletion, we will also encounter auxiliary graphs such that $H^{(t+1)}$ differs from $H^{(t)}$ by multiple edges. We will denote the total number of edge insertions / deletions required to perform in order to transform $H^{(t)}$ to $H^{(t+1)}$ as the \emph{recourse} of $H$.

\begin{definition}[Spanning Forest]
    Let $G$ be a graph. Then we call a sub-graph $F$ of $G$ a spanning forest of $G$ if $F$ is a forest and every connected component of $G$ is also a connected component in $F$. If the forest $F$ consists only of a single connected component, we call it a spanning tree.
\end{definition}

\begin{definition}[Rooted Forest]
    We call a graph $F = (V, E_F, u_F)$ a rooted forest with root set $R \subseteq V$ if every connected component $T$ of $F$ is a tree satisfying $|V(T) \cap R| = 1$.

    Given a vertex $v \in V$ in a connected component $T_v$, we let $\root_F(v)$ denote the unique root $r_v \in R \cap V(T_v)$.
\end{definition}
For a rooted forest $F = (V, E_F, u_F)$ and a vertex $v \in V$, we define $v^{\downarrow} \subseteq V$ as the set of vertices who are descendants of $v$ in $F$ (where we agree that $v$ is a descendant of itself).
\begin{definition}[Decremental Rooted Forest]
    We call a sequence of rooted forests $\big(F^{(t)} = (V, E_F^{(t)}, u_F^{(t)}) \big)_{t \in \mathbb{N}}$ with root set $R^{(t)}$ decremental, if for all $t \geq 1$ we have $E_F^{(t+1)} \subseteq E_F^{(t)}$, $R^{(t+1)} \supseteq R^{(t)}$ and $u_F^{(t+1)} \leq u_F^{(t)}$.
\end{definition}
Note that  two consecutive graphs can differ by more than 1 edge.

\paragraph{Graph Embeddings.} We start with the definition of multi-commodity flows. 
\begin{definition}
    Given a graph $G = (V, E_G, u_G)$, we call a collection $\boldsymbol{f} = \setof{f_i}_{1 \leq i \leq k}$ of $s_i$-$t_i$ flows in $G$ a multi-commodity flow. For an edge $e \in E$ we define $|\boldsymbol{f}(e)| := \sum_{i=1}^k |f_i(e)|$ to be the total amount of flow crossing edge $e$. We say the multi-commodity flow $\boldsymbol{f}$ is $\alpha$-feasible if, for every $e \in E_G$, we have $|\boldsymbol{f}(e)| \leq \alpha \cdot u_G(e)$. 
\end{definition}
Given this definition, we can now introduce the concept of embeddability between two graphs. 
\begin{definition}\label{def:embeddability}
    Let $G = (V, E_G, u_G)$ and $H = (V, E_H, u_H)$ be two graphs on the same vertex set and $\boldsymbol{f} = \setof{f_e}_{e \in E_H}$ be a multi-commodity flow in $G$. Then, we call $\boldsymbol{f}$ an embedding of $H$ into $G$ if, for every $e = (s, t) \in E_H$, the flow $f_e$ routes $u_G(e)$ amount of flow between $s$ and $t$.

    If $\boldsymbol{f}$ is $\alpha$-feasible in $G$, then we say that $H$ is $\alpha$-embeddable into $G$ (denoted as $H \preceq_{\alpha} G$). 
\end{definition}
Whenever $H$ is $1$-embeddable into $G$, we simply say that $H$ is embeddable into $G$ and write $H \preceq G$. This notion of embedding allows us to approximate the cut structure of one graph by that of a different, hopefully simpler graph. 
\begin{lemma}\label{lma:cutpreserve}
    Let $G, H$ be two graphs on the same vertex set and suppose that $H \preceq_{\alpha} G$. Then for any cut $(S, V \setminus S)$ we have that $U_H(S) \leq \alpha \cdot U_G(S)$. 
\end{lemma}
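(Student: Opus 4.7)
The plan is to use the standard flow-cut argument: each demand in the embedding must cross any given cut, so the total flow across the cut lower bounds the demand across it, while $\alpha$-feasibility upper bounds that same quantity in terms of $U_G(S)$.

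More concretely, I would fix an arbitrary cut $(S, V \setminus S)$ and an $\alpha$-feasible embedding $\boldsymbol{f} = \{f_e\}_{e \in E_H}$ of $H$ into $G$ whose existence is guaranteed by $H \preceq_\alpha G$. First, I would observe that for each edge $e = (s,t) \in E_H$ with $s \in S$ and $t \notin S$, the single-commodity flow $f_e$ ships $u_H(e)$ units between $s$ and $t$ in $G$; by standard flow conservation (applied to the vertex subset $S$), the net flow of $f_e$ across the cut is exactly $u_H(e)$, and hence the sum $\sum_{e' \in E_G(S, V \setminus S)} |f_e(e')|$ is at least $u_H(e)$. Summing over all cut-crossing edges of $H$ gives
\[
\sum_{e' \in E_G(S, V \setminus S)} |\boldsymbol{f}(e')| \;=\; \sum_{e' \in E_G(S, V \setminus S)} \sum_{e \in E_H} |f_e(e')| \;\geq\; \sum_{e \in E_H(S, V \setminus S)} u_H(e) \;=\; U_H(S).
\]

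Second, I would use $\alpha$-feasibility of $\boldsymbol{f}$ edgewise: $|\boldsymbol{f}(e')| \leq \alpha \cdot u_G(e')$ for every $e' \in E_G$, so
\[
\sum_{e' \in E_G(S, V \setminus S)} |\boldsymbol{f}(e')| \;\leq\; \alpha \sum_{e' \in E_G(S, V \setminus S)} u_G(e') \;=\; \alpha \cdot U_G(S).
\]
Chaining the two inequalities yields $U_H(S) \leq \alpha \cdot U_G(S)$, as desired.

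There is essentially no obstacle; the only subtlety is justifying the flow-conservation step that the net flow of $f_e$ across the cut equals the demand $u_H(e)$ whenever $e$ itself crosses the cut (and equals $0$ otherwise, in which case it contributes nothing to the lower bound). This is the textbook statement that for an $s$-$t$ flow of value $d$, for any vertex subset $S$ containing exactly one of $s,t$, the net flow leaving $S$ equals $\pm d$. Since $|f_e(e')| \geq $ net flow of $f_e$ through $e'$ and we only need the lower bound on the total absolute flow across the cut, taking absolute values preserves the inequality throughout.
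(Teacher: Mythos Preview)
Your proof is correct and follows essentially the same approach as the paper's: both fix the embedding $\boldsymbol{f}$, lower-bound the total flow across the cut by $U_H(S)$ via flow conservation, upper-bound it by $\alpha\,U_G(S)$ via $\alpha$-feasibility, and chain the two inequalities. Your write-up is simply a more unpacked version of the paper's one-line computation.
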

\begin{proof}
    Let $\f$ be the embedding from $H$ into $G$. The total amount of flow crossing the cut $(S, V \setminus S)$ is at least the capacity of the cut in $H$, i.e., $U_H(S)$. Furthermore by assumption we know that for each edge $e \in E_G$, we have $|\f(e)| \leq \alpha u_G(e)$. Hence
    \[
    \alpha U_G(S) =  \sum_{e \in E_G(S, V \setminus S)} \alpha u_G(e) \geq \sum_{e \in E_G(S, V \setminus S)} |\f(e)| \geq \sum_{e \in E_H(S, V \setminus S)} u_H(e) = U_H(S).
    \]
\end{proof}
\paragraph{$\alpha$-Cut-Preserving Set.}

\begin{definition}\label{def:cut-preserving-collection}
    Let $G$ be a graph and $\cH$ be a collection of graphs on the same vertex set as $G$. We say the collection $\alpha$-preserves the cuts of $G$ with 
    probability $p$, if 
    \begin{enumerate}
        \item for every cut $(S, V \setminus S)$, we have $U_G(S) \leq \min_{H \in \cH} U_H(S)$, and
        \item for any cut $S$, $\min_{H \in \cH} U_H(S) \leq \alpha \cdot U_G(S)$ with probability at least $p$.
    \end{enumerate}
\end{definition}
    Note that if $\cH$ preserves the cuts of $G$ with probability $1 - 1/n^c$, this does not imply that all exponentially many cuts of $G$ are preserved. It only guarantees that for any fixed cut it is preserved with probability $1-1/n^c$. However, if one considers any set of  at most $n^{c-1}$ cuts, then all of them are preserved simultaneously with probability at least $1-1/n$, which we will call \emph{with high probability (whp)}. In particular, if $c\ge 3$ then the set of $s-t$ min-cuts is preserved simultaneously with high probability .

\paragraph{Cut Sparsification.} We define what it means for a graph $H \subseteq G$ to be a cut sparsifier of $G$.
\begin{definition}
    Let $G = (V, E_G, u_G)$ be a graph and $H = (V, E_H, u_H)$ be a sub-graph of $G$. Let $\epsilon > 0$. Then we say that $H$ is a $(1 \pm \epsilon)$-cut sparsifier of $G$ if for every cut $(S, V \setminus S)$ of $G$, it holds that
    \[
    (1-\epsilon) \cdot U_G(S) \leq U_H(S) \leq (1+\epsilon) \cdot U_G(S).
    \]
\end{definition}
Note that this definition requires the upper and lower bound to hold for all cuts, not just to hold for any cut with high probability.

\paragraph{Standard Scalar Chernoff Bound.} In order to prove guarantees of the dynamic cut-sparsification algorithm in \Cref{sec:cut-sparsifier}, we need the following standard Chernoff Bound.
\begin{theorem}\label{thm:chernoff}
    Let $X_1, \ldots, X_n$ be indepedent random variables taking values in $\setof{0,1}$, and let $X = \sum_{i=1}^n X_i$ be their sum. Let $\mu \geq 0$ be such that $\E{X} \leq \mu$. Then for any $\delta \geq 0$,
    \[
    \Pr[X \geq (1 + \delta) \mu] \leq \exp \rbrack{-\delta^2 \mu / (2+ \delta)}
    \]
\end{theorem}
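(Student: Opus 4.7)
This is the classical Chernoff bound in the form often attributed to the statement with denominator $2+\delta$, and I would follow the standard exponential-moment (Chernoff method) proof, together with a short calculus estimate at the end.

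The plan is to start with Markov's inequality applied to the exponential moment generating function. For any $t > 0$, since $e^{tX}$ is a non-negative random variable,
\[
\Pr[X \ge (1+\delta)\mu] \;=\; \Pr\bigl[e^{tX} \ge e^{t(1+\delta)\mu}\bigr] \;\le\; \frac{\E{e^{tX}}}{e^{t(1+\delta)\mu}}.
\]
Next I would exploit independence of the $X_i$'s to factor the moment generating function: $\E{e^{tX}} = \prod_{i=1}^n \E{e^{tX_i}}$. Writing $p_i := \Pr[X_i = 1]$, each Bernoulli factor satisfies
\[
\E{e^{tX_i}} \;=\; 1 + p_i(e^t - 1) \;\le\; \exp\bigl(p_i(e^t - 1)\bigr),
\]
where the inequality is $1+x \le e^x$. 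Multiplying and letting $\mu' := \sum_i p_i = \E{X} \le \mu$ (and using that $e^t - 1 \ge 0$ for $t \ge 0$), I obtain $\E{e^{tX}} \le \exp(\mu(e^t - 1))$. Substituting back yields
\[
\Pr[X \ge (1+\delta)\mu] \;\le\; \exp\bigl(\mu(e^t - 1) - t(1+\delta)\mu\bigr).
\]

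I would then optimize over $t$ by choosing $t = \ln(1+\delta)$, which gives the well-known intermediate form
\[
\Pr[X \ge (1+\delta)\mu] \;\le\; \exp\bigl(-\mu\bigl((1+\delta)\ln(1+\delta) - \delta\bigr)\bigr).
\]
The only remaining work is to convert this into the stated bound with denominator $2+\delta$, i.e.\ to prove the elementary analytic inequality
\[
(1+\delta)\ln(1+\delta) - \delta \;\ge\; \frac{\delta^2}{2+\delta} \quad \text{for all } \delta \ge 0.
\]
This is the step I expect to require the most care, although it is routine: I would set $h(\delta) := (1+\delta)\ln(1+\delta) - \delta - \delta^2/(2+\delta)$, verify $h(0) = 0$ and $h'(0)=0$, and then show $h''(\delta) \ge 0$ on $[0,\infty)$. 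A direct computation gives $h''(\delta) = 1/(1+\delta) - 8/(2+\delta)^3$, and the binomial expansion $(2+\delta)^3 = 8 + 12\delta + 6\delta^2 + \delta^3 \ge 8(1+\delta)$ for $\delta \ge 0$ establishes $h'' \ge 0$. Hence $h' \ge 0$ and $h \ge 0$, which plugged into the intermediate form completes the proof.
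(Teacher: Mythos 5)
Your proof is correct and complete. The paper states this as a standard Chernoff bound without providing any proof, so there is nothing to compare against; your argument is the textbook exponential-moment method (Markov on $e^{tX}$, factor by independence, bound each Bernoulli factor by $\exp(p_i(e^t-1))$, optimize at $t=\ln(1+\delta)$), and your closing calculus step correctly establishes $(1+\delta)\ln(1+\delta)-\delta \ge \delta^2/(2+\delta)$ via $h''(\delta)=1/(1+\delta)-8/(2+\delta)^3\ge 0$, which is exactly what is needed to pass from the $(1+\delta)\ln(1+\delta)-\delta$ form to the stated $\delta^2/(2+\delta)$ form.
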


\paragraph{Dynamic Tree Data Structures.} In this paper, we will need to efficiently maintain quantities over the vertices and edges of dynamically changing trees. The following data structures will allow us to do so.
\begin{lemma}[Link-Cut Tree~\cite{sleator1981data}]\label{thm:treeds}
    Given a rooted forest $F$ with capacities $u_F$, there is a deterministic data structure $\cD_1$ that supports the following operations in amortized time $O(\log n)$:
    \begin{enumerate}
        \item $\textsc{FindRoot}(u)$: Returns $\root_F(u)$, the root of $u$ in $F$.
        \item $\textsc{MakeRoot}(u)$: Makes $u$ the root of the tree in $F$ containinig $u$.
        \item $\textsc{PathMin}(u)$: Returns the edge $e_{\min}^F(u)$ with minimum capacity on the $u$-$\root_F(u)$ path in $F$.
        \item $\textsc{Cut}(e)$: Removes edge $e$ from $F$.
    \end{enumerate}
\end{lemma}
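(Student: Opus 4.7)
The plan is to implement $\cD_1$ as the Sleator--Tarjan Link-Cut Tree built on top of splay trees, since all four requested operations are either already primitive or reduce to one of the standard primitives. First I would decompose each tree of $F$ into \emph{preferred paths} and represent each preferred path by an auxiliary splay tree keyed by depth along the path. The splay trees are glued together by \emph{path-parent} pointers: the root of the splay tree representing a preferred path $P$ stores a pointer to the parent (in $F$) of the topmost node of $P$. The central primitive is $\textsc{Expose}(u)$, which re-chooses preferred paths along the $u$-to-root path so that $u$ lies on the preferred path containing the root, and then splays $u$ to the root of its splay tree. The standard heavy-path / access-lemma potential analysis gives amortized $O(\log n)$ per $\textsc{Expose}$, which will drive all of our bounds.

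Given $\textsc{Expose}$, the four operations are essentially corollaries. For $\textsc{FindRoot}(u)$, I call $\textsc{Expose}(u)$ and then walk left (i.e., toward the shallowest depth) inside the splay tree and splay again. For $\textsc{Cut}(e)$, writing $e = (u, \mathrm{parent}_F(u))$, I call $\textsc{Expose}(u)$, then split the splay tree at $u$ and detach the left subtree (its root becomes a new preferred-path root whose path-parent is null). For $\textsc{PathMin}(u)$, I augment every splay-tree node with the minimum edge capacity in its subtree (updated on every rotation in the standard way); after $\textsc{Expose}(u)$, the entire $u$-to-root path lies in one splay tree rooted at $u$, so the augmented value at that root is the answer. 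For $\textsc{MakeRoot}(u)$, I $\textsc{Expose}(u)$ and then toggle a \emph{reversal bit} lazily pushed along the splay tree, which flips the left/right orientation of the represented path; because edge capacities are symmetric, the path-min aggregate is unaffected by the reversal, so I only need to ensure reversal bits are pushed down during splay rotations before reading structural information.

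The main technical obstacle, already handled in the standard construction, is the interaction between augmentation and lazy reversal: I must be careful to push down any pending reversal bit at a node before rotating through it or before reading its left/right child, and I must recompute subtree aggregates bottom-up after every rotation. Once this bookkeeping is in place, every operation performs $O(1)$ calls to $\textsc{Expose}$ plus $O(\log n)$ amortized work for splay rotations, which yields the claimed $O(\log n)$ amortized bound per operation and completes the proof by invoking~\cite{sleator1981data}.
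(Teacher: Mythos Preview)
Your proposal is a correct sketch of the standard Sleator--Tarjan construction and would indeed establish the lemma. However, the paper does not prove this statement at all: it is stated as a black-box citation to~\cite{sleator1981data} in the Preliminaries section, with no accompanying argument. So there is nothing to compare against; you have simply supplied a proof where the paper relies on the reference.
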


\begin{lemma}[Euler-Tour Tree~\cite{henzinger1999randomized}]\label{thm:ett}
    Given a rooted forest $F = (V, E)$ and a vector $z \in \mathbb{Z}^{V}$, there is a deterministic data structure $\cD_2$ that supports the following operations in worst-case time $O(\log n)$:
    \begin{enumerate}
        \item $\textsc{FindRoot}(u)$: Returns $\root_F(u)$, the root of $u$ in $F$.
        \item $\textsc{MakeRoot}(u)$: Makes $u$ the root of the tree in $F$ containinig $u$.
        \item $\textsc{Cut}(e)$: Removes edge $e$ from $F$.
        \item $\textsc{Sum}(v)$: Returns $\sum_{u \in v^{\downarrow}} z(u)$.
        \item $\textsc{Add}(v, \Delta)$: Updates $z(v) \gets z(v) + \Delta$.
    \end{enumerate}
\end{lemma}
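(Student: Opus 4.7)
The plan is to build the data structure using the classical \emph{Euler-tour tree} representation, in which each tree of the rooted forest $F$ is encoded as its Euler tour, i.e., the sequence of vertex occurrences recorded during a depth-first traversal starting and ending at the root. A tree on $k$ vertices yields a sequence of length $2k-1$, in which each vertex $u$ appears as many times as $\deg_F(u)$ (with the root having an extra occurrence), and crucially the descendants of any vertex $v$ occupy exactly the contiguous sub-sequence between $v$'s first and last occurrence. I would then store each such sequence in a balanced binary search tree keyed by position in the tour (for instance a splay tree or a treap), where a designated occurrence of each vertex $u$ (say its first) carries the value $z(u)$, and every BST node is augmented with the aggregate sum of $z$-values in its BST-subtree as well as pointers to first/last occurrences of each vertex.

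With this representation, the five operations reduce to standard BST primitives that each take $O(\log n)$ amortized or worst-case time. $\textsc{FindRoot}(u)$ follows the path from $u$'s occurrence to the minimum of its BST, since the leftmost element of the Euler tour is the root. $\textsc{Cut}(e)$ for $e=(p,c)$ with $c$ a child of $p$ removes the ``enter'' and ``leave'' occurrences of $e$ from the tour by three BST splits and one concatenation, correctly separating the tour into the descendant subtree rooted at $c$ and the complementary tree. $\textsc{MakeRoot}(u)$ splits the BST at $u$'s first occurrence into $A \concat B$ and reassembles the new tour as a cyclic shift starting at $u$, with the appropriate vertex occurrence inserted at the join to ensure it is a valid Euler tour of the re-rooted tree. $\textsc{Add}(v,\Delta)$ updates the $z$-value at $v$'s designated occurrence and repairs augmented sums along the root path. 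Finally, $\textsc{Sum}(v)$ locates $v$'s first and last occurrences in $O(\log n)$ time, performs two splits to isolate the contiguous range corresponding to $v^{\downarrow}$, reads the root-augment of that subtree, and re-concatenates.

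I would verify correctness of $\textsc{Sum}$ by the invariant that, no matter how many $\textsc{MakeRoot}$ and $\textsc{Cut}$ operations have been applied, the contents of the BST containing $v$ always equal the Euler tour of its current tree under the current rooting, and hence $v^{\downarrow}$ is contiguous in that tour with $z$-values stored exactly once per vertex (at the first occurrence). I would verify time bounds by noting that each operation performs a constant number of BST splits, joins, searches and augmented-value refreshes, each of which costs $O(\log n)$ on a balanced BST with subtree-sum augmentation.

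The main conceptual obstacle is handling $\textsc{MakeRoot}$ in a way that simultaneously (i) preserves validity of the Euler tour under the new rooting, (ii) keeps the ``first and last occurrence'' pointers of every affected vertex correct, and (iii) does not disturb the $z$-values, since the designated occurrence carrying $z(v)$ may shift; the cleanest fix is to always store $z(v)$ at whichever occurrence of $v$ is currently the leftmost in the BST, and to update only the at most two affected designations during the cyclic shift, so that the range $[\text{first}(v),\text{last}(v)]$ continues to contain $z(v)$ exactly once. All remaining work is routine balanced-BST bookkeeping.
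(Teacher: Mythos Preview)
The paper does not prove this lemma; it is stated in the preliminaries as a known data structure and simply cited to \cite{henzinger1999randomized}. Your sketch is essentially the standard Euler-tour tree construction from that reference, and the correctness argument (descendants form a contiguous range in the tour, so subtree sums reduce to range sums in an augmented balanced BST) is right.

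One small point worth tightening: the lemma claims \emph{worst-case} $O(\log n)$ per operation, but you offer splay trees as one option, which only give amortized bounds. To match the statement exactly you should commit to a balanced BST with worst-case split/join, e.g.\ red-black trees or $(a,b)$-trees; the augmentation and the rest of your argument go through unchanged.
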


\paragraph{Vertex Splits.}
The construction of our hierarchy requires to change the graph also by \emph{vertex splits}, which are defined as follows.
\begin{definition}[Vertex Split]\label{def:vsplit}
    Given a multi-graph $G = (V, E)$ and a (multi-)set of vertices $U \subseteq N_G(v)$ satisfying $|U| \leq |N_G(v)|/2$. Then we say that the vertex $v$ is split in $G$ if the vertex set of $G$ changes to $V \setminus {v} \cup \setof{v', v''}$, such that $N_{G}(v') = N_G(v) \setminus U$ as well as $N_{G}(v'') = U$.
\end{definition}
The following lemma shows that because the vertex split is encoded by the set $U$, which can be much smaller than the full neighborhood $N_G(v)$, a sequence of $n$ vertex splits can be simulated by $O(m \log n)$ many edge insertions and deletions. That is, we simulate a vertex split by re-naming the vertex $v$ to $v'$, removing the neighbors $U$ from $v'$, inserting a new vertex $v''$ and adding the neighbors in $U$ as neighbors of $v''$.
\begin{restatable}{lemma}{splitefficient}[See Lemma $7.9$ in \cite{chen2024almost} and \Cref{appendix:splitefficient}]\label{lma:splitefficient}
    Let $G = (V, E)$ be a dynamic graph on initially $n$ vertices and $m$ edges, undergoing $m$ edge updates and $n$ vertex splits over time. Then, this sequence of vertex splits can be simulated by $O(m \log n)$ edge insertions and deletions.
\end{restatable}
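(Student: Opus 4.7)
I would simulate each vertex split directly: to process the split of $v$ into $v', v''$ (with $v''$ receiving the smaller neighborhood $U$), I would rename $v$ to $v'$ (a free relabeling in the adjacency-list representation) and then, for every $u \in U$, delete the edge $(u, v')$ and insert the edge $(u, v'')$. This costs $2|U| \leq |N_G(v)|$ edge updates per split. The crux of the proof is therefore to show that, summed over all $n$ vertex splits, the total number of such moves is $O(m \log n)$, even though a single split could in principle move $\Omega(m)$ edges.

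\textbf{Potential function.} For each edge $e = (u,v)$ currently present in the graph, I would define $\phi(e) = \lfloor \log_2 \deg(u) \rfloor + \lfloor \log_2 \deg(v) \rfloor$ and let $\Phi = \sum_e \phi(e)$. Since at all times the degree of any vertex is $O(m) = O(\poly{n})$, each term $\phi(e)$ is $O(\log n)$, and hence $\Phi^{(0)} = O(m \log n)$. The plan is then to show: (a) each vertex split decreases $\Phi$ by at least the number of edges it moves; (b) each of the $m$ ordinary edge updates raises $\Phi$ by at most $O(\log n)$. Together these give (number of moves) $\leq \Phi^{(0)} + O(m \log n) = O(m \log n)$, so summing the $m$ original updates and the two edge operations per move yields the claimed $O(m \log n)$ total cost.

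\textbf{Per-operation analysis.} For (a), when $v$ is split into $v', v''$ with $|N(v'')| \leq |N(v)|/2$, every moved edge $e$ sees the degree of its split endpoint drop from $|N(v)|$ to $|N(v'')| \leq |N(v)|/2$, so $\lfloor \log_2 \cdot \rfloor$ at that endpoint decreases by at least $1$, i.e.\ $\phi(e)$ drops by at least $1$. Edges that are incident to $v$ but \emph{not} moved have the endpoint's degree drop from $|N(v)|$ to $|N(v)| - |U| \geq |N(v)|/2$, so their contribution to $\Delta\Phi$ is non-positive. Hence the split contributes $\Delta\Phi \leq -(\text{\# moves})$. For (b), inserting an edge $(u,v)$ adds at most $2\log_2 n = O(\log n)$ for the new edge itself; moreover, for each of the $\deg(u)$ already-incident edges at $u$, the increase in $\log_2 \deg(u)$ is at most $\log_2(1 + 1/\deg(u)) \leq 1/\deg(u)$, summing to $O(1)$ over the neighborhood, and similarly at $v$. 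So a single insertion increases $\Phi$ by $O(\log n)$. Edge deletions can only decrease $\Phi$.

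\textbf{Main obstacle.} The argument is conceptually short, but the careful bookkeeping is the one genuinely subtle point: one must verify that the edges left behind at $v'$ during a split contribute non-positively to $\Delta\Phi$, and that concavity of $\log_2$ is exactly what bounds the aggregate rise on an endpoint's neighborhood by $O(1)$ under a single insertion, rather than the naive $O(\log n)$ per neighbor. Once these two points are nailed down, telescoping $\Phi$ across the whole sequence immediately gives the $O(m \log n)$ simulation bound of \Cref{lma:splitefficient}.
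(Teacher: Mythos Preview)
Your proposal is essentially the paper's proof: the paper uses the potential $\Phi=\sum_{v}\deg(v)\log\deg(v)$, which is exactly your $\sum_e(\log\deg(u)+\log\deg(v))$ rewritten per vertex, and then argues the same two inequalities (splits drop $\Phi$ by at least the number of moved edges, ordinary updates raise it by $O(\log n)$).

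The one technical glitch is that you \emph{define} $\phi(e)$ with $\lfloor\log_2\cdot\rfloor$ but then analyze an insertion using the smooth bound $\log_2(1+1/\deg(u))\le 1/\deg(u)$. With the floor this per-insertion claim is false: if $\deg(u)$ crosses a power of two (from $2^k-1$ to $2^k$), every one of the $2^k-1$ existing edges at $u$ gains $1$ in $\phi$, so $\Phi$ jumps by $\Theta(\deg(u))$, not $O(\log n)$. Simply drop the floor from the definition (as the paper does) and your concavity argument goes through as written; the split analysis still works because $|N(v'')|\le|N(v)|/2$ gives a drop of at least $1$ in $\log_2$ without any floor.
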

\section{Maintaining Cut-Preserving j-Trees} \label{sec:madry}
The goal of this section will be to introduce $j$-trees, motivate their utility, and then show how to construct and maintain them. As initiated in the work by Chen, Goranci, Henzinger, Peng, and Saranurak~\cite{Chen:2020aa}, a $j$-tree will serve as a ``vertex sparsifier''. That is, a $j$-tree allows us to reduce the problem of solving certain cut problems on an initial graph G on $n$ vertices to a smaller graph on only $j$ vertices.

\subsection{Basic Definitions}
\Cref{lma:cutpreserve} tells us that if $G \preceq H \preceq_{\alpha} G$, then also for any cut $S$ we have $U_G(S) \leq U_H(S) \leq \alpha \cdot U_G(S)$. So if we could find a structurally simple graph $H$ on the same vertex set as $G$, such that $G \preceq H \preceq_{\alpha} G$, we can use $H$ as a proxy for $G$ in cut-based computations, while losing only an approximation factor of at most $\alpha$. In~\cite{Madry:2010ab}, it is shown how to find a collection of $i = 1, \ldots, k$ structurally simpler graphs $H_i$ such that their average satisfies $G \preceq k^{-1} \sum_i H_i \preceq_{\alpha} G$. We will heavily build on his approach. 

Given a graph $G$ and a rooted forest $F \subseteq G$\footnote{We emphasize that here $F$ is generally not a spannning forest of $G$.} with root set $R$, we denote by $\cross_F^G$ the set of edges $e = (u, v) \in E(G)$ where the endpoints are not in the same tree, i.e., where $\root_F(u) \neq \root_F(v)$. Given a crossing-edge $e = (u, v) \in \cross_F^G$, we define the projected edge $\proj_F^G(e) := (\root_F(u), \root_F(v))$. Moreover, we call the edges in $E_G \setminus \cross_F^G$ \emph{internal edges}. These are the edges where both endpoints are in the same connected component of $F$. Finally, given two vertices $u, v$ in the same connected component of $F$, we denote by $F[u, v]$ the unique $u$ to $v$ path in $F$.

We can now introduce the central definition of this section, first introduced in~\cite{Madry:2010ab}.
\begin{restatable}[\(j\)-tree]{definition}{jtree} \label{def:jtree}
    A graph $H = (V, E_H, u_H)$ is called a $j$-tree if it is the union of a rooted forest $F \subseteq H$ with root set $R$ satisfying $|R| \leq j$ and a core $C(H) = H[R]$ which is the induced subgraph on the set of roots $R$. 
\end{restatable}
We will often write $H = C(H) \cup F(H)$ to denote the core and forest of $H$, respectively.

Given a graph $G$ and a rooted forest $F$ with root set $R$, we now define the contracted graph $G/F$, which is a graph on vertex set $R$ instead of $V$, in which we contract each forest component into a single vertex.
\begin{definition}[Contracted Graph] Given a graph $G = (V, E_G, u_G)$ and a rooted forest $F \subseteq G$ with root set $R$, we define the contracted graph $G / F$ to be a multi-graph on vertex set $V(G / F) = R$ and edge set $E(G/F) = \{\proj_F^G(e): e \in \cross_F^G\}$ with capacities $u_{G / F}(\proj_F^G(e)) = u_G(e)$.
\end{definition}
\paragraph{Induced $j$-tree.} Let us now discuss how we will construct $j$-trees. Given a graph $G$ and a forest $F \subseteq G$ with capacities $u_F$ and root set $R$, there is a natural way to construct the $|R|$-tree $H = H(G, F)$ \emph{induced by $F$ and $R$} as follows. We use the root set $R$ of $F$ as the vertex set of the core. Then, for each edge $e \in \cross_F^G$, we add a multi-edge $\proj_F^G(e)$ to the core with capacity $u_G(e)$. Additionally, for each forest edge $e \in E_F$, we keep the forest capacities $u_F(e)$ in $H$. Written concisely, we can write $H(G, F) = G / F \cup F$.

With an induced $j$-tree $H$, we can always associate a \emph{canonical embedding} $\f$ of $G$ into $H$. Each internal edge $e = (u, v) \not \in \cross_F^G$ we embed into $H$ by letting $f_e$ send $u_G(e)$ units of flow along the tree path $T[u, v]$, where $T$ is the connected component of $F$ containing both $u$ and $v$. Each crossing-edge $e = (u, v) \in \cross_F^G$ we embed by letting $f_e$ be the flow sending $u_G(e)$ units of flow along the path $F[u, \root_F(u)] \oplus \proj_F^G(e) \oplus F[\root_F(v), v]$, where $\oplus$ denotes the concatenation of paths. 

Remember that by \Cref{def:embeddability} such an embedding is only valid if $|\f(e)| \leq u_H(e)$ for every edge $e \in E_H$. In order for this to hold for the above embedding, it is immediate that we only need to ensure that $|\f(e)| \leq u_H(e)$ for \emph{forest edges} $e \in E_F$, i.e., we need to choose the forest capacities $u_F(e)$ such that $|\f(e)| \leq u_F(e)$. 

In order to verify embeddability in the other direction, i.e., embeddability of an induced $j$-tree $H$ into $G$ it suffices to consider the capacities on the forest edges. This can be seen in the following claim.

\begin{claim}\label{clm:forestcong}
    Let $G = (V, E_G, u_G)$ be a graph and $F = (V, E_F, u_F) \subseteq G$ be a rooted forest with root set $R$. Let $H = H(G, F)$ be the induced $|R|$-tree. Suppose that $G \preceq H$ by the canonical embedding, and let $\alpha := \max_{e \in E_F} u_F(e) / u_G(e)$. Then $H \preceq_{2\alpha} G$. Moreover, the embedding routes only $u_G(e)$ flow on edges $e \not\in F$ and at most $2u_{F}(e)$ flow otherwise.
\end{claim}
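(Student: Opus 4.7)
The plan is to exhibit an explicit embedding $\f^*$ of $H$ into $G$ via the most natural routing, and bound its congestion by comparing to the canonical embedding $G \preceq H$ in the opposite direction. First I would define $\f^*$ edge by edge. Each forest edge $e \in E_F \subseteq E_H \cap E_G$ has demand $u_H(e) = u_F(e)$, which I route directly along $e$ in $G$. Each core edge $\bar e \in E(G/F)$ arises as $\bar e = \proj_F^G(e')$ for a unique crossing edge $e' = (u,v) \in \cross_F^G$ of capacity $u_H(\bar e) = u_G(e')$; I route its demand along the concatenated path $F[\root_F(u), u] \oplus e' \oplus F[v, \root_F(v)]$ in $G$. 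By construction $\f^*$ meets the demand of every edge of $H$, hence is an embedding.

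Next I would bound the congestion edge by edge. A non-forest edge $e' \in E_G \setminus E_F$ is touched only by the routing of its associated core edge $\proj_F^G(e')$, so it carries exactly $u_G(e')$ flow; this already gives the second claim (flow $u_G(e)$ on non-forest edges) and a per-edge congestion of exactly $1$. For a forest edge $e \in E_F$ two contributions make up $|\f^*(e)|$: the self-routing of $e$, contributing $u_F(e)$, and the routings of all core edges whose path to the respective roots traverses $e$, contributing
\[
S_e \;=\; \sum_{\substack{e' = (u,v) \in \cross_F^G \\ e \in F[u,\root_F(u)] \,\cup\, F[v,\root_F(v)]}} u_G(e').
\]

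The main step — and the only place the hypothesis is used — is to bound $S_e$ by $u_F(e)$. The observation is that in the canonical embedding of $G$ into $H$, each crossing edge $e' = (u,v)$ sends $u_G(e')$ units along exactly the same two root-paths $F[u, \root_F(u)]$ and $F[v, \root_F(v)]$. Consequently, for every forest edge $e$, the sum $S_e$ coincides with the portion of canonical-embedding congestion on $e$ that is contributed by crossing edges of $G$. Since internal edges of $G$ contribute non-negative additional congestion, $S_e$ is at most the total canonical congestion on $e$, which by the hypothesis $G \preceq H$ is at most $u_F(e)$. Hence $|\f^*(e)| \leq u_F(e) + S_e \leq 2 u_F(e)$ for every forest edge $e$, which is the third claim.

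Finally, I would combine the two per-edge bounds. Note that $\alpha \geq 1$: for each $e \in E_F$ the canonical embedding already sends $u_G(e)$ flow along $e$ itself (since $e$ is an internal edge of $G$ whose embedding path is $e$), forcing $u_F(e) \geq u_G(e)$. Therefore $|\f^*(e')|/u_G(e') = 1 \leq 2\alpha$ on non-forest edges, and $|\f^*(e)|/u_G(e) \leq 2 u_F(e)/u_G(e) \leq 2\alpha$ on forest edges, which establishes $H \preceq_{2\alpha} G$. The proof is not technically deep; the only delicate part is matching the path structures of the two embeddings to identify $S_e$ with the crossing contribution to canonical congestion.
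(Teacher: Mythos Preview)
Your proof is correct and follows essentially the same route as the paper's: define the natural reverse embedding (forest edges routed on themselves, core edges routed back along root-paths and through their preimage crossing edge), then bound the forest-edge load by $u_F(e)$ plus the canonical-embedding congestion on $e$. You are in fact slightly more careful than the paper in two places---you note that $S_e$ is only the crossing-edge part of $|\f(e)|$ (so $S_e \le |\f(e)|$ rather than equality), and you explicitly verify $\alpha \ge 1$ to cover the non-forest edges---both of which the paper glosses over.
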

\begin{proof}
    We specify an embedding $\setof{f_e}_{e \in E_H}$ from $H$ into $G$. First, for every $e \in E_F$ we also have $e \in E_G$, so we simply let $f_e$ be the flow routing $u_F(e)$ units of flow through the edge $e$ in $G$. Second, for each (multi-)edge $e = (u, v) \in C(H)$ with capacity $u_H(e)$, there exists a unique edge $e' = (u', v') \in E_G$ with capacity $u_G(e') = u_H(e)$ such that $e = \proj_F^G(e')$ and $u = \root_F(u'), v = \root_F(v')$. Thus, we let $f_e$ send $u_G(e')$ units of flow along the path $F[u, u'] \oplus e' \oplus F[v, v']$ in $G$.
    
    This concludes the specification of the embedding. It is immediate that any edge $e \in E_G \setminus E_F$ gets sent at most $u_G(e)$ units of flow through it. It thus suffices to consider the congestion of forest edges $e \in E_F$. But notice that by design, the total amount of flow that gets sent through any edge $e \in E_F$ is given by $u_F(e) + |\f(e)| \leq 2 u_F(e) \leq 2 \alpha u_G(e)$, where $\f$ is the canonical embedding of $G$ into $H$, and we used that by assumption $|\f(e)| \leq u_F(e)$. 
\end{proof}
\paragraph{Dynamic Induced $|R|$-trees \& Vertex Splits}\label{pg:dyntreesplits} Given a dynamic graph $G$ and a decremental rooted forest $F$ with root set $R$, we will want to maintain the induced $|R|$-tree $H(G, F) = G / F \cup F$. As the root set $R$ (and hence the number of components) of the forest grows over time, the vertex and edge sets of the contracted graph $G / F$ changes significantly. Suppose that forest $F'$ differs from forest $F$ by the removal of an edge $e$ from the connected component $T$ of $F$ with root $r$ and the root set grows as $R' = R \cup \{u\}$. Then, the connected component $T$ transforms into two connected components $T_r, T_u$ in $F'$, where $T_u$ has root $u$ and $T_r$ has root $r$. Consequently, the updated contracted graph $G / F'$ now has vertex set $V(G / F') = R' = R \cup \{u\}$. Its edge set changes as follows. On one hand, all edges $e = (x, y)$ with $x \in T_u$ and $y \in T_r$ now are cross edges, so that for each such edge there is a corresponding projected edge $\proj_F^G(e)$ inserted into $G / F'$. 

On the other hand, an edge $e = (x, y)$ with $x \in T_u$ and $y \not \in T_u \cup T_r$ previously led to a projected edge $\proj_F^G(e) = (r, \root_F(y))$. However, in the new forest $F'$, we have $\root_{F'}(x) = u$, so that the projected edge is $\proj_{F'}^G(e) = (u, \root_F(y))$. We thus need to update the edge set by moving the endpoint of all such projected edges from $r$ to $u$. This is precisely captured by the definition of a \emph{vertex split} in \Cref{def:vsplit}, and will be crucial in the subsequent sections. To summarize, when a rooted forest incurs an edge deletion, the corresponding $|R|$-tree can be updated through one edge deletion, one vertex split, and potentially multiple edge insertions.

\subsection{Maintaining $j$-trees}
Ideally, we would like to find a rooted forest $F$ with capacities $u_F$ and root set $R$, such that $|R| \approx j$, where $j \ll n$ and $G \preceq H(G, F) \preceq_{\alpha } G$ for some small $\alpha$, say $\alpha = \Otil(\log n)$\footnote{In this paper, we write $g(n) = \Otil(f(n))$ if there exists some constant $C >0$ with $g(n) = O(f(n) \log^C f(n))$, and write $g(n) = \polylog(f(n))$ if there exists a constant $C > 0$ such that $g(n) = O(\log^C f(n))$}. By \Cref{clm:forestcong}, in order to achieve this we would need the capacity of each forest edge $e \in E_F$ to be within a factor of at most $\Otil(\log n)$. On the other hand, if we embed $G$ into $H$, many edges of $G$ might embed through the forest edges, which forces us to choose large forest capacities $u_F$. Inherently, there is a tension between embedding $G$ into $H$ and embedding $H$ into $G$. Thus, satisfying that $G \preceq H \preceq_{\alpha} G$ for a single $j$-tree $H$ might be too ambitious a goal.

To resolve this situation, Madry~\cite{Madry:2010ab} showed that in static graphs, if, instead of trying to find a single $j$-tree, we take an average of multiple ones, we can achieve a much better embedding quality. That is, he shows how to find a sequence of rooted forests $F_i$ with relatively small root sets $R_i$ such that the induced $|R_i|$-trees on average embed into $G$ with small $\alpha$. Similarly, we will prove the following theorem.  In \Cref{appendix:madry-almost-tight} we provide a well-known argument that shows -- at least when considering \emph{induced} $j$-trees -- that the number of $j$-trees required to guarantee $\alpha$-embeddability is almost tight in the theorem below.

\begin{restatable}{theorem}{MadryJtreeCollection}\label{thm:jtree}
    Given a parameter $j$ and a dynamic graph $G = (V, E_G, u_G)$ with initially $m$ edges and capacity ratio $U$ undergoing up to $O(j)$ edge insertions and deletions. Then, there exists an algorithm that initializes and maintains a set of $i = 1, 2, \ldots, k = \Otil(m/j)$ many decremental rooted forests $F_i = (V, E_{F_i}, u_{F_i})$ with root sets $R_i$, as well as the corresponding induced $|R_i|$-trees $H_i = H(G, F_i)$. 
    
    Initially, we have that $|R_i| \leq O(j)$ for all $i$. Moreover, at all times, it holds that:
    \begin{enumerate}
        \item $G \preceq H_i$ for all $i = 1, 2, \ldots, k$,
        \item $\frac{1}{k} \sum_{i=1}^k H_i \preceq_{\alpha} G$ for some $\alpha = \Otil(\log n)$, 
        \item The capacity ratio of each $H_i$ is $O(m U)$,
        \item After $t$ edge updates have been performed on $G$, we have $|R_i^{(t)}| \leq |R_i^{(0)}| + O(t)$ for all $i$.
    \end{enumerate}
    The algorithm runs in total time $\Otil(m^2 / j)$. Moreover, the total changes of each contracted graph $G / F_i$ (and thus of the core $C(H_i)$) consist of $O(j)$ vertex splits, $O(j)$ edge deletions and $O(m)$ edge insertions.
\end{restatable}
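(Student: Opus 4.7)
My plan is to build the collection $\{F_i\}$ iteratively via the MWU framework applied to the $G \preceq H$ embedding problem. I maintain length functions $\ell^{(i)} \in \mathbb{R}^{E_G}$, initially uniform. In each of $N = \tilde{O}(m/j)$ iterations I invoke the low-average-stretch spanning tree algorithm of Abraham--Bartal--Neiman~\cite{Abraham:2019aa} on $(G, \ell^{(i)})$ to obtain a spanning tree $T_i$, compute the canonical-embedding congestion $|\f_i(e)|$ of every $e \in T_i$, and remove the $2j$ edges of largest ratio $|\f_i(e)|/u_G(e)$. The resulting forest $F_i$ has root set $R_i$ of size at most $2j$ (the endpoints of the removed tree edges), and I set $u_{F_i}(e) := |\f_i(e)|$, which by construction yields $G \preceq H(G,F_i)$ via the canonical embedding. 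Lengths are then updated multiplicatively by the surviving per-edge ratio $u_{F_i}(e)/u_G(e)$. The simplification over Madry's bucketing scheme is that stripping precisely the top $O(j)$ offenders in each tree bounds the width of the MWU oracle by $\tilde{O}(m/j)$, so the standard MWU potential argument drives the per-edge average of $u_{F_i}(e)/u_G(e)$ across $i$ down to $\tilde{O}(\log n)$ after $\tilde{O}(m/j)$ iterations. Combined with \Cref{clm:forestcong}, this gives $\frac{1}{k}\sum_i H_i \preceq_{\tilde{O}(\log n)} G$, establishing items~1 and~2.

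\textbf{Handling updates by growing the root sets.} To process an edge insertion or deletion of $e = (x,y)$ in $G$, I modify each forest $F_i$ in a uniform way: if $x$ (resp.\ $y$) is not already in $R_i$, I promote it to a root by cutting the forest edge joining $x$ to its parent. After these at most two promotions, the edge $e$ is guaranteed to be a cross-edge of every $F_i$, so inserting or deleting $e$ in $G$ reduces to inserting or deleting the corresponding cross-edge in each core $C(H_i)$. The crucial structural observation I will rely on is that \emph{promoting a vertex $v$ to a root does not increase the canonical-embedding flow across any surviving forest edge}: an internal edge whose old tree path crossed the cut edge becomes a cross-edge whose new canonical routing traverses exactly the same set of surviving forest edges, with the same flow, as the original tree path. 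Consequently the forest capacities $u_{F_i}$ never need to be raised, the $G \preceq H_i$ direction is maintained, and since $u_{F_i}$ can only shrink as promotions accumulate, the upper embedding from \Cref{clm:forestcong} survives at its initial quality, yielding item~2 throughout the update sequence. Because each $G$-update triggers at most two promotions per forest, we obtain $|R_i^{(t)}| \leq |R_i^{(0)}| + 2t$, i.e., item~4; and item~3 follows because every forest edge carries the canonical flow of at most all $m$ edges of $G$, inflating the smallest capacity by a factor of at most $m$.

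\textbf{Implementation, runtime, and main obstacle.} For the data-structural side I maintain each $F_i$ together with the canonical flow values using link-cut trees (\Cref{thm:treeds}) to cut parent edges and locate the largest-ratio edges on root paths, and Euler-tour trees (\Cref{thm:ett}) to aggregate subtree quantities needed when a promotion splits a component and turns a batch of internal edges into cross-edges. The static phase dominates: the $\tilde{O}(m/j)$ low-stretch tree computations cost $\tilde{O}(m)$ each, for a total of $\tilde{O}(m^2/j)$; the amortized cost of a $G$-update is $\tilde{O}(k) = \tilde{O}(m/j)$ per forest, and summed across $O(j)$ updates and $\tilde{O}(m/j)$ forests this stays inside the $\tilde{O}(m^2/j)$ budget. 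For the recourse bound, each update contributes $O(1)$ vertex splits and $O(1)$ direct cross-edge insertions/deletions per forest, yielding $O(j)$ vertex splits and $O(j)$ edge deletions per contracted graph; the $O(m)$ edge-insertion bound follows from the charging argument that each $G$-edge transitions from internal to cross at most once over the lifetime of a decremental forest, plus the $O(j)$ direct insertions produced by insertions in $G$. The main obstacle I anticipate is verifying that the two simplifications -- replacing Madry's bucketing by top-$O(j)$ stripping, and handling updates by pure root promotion rather than by the capacity-rerouting and hitting-set augmentation of~\cite{Chen:2020aa} -- jointly preserve the $\tilde{O}(\log n)$ embedding quality across the entire update window; the first is resolved by the direct MWU width bound above, and the second by the invariance-under-promotion observation in the paragraph above.
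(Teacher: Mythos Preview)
Your central structural claim---that promoting a vertex $v$ to a root by cutting its parent edge never increases the canonical-embedding flow on any surviving forest edge---is false, and the same error already undermines your static step. Consider a component rooted at $r$, a proper descendant $v$, and a $G$-edge $(a,b)$ with $a$ below $v$ and $b$ elsewhere in the component; let $c$ be the lowest common ancestor of $v$ and $b$ in the tree. Before promotion the tree path $a\to b$ uses, among surviving forest edges, those on the segments $a\to v$, $p(v)\to c$, and $c\to b$. After you cut $(v,p(v))$, the canonical routing becomes $F[a,v]\oplus\proj_F^G((a,b))\oplus F[r,b]$, and the segment $F[r,b]$ traverses the edges on $r\to c$ before continuing $c\to b$. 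The edges on $r\to c$ carried no flow from $(a,b)$ previously; now they do. So congestion on surviving edges can strictly increase. The same re-routing phenomenon means that in your static construction, setting $u_{F_i}(e)$ equal to the spanning-\emph{tree} congestion $u^{T_i}(e)$ does not certify $G\preceq H(G,F_i)$, because the $j$-tree canonical embedding routes cross-edges via roots, not along the original tree path.

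The paper repairs both issues with two coupled ingredients you omit. First, the root set $R_i$ is always kept \emph{branch-free} in $T_i$, which guarantees that every surviving forest edge $e'$ lies on a unique path $p\in\cP_i$ between adjacent roots. Second, on each such path the edge removed is $e_{\min}^{p,T_i}$, the one of \emph{minimum} tree capacity $u^{T_i}$---not the parent edge of the promoted vertex, and not a globally top-congestion edge. The extra flow the $j$-tree embedding places on $e'$ is then at most $u^{T_i}(e_{\min}^{p,T_i})\le u^{T_i}(e')$, so a fixed factor-two slack $u_{F_i}(e)=2u^{T_i}(e)$ certifies $G\preceq H_i$ for \emph{any} branch-free $R_i$ containing the endpoints of the high-congestion set $S_i$ (this is \Cref{lma:struct1}); the invariant therefore survives all subsequent root growths. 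Dynamically, when an edge $(u,v)$ is updated, one adds $u,v$ plus up to two further vertices to restore branch-freeness, and cuts the new $e_{\min}$ edges on the induced paths---which forest edges get cut is dictated by this minimum-capacity rule, not by parentage, and this is precisely what makes the embedding argument go through while still keeping $|R_i^{(t)}|\le|R_i^{(0)}|+O(t)$.
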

Note that the above theorem tells us that after initializing a set of (induced) $O(j)$-trees, we can maintain it over a sequence of $O(j)$ updates by slightly growing the root sets $R$ -- and thus the cores $C(H_i)$. Then, as discussed in the previous section, growing the root sets $R_i$ corresponds to edge insertions and, in particular, vertex splits in the contracted graphs $G / F_i$. We provid a complete proof of the theorem in the next section.

For now note that while the graphs $C(H_i) = G / F_i$ are only on $|R_i| = O(j)$ vertices, they might contain $\Omega(m)$ (multi-)edges. Thus, there is no significant size reduction at this point in the number of edges. To achieve this, we additionally need to run an edge sparsification procedure on the core. We define a sparsified core as follows.
\begin{restatable}[Sparsified Core]{definition}{SparsifiedCore} \label{def:sparsified-core}
Given a $j$-tree $H$ with core $C(H)$ and capacity ratio $U$, we call a graph $\Tilde{C}(H) \subseteq C(H)$ a sparsified core of quality $\beta$ if
\begin{enumerate}
    \item $\Tilde{C}(H)$ is a graph with at most $\Otil(j \log U)$ many edges,
    \item The capacity ratio of $\Tilde{C}(H)$ is $\poly{n} U$, and
    \item for all cuts $(S, V \setminus S)$ in $C(H)$, we have
        \[
        U_{C(H)}(S) \leq U_{\Tilde{C}(H)}(S) \leq \beta \cdot U_{C(H)}(S).
        \]
\end{enumerate}
\end{restatable}
Often, we will also call $\Tilde{C}(H)$ a $\beta$-approximate sparsified core of $H$.
\begin{claim}\label{clm:sparsifiedjtree}
    Let $H = C \cup F$ be a $j$-tree with core $C = C(H)$ and forest $F = F(H)$, and let $\Tilde{C}$ be a sparsified core of quality $\beta$. Then $\Tilde{H} := \Tilde{C} \cup F$ is a $j$-tree that satisfies, for every cut $(S, V \setminus S)$,
    \[
    U_H(S) \leq U_{\Tilde{H}}(S) \leq \beta \cdot U_H(S)
    \]
\end{claim}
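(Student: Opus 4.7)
The plan is to observe that the claim reduces, in a very direct way, to applying the sparsified-core guarantee to the restriction of the cut $(S,V\setminus S)$ to the root set $R$. First I would verify that $\tilde{H}=\tilde{C}\cup F$ is indeed a $j$-tree in the sense of \Cref{def:jtree}. The forest $F\subseteq \tilde{H}$ still has root set $R$ with $|R|\le j$, and since each connected component of a rooted forest contains exactly one root, $F$ has no edges within $R$. Therefore $\tilde{H}[R]=\tilde{C}$, which means $\tilde{C}$ is the induced subgraph on the roots of $\tilde{H}$ and qualifies as its core.

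Next I would decompose the capacity of an arbitrary cut $(S,V\setminus S)$ according to the partition of edges into forest edges and core edges. Since $H=C\cup F$ where $C$ lives on the root set $R$ and $F$ lives on $V$, every edge of $H$ crossing the cut is either an edge of $F$ crossing $(S,V\setminus S)$ or an edge of $C$ crossing $(S\cap R,\, R\setminus S)$. Hence
\[
U_H(S) \;=\; U_C(S\cap R) \;+\; U_F(S),
\qquad
U_{\tilde{H}}(S) \;=\; U_{\tilde{C}}(S\cap R) \;+\; U_F(S),
\]
where the forest term is identical on both sides because $F$ is shared between $H$ and $\tilde{H}$.

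Then I would invoke the sparsified-core property from \Cref{def:sparsified-core}, applied to the cut $(S\cap R, R\setminus S)$ of $C$, to obtain
\[
U_C(S\cap R) \;\le\; U_{\tilde{C}}(S\cap R) \;\le\; \beta\cdot U_C(S\cap R).
\]
Adding $U_F(S)$ to each side immediately yields the lower bound $U_H(S)\le U_{\tilde{H}}(S)$, and for the upper bound we combine the inequality on the core term with the trivial inequality $U_F(S)\le \beta\cdot U_F(S)$ (since $\beta\ge 1$) to conclude $U_{\tilde{H}}(S)\le \beta\cdot U_H(S)$.

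There is no real obstacle; the only subtlety is the bookkeeping observation that a cut of $\tilde{H}$ (or $H$) induces a cut of the core only through its intersection with $R$, and that the forest contribution cancels when comparing the two sides. Once this decomposition is in place, the claim follows line by line from the definition of a sparsified core and the structural fact that cores live on the root set.
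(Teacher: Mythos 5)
Your proposal is correct and follows essentially the same route as the paper: split the cut capacity into the forest contribution (identical in $H$ and $\tilde{H}$) and the core contribution, then apply the sparsified-core guarantee to the induced cut $(S\cap R, R\setminus S)$. Your write-up is slightly more explicit about why $\tilde{H}$ remains a $j$-tree and about the additive decomposition, but the argument is the same.
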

\begin{proof}
    That $\Tilde{H}$ is a $j$-tree is immediate. Consider now a cut $S$ and first note that since forest capacities are the same in $H$ as in $\Tilde{H}$, the total capacity of cut forest edges is the same in $H$ as in $\Tilde{H}$. We thus only need to consider core edges, but by the guarantees of $\Tilde{C}$ being a sparsified core of quality $\beta$, we also have that $U_C(S \cap V(C)) \leq U_{\Tilde{C}}(S \cap V(C)) \leq \beta \cdot U_C(S \cap V(C))$.
\end{proof}
By running the cut sparsifier from \Cref{thm:fullydynamicsparsifier}, we can maintain the sparsified cores under low recourse. Remember that the graphs $G / F_i$ grow in size over time by vertex splits. Moreover, while the data structure processes only $O(j)$ updates on the initial graph $G$, the core $C_i(H)$ can change by up to $\Omega(m)$ additional edge insertions. Hence it is crucial that (a) the sparsifier is able to handle vertex splits with low recourse and (b) has recourse independent of the number of edge insertions.
\begin{restatable}{corollary}{sparsifiedcoremaintain}\label{thm:sparsifiedcore}
    Let $H = H(G, F)$ be an $O(j)$-tree maintained by \Cref{thm:jtree}, and $C = C(H)$ be its core. Then, by running \Cref{thm:fullydynamicsparsifier}, we can initialize and maintain a sparsified core $\Tilde{C}(H)$ of quality 
    $2$ over the full sequence of $O(j)$ updates to $G$ in total time $\Otil(m \log U)$ and total recourse (i.e., edge insertions/deletions in $\Tilde{C}(H)$) of $\Otil(j \log U)$. The algorithm is correct with probability $1-1/n^c$ for any constant $c > 0$ specified before the algorithm starts. 
\end{restatable}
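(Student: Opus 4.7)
The plan is to instantiate the dynamic cut sparsifier of \Cref{thm:fullydynamicsparsifier} on the dynamic core $C = C(H)$, reading off the update sequence from \Cref{thm:jtree} and using standard bucketing to handle capacities. By item (4) of \Cref{thm:jtree} together with the discussion in ``Dynamic Induced $|R|$-trees \& Vertex Splits'', the core $C$ undergoes $O(j)$ vertex splits (one per forest-edge removal), $O(j)$ edge deletions, and $O(m)$ edge insertions over the full sequence of $O(j)$ updates to $G$; its vertex-set size stays at $O(j)$ throughout because $|R_i^{(t)}| \le |R_i^{(0)}| + O(t) = O(j)$. Since core capacities are inherited from $G$, the capacity ratio of $C$ is at most $U$.

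I would then bucket the edges of $C$ into $O(\log U)$ capacity classes of geometrically growing weights, each rounded to a common unit. A vertex split in $C$ restricts naturally to a vertex split on each bucket, by intersecting the moved-neighbor set with the bucket and (if needed) flipping the roles of the two new vertices within that bucket to satisfy the $|U| \le |N|/2$ convention of \Cref{def:vsplit}. Edge insertions and deletions route to the bucket matching their capacity. Thus each bucket is a dynamic unit-weight graph on $O(j)$ vertices receiving at most $O(m)$ insertions, $O(j)$ deletions, and $O(j)$ vertex splits. Running \Cref{thm:fullydynamicsparsifier} per bucket with a sufficiently small constant $\epsilon$ and per-instance failure probability $1/n^{c+2}$ yields, by that theorem, total time $\Otil(m)$, total recourse $O(\polylog(j)) \cdot (O(j)+O(j)) = \Otil(j)$, and a maintained $(1\pm\epsilon)$-cut-sparsifier with $\Otil(j)$ edges. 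I would then define $\Tilde{C}(H)$ as the suitably rescaled union of the bucket sparsifiers: choosing $\epsilon$ small enough to absorb the factor-$2$ weight rounding inherent in bucketing gives $U_{C}(S) \le U_{\Tilde{C}(H)}(S) \le 2\,U_{C}(S)$ for every cut $S$, i.e.\ a sparsified core of quality $2$ in the sense of \Cref{def:sparsified-core}. Summing per-bucket bounds over the $O(\log U)$ buckets yields total time $\Otil(m\log U)$, total recourse $\Otil(j\log U)$, $\Otil(j\log U)$ maintained edges, and, by union bound over the $O(\log U)$ instances, overall correctness probability at least $1 - 1/n^c$.

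The main technical point I anticipate is bookkeeping rather than conceptual: checking that the restriction of a vertex split in $C$ to each bucket really is a legal vertex split in the sense of \Cref{def:vsplit}, so that the recourse guarantee of \Cref{thm:fullydynamicsparsifier} is charged exactly once per split per bucket and not inflated. Since buckets partition the incident edges of any vertex and a split is a local operation, this reduces to the role-flipping argument sketched above, which affects neither the total recourse nor the correctness of the sparsifier. If obtaining the exact quality constant $2$ via bucketing turns out to require $O(\polylog(U))$ rather than $O(\log U)$ buckets (to shrink the bucketing multiplicative error below the slack left by $\epsilon$), the running-time and recourse bounds remain unchanged because any $\polylog(U) = \polylog(n)$ is absorbed by the $\Otil(\cdot)$ notation.
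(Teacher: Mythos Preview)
Your proposal is correct and matches the paper's (implicit) approach: the paper states this as a direct corollary of applying \Cref{thm:fullydynamicsparsifier} to the core's update sequence from \Cref{thm:jtree}, without a separate proof. Your explicit bucketing by capacity is redundant, since \Cref{thm:fullydynamicsparsifier} already handles capacitated graphs internally via exactly this binary-expansion reduction (see the ``Standard Reduction to Uncapacitated Graphs'' paragraph in \Cref{sec:cut-sparsifier}); applying the theorem once to $C$ with $\epsilon$ a small constant and then rescaling by $1/(1-\epsilon)$ already yields quality~$2$ without the additional rounding loss you worry about.
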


\subsection{Proof of \Cref{thm:jtree}}
Madry showed in~\cite{Madry:2010ab} how to initialize $j$-trees satisfying the properties of \Cref{thm:jtree} from scratch. Using his initialization as a complete black-box, however, will be inconvenient for the analysis of the dynamic algorithm. For this reason, we will only use his theorem on initializing a certain set of spanning trees $T_i$ of $G$ with a small sub-set of special high-congestion edges $S_i \subseteq E_{T_i}$. Using these trees one can construct certain corresponding $j$-trees straightforwardly, and their maintenance will also be simple. We first describe an easy way to choose capacities $u^T$ for a spanning tree $T$ of $G$ such that $G \preceq T$. 

Given a graph $G = (V, E_G, u_G)$, a spanning tree $T$ of $G$ and an edge $e = (u, v) \in E_G$, there is a unique way to route $u_G(e)$ units of flow between $u$ and $v$ in $T$. We simply route flow along the unique tree path $T[u, v]$. This constitutes a canonical embedding from $G$ into $T$. Doing this for every edge of $G$ induces a multi-commodity flow in $T$, denoted by $\f_T$. Then we define the induced tree capacities $u^T$\footnote{We follow the notation $u^T$ from \cite{Madry:2010ab} to emphasize that the capacities are induced by the tree $T$} for an edge $e \in E_T$ as $u^T(e) := |\f_T(e)| = \sum_{e' = (u,v) \in E_G: e \in T[u, v]} u_G(e')$. Note that by construction, $G \preceq T$, and that the capacity ratio of $T$ is $O(m U)$, where $U$ is the capacity ratio of $G$. We next give a measure of how well $T$ can be embedded back into $G$.

\begin{definition}
Given a spanning tree $T$ of $G$, let us define the congestion of an edge $e \in E_G$ as follows:
\[
\fcong_G^T(e) = \begin{cases}
    u^T(e)/ u_G(e) & \tn{if } e \in E_T \\
    1 & \tn{otherwise.}
\end{cases}
\]
\end{definition}

We can now state the main building block of the construction by Madry. We provide a self-contained proof in \Cref{appendix:madryproof}. 
\begin{restatable}{theorem}{madry}\label{thm:madry}
    Let $G = (V, E_G, u_G)$ be a graph and $j$ be a parameter. Then, in time $\Otil(m^2/j)$, one can construct a sequence of $i = 1, 2, \ldots, k = \Otil(m/j)$ spanning trees $T_1, \ldots, T_k$ of $G$, together with edge-sets $S_i \subseteq E_{T_i}$ satisfying $|S_i| \leq j$, such that for every edge $e \in E_G$
    \[
    \frac{1}{k}\sum_{i: e \not \in S_i} \fcong_G^{T_i}(e) \leq \Otil(\log n).
    \]
\end{restatable}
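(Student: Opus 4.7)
The plan is to instantiate a standard multiplicative-weights-update (MWU) / Young-style iterative scheme in which each round calls the Abraham--Neiman low-stretch spanning tree algorithm~\cite{Abraham:2019aa}. Initialize edge weights $w_1(e) := 1/m$; in each round $i = 1, \ldots, k$ use the length function $\ell_i(e) := w_i(e)/u_G(e)$ and compute a low-stretch spanning tree $T_i$ in $\Otil(m)$ time. Instantiating the Abraham--Neiman guarantee with stretch-weights equal to $u_G(e)\ell_i(e) = w_i(e)$ and swapping summations rewrites the guarantee as
\[
\sum_{e' \in E_{T_i}} w_i(e')\,\fcong_G^{T_i}(e') \leq \Otil(\log n)\cdot W_i,
\]
where $W_i := \sum_{e \in E_G} w_i(e)$. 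Non-tree edges contribute $\fcong = 1$ each, so the per-iteration oracle bound $\sum_e w_i(e) \fcong_G^{T_i}(e) \leq \rho W_i$ holds with $\rho = \Otil(\log n)$.

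The key deviation from Madry's original scheme is to bound the MWU width by simple top-$j$ truncation rather than his bucketing step. Let $S_i \subseteq E_{T_i}$ be the $j$ tree edges of largest $\fcong_G^{T_i}$. Combining a total-congestion bound of the form $\sum_{e' \in E_{T_i}} \fcong_G^{T_i}(e') = \Otil(m)$ with a sorting/averaging argument caps every $e \in E_{T_i} \setminus S_i$ at $\fcong_G^{T_i}(e) \leq \rho^\star := \Otil(m/j)$; non-tree edges have congestion exactly $1$. With step-size $\eta := \Theta(1/\rho^\star)$, update $w_{i+1}(e) \gets w_i(e)\exp(\eta\fcong_G^{T_i}(e))$ for $e \notin S_i$ and leave the weights of $e \in S_i$ fixed; on this active range, $\exp(\eta x) \leq 1 + \eta x + (\eta x)^2$ throughout.

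The standard MWU potential argument now closes the analysis. A one-step expansion gives $W_{i+1} \leq W_i(1 + 2\eta \rho)$, so $W_{k+1} \leq W_1 \exp(2\eta\rho k) = \exp(2\eta\rho k)$. On the other hand, for any edge $e$,
\[
w_{k+1}(e) \geq w_1(e)\exp\Bigl(\eta\sum_{i:\,e\notin S_i} \fcong_G^{T_i}(e)\Bigr).
\]
Since $w_{k+1}(e) \leq W_{k+1}$ and $w_1(e) = 1/m$, dividing by $\eta k$ yields
\[
\frac{1}{k}\sum_{i:\,e \notin S_i} \fcong_G^{T_i}(e) \leq 2\rho + \frac{\ln m}{\eta k}.
\]
Choosing $k = \Theta((\rho^\star/\rho)\ln m) = \Otil(m/j)$ absorbs the additive term into $O(\rho) = \Otil(\log n)$, which is precisely the required per-edge guarantee. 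Each round costs $\Otil(m)$: one Abraham--Neiman call, plus $\Otil(m)$ to aggregate $\fcong_G^{T_i}$ over all edges (by link-cut-tree or Euler-tour-tree sums along tree paths) and to extract the top $j$. Multiplied by $\Otil(m/j)$ rounds, the total is $\Otil(m^2/j)$.

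The main obstacle is establishing the total-congestion bound $\sum_{e' \in E_{T_i}} \fcong_G^{T_i}(e') = \Otil(m)$ from a single oracle tree per round. The natural routes are either (i) a second, analysis-only Abraham--Neiman call with the ``flat'' lengths $1/u_G$ of the same $\Otil(m)$ cost, for which the bound follows by swapping summations, or (ii) a direct argument that in the regime $\eta = \Theta(1/\rho^\star)$, $k = \Otil(m/j)$, the multiplicative weights $w_i(e)$ cannot become too skewed, so the weighted oracle bound translates back to an unweighted one up to $\polylog(n)$ factors. Either route keeps the total time at $\Otil(m^2/j)$ and preserves the $\Otil(\log n)$ average-congestion guarantee per edge, completing the proof.
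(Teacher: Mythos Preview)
Your MWU framework is the same as the paper's, and the overall shape (oracle bound per round, width bound via top-$j$ removal, standard potential argument) matches. The genuine gap is exactly where you flag it: the claim $\sum_{e'\in E_{T_i}} \fcong_G^{T_i}(e') = \Otil(m)$ does not follow for the tree $T_i$ you actually compute. That quantity equals $\sum_{e=(u,v)\in E_G} u_G(e)\, d_{T_i}^{1/u_G}(u,v)$, i.e., the total stretch with respect to lengths $1/u_G$; but you built $T_i$ with lengths $w_i/u_G$, and the low-stretch guarantee only controls stretch with respect to the lengths you used. Your route (i) gives the desired bound for a \emph{different} tree, not for $T_i$, so it cannot certify the width of $T_i$; route (ii) is circular, since MWU weights are designed to become skewed on high-congestion edges.

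The paper closes this gap with a single idea you are missing: regularize the length function. In round $i$ it uses
\[
\ell_i(e) \;=\; \frac{w_i(e) + \|w_i\|_1/m}{u_G(e)},
\]
and one LSST call with this $\ell_i$ yields $\sum_e \ell_i(e)\, u^{T_i}(e) \le \gamma\|w_i\|_1$ for some $\gamma=\Otil(\log n)$. The $w_i$-part of $\ell_i$ gives the weighted oracle bound $\sum_e w_i(e)\fcong_G^{T_i}(e)\le \gamma\|w_i\|_1$, while the additive $\|w_i\|_1/m$ floor gives, for the set $S_i=\{e:\fcong_G^{T_i}(e)\ge \gamma m/j\}$,
\[
\gamma\|w_i\|_1 \;\ge\; \sum_{e\in S_i}\ell_i(e)u^{T_i}(e) \;\ge\; \sum_{e\in S_i}\frac{\|w_i\|_1}{m}\cdot\fcong_G^{T_i}(e) \;\ge\; |S_i|\cdot\frac{\|w_i\|_1}{m}\cdot\frac{\gamma m}{j},
\]
so $|S_i|\le j$ and $\max_{e\notin S_i}\fcong_G^{T_i}(e)\le \gamma m/j$. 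This is precisely the width bound you need, obtained from the same tree and the same $\Otil(m)$ call. With that in place, your MWU analysis (step size $1/k$, $k=\Theta(\gamma m/j)$, $\exp(x)\le 1+2x$ on $[0,1/10]$) goes through verbatim and matches the paper.
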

Of course, by construction we have that $G \preceq T_i$ for all $i$. The above theorem essentially shows that if we are allowed to remove a small set of edges $S_i$ from each tree $T_i$, then the average of the trees also embeds back well into $G$ by the identity embedding. Similar to the construction in~\cite{Madry:2010ab}, this set of edges $S_i$ contains edges $e \in T_i$ whose congestion $\fcong_G^{T_i}(e)$ is large, although we construct it in a somewhat different and arguably simpler way. 

Naturally, once we remove the edges $S_i$, the graph $G$ won't embed into the resulting forests $F_i$; the insight is that with some additional modifications one can show that it will however embed back into the $O(j)$-trees induced by the forests $F_i$. This is what we will do next, i.e., we now show how using these trees $T_i$ and sets $S_i$, we can quickly construct a set of $O(j)$-trees satisfying the properties of \Cref{thm:jtree}.

We start by introducing the definition of branch-free sets, as used in~\cite{Chen:2023aa}. They are useful for making the algorithm dynamic.
\begin{definition}
    Given a rooted tree $T$ and a set of vertices $R \subseteq V(T)$, we call the set $R$ branch-free if for any $u, v \in R$, the lowest common ancestor of $u$ and $v$ in $T$ is also in $R$.
\end{definition}
\begin{remark}
    For any set $R \subseteq V(T)$ that is not branch-free, one can in time $O(n)$ create a new set $R' \supseteq R$ that is branch-free and satisfies $|R'| \leq 2|R|-1$.
\end{remark}
We can now outline the construction of the $O(j)$-trees given the trees $T_i$ from \Cref{thm:madry}. First, we will describe the construction without arguing for efficiency, which we will do immediately afterwards. We start by running \Cref{thm:madry} with parameter $j$, which gives us a sequence of $\Otil(m/j)$ many trees $T_i$ and associated edge sets $S_i$ satisfying $|S_i| \leq j$. We then proceed as follows.

\underline{Specifying the root sets:} We arbitrarily root each tree $T_i$ and construct the forests $F_i$ and their root sets $R_i$ as follows. For each edge $e \in S_i$, we add both endpoints $u, v$ to $R_i$. By the guarantees of \Cref{thm:madry}, we have at this stage that $|R_i| = O(j)$. Then, we also add up to $O(j)$ additional vertices to $R_i$ in order to make $R_i$ a branch-free set with respect to $T_i$. This specifies the initial set of roots $R_i$. 

\underline{Specifying the forests:} From $T_i$ and $R_i$, we construct the forests $F_i$ as follows. Call two vertices $u, v \in R_i$ adjacent if the tree-path $T_i[u, v]$ of $u$ and $v$ does not contain any other vertex in $R_i$. Let $\cP_i$ be the set of paths $T_i[u, v]$ for adjacent vertices $u, v \in R_i$. Note that $|\cP_i| = O(|R_i|) = O(j)$\footnote{This can be seen by noting that on each leaf to root path in $T_i$ there are as many paths in $\cP_i$ as there are vertices $r \in R_i$ on the given leaf to root path.}. Now, given a path $p \in \cP_i$, let $e_{\min}^{p, T_i}$ be the edge $e \in p$ with minimum capacity $u^{T_i}(e)$ on the path $p$ (ties are broken arbitrarily but consistently.) Then $F_i$ is received from $T_i$ by removing all edges $e_{\min}^{p, T_i}$, i.e., we set $E_{F_i} = E_{T_i} \setminus \{e_{\min}^{p, T_i}: p \in \cP_i\}$. Note that because for any edge $e = (u, v) \in S_i$, we have that both $u, v \in R_i$, we in particular have that $E_{F_i} \subseteq E_{T_i} \setminus S_i$. Finally, we set the forest capacities as $u_{F_i}(e) = 2 u^{T_i}(e)$, and set $H_i = H(G, F_i) = G / F_i \cup F_i$. We stress that the forest capacities will stay fixed throughout the algorithm. This concludes the initialization of the graphs $H_i$.

\underline{Maintenance:} We now describe how to maintain the forests $F_i$ and root sets $R_i$ under edge updates to $G$. Suppose that an edge $e = (u, v)$ either gets inserted into $G$ or deleted from it. Then we update the root sets $R_i$ by adding both endpoints $u, v$ to $R_i$, and possibly an additional $2$ extra vertices in order to keep $R_i$ as a branch-free set with respect to $T_i$. This will introduce $O(1)$ new paths $p \in \cP_i$. For each one of these new paths $p$, we again find the edges $e_{\min}^{p, T_i}$ and remove them from $F_i$. Then, we correspondingly update the induced $|R_i|$-trees $H_i = H(G, F_i)$. Remember that $H_i = G / F_i \cup F_i$, so in order to update $H_i$, we need to update the contracted graphs $G / F_i$. Note that once this is done, both endpoints $u, v$ of the edge are inside the core $C_i = C(H_i)$. We can thus simply insert/remove the edge $(u, v)$ from the updated core $C(H_i)$. We will later show that trivially handling such edge updates inside the core does does not decrease the embedding quality.

\underline{Efficiency:} While we have now described the ideas on how to maintain the forests $F$ and graphs $H_i = H(G, F_i)$, we still need to discuss how to implement everything efficiently. First, in order to find the new edges $e_{\min}^{p, T_i}$, we will use the Link-Cut-Tree data structure $\cD_1$ from \Cref{thm:treeds}. It will allow us to identify each such edge in $O(\log n)$ amortized time\footnote{Consequently, the set $\cP_i$ is not explicitly maintained by our algorithm. We only need to maintain the correct edge set $E_{F_i} = E_{T_i} \setminus \{e_{\min}^{p, T_i}: p \in \cP_i\}$.}. 

Next, we need to show that we can maintain the graphs $H_i = H(G, F_i)$ efficiently, which boils down to maintaining the contracted graphs $G / F_i$ efficiently. We implement this in the $\textsc{AddTerminal}(w)$ operation in \Cref{algo:maintain} as follows. As discussed at the end of \Cref{pg:dyntreesplits}, if forest $F'$ differs from forest $F$ by the removal of an edge $e$ and the root set grows as $R' = R \cup \{w\}$, then the connected component $T$ of $F_i$ with root $r$ now consists of two connected components $T_r, T_w$ in $F_i'$, where $T_w$ has root $w$ and $T_r$ has root $r$.
So the graph $G / F_i'$ changes (a) by the insertion of edges $\proj_F^G(e)$ for $e = (x, y)$ with $x \in T_u$ and $T_r$ and (b) by moving the endpoint of edges $(r, y)$ to $(w, y)$ if the edge $(r,y)$ corresponded to a projected edge $\proj_F^G((x, y))$, where $\root_F(x) = r$, but now $\root_{F'}^G(x) = w$. Moving edges in this way correspond precisely to vertex splits. See also \Cref{fig:update} for a graphical depiction on how the forest and core can change.

In order to efficiently perform these vertex splits, we want to simulate them correspondingly and then invoke \Cref{lma:splitefficient}. To do so, we maintain for each connected component $T$ of $F_i$ and vertex $v \in T$, the number of crossing-edges $e \in \cross_F^G$ adjacent to the sub-tree $v^{\downarrow}$ of $T$ rooted at $v$, i.e., the crossing-edges $e$ such that $e \cap v^{\downarrow} \neq \emptyset$. We can do this using the data structure $\cD_2$ from \Cref{thm:ett} in $O(\log n)$ worst-case time. Then, if an edge-deletion splits a tree $T$ into trees $T_r$ and $T_w$, by calling $\cD_2.\textsc{Sum}(w)$ and $\cD_2.\textsc{Sum}(r)$, we can find $\min\{|E_G(T_r, V \setminus (T_r \cup T_w))|, |E_G(T_w, V \setminus (T_r \cup T_w))|\}$ and only need to move the edges of the smaller half, which is important to invoke \Cref{lma:splitefficient} and from which it follows that the moving of edges required to maintain the $O(j)$-trees induced by $F_i$ can be performed in $O(m \log n)$ total time. For pseudo-code, see \Cref{algo:maintain}. 

Before we start with the formal proof of \Cref{thm:jtree}, we show two crucial lemmas. The first one shows that after initialization, $G \preceq H_i$, and that growing the root sets $R_i$ (and thus the cores $G / F_i$) preserves embeddability. 

\begin{lemma}\label{lma:struct1}
    Let $G$ be a graph and $T_i$, $S_i$ be a tree and corresponding edge set from \Cref{thm:madry}. Let $R_i$ be a branch-free root-set satisfying $R_i \supseteq \{v: $v$ \tn{ is endpoint of an edge } e \in S_i \}$, and let $F_i$ be the corresponding forest with edge set $E_{F_i}  = E_{T_i} \setminus \{e_{\min}^{p, T_i}: p \in \cP_i\}$. Let $H_i = H(G, F_i)$ be the induced $|R_i|$-tree with forest capacities set as $u_{F_i}(e) = 2 u^{T_i}(e)$. Then $G \preceq H_i$ by the canonical embedding.
\end{lemma}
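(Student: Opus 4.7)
The task is to verify that the canonical embedding $\f$ from $G$ into $H_i$ satisfies $|\f(e)| \leq u_{H_i}(e)$ on every edge of $H_i$. For core edges $e = \proj_{F_i}^G(e')$, the only flow on $e$ comes from its unique crossing preimage $e'$ and $u_{H_i}(e) = u_G(e')$ by construction, so feasibility is automatic. It remains to prove $|\f(e)| \leq 2 u^{T_i}(e) = u_{F_i}(e)$ for every forest edge $e \in E_{F_i}$.

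Fix $e \in E_{F_i}$, let $T$ denote the $F_i$-component containing $e$ with root $r$, and let $S$ be the vertex set on the side of $e$ not containing $r$ after cutting $e$ from $T$. A simple case analysis of the canonical embedding (the internal $G$-edges of $T$ whose $F_i$-tree paths contain $e$, together with the crossing $G$-edges whose embedding segments $F_i[x, r]$ or $F_i[r, y]$ contain $e$) shows that the $G$-edges routing flow through $e$ are exactly those with exactly one endpoint in $S$; hence $|\f(e)| = U_G(S)$. Since $S$ is obtained by cutting a single $F_i$-edge from the $F_i$-connected set $T \subseteq T_i$, it is itself a connected subtree of $T_i$, so every $G$-edge crossing $(S, V \setminus S)$ exits $S$ through a unique $T_i$-edge on the $T_i$-boundary $\partial_{T_i} S$, giving $U_G(S) \leq \sum_{e^* \in \partial_{T_i} S} u^{T_i}(e^*)$. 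One boundary edge is $e$ itself; every other boundary edge has its non-$S$ endpoint outside $T$ (otherwise it would be an $F_i$-edge of $T$ separating $S$ from its complement, which would force it to coincide with $e$), and therefore connects two distinct $F_i$-components, making it a removed edge $e_{\min}^{p^*, T_i}$ for some $p^* \in \cP_i$.

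The crux is to show that $\partial_{T_i} S$ contains at most one removed edge, namely $e_{\min}^{p, T_i}$ where $p \in \cP_i$ is the (unique) $\cP_i$-path containing $e$ (if any). Branch-freeness of $R_i$ yields two structural facts: each $T_i$-edge lies on at most one $\cP_i$-path, and the $T_i$-branches leaving a non-$R_i$ interior vertex $w$ of any $\cP_i$-path contain no $R_i$-vertex; otherwise the $T_i$-lca of such a hidden $R_i$-vertex with any $R_i$-vertex on the path would equal $w \notin R_i$, violating branch-freeness. The second fact rules out removed edges in these off-path branches, forcing every removed boundary edge of $S$ to lie on the same $\cP_i$-path as $e$; on this path the unique removed edge is $e_{\min}^{p, T_i}$. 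Combined with the defining minimality $u^{T_i}(e_{\min}^{p, T_i}) \leq u^{T_i}(e)$, this gives $|\f(e)| \leq u^{T_i}(e) + u^{T_i}(e_{\min}^{p, T_i}) \leq 2 u^{T_i}(e) = u_{F_i}(e)$. If $e$ lies on no $\cP_i$-path, then $e$ sits in a pure non-$R_i$ $T_i$-subtree, so $\partial_{T_i} S = \{e\}$ and the bound $|\f(e)| = u^{T_i}(e)$ holds directly. The main obstacle is the single-removed-boundary-edge claim; once branch-freeness delivers it, everything else is routine bookkeeping.
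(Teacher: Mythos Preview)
Your proof is correct and reaches the same bound $u^{T_i}(e) + u^{T_i}(e_{\min}^{p,T_i}) \le 2u^{T_i}(e)$ as the paper, but the route is organized differently. The paper argues directly in terms of flow: for $e' \in p \cap T$ it asserts that the only edges $(x,y)$ whose canonical embedding passes through $e'$ but whose $T_i$-path does not are those whose $T_i$-path crosses $e_{\min}^{p,T_i}$, so the flow on $e'$ is at most $u^{T_i}(e') + u^{T_i}(e_{\min}^{p,T_i})$. You instead observe that $|\f(e)| = U_G(S)$ exactly, then bound $U_G(S)$ by the $T_i$-boundary of $S$ and show via branch-freeness that $\partial_{T_i} S = \{e, e_{\min}^{p,T_i}\}$ (or just $\{e\}$). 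The two arguments are equivalent at heart --- both hinge on the fact that only a single removed edge can contribute extra congestion --- but your cut-based framing makes the role of branch-freeness fully explicit, whereas the paper leaves that structural step implicit in its one-line assertion. Your handling of the case where $e$ lies on no $\cP_i$-path is also a nice addition that the paper does not separate out.
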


\begin{remark}
    Note that in our dynamic algorithm, the root sets $R_i$ and forests $F_i$ satisfy the conditions of the theorem at all times. If we let $G^{(0)}$ denote the initial graph, and $F_i^{(t)}$ denote forest $F_i$ after our data structure has processed $t$ updates, then Lemma~\ref{lma:struct1} implies that  $G^{(0)} \preceq H(G^{(0)}, F_i^{(t)})$. This is a crucial property that makes the construction amenable to the dynamic setting, as from it we will be able to infer that also $G^{(t)} \preceq H(G^{(t)}, F_i^{(t)})$. 
\end{remark}
\begin{proof}
    First remember that the canonical embedding is defined as follows. For each connected component $T \in F_i$ and internal edge $e = (u, v)$ with $u, v \in T$, we let $f_e$ be the flow sending $u_G(e)$ units of flow along the tree path $T[u, v] = T_i[u, v]$. For all the other edges $e = (u, v) \in \cross_{F_i}^G$, we let $f_e$ be the flow sending $u_G(e)$ units of flow along the path $F_i[u, \root_{F_i}(u)] \oplus \proj_{F_i}^G(e) \oplus F_i[\root_{F_i}(v), v]$.

     Now notice that any edge $e = (x, y) \in E_G$ for which $x, y$ are in the same connected component $T$ of $F_i$ still gets routed through the tree path $T_i[x, y]$. Hence, the only edges whose embedding path in $H_i$ differs from the tree path are the edges $e \in \cross_{F_i}^G$, i.e., the edges $e = (x, y)$ for which there exists a path $p \in \cP_i$ with $e_{\min}^{p, T_i} \in T_i[x, y]$. 

     Now let $p \in \cP_i$ be a path that intersects the component $T$ of $F_i$, i.e., $p \cap T \neq \emptyset$. We know by definition that $\sum_{e = (x,y): e \in T_i[x, y]} u_G(e) = u^{T_i}(e_{\min}^{p, T_i})$, i.e., that the total amount of flow that routes through $e_{\min}^{p, T_i}$ in the tree $T_i$ is precisely the capacity of the edge $e_{\min}^{p, T_i}$. Now note that for any edge $e' \in p \cap T$, the only potential edges $e = (x,y)$ for which $e' \in f_e$ but $e' \not \in T_i[x, y]$ are precisely the edges for which $e_{\min}^{p, T_i} \in T_i[x, y]$. But this means that the embedding $\f$ sends at most $u^{T_i}(e') + u^{T_i}(e_{\min}^{p, T_i}) \leq 2 u^{T_i}(e')$ units of flow through $e'$. In the last inequality, we used that by definition of $e_{\min}^{p, T_i}$ we have $u^{T_i}(e_{\min}^{p, T_i}) \leq u^{T_i}(e')$.
     


\end{proof}
The second observation is that the insertion/deletion of edges $(u, v)$ where both endpoints $u, v$ are already in the core of the $j$-tree does not invalidate the canonical embedding. 
\begin{lemma}\label{lma:struct2}
    Let $G$ be a graph, $F \subseteq G$ be a rooted forest with root set $R$, and let $H = H(G, F)$ be the $|R|$-tree induced by $F$ and $R$. Suppose further that $G \preceq H$ by the canonical embedding. Then if $G'$ is a graph differing from $G$ by the insertion/deletion of edge $e = (u,v)$ with $u, v \in R$, the induced $|R|$-tree $H' = H(F, G')$ of $F$ and $G'$ still satisfies $G' \preceq H'$ by the canonical embedding.
\end{lemma}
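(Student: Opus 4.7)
The plan is to exploit the simple observation that for an edge $e = (u,v)$ with both endpoints in $R$, the canonical embedding routes its flow through a single multi-edge of the core and touches no forest edges. Since each connected component of the rooted forest $F$ contains exactly one vertex of $R$, the vertices $u$ and $v$ lie in distinct components; hence $e \in \cross_F^G$ and $\proj_F^G(e) = (\root_F(u), \root_F(v)) = (u,v)$. In the canonical embedding, the flow $f_e$ uses the path $F[u,u] \oplus \proj_F^G(e) \oplus F[v,v]$, which reduces to the single multi-edge $(u,v)$ in the core $G/F$ with capacity $u_G(e)$.

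First I would handle the insertion case. If $G' = G \cup \{e\}$ with $u_{G'}(e) = u_G(e)$, then $H' = G'/F \cup F$ is obtained from $H$ by adding a new multi-edge $\proj_F^{G'}(e)$ of capacity $u_G(e)$ into the core; $F$ itself is unchanged. For every edge $e' \in E_G$ (i.e.\ $e' \neq e$), the canonical embedding flow $f_{e'}$ in $H'$ coincides with the one in $H$, since the projection and the tree paths are identical. The new flow $f_e$ sends $u_G(e)$ units along the newly added multi-edge, which has matching capacity $u_G(e)$, so it is feasible in isolation. On every other edge of $H'$, the total flow from the collection $\{f_{e'}\}_{e' \in E_{G'}}$ equals the total flow under the canonical embedding of $G$ into $H$ (since $f_e$ contributes nothing outside the new multi-edge), and this was feasible by assumption.

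Next I would handle the deletion case symmetrically. If $G' = G \setminus \{e\}$, then $H'$ is obtained from $H$ by removing the multi-edge $\proj_F^G(e)$ of capacity $u_G(e)$ from the core; again $F$ is unchanged. For every remaining edge $e' \in E_{G'}$, the canonical embedding flow $f_{e'}$ in $H'$ is identical to the one in $H$. Since $f_e$ in $H$ only used the now-removed multi-edge $\proj_F^G(e)$, the total flow on every surviving edge of $H'$ equals the total flow on that edge in $H$, which was feasible. Hence $G' \preceq H'$ by the canonical embedding. There is no real obstacle here; the whole argument rests on the single structural fact that edges internal to the core neither pass through forest edges nor interact with the embedding of any other edge, so insertions and deletions of such edges affect the canonical embedding purely locally.
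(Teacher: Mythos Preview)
Your proposal is correct and follows essentially the same approach as the paper's own proof: both argue that an edge $e=(u,v)$ with $u,v\in R$ is embedded entirely along its own dedicated core multi-edge $\proj_F^G(e)=(u,v)$, so inserting or deleting it leaves all other flows $f_{e'}$ untouched and feasibility is preserved. The paper's proof is simply a terser version of the same two-case argument you give.
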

\begin{proof}
    If $\setof{f_{e}}_{e \in E_G}$ is the canonical embedding of $G$ into $H$, and $e = (u, v)$ is an edge with $u, v \in R$, then only the flow $f_{e}$ routes through this edge, and no other flow even uses the edge. Hence after removing the edge from both $G$ and $H$, the remaining flows still serve as a valid embedding. If an edge $e = (u,v)$ with $u, v \in R$ gets inserted into $G$ and $C(H)$, we can extend the embedding by adding the flow $f_e$ which maps $e \in G$ to the same edge $e \in C(H)$ with capacity $u_G(e) = u_H(e)$. 
\end{proof}

We can now prove the main theorem, with pseudo-code in \Cref{algo:maintain}.
\begin{proof}[Proof of \Cref{thm:jtree}]
\underline{Correctness:} As the capacity ratio of a tree $T_i$ is at most $mU$, we in particular have that the capacity ratio of the decremental forest $F_i \subseteq T_i$ with capacities $2 u^{T_i}(e)$ is at most $2mU$. Consequently, the capacity ratio of $H_i = G / F_i \cup F_i$ is at most $2mU$. 

We have also already seen that the root sets $R_i$ initially satisfy $|R_i| \leq O(j)$. After each update, the root set grows by at most $O(1)$, also as promised. 

We next show that at all times $G \preceq H_i$ by the canonical embedding for all $i$. Suppose that $G$ has received $t$ edge updates. Our goal is to show that $G^{(t)} \preceq H_i^{(t)} = H(G^{(t)}, F_i^{(t)})$. Note first that by \Cref{lma:struct1}, it holds that $G^{(0)} \preceq H(G^{(0)}, F_i^{(t)})$. By construction of the root set $R_i^{(t)}$, we know that if $G^{(t)}$ is obtained from $G^{(0)}$ by the insertion/deletion of edges $e_1 = (u_1, v_1), \ldots, e_t = (u_t, v_t)$, it holds that $u_l, v_l \in R_i^{(t)}$ for all $l = 1, \ldots, t$. But then repeated application of \Cref{lma:struct2} shows that $G^{(t)} \preceq H(G^{(t)}, F_i^{(t)})$, as desired. 

It remains to show that for some $\alpha = \Otil(\log n)$, we have $k^{-1} \sum_{i=1}^k H_i \preceq_{\alpha} G$ at all times. Consider the embeddings $\f_i$ from $H_i$ into $G_i$ specified in \Cref{clm:forestcong}. This claim shows that for each edge $e \in E_G$, we have that $|\f_i(e)| = u_G(e)$ if $e \not \in F_i$, and $|\f_i(e)| \leq 2 u_{F_i}(e)$  otherwise. We let $k^{-1} \sum_{i=1}^k \f_i$ be the embedding from $k^{-1} \sum_{i=1}^k H_i$ into $G$ and observe that, again by \Cref{clm:forestcong}, for every $e \in E_G$
\begin{align*}
     \frac{|\f(e)|}{u_G(e)} &= \frac{1}{k} \sum_{i=1}^k \frac{|\f_i(e)|}{u_G(e)} \leq 1 +  \frac{1}{k} \sum_{F_i \ni e} \frac{|\f_i(e)|}{u_G(e)} \leq 1 +  \frac{1}{k} \sum_{F_i \ni e} 2 \frac{u_{F_i}(e)}{u_G(e)} = 1 +  \frac{1}{k} \sum_{F_i \ni e} 4 \frac{u^{T_i}(e)}{u_G(e)} \\
    & = 1 +  \frac{1}{k} \sum_{F_i \ni e} 4 \fcong^{T_i}_G(e) \leq 1 + \frac{1}{k} \sum_{i: e \not \in S_i} 4 \fcong^{T_i}_G(e) \leq \Otil(\log n),
\end{align*}
where the penultimate inequality is due to the fact that $\{F_i: e \in F_i\} \subseteq \{i: e \not \in S_i\}$, and the last inequality is due to the initialization guarantees of \Cref{thm:madry}. 

\underline{Running time \& Recourse:} Initialization is done in time $\Otil(m^2 / j)$ by \Cref{thm:madry}. It remains to show that updating all graphs $H_i$ and forests $F_i$ over a sequence of $O(j)$ updates can be done in total time $\Otil(m^2/j)$. We show that maintenance of a single graph $H_i$ and its forest $F_i$ can be done in total time $\Otil(m)$, from which the statement follows as there are $\Otil(m/j)$ many such graphs. 

The function $\textsc{AddTerminal}(w)$ first updates the forest $F_i$ to contain vertex $w$ in its root set by finding and removing the minimum capacity edge $e_{\min}^{p, T_i}$ on the $w$ to $\root_{F_i}(w)$ path $p$ in $F_i$. Then, it updates the contracted graph $G / F_i$ accordingly, so that the induced $|R_i|$-tree $H(G, F_i) = G / F_i \cup F_i$ remains valid. We show that for a given forest $F_i$, the total running time required to perform all $\textsc{AddTerminal}(\cdot)$ calls is $\Otil(m)$. We also show that the contracted graphs $G / F_i$ (or equivalently the cores $C_i$), over the full sequence of $O(j)$ updates made to $G$, change by $O(j)$ vertex splits, $O(j)$ edge deletions and $O(m)$ edge insertions. 

We first observe that the total amount of $\textsc{AddTerminal}(\cdot)$ calls is bounded by $O(j)$. In each such call, we find the edge $e_{\min}^{p, T_i}$ in $O(\log n)$ time using the data structure $\cD_1$ from \Cref{thm:treeds}, yielding a total time of $O(j \log n)$ for updating the forest $F_i$. We also note that throughout the algorithm, we maintain the vector $z(v) = |\{e \in \cross_{F_i}^G: v \in e\}|$ using the data structure $\cD_2$ from \Cref{thm:ett}. We show below that this takes $\Otil(m)$ total time. For now notice that as a consequence, at all times we have for a component $T_r \in F_i$ with root $r$, that $\cD_2.\textsc{Sum}(r) = |E_G(T_r, V \setminus T_r)|$.

Next, consider the second for-loop in Line~\ref{algo:maintain:insert} of \Cref{algo:maintain}. An edge $e = (x, y)$ is only considered if in all previous times, both endpoints $x, y$ were in the same connected component of $F_i$, but after the deletion of $e_{\min}^F(u)$, we now have that $x \in T_u$ and $y \in T_r$. Note that because the forest $F_i$ is decremental, $x$ and $y$ will never be in the same connected component again, and thus never be considered in any of the subsequent loops performed in \Cref{algo:maintain:insert}. Hence, every edge gets added to the core at most once, so the total number of insertions into $H_i$ is $O(m)$ and the time spent to perform them and the calls to increment the counts in the data structure $\cD_2$ in the first for-loop is $O(m \log n)$ by \Cref{thm:ett}. 

It remains to show that the moving of non-tree edges performed in Line~\ref{algo:maintain:split} of \Cref{algo:maintain} can be done in total time $O(m \log n)$.  Now note that the original tree $T$ containing both $u$ and $r$ gets split into two sub-trees $T_r$ and $T_u$. Note that after cutting the edge $e_{\min}$ (and before inserting the crossing edges), we have $\cD_2.\textsc{Sum}(r) = |E(T_r, V \setminus (T_r \cup T_w))|$  and $\cD_2.\textsc{Sum}(w) = |E(T_w, V \setminus (T_r \cup T_w)|$. By calling $\textsc{Sum}(r)$ and $\textsc{Sum}(w)$ we can thus in time $O(\log n)$ find out if $|E(T_w, V \setminus (T_r \cup T_w))| \leq |E(T_r, V \setminus (T_r \cup T_w))|$ or not. Assume without loss of generality that $|E(T_w, V \setminus (T_r \cup T_w))| \leq |E(T_r, V \setminus (T_r \cup T_w))|$, as otherwise we can simply re-name vertices. Then the edge insertions/deletions in $G / F_i$ correspond precisely to a vertex split. Hence $G / F_i$ can be updated by performing, additionally to $O(m)$ edge insertions and $O(j)$ edge deletions, only $O(j)$ vertex splits, and, by \Cref{lma:splitefficient}, the total amount of edges we need to move by simulating all these vertex splits is bounded by $O(m \log n)$.
\end{proof}
\begin{algorithm2e}[!ht]
\caption{$\textsc{Maintain-j-Tree}()$}
\label{algo:maintain}
\SetKwProg{Globals}{global variables}{}{}
\SetKwProg{Proc}{procedure}{}{}
\Proc{$\textsc{Initialize}(G, F_i)$}{
    Construct $H_i = H(G, F_i)$. \\
    $C_i \gets$ core of $H_i$. \\
    Let $z(v) = |\{e \in \cross_{F_i}^G: v \in e\}|$ \\
    Initialize the data-structure $\cD_1$ from \Cref{thm:treeds} on $F_i$. \\
    Initialize the data-structure $\cD_2$ from \Cref{thm:ett} on $F_i$ with vector $z$.
}
\Proc{$\textsc{AddTerminal}(w)$}{
    $r \gets \cD_1.\textsc{FindRoot}(w)$ \\
    $e_{\min} \gets \cD_1.\textsc{PathMin}(w)$ \\
    Insert vertex $w$ into $C_i$ \\
    $\cD_1.\textsc{Cut}(e_{\min})$; $\cD_2.\textsc{Cut}(e_{\min})$ \\
    $\cD_1.\textsc{MakeRoot}(w)$; $\cD_2.\textsc{MakeRoot}(w)$ \\ 
    Let $T_w, T_r$ be the two components of $F_i$ with roots $w,r$ respectively. \\

    \CommentSty{ Assume w.l.o.g that $\cD_2.\textsc{Sum}(w) \leq \cD_2.\textsc{Sum}(r)$} \\
    \ForEach{$e = (x, y) \in E_G \setminus F_i$ with $x \in T_w$ and $y \in V \setminus (T_w \cup T_r)$}{\label{algo:maintain:split}
        Remove corresponding multi-edge $(r, y)$ from $C(H)$ \\
        Insert edge $(w, y)$ into $C(H)$ with capacity $u_G(e)$ \\
    }

    \ForEach{$e = (x, y) \in E_G $ with $x \in T_w$, $y \in T_r$ }{\label{algo:maintain:insert}
        Insert multi-edge $(w, r)$ into $C(H)$ with capacity $u_G(e)$ \\
        $\cD_2.\textsc{Add}(x, 1)$; $\cD_2.\textsc{Add}(y, 1)$
    }
}
\Proc{$\textsc{InsertEdge}(e = (u, v))$}{
    $G \gets G \cup \{e\}$ \\
    Let $R'$ be the updated branch-free root set containing $u$ and $v$ \\
    \ForEach{$w \in R' \setminus R$}{
        $\textsc{AddTerminal}(w)$
    }
    Insert edge $(u,v)$ into $C_i$ \\
    $\cD_2.\textsc{Add}(u, 1)$; $\cD_2.\textsc{Add}(v, 1)$
}

\Proc{$\textsc{DeleteEdge}(e = (u, v))$}{
    $G \gets G \setminus \{e\}$ \\
    Let $R'$ be the updated branch-free root set containing $u$ and $v$ \\
    \ForEach{$w \in R' \setminus R$}{
        $\textsc{AddTerminal}(w)$
    }
    Remove edge $(u,v)$ from $C_i$ \\
    $\cD_2.\textsc{Add}(u, -1)$; $\cD_2.\textsc{Add}(v, -1)$
}
\end{algorithm2e}

\begin{figure}[t]
\centering
\setkeys{Gin}{width=.8\linewidth}
    \begin{subfigure}{0.32\textwidth}
        \centering
        \includegraphics{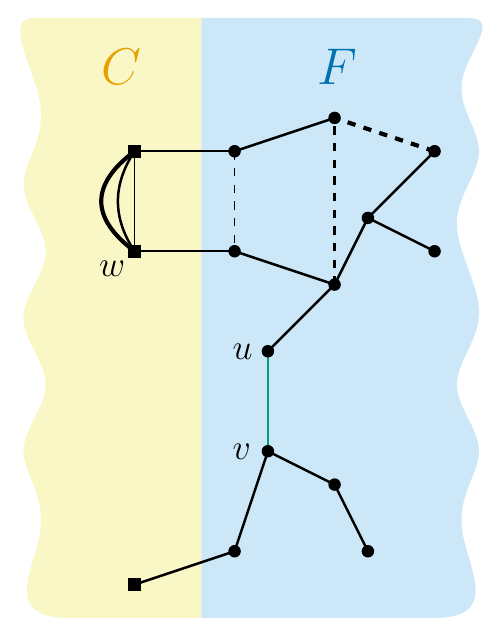}
        \caption{}
        \label{fig:update-insertion}
    \end{subfigure}
    \hfil
    \begin{subfigure}{0.32\textwidth}
        \centering
        \includegraphics{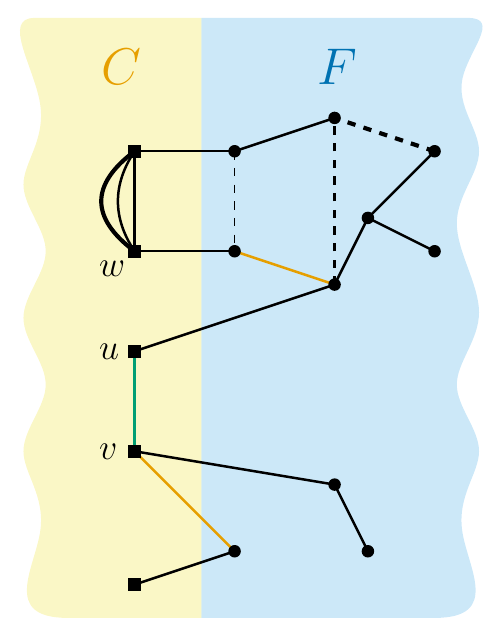}
        \caption{}
        \label{fig:update-move-to-core}
    \end{subfigure}
    \begin{subfigure}{0.32\textwidth}
        \centering
        \includegraphics{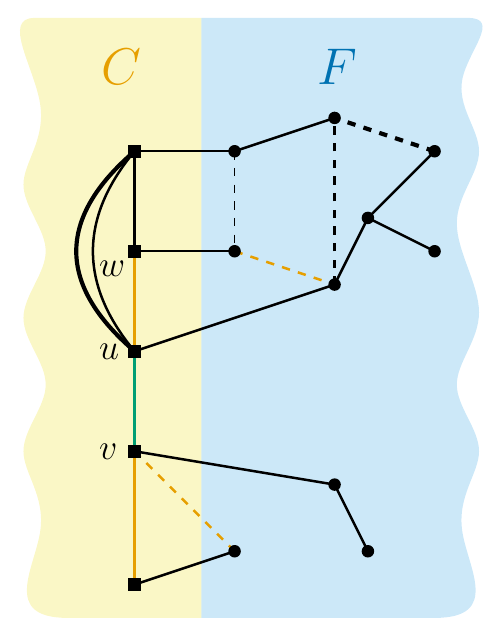}
        \caption{}
        \label{fig:update-embeddings}
    \end{subfigure}
\caption{Graphical depiction of how the core $C$ and forest $F$ might change after an update. Dotted edges denote edges that are in $G$ but not in $H$. The green edge denotes an edge newly inserted into $G$ and has to be processed by the data structure. The orange edges denote the edges $e_{\min}^F(u)$ and $e_{\min}^F(v)$. In (b), the vertices $u,v$ are added as roots to the set $R$. In (c), the minimum capacity edges have been removed from the forest and been added as projected edges into the core. Moreover, two projected edges in the core have their endpoints moved from $w$ to $u$.} 
\label{fig:update}
\end{figure}

\section{Fully Dynamic Cut Sparsifier Supporting Vertex Splits}\label{sec:cut-sparsifier}

In this section we will show how to adapt the dynamic cut-sparsifier from \cite{7782947} to also handle vertex splits. Our goal will be to show that a vertex split on the underyling graph $G$ causes only $O(\polylog(n))$ recourse in the cut-sparsifier $H$ we maintain. This is essential in order to handle the recourse propagating over the whole hierarchy of $j$-trees we will be building. Furthermore, we also notice that this sparsifier also has the property that it can handle a large number of edge-insertions with extremely low recourse. That is, if the number of edge insertions is much larger than the number of edge deletions and vertex splits, the recourse will be sub-linear in the total number of updates, i.e., edge insertions, edge deletions, and vertex splits. This will be crucial for our application. We prove the following theorem.

\begin{restatable}{theorem}{CutSparsifier}\label{thm:fullydynamicsparsifier}
    Consider a parameter $\epsilon > 0$ and a fully dynamic graph $G = (V, E, u_G)$ with $|V(G^{(0)})| = n, |E(G^{(0)})| = m \leq O(n^2)$ and capacity ratio $U$. Then, for a sequence of up to $m$ edge insertions, $n$ vertex splits and $D$ edge deletions, there is a randomized algorithm that explicitly maintains a $(1 \pm \epsilon)$-cut sparsifier $H$ of $G$, such that, at all times
    \begin{enumerate}
        \item $|E(H)| \leq O(\polylog(n) \cdot \epsilon^{-2} \cdot \log U \cdot n)$,
        \item the capacity ratio of $H$ is $O(n U)$.
    \end{enumerate}
    The algorithm takes $O(\polylog(n) \cdot \epsilon^{-4} \cdot \log U \cdot m)$  total time and the sparsifier $H$ undergoes $O(\polylog(n) \cdot \epsilon^{-2} \cdot \log U) \cdot (n+D)$ total recourse (i.e., number of edge deletions and edge insertions) for all the updates. The algorithm is correct with probability $1-1/n^c$ for any constant $c > 0$ specified before the algorithm starts.
\end{restatable}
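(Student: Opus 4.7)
The plan is to white-box the fully-dynamic cut-sparsifier construction of Abraham, Durfee, Koutis, Krinninger, and Peng~\cite{7782947}. Their algorithm maintains, at each of $L = O(\log n \cdot \log U)$ levels, a \emph{bundle} $B$ consisting of a packing of $k = \Otil(\epsilon^{-2})$ edge-disjoint spanning forests of the current graph; all bundle edges are added to the sparsifier with appropriate level-weights, and the remaining edges are sub-sampled at rate $1/2$ to form the input to the next level. Each spanning forest is maintained with a standard fully-dynamic spanning-forest data structure, and the resulting sparsifier has $\Otil(\epsilon^{-2} \log U \cdot n)$ edges. The $(1 \pm \epsilon)$-cut approximation guarantee follows by applying the Chernoff bound of Theorem~\ref{thm:chernoff} to each fixed cut, exactly as in~\cite{7782947}.

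The new ingredient is the handling of vertex splits. I first prove the structural claim that if $G'$ is obtained from $G$ by splitting $v$ into $v, v'$ (moving a neighbor set $U \subseteq N_G(v)$ to $v'$), then any spanning forest $F$ of $G$ can be converted into a spanning forest $F'$ of $G'$ by adding at most one edge. Let $T$ be the tree of $F$ containing $v$; rooting $T$ at $v$, each child-subtree whose root lies in $U$ has its $v$-endpoint relabeled to $v'$, which splits $T$ into at most two sub-trees $T_v$ and $T_{v'}$. If $v$ and $v'$ remain in the same component of $G'$, any edge of $G'$ between $T_v$ and $T_{v'}$ suffices to reconnect them; otherwise the relabeled $F$ is already spanning. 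Peeling off the spanning forests of a bundle one at a time and applying the claim repeatedly then yields only $O(k)$ edge-changes per bundle per vertex split.

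Equipped with this lemma, the algorithm becomes a direct adaptation of~\cite{7782947}: we augment the dynamic spanning-forest substrate with a \textsc{Split}$(v, v', U)$ routine that renames the relevant endpoints and, if needed, inserts one replacement edge, all in $\Otil(1)$ amortized time per forest. Vertex splits and edge deletions propagate through the $L$ levels, each contributing $O(k)$ recourse to the sparsifier; summing gives the total recourse bound $\Otil(\epsilon^{-2} \log U) \cdot (n + D)$. Edge insertions are handled by the standard procedure of~\cite{7782947}: an inserted edge is either absorbed into a spanning forest at its current level, possibly displacing an existing non-forest edge downward, or passes into the subsampling pipeline. Crucially, insertions never trigger a destructive forest rebuild, so their contribution to total sparsifier recourse is absorbed by the $(n+D)$ term via the standard sampling analysis, while their contribution to runtime accounts for the $\Otil(\epsilon^{-4} \log U \cdot m)$ total-time bound.

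The main obstacle is verifying that the recursive invariants of the scheme of~\cite{7782947} survive the new update type. Specifically, one must ensure that (i) endpoint relabelings are correctly propagated to every level at which the edge is retained, (ii) the Bernoulli-sampling coins of preserved non-forest edges remain independent of prior randomness (which holds because relabeling does not change an edge's identity or its sampling decisions), and (iii) the per-level edge counts remain $\Otil(n)$ with high probability so that the overall sparsifier size bound is maintained across a long sequence of splits and deletions. Once these invariants are established, correctness with probability $1 - 1/n^c$ follows by a union bound over the relevant cuts and levels, combined with the per-cut Chernoff estimate of Theorem~\ref{thm:chernoff}.
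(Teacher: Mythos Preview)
Your proposal follows essentially the same approach as the paper: white-box the bundle-based sparsifier of~\cite{7782947}, prove that a spanning forest changes by at most one extra edge under a vertex split, and propagate updates through the $\rho$ levels. The structural lemma, the bundle maintenance, and the overall architecture all match.

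There is one point where your sketch is imprecise in a way that matters. You write that an inserted edge may be ``absorbed into a spanning forest at its current level, possibly displacing an existing non-forest edge downward,'' and then appeal to a ``standard sampling analysis'' to absorb the resulting recourse into the $(n+D)$ term. The paper's argument is different and relies on the fact that \emph{no} displacement occurs: an edge insertion into $G$ causes at most the insertion of that same edge into $H$, and a vertex split causes only edge insertions into $H$ (plus the split itself). This is proved by induction on the level index (Claim~\ref{clm:sparsifyrecourse} and Theorem~\ref{thm:dynamicbundle}), using that the underlying dynamic spanning-forest routine never evicts an edge on insertion. The recourse bound then follows from a pure counting argument: since $|E(H)| = \Otil(n\epsilon^{-2})$ at all times and edges leave $H$ only through the $D$ deletions of $G$, the total number of insertions into $H$ is at most $D + \Otil(n\epsilon^{-2})$. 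If insertions could displace edges downward, $m$ insertions could create $\Omega(m)$ cascading changes, which is not $O(n+D)$ in the regime $m \gg n$. So your ``standard sampling analysis'' should be replaced by this no-displacement-plus-size-bound counting argument, and you should verify (as the paper does, implicitly via the choice of artificial weights that make the spanning forest unique) that the spanning-forest subroutine indeed has the required monotonicity under insertions.
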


We emphasize that if $m$ is much larger than $n$ and $D$, this in fact yields recourse sublinear in the total number of updates. In \cite{7782947}, a similar theorem was proven. Their data structure can not handle vertex splits, however, and they don't explicitly give a bound on total recourse that is independent of the number of edge insertions.
We restate the theorem here for comparison.
\begin{theorem}[Theorem $1.2$ in \cite{7782947}]\label{thm:edgedynamicCS}
    Consider $\epsilon >0$ and a dynamic graph $G = (V, E)$ on $n$ vertices undergoing edge insertions and deletions. Then, there is a randomized algorithm that explicitly maintains w.h.p a $(1 \pm \epsilon)$ cut sparsifier $H$ of $G$ of size $\Otil(n \cdot \epsilon^{-2})$ with amortized $O(\polylog(n) \cdot \epsilon^{-2})$ time and recourse  per update. 
\end{theorem}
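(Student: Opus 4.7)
The plan is to follow the Abraham--Durfee--Kerber--Krinninger--Peng framework: combine a greedy decomposition of $G$ into edge-disjoint ``bundles'' of spanning forests with stratified importance sampling of the remaining edges. Set $G_0 = G$ and, for $i = 0, 1, \ldots, L-1$ with $L = O(\log n)$, let $B_i$ be a maximal packing of $k = \Theta(\epsilon^{-2} \log^2 n)$ edge-disjoint spanning forests of $G_i$, and put $G_{i+1} = G_i \setminus B_i$. The sparsifier $H$ consists of every bundle edge with its original weight, together with a Bernoulli sample of the residual edges in $G_L$ at rate $p = \Theta(1/k)$, reweighted by $1/p$. The total size $|E(H)|$ is dominated by the bundles: $L \cdot k \cdot (n-1) = \tilde O(n/\epsilon^2)$, as required.

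The cut-approximation guarantee rests on the strength-based cut-sparsification theorem of Fung--Hariharan--Harvey--Panigrahi: independently sampling each edge with probability $\Omega(\epsilon^{-2}\log^2 n / \str_G(e))$ and reweighting yields a $(1\pm\epsilon)$ cut sparsifier w.h.p. The packing construction certifies, via Nash--Williams/Tutte duality, that every edge surviving to $G_i$ has strength at least $ik$ in $G$, so bundle edges (effectively sampled at rate $1$) and residual edges (sampled at rate $\Theta(1/k)$) jointly meet the required threshold. Karger's cut-counting lemma, $|\{S : U_G(S) \leq \alpha\lambda\}| \leq n^{2\alpha}$, then lets one union-bound over all exponentially many cuts and obtain the $(1\pm\epsilon)$ guarantee with probability $1 - n^{-c}$.

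For the dynamic maintenance, I would represent each of the $Lk = \tilde O(\epsilon^{-2})$ constituent spanning forests with a Holm--Lichtenberg--Thorup (HLT) dynamic connectivity structure, giving $O(\log^2 n)$ amortized update time and $O(1)$ amortized recourse per forest. An insertion is offered to $B_0, B_1, \ldots$ in order, entering the first bundle whose current forests do not yet connect its endpoints; otherwise it falls into the residual pool, where an independent coin fixed at arrival decides retention. A deletion triggers a cascade of replacement searches: removing $e \in B_i$ invokes one HLT replacement inside $B_i$, and if that replacement is drawn from $B_{i+1}$ the cascade continues. A classical HLT-style level-potential argument, now indexed by bundle level rather than HLT level, bounds the amortized work and recourse by $O(\polylog(n) \cdot \epsilon^{-2})$ per update.

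The main obstacle will be controlling the cross-bundle cascades while preserving the independence assumptions needed by Fung--Hariharan--Harvey--Panigrahi. For the cost side, a monotonically increasing promotion counter attached to each edge plays the role of the HLT level, and a telescoping charging argument caps the total cascade length across the edge's lifetime. For the distributional side, the oblivious-adversary assumption lets us fix a random coin with each edge once on arrival and reuse it every time the edge is pushed into or out of the residual layer, so that the marginal sampling distribution seen by the cut-sparsification theorem is unaltered by the adaptive bundle-assignment logic, which is exactly what is needed for the Fung--Hariharan--Harvey--Panigrahi and Karger union bounds to go through.
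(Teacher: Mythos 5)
There is a genuine gap in the static construction: the one-shot scheme ``keep $L\cdot k$ packed forests, then uniformly sample the leftover edges at rate $\Theta(1/k)$'' does not work, neither for size nor for correctness. For correctness, the forest-peeling argument only certifies that an edge surviving to $G_L$ has connectivity at least $L\cdot k = \Theta(\epsilon^{-2}\log^3 n)$ in $G$, while the Fung--Hariharan--Harvey--Panigrahi threshold (\Cref{lma:cutsample}, $p_e \geq \rho/\lambda_G(e)$ with $\rho = \Theta(\epsilon^{-2}\log^2 n)$) then only licenses a uniform rate of $\Theta(1/\log n)$; your rate $1/k = \Theta(\epsilon^2/\log^2 n)$ is below that for every $\epsilon \leq 1$, so the concentration bound you invoke does not apply. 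For size, even at the admissible rate the residual sample has expected size about $m/\log n$ (and at your rate, $\Theta(m\epsilon^2/\log^2 n)$), which for dense graphs is $\omega(n\,\mathrm{polylog}(n))$ and ruins the $\Otil(n\epsilon^{-2})$ bound. No choice of a single uniform rate can fix both problems simultaneously, because a one-shot bundle only certifies connectivity $\Theta(Lk)$ while $m$ can be $\Theta(n^2)$.

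The missing idea -- and what \cite{7782947} (and this paper's vertex-split variant, \Cref{thm:fullydynamicsparsifier} via \Cref{algo:cutSparsify}) actually does -- is to interleave packing and subsampling recursively: let $B_i$ be an $l_s$-SF-bundle of $G_{i-1}$ with $l_s = \Theta(\epsilon^{-2}\log^2 n)$, obtain $G_i$ by keeping each edge of $G_{i-1}\setminus B_i$ independently with probability $1/2$ and doubling its capacity, and iterate for $\rho = \lceil \log(2n)\rceil$ rounds, outputting $H = G_\rho \cup \bigcup_i B_i$. Each halving step is justified because the bundle certifies connectivity $\geq l_s \geq 2\rho_{\mathrm{FHHP}}$ \emph{in the current subsampled graph} (\Cref{lma:mainsparsifylemma}), and the edge count drops geometrically so that $G_\rho$ has $O(n\log n)$ edges w.h.p.; the accuracy loss is controlled across the $O(\log n)$ levels. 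Your dynamic ingredients are closer to the mark: per-level low-recourse dynamic spanning forests with a deletion-cascade/induction argument, oblivious-adversary coin fixing for the subsampling, and $O(\mathrm{polylog}(n)\cdot\epsilon^{-2})$ amortized time and recourse are essentially how the dynamic maintenance goes -- but they must be run on this recursive subsampled hierarchy (updates to $G_{i-1}\setminus B_i$ are re-flipped into $G_i$ level by level), not on a single static forest decomposition of $G$.
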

One naive approach to adapting the cut-sparsifier of \Cref{thm:edgedynamicCS} to allow for vertex splits is by simulating the vertex split as a sequence of edge insertions and deletions. If the vertex $v$ is split into vertices $v'$ and $v''$, such that $N(v') = N(v) \setminus U$ and $N(v'') = U$, where $|U| \leq |N_G(v)|/2$. Then, we can simulate the operation by first inserting a new vertex $v''$, then removing all neighors in $U$ from $v$ and adding them as neighbors of $v''$. By \Cref{lma:splitefficient}, by simulating the vertex splits accordingly, we need to move only $O(m \log n)$ many edges in total, so feeding these updates into the data structure of \Cref{thm:edgedynamicCS} yields a total update time of $\Otil(m)$, as desired.

A priori however, a single vertex split simulated like this might lead to $\Omega(|N(v)|)$ recourse, and over a sequence of $n$ vertex splits the total recourse can be $\Omega(m \log n)$ 
which is unacceptably large for our purposes. Our main task will thus be to show that by internally modifying the algorithm, we can guarantee an amortized recourse of $O(\polylog(n))$, so that after $n$ vertex splits the total recourse crucially is only $\Otil(n)$. The main tool to achieve this will be the maintenance of low-recourse spanning forests of a graph undergoing vertex splits.

In order to prove \Cref{thm:fullydynamicsparsifier}, we first recall the central definition of an $l$-SF-bundle $B$ of a graph $G$, as defined in \cite{7782947}.
\begin{definition}[$l$-SF-bundle]
    Let $F_1$ be a spanning forest of $G$, $F_2$ be a spanning forest of $G \setminus F_1$ and, inductively, $F_k$ be a spanning forest of $G \setminus \bigcup_{i=1}^{k-1} F_i$. Then we call $B = \bigcup_{i=1}^{l} F_i$ an $l$-SF-bundle of $G$.
\end{definition}
\begin{definition}[Edge-Connectivity] Given a (multi-)graph $G = (V, E)$ and an edge $e = (u, v) \in E$, the edge connectivity $\lambda_G(e)$ of $e$ is defined as the $u$-$v$ maximum flow value in $G$.
\end{definition}
Notice that if $G$ is a graph and $B$ an $l$-SF-bundle of $G$, then any edge in $G \setminus B$ must have edge-connectivity at least $l$ as we can route one unit of flow through each of the edge-disjoint tree paths. So the bundle $B$ serves as a \emph{certificate} of edge connectivity, which is crucial for the sub-sampling based cut-sparsification procedures of \cite{kargersparse} and \cite{generalsparse}. We remark that we see momentarily that we only need to sparsify uncapacitated graphs, so every spanning forest of our graph will also be a minimum spanning forest.

\paragraph{A Standard Reduction to Uncapacitated Graphs}
Here, we will quickly show how simple techniques can be used to reduce proving \Cref{thm:fullydynamicsparsifier} for the case of uncapacitated graphs. We start by showing the following structural property.
\begin{lemma}\label{lma:cutunion}
    Let $G = (V, E_G, u_G)$ be a graph and $\{G_i\}_{i \in \mathcal{I}}$ a collection of graphs and $c_i \geq 0$ such that $G = \bigcup_{i \in \mathcal{I}} c_i \cdot G_i$. Suppose that for all $i \in \mathcal{I}$, we have that $H_i$ is a $(1 \pm \epsilon)$-cut sparsifier of $G_i$. Then $H:=\bigcup_{i \in \mathcal{I}} c_i \cdot H_i$ is a $(1 \pm \epsilon)$-cut sparsifier of $G$.
\end{lemma}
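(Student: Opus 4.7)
The plan is to exploit the \emph{linearity of cut capacities} with respect to the graph linear-combination/union operation defined earlier in the preliminaries. The key observation is that, for any fixed cut $(S, V \setminus S)$, the function $G \mapsto U_G(S)$ is linear in the graph in the sense that if $G = \bigcup_{i \in \mathcal{I}} c_i \cdot G_i$, then
\[
U_G(S) \;=\; \sum_{i \in \mathcal{I}} c_i \cdot U_{G_i}(S).
\]
This follows directly from the definition of the scalar multiplication $c_i \cdot G_i$ (which just rescales every edge capacity by $c_i$) and the definition of the union of (multi-)graphs on overlapping vertex sets (which simply takes the union of the multi-edge sets and preserves capacities per copy). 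I would begin the proof by making exactly this identity precise, using the definitions from the Preliminaries section.

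Next, I would fix an arbitrary cut $(S, V \setminus S)$ of $G$ and note that the same cut induces a cut on each $G_i$ (by restricting $S$ to $V(G_i)$; note however that since all $G_i$ are subgraphs in the union making up $G$, this is well-defined). Applying the $(1\pm\epsilon)$-cut-sparsifier hypothesis to each $H_i$ yields
\[
(1-\epsilon)\,U_{G_i}(S) \;\le\; U_{H_i}(S) \;\le\; (1+\epsilon)\,U_{G_i}(S).
\]
Since the coefficients $c_i \ge 0$, multiplying by $c_i$ preserves the inequalities, and summing over $i \in \mathcal{I}$ gives
\[
(1-\epsilon)\sum_i c_i\,U_{G_i}(S) \;\le\; \sum_i c_i\,U_{H_i}(S) \;\le\; (1+\epsilon)\sum_i c_i\,U_{G_i}(S).
\]
Applying the linearity identity to both sides, i.e.\ $\sum_i c_i U_{G_i}(S) = U_G(S)$ and $\sum_i c_i U_{H_i}(S) = U_H(S)$ (with $H = \bigcup_i c_i \cdot H_i$), immediately gives $(1-\epsilon) U_G(S) \le U_H(S) \le (1+\epsilon) U_G(S)$. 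Since $(S, V \setminus S)$ was arbitrary, $H$ is a $(1 \pm \epsilon)$-cut sparsifier of $G$.

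There is essentially no obstacle beyond being careful with the bookkeeping of multi-edges: when forming $c_i \cdot G_i$ and taking unions, we must treat edges as belonging to distinct ``copies'' so that capacities do not get accidentally merged, which is exactly the convention established in the Preliminaries. I would state the linearity identity as a short in-line observation (or one-line claim) to make the argument self-contained, and then the rest is a two-line computation. Because the whole argument boils down to linearity plus the monotonicity of the inequalities under non-negative scaling, no probabilistic or structural input beyond the per-component sparsifier guarantee is needed.
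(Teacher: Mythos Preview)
Your proposal is correct and matches the paper's approach essentially line for line: fix a cut, use linearity of the cut capacity under the weighted union $G=\bigcup_i c_i G_i$ to write $U_G(S)=\sum_i c_i U_{G_i}(S)$ and $U_H(S)=\sum_i c_i U_{H_i}(S)$, apply the per-$i$ $(1\pm\epsilon)$ bound, scale by $c_i\ge 0$, and sum. The paper's proof is the same argument, just written more tersely (it shows only the upper bound and notes the lower bound is analogous).
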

\begin{proof}
    Let $(S, V \setminus S)$ be a cut. Then for each $i$, we have $U_{H_i}(S) \in (1 \pm \epsilon)\cdot U_{G_i}(S)$, so that
    \begin{align*}
        U_{H}(S) = \sum_{i \in \mathcal{I}} c_i U_{H_i}(S) \leq \sum_{i \in \mathcal{I}} c_i (1 + \epsilon) U_{G_i}(S) = (1 + \epsilon) U_G(S),
    \end{align*}
    and the lower bound follows analogously.
\end{proof}
Now let $G$ be a graph with integer capacities in the set $\setof{0, 1, \ldots, U-1}$. Then by a binary expansion, we can write $G = \bigcup_{i=0}^{\log_2 U} 2^i G_i$ for uncapacitated graphs $G_i$, where $G_i$ contains edge $e \in E_G$ if the $(i+1)$-st digit of its binary expansion is $1$. Also note that edges moving by vertex splits do not change their capacities, so a vertex split in $G$ causes exactly one vertex split in each of the graphs $G_i$. Consequently, assuming a version of \Cref{thm:fullydynamicsparsifier} that works for uncapacitated graphs, we can run it on the graphs $G_i$ separately to receive sparsifiers $H_i$, and then return $H = \bigcup_{i=0}^{\log_2 U} 2^i H_i$ as the final sparsifier by an invokation of the previous lemma. In the subsequent discussions, we will thus always assume that the graph $G$ is uncapacitated.
\subsection{Fully Dynamic $l$-SF-bundles}
Our first step towards proving \Cref{thm:fullydynamicsparsifier} will be to show that we can maintain $l$-SF-bundles also under vertex splits. We prove the following.

\begin{theorem}\label{thm:dynamicbundle}
    Let $G = (V, E)$ be a fully dynamic, initially empty graph with $|V(G^{(0)})| = n$. Then, for a sequence of up to $m$ edge insertions/deletions and $n$ vertex splits, there is a deterministic algorithm that maintains an $l$-SF-bundle $B$ of $G$ with the following recourse properties:
    \begin{enumerate}
        \item If $G$ changes by an edge insertion, then $B$ changes at most by the insertion of the same edge.
        \item If $G$ changes by an edge deletion, then $B$ changes at most by the deletion of the same edge and possibly one additional edge insertion.
        \item If $G$ changes by a vertex split, then $B$ changes by performing the vertex split and at most $l$ additional edge insertions.
    \end{enumerate}
    The algorithm takes total time $\Otil(l \cdot m + l^2 \cdot n)$.
\end{theorem}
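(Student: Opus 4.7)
The plan is to maintain each of the $l$ spanning forests $F_1, \dots, F_l$ with a standard dynamic spanning-forest data structure supporting connectivity queries, link/cut, and replacement-edge search in $\polylog(n)$ time (e.g.\ Holm--Lichtenberg--Thorup), and to handle the three update types in turn, always arguing recourse on $B$ rather than on the individual forests.

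For an edge insertion of $e=(u,v)$ we scan $i=1,\dots,l$ and insert $e$ into the first $F_i$ in which $u$ and $v$ lie in different components; the net recourse to $B$ is at most one insertion. For an edge deletion of $e$, if $e \notin B$ there is nothing to do; otherwise for $e \in F_k$ we delete $e$ from $F_k$ and search $(G' \setminus \bigcup_{j \le k} F_j) \setminus F_k$ for a reconnection edge $e'$. If $e' \in F_{k'}$ with $k' > k$ we move $e'$ from $F_{k'}$ into $F_k$ and recurse on $F_{k'}$ (which has just lost an edge); if $e' \in G \setminus B$ we simply place it in $F_k$ and stop. Every edge moved between forests during the cascade stays in $B$, so the net recourse to $B$ is the removal of $e$ together with at most one new insertion --- the terminal edge drawn from $G \setminus B$, if any.

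The main novelty is the handling of a vertex split $v \to (v',v'')$. We first apply the split to $G$ and propagate the relabeling of endpoints to each $F_i$. The key structural fact is the per-forest bound $|F_i'|-|F_i|\le 1$: since a spanning forest of any graph $H$ has size $|V(H)|-c(H)$, a vertex split increases $|V|$ by one and can increase $c$ by at most one, so at most one additional edge per forest is ever needed to restore the spanning property. We exploit this with an iterative level-by-level algorithm: at iteration $i$ we make $F_i$ a spanning forest of $G' \setminus \bigcup_{j<i} F_j'$ (with the primed forests finalized in earlier iterations), using the same cascade mechanism as for deletions to pull a single reconnection edge from $(G' \setminus \bigcup_{j \le i} F_j) \setminus F_i$. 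Within a cascade, edges only move between forests (and therefore remain in $B$) except for the final terminal edge that may be drawn from $G \setminus B$. Summing the per-level bound over the $l$ levels yields at most $l$ edges added to $B$ and none deleted from $B$ (beyond the in-place relabeling of endpoints at $v$), matching the claim.

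The main obstacle is the vertex-split analysis: orchestrating the level-by-level cascades so that each $F_i$ ends up as a spanning forest of the correct underlying graph while properly accounting for the edges that higher-level forests lend to lower-level ones. For the time bound, each edge update performs $O(l)$ forest operations of cost $\polylog n$, contributing $\tilde O(l m)$ over all edge updates; each of the at most $n$ vertex splits performs $\polylog n$ cascading work per level for a total of $\tilde O(l^2 n)$, plus relabeling of $F_i$-incident edges at $v$ which, summed over all splits using \Cref{lma:splitefficient}, contributes $\tilde O(l m)$. Adding these yields the claimed total running time $\tilde O(l m + l^2 n)$.
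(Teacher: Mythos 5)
Your proposal is correct and follows essentially the same route as the paper: nested forests $F_k$ spanning $G \setminus \bigcup_{i<k} F_i$, the key structural fact that a vertex split forces at most one additional edge per forest, and level-by-level cascades in which edges that merely move between forests leave $B$ unchanged, so only the terminal edge of each cascade counts toward recourse. The one point where the paper is more careful is how the per-forest ``$+1$'' fact is made to hold for the forest the data structure \emph{actually maintains}: you argue existence via the counting identity $|F|=|V|-c$ and then build the new forest explicitly by cascading, whereas the paper assigns artificial unique weights so that the dynamic minimum-spanning-forest structure canonically returns the relabeled old forest plus at most one edge (\Cref{lma:splitrecourse}); if one instead fed the simulated split into a black-box spanning-forest structure without such canonicity, the maintained forest could churn arbitrarily, so either your explicit cascade or the paper's uniqueness trick is genuinely needed there.
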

To do that, we need to use the data structure of \cite{10.1145/502090.502095} that allows us to dynamically maintain a spanning forest $F$ of a graph $G$ receiving only edge updates. The provided recourse property of the algorithm was crucially exploited in \cite{7782947} to control recourse in $l$-SF-bundles for graphs that only receive edge updates. It shows that the spanning forest maintained by the data structure changes only very slowly under edge deletions and insertions. 

\begin{theorem}[\cite{10.1145/502090.502095}]\label{thm:edgedynamicMST}
    Given a dynamic graph $G$ on $n$ vertices undergoing edge insertions / deletions, there exists a deterministic algorithm that maintains a minimum spanning forest $F$ of $G$. Furthermore, the following recourse properties hold:
    \begin{enumerate}
        \item If $G$ changes by an edge insertion, then $F$ changes at most by the insertion of the same edge.
        \item If $G$ changes by an edge deletion, then $F$ changes at most by the deletion of the same edge and possibly one additional edge insertion.
    \end{enumerate}
    The algorithm runs in amortized update time $O(\log^4 n)$. 
\end{theorem}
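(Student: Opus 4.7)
The plan is to recap the dynamic minimum spanning forest data structure of Holm, Lichtenberg, and Thorup. Assign every edge $e \in E$ an integer level $\ell(e) \in \{0,1,\dots,\lfloor \log_2 n \rfloor\}$, initialized to $0$ upon insertion. For each $i \geq 0$ let $G_i$ be the subgraph of $G$ consisting of edges of level $\geq i$, and let $F_i \subseteq F$ be the tree edges of $F$ of level $\geq i$. Two invariants are enforced: (i) every connected component of $F_i$ has at most $n/2^i$ vertices; (ii) $F_i$ is a minimum spanning forest of $G_i$ with respect to the given weights. Invariant (ii) is what makes $F = F_0$ an MSF of $G$, while invariant (i) bounds the cost of level promotions. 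For each level $i$ we maintain an Euler-tour tree over $F_i$ augmented with per-subtree aggregates supporting (a) $u$-$v$ connectivity and component-size queries, (b) retrieval of an incident level-$i$ non-tree edge, and (c) path-maximum-weight queries on $F_0$. Each augmented operation runs in $O(\log n)$ worst-case time.

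For an insertion of $e = (u,v)$ with weight $w(e)$: query $F_0$'s Euler-tour tree to test connectivity of $u$ and $v$. If they are in different trees, add $e$ to $F$ at level $0$; this contributes exactly the one insertion that the first recourse clause allows. If they are already connected in $F_0$, we query the path-maximum edge $e^\star$ on the $u$-$v$ path in $F_0$; if $w(e) < w(e^\star)$ we must swap them to preserve the MSF invariant, and otherwise $e$ is stored as a non-tree edge at level $0$ and $F$ is unchanged. In settings where the surrounding algorithm guarantees that each newly inserted edge is no lighter than any edge currently on the cycle it closes (e.g.\ uniform weights, or an insertion order consistent with a fixed total priority), the swap branch never triggers and the recourse is precisely the stated single insertion.

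For a deletion of $e = (u,v)$: if $e$ is a non-tree edge, discard it from its level structure in $O(\log n)$ time and $F$ is unchanged. If $e$ is a tree edge at level $\ell = \ell(e)$, remove $e$ from $F_i$ for every $i \leq \ell$, splitting the containing tree in $F_i$ into pieces $T_u^i$ and $T_v^i$. Search for a replacement level-by-level starting at $i = \ell$ and descending to $0$: identify the smaller of $T_u^i, T_v^i$ (size at most $n/2^{i+1}$ by invariant (i)), iterate through its incident level-$i$ non-tree edges using the augmented Euler-tour tree, and for each candidate $(x,y)$ test whether it crosses to the other side. A crossing edge is kept as a potential replacement at level $i$; a non-crossing edge has both endpoints in the smaller tree and may be promoted to level $i+1$ without violating invariant (i). Among all candidates collected across levels, insert the minimum-weight one into $F$ at level $\ell$; by invariant (ii) this is the globally minimum-weight replacement for $e$, so both invariants are restored. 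The resulting recourse is one deletion of $e$ plus at most one insertion, matching the second clause of the theorem.

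The amortized cost follows the standard level-potential argument $\Phi = \sum_e \ell(e)$: each promotion increases $\Phi$ by $1$, levels never decrease, and each edge is inserted at level $0$ with maximum level $\lfloor \log_2 n \rfloor$, so the total number of promotions over an update sequence is $O(m \log n)$. Each level-search step and each promotion costs $O(\log n)$ time in its Euler-tour tree, there are $O(\log n)$ levels, and the auxiliary weight- and incidence-augmentations contribute another $O(\log n)$ factor, yielding the claimed $O(\log^4 n)$ amortized bound. The main obstacle, and what distinguishes this from the plain connectivity variant of HLT, is the correct handling of weights: the deletion routine must return the \emph{globally} minimum-weight replacement across all levels inspected, which forces the Euler-tour tree augmentation to support weight-ordered extraction of candidate non-tree edges, and one must re-verify that inspecting candidates in weight order does not break the promotion-based charging argument. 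Carrying out this weighted augmentation while preserving the $O(\log n)$ per-level cost is the delicate step that upgrades the $O(\log^2 n)$ connectivity data structure into the $O(\log^4 n)$ MSF data structure stated in the theorem.
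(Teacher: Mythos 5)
This theorem is not proved in the paper at all: it is imported verbatim as a black-box citation of Holm, de Lichtenberg, and Thorup~\cite{10.1145/502090.502095}, so there is no in-paper argument to compare against. Judged as a standalone proof, your sketch has genuine gaps rather than being a complete alternative derivation. The central one is that you explicitly defer the crux of the cited result: showing that the replacement search for a deleted tree edge returns the \emph{globally} minimum-weight reconnecting edge across all levels, and that inspecting candidates in weight order is compatible with the promotion-based charging. That is not a ``delicate step'' to be re-verified later; it is the entire content of upgrading the connectivity structure to an MSF structure, and without it neither the correctness of invariant (ii) nor the amortized bound is established. Relatedly, the operations you describe do not preserve the invariants you state: you promote a non-crossing non-tree edge from level $i$ to $i+1$ without first raising the level-$i$ tree edges of the smaller component, after which $F_{i+1}$ need not even connect the endpoints of the promoted edge inside $G_{i+1}$, contradicting ``$F_{i+1}$ is an MSF of $G_{i+1}$''; in HLT the tree edges of the smaller side are promoted first, and for the weighted case the eligibility and order of these promotions is exactly where the argument lives. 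Finally, the $O(\log^4 n)$ bound is asserted (``another $O(\log n)$ factor for the augmentation'') rather than derived; in the cited work it comes from an $O(\log^2 n)$ amortized decremental MSF combined with a general decremental-to-fully-dynamic reduction costing a further $O(\log^2 n)$, not from a direct weighted variant of the fully dynamic connectivity structure.

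One observation of yours is worth keeping, though it cuts the other way: for adversarial weights, an inserted edge lighter than the path maximum forces a swap, so recourse clause~1 cannot hold verbatim for a genuine MSF under arbitrary insertions. The paper sidesteps this because it only invokes the theorem on uncapacitated graphs with unique \emph{internally assigned} priorities (see the discussion before \Cref{corollary:fullydynamicMST}), under which the swap branch never fires. Your hedge (``in settings where the surrounding algorithm guarantees\dots'') is the right instinct, but it means your proposal proves a conditional variant of the statement rather than the statement as quoted, and it again signals that the result is being cited, not re-proved.
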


 In the next lemma, we show that even when allowing for vertex splits, there always exists a new spanning forest which differs only slightly from the previous one. 
\begin{lemma}\label{lma:splitrecourse}
    Let $F$ be a minimum spanning forest of $G$. Suppose vertex $v \in V$ of $G$ gets split into vertices $v', v''$ to receive $G'$. Then there exists a minimum spanning forest $F'$ of $G'$ that satisfies $E(F') \supseteq E(F)$. In particular, $F'$ can be obtained from $F$ by performing the vertex split of $v$ in $F$ and at most one extra edge insertion.
\end{lemma}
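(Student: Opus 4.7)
The plan is to construct $F'$ from $F$ in two steps: first apply the vertex split operation directly to $F$, producing an intermediate forest $F^*$ that is automatically a subgraph of $G'$; then, if necessary, add a single edge of $G'$ to restore the spanning-forest property. Concretely, $F^*$ is obtained by replacing each $F$-edge $(v, w)$ with $(v', w)$ when $w \in N_G(v) \setminus U$ and with $(v'', w)$ when $w \in U$, leaving all other edges untouched. Since $F$ was acyclic and this operation only relabels one endpoint of each edge, $F^*$ is also a forest, and its edge set corresponds to $E(F)$ under the split renaming.

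Next I would analyze how the component structure changes. Let $T$ denote the $F$-component containing $v$. Rooting $T$ at $v$, each child $w$ of $v$ is in $N_G(v)$, hence lies either in $N_G(v) \setminus U$ or in $U$. In $F^*$, the subtrees of $v$'s children split into two groups: those attached to $v'$ and those attached to $v''$. Hence $T$ decomposes in $F^*$ into exactly two disjoint components $T^*_{v'}$ and $T^*_{v''}$, while every other $F^*$-component coincides with an $F$-component. Thus $F^*$ has exactly one more component than $F$.

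The proof then splits into two cases based on whether $v'$ and $v''$ lie in the same connected component of $G'$. If they lie in \emph{different} $G'$-components, then $G'$ itself has one more component than $G$, matching $F^*$. I would verify that $F^*$ is a spanning forest of $G'$ by taking any vertex $u$ reachable from $v'$ in $G'$, noting that $u$ lies in $T$, tracing the unique $T$-path from $v$ to $u$, and observing that its first edge must lie on $v'$'s side — otherwise $u$ would be reachable only from $v''$ in $G'$, contradicting the case assumption. The symmetric argument handles $v''$, and setting $F' := F^*$ suffices. If instead $v'$ and $v''$ lie in the \emph{same} $G'$-component, then $F^*$ has exactly one more component than $G'$, so exactly one additional edge is needed. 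Since there exists a $v'$-to-$v''$ path in $G'$, that path must contain at least one edge $e$ crossing the cut $(T^*_{v'}, T^*_{v''})$; setting $F' := F^* \cup \{e\}$ reconnects the two fragments without creating a cycle. Working in the uncapacitated setting established by the reduction earlier in this section, $F'$ is automatically a minimum spanning forest of $G'$.

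The main obstacle is really just the first case: showing that the fragments $T^*_{v'}$ and $T^*_{v''}$ \emph{exactly} coincide with the two corresponding $G'$-components, rather than being strict subsets. The tree-path argument above is the only non-routine ingredient. Once both cases are handled, the conclusion that $F'$ is obtained from $F$ by performing the vertex split and at most one extra edge insertion is immediate from the construction.
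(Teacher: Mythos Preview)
Your proof is correct but takes a genuinely different route from the paper's. The paper argues via the MST cut property: each edge $e \in F$ is a minimum-capacity edge on some cut $(S, V\setminus S)$ in $G$, and after replacing $v$ by $\{v',v''\}$ on the appropriate side, $e$ remains a minimum-capacity crossing edge in $G'$; hence every edge of $F$ lies in an MST of $G'$, and adding the minimum-capacity $T'$--$T''$ edge (if needed) completes $F'$. You instead argue purely structurally about spanning forests --- splitting $F$ gives a forest $F^*$ with exactly one extra component, and a case split on whether $v',v''$ share a $G'$-component shows at most one edge is needed --- and then invoke the uncapacitated reduction to conclude that any spanning forest is automatically an MST.

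Your approach is more elementary and sidesteps the cut property entirely. The paper's argument, however, establishes the lemma for arbitrary edge weights, not just the uncapacitated case, and this extra generality is used immediately afterward: the paper assigns artificial distinct capacities so that the MST is \emph{unique} and the deterministic algorithm of Holm et al.\ can be invoked; the weighted lemma is what guarantees that \emph{this particular} MST changes by at most one edge under a split. Your proof shows only that \emph{some} spanning forest has low recourse --- enough for existence, but not directly for tracking the specific forest the black-box MST data structure maintains under those artificial weights. One minor wording fix in your first case: ``otherwise $u$ would be reachable only from $v''$'' should read ``otherwise $u$ would also be reachable from $v''$''; the contradiction is that $u$ then connects to both $v'$ and $v''$, forcing them into the same $G'$-component.
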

\begin{proof}
For every edge $e = (x, y) \in F$, there exists an $x$-$y$ cut $(S, V \setminus S)$ in $G$ such that $e$ is an edge of minimum capacity in the cut. Suppose w.l.o.g that $v \in S$. Let $S' := (S \setminus \setof{v}) \cup \setof{v', v''}$. If $x \neq v$, then it is evident that $e = (x, y)$ is still an edge of minimum capacity in the cut $(S', V' \setminus S')$ in $G'$, so it remains an edge in $F'$. If $x = v$, then we know that either $e' = (v', y)$ or $e'' = (v'', y)$ is an edge in $G'$ of capacity $u(e)$, and hence an edge of minimum capacity on the cut $(S', V' \setminus S')$. This proves that $E(F') \supseteq E(F)$. 

Now let $T$ be the tree of $F$ containing vertex $v$. After performing the vertex split, the tree $T$ will be partitioned into disjoint trees trees $T', T''$ containing vertices $v', v''$ respectively, such that $E(T) = E(T') \cup E(T'')$. If $v'$ and $v''$ share a connected component in $G'$, then we can receive $F'$ from $F$ by adding the minimum capacity edge that connects $T'$ and $T''$. If $v'$ and $v''$ do not share a connected component, we do not need to insert a new edge. 
\end{proof}

By assigning $G$ unique (and artificial) capacities \emph{only internally} when running the algorithm of \Cref{thm:edgedynamicMST}, we can make the the spanning forests unique, and consequently the previous lemma implies that the algorithm maintains a spanning forest under low-recourse. By now simulating a vertex split of $G$ as a sequence of a vertex insertion, edge deletions and edge insertions, the following corollary is immediate from \Cref{lma:splitrecourse} and \Cref{lma:splitefficient}.
\begin{corollary}\label{corollary:fullydynamicMST}
    Given a fully dynamic, initially empty graph $G = (V, E)$ with $|V(G^{(0)})| = n$ undergoing up to $m$ edge insertions/deletions and $n$ vertex splits, there exists a deterministic algorithm that maintains a spanning forest $F$ of $G$. The following recourse properties hold:
    \begin{enumerate}
        \item If $G$ changes by an edge insertion, then $F$ changes at most by the insertion of the same edge.
        \item If $G$ changes by an edge deletion, then $F$ changes at most by the deletion of the same edge and possibly one additional edge insertion.
        \item If $G$ changes by a vertex split, then $F$ changes by performing the vertex split and at most one additional edge insertion.
    \end{enumerate}
    The algorithm takes total time $\Otil(m)$. 
\end{corollary}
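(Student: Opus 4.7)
The plan is to invoke Theorem~\ref{thm:edgedynamicMST} essentially as a black box, while ensuring that the maintained spanning forest is unique so that Lemma~\ref{lma:splitrecourse} can be applied to control the recourse under vertex splits. First, I would internally assign artificial unique weights to the edges (e.g., based on edge identifiers or insertion time) so that at all times the minimum spanning forest of the internally weighted graph is uniquely determined. These weights are used purely for tie-breaking; the forest $F$ returned to the caller is a valid spanning forest of the (uncapacitated) input graph regardless. With this, the algorithm of Theorem~\ref{thm:edgedynamicMST} deterministically maintains the unique MSF $F$ of $G$, and its recourse guarantees immediately yield properties (1) and (2).

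For vertex splits, I would simulate each split via edge operations as in Lemma~\ref{lma:splitefficient}: if $v$ is split into $v'$ and $v''$ with $N(v'')=U$ and $|U|\le|N(v)|/2$, I create $v''$, then delete each edge $(v,w)$ with $w\in U$ and reinsert it as $(v'',w)$. These operations are fed into the edge-dynamic MSF data structure. To recover property (3), I would buffer the induced changes to $F$ across the simulation and only expose to the caller the net difference between the pre-split forest and the post-split forest. By Lemma~\ref{lma:splitrecourse}, this net difference consists precisely of performing the vertex split on $F$ (relabeling the tree edges incident to $v$ according to the side $v'$ or $v''$ they belong to) together with at most one additional edge insertion, namely the minimum-weight edge reconnecting the two halves of the split tree if such an edge exists in the new $G$.

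For the running time, Lemma~\ref{lma:splitefficient} guarantees that the total number of simulated edge insertions and deletions across all $n$ vertex splits is $O(m\log n)$, to which we add the $m$ native edge updates; each of these $O(m\log n)$ operations costs $O(\log^4 n)$ amortized time inside the data structure of Theorem~\ref{thm:edgedynamicMST}, so the total cost is $\tilde O(m)$ as claimed.

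The main obstacle I expect is not mathematical but bookkeeping: I must argue that the net change reported per vertex split matches the bound of Lemma~\ref{lma:splitrecourse}, even though many intermediate modifications of $F$ may occur inside the black-box algorithm during the simulation. This is handled by observing that (i) uniqueness of the MSF (via internal tie-breaking) implies that the pre-split and post-split forests are both uniquely determined functions of the respective graphs, and (ii) Lemma~\ref{lma:splitrecourse}, applied to these two uniquely determined forests, pins down the net recourse regardless of the intermediate trajectory. Buffering the edge changes during the simulation and reporting only the symmetric difference therefore yields exactly the recourse stated in property (3).
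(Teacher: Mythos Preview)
Your proposal is correct and takes essentially the same approach as the paper: assign artificial unique weights so the MSF is uniquely determined, run the edge-dynamic MSF algorithm of Theorem~\ref{thm:edgedynamicMST}, simulate vertex splits via Lemma~\ref{lma:splitefficient}, and invoke Lemma~\ref{lma:splitrecourse} to bound the net recourse. The paper states the corollary as ``immediate'' from these ingredients; your write-up simply makes explicit the buffering argument that the net symmetric difference between the uniquely determined pre- and post-split forests is what Lemma~\ref{lma:splitrecourse} controls.
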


We can now show how to efficiently maintain an $l$-SF-bundle $B$ of a fully dynamic graph $G$ with low recourse. We simply maintain for each $k = 1, 2, \ldots, l$ the forest $F_k$ by calling the algorithm of \Cref{corollary:fullydynamicMST} on the fully dynamic graph $H_{k} := G\setminus \bigcup_{i=1}^{k-1} F_i$. The crucial insight is that while a vertex split causes up to $l$ additional updates in our bundle, these updates are only in the form of edge insertions, which themselves do not cause additional recourse.

\begin{proof}[Proof of \Cref{thm:dynamicbundle}]
The fact that $B$ at all time constitutes a valid $l$-SF-bundle of $G$ follows immediately from the correctness of \Cref{corollary:fullydynamicMST}. It remains to show the claims on recourse and running time. We start with the former.

Consider first property $1$ of \Cref{thm:dynamicbundle}. We want to show that when $G$ receives an edge insertion, then $B$ changes at most by the same edge insertion. We show the claim by induction on $l$. If $l = 1$, this is what \Cref{corollary:fullydynamicMST} guarantees us. Now suppose the claim has been shown for $l-1$. Then, when $G$ receives an edge insertion, $\cup_{i=1}^{l-1} F_i$ changes either by the insertion of edge $e$ or not at all. In the former case, we know that $H_l = G \setminus \cup_{i=1}^{l-1} F_i$ does not change at all, and consequently $F_l$ does not change at all. In the latter case, $H_l$ receives the insertion of edge $e$, in which case $F_l$ might change by the insertion of the same edge. In total, $\cup_{i=1}^{l} F_i$ changes at most by the insertion of edge $e$.

Consider now property $2$ of \Cref{thm:dynamicbundle}. We now want to show that when $G$ changes by an edge deletion, then $B$ changes at most by the same deletion and one additional edge insertion. We again show the claim by induction on $l$. When $l = 1$, this is guaranteed by \Cref{corollary:fullydynamicMST}. So suppose the claim has been shown for $l-1$, and that $G$ changes by the deletion of edge $e$. If the edge $e$ was not in $B$, there is nothing to show. If the edge $e$ was in $F_l$, the claim also follows from \Cref{corollary:fullydynamicMST}. If the edge $e$ was in $\cup_{i=1}^{l-1} F_i$, we know by the induction hypothesis that $\cup_{i=1}^{l-1} F_i$ changes by the deletion of edge $e$ and possibly by the insertion of an edge $e'$. This, on the other hand, implies that $H_l$ changes at most by the deletion of edge $e'$. If $e'$ was not in $F_l$, there is nothing to show. If, however, $e'$ was in $F_l$, then by \cref{corollary:fullydynamicMST}, we know that $F_l$ will change by the deletion of $e'$ and possibly the insertion of an additional edge $e''$. In total, however, this implies that $\cup_{i=1}^{l} F_i$ will have changed exactly by the deletion of edge $e$ and the insertion of edge $e''$, as desired.

We now deal with property $3$ of \Cref{thm:dynamicbundle}. We again show the claim by induction on $l$. If $l = 1$, this follows from \Cref{corollary:fullydynamicMST}. So suppose the claim has been shown for $l-1$. Then we know that $\bigcup_{i=1}^{l-1} F_i$ changed by performing the vertex split and at most $l-1$ additional edge insertions. Each edge that was inserted into $\bigcup_{i=1}^{l-1} F_i$ might constitute an edge deletion in $F_l$. We process the update by first feeding the vertex split into \Cref{thm:edgedynamicMST}, and then the edge deletions. The vertex split causes $F_l$ to change by at most one edge insertion. Each edge deletion to $H_l$ causes $F_l$ to change by at most the given deletion and one additional insertion. By the same argument as before, if an edge $e$ was deleted from $F_l$, and consequently a new edge $e''$ inserted into $F_l$, the union $\bigcup_{i=1}^{l} F_i$ changes only by the insertion of $e''$. In total, this implies that $\bigcup_{i=1}^{l} F_i$ changes at most by $l$ additional edge insertions.

Let us first consider a fixed $i$. By \Cref{lma:splitefficient}, the total time for moving the edges required to simulate all vertex splits is $\Otil(m)$ (we can move an edge in time $O(1)$ in the adjacency list data structure we use to store the graph). Each vertex split causes the data structure maintaining the $F_i$ at most $l$ additional updates. Over a sequence of $n$ vertex splits, this causes a total of $n \cdot l$ additional edge updates. Now the total time for maintaining all edge updates can thus be bounded by $\Otil(m+nl)$ per data structure, leading to a total running time of $\Otil(ml + nl^2)$.
\end{proof}

\subsection{Initializing a Cut Sparsifier}
In this section, we present the initialization procedure of the cut sparsifier, which is based on the algorithm provided in \cite{7782947}. Then, in the next section, we will show that we can make the algorithm dynamic with the help of the low-recourse dynamic $l$-SF-bundles from \Cref{thm:dynamicbundle}.

Before we start describing the algorithm, we need to state the following lemma from \cite{7782947}, which in turn was adapted from \cite{generalsparse}. It shows that by sub-sampling the edges of a graph $G$ based on their edge-connectivities yields a cut-sparsifier with high probability.
\begin{lemma}[See Lemma $5.6$ in \cite{7782947}]\label{lma:cutsample}
Let $G$ be an uncapacitated graph, and let $H$ be a sub-graph of $G$ obtained by sampling each edge $e$ with probability $p_e \geq \min \{1, \rho / \lambda_G(e)\}$, where $\rho = C_{\xi} c \log^2 n / \epsilon^2$ and $C_{\xi}$ is an explicitly known constant. If an edge $e$ is selected, it is assigned a capacity of $1/p_e$ in $H$. Then $H$ is a $(1 \pm \epsilon)$-cut sparsifier of $G$ with probability at least $1 - 1/n^c$.
\end{lemma}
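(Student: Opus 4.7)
The plan is to follow the sampling/union-bound framework of Fung--Hariharan--Harvey--Panigrahi. Fix a cut $(S, V\setminus S)$ with $C := E_G(S, V\setminus S)$ and write $U_H(S) = \sum_{e \in C} Y_e$, where $Y_e = X_e/p_e$ for $X_e$ the sampling indicator (treating the case $p_e = 1$ by $Y_e \equiv 1$). Then $\mathbb{E}[U_H(S)] = |C| = U_G(S)$ and each $Y_e \in [0, 1/p_e]$ with $1/p_e \leq \lambda_G(e)/\rho$. The central structural observation is that for every $e = (u,v) \in C$, the cut $(S, V \setminus S)$ is a valid $u$-$v$ cut, hence $U_G(S) \geq \lambda_G(e)$; in particular $U_G(S) \geq \lambda_C := \min_{e \in C} \lambda_G(e)$. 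This ties the value of the cut to the scale of the most dangerous (smallest-connectivity) edge in it.

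Next I would obtain the per-cut concentration via \Cref{thm:chernoff} applied to the rescaled variables $Y_e / M_C$, where $M_C := \lambda_C/\rho$ upper bounds the variance-weight of the ``worst'' edge in the cut. After unfolding the Chernoff estimate (handled separately for both tails, symmetrizing $(1 \pm \epsilon)$ by splitting the ``large'' $Y_e$'s into buckets of geometrically-growing $\lambda_e$), the failure probability for a fixed cut $C$ satisfies
\[
\Pr\bigl[|U_H(S) - U_G(S)| > \epsilon \, U_G(S)\bigr] \;\leq\; \exp\!\bigl(-\Omega(\epsilon^2 \rho \cdot U_G(S)/\lambda_C)\bigr).
\]
Since $U_G(S) \geq \lambda_C$, the above is at most $\exp(-\Omega(\epsilon^2 \rho))$, but the refined form is what enables the union bound below.

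For the union bound I would stratify cuts by their ``level'' $k \geq 0$: let $K$ denote the global min-cut and define $\mathcal{C}_k := \{C : U_G(S) \in [2^{k-1} K, 2^k K)\}$. Karger's cut-counting theorem gives $|\mathcal{C}_k| \leq n^{O(2^k)}$. Meanwhile, every $C \in \mathcal{C}_k$ satisfies $U_G(S)/\lambda_C \geq 1$ trivially, and more importantly $U_G(S) \geq 2^{k-1} K \geq 2^{k-1} \lambda_C$ after replacing $\lambda_C$ by the smaller of the two, so the per-cut failure probability is at most $\exp(-\Omega(\epsilon^2 \rho \cdot 2^k))$. Plugging $\rho = C_\xi c \log^2 n / \epsilon^2$, the total failure probability over all cuts is bounded by
\[
\sum_{k \geq 0} n^{O(2^k)} \exp\!\bigl(-\Omega(C_\xi \, c \, 2^k \log^2 n)\bigr) \;\leq\; n^{-c},
\]
by choosing $C_\xi$ a large enough absolute constant so the $\log^2 n$ factor dominates the $O(2^k)$ exponent inside $n^{O(2^k)} = \exp(O(2^k \log n))$ uniformly in $k$, and the geometric series converges.

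The main obstacle is the Chernoff step: edges in $C$ can have wildly varying connectivities $\lambda_G(e)$, so the $Y_e$'s are not identically-scaled Bernoullis. Handling this requires either (i) bucketing the edges of $C$ by the magnitude of $1/p_e$ and applying Chernoff level-by-level, or (ii) using a weighted Chernoff inequality in which the exponent depends on $\max_{e} 1/p_e$ relative to the mean---but crucially, one must exploit $U_G(S) \geq \lambda_C$ to absorb the worst $1/p_e$ factor. Matching the exponent in the Chernoff tail against the $n^{O(2^k)}$ cut count at every level $k$, simultaneously for both the upper and lower tails, is the delicate part of the argument; everything else is bookkeeping once $\rho$ is chosen with a sufficiently large hidden constant.
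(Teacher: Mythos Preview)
The paper does not prove this lemma; it is quoted from \cite{7782947} (itself adapted from Fung--Hariharan--Harvey--Panigrahi) and used as a black box. So there is no ``paper's proof'' to compare against, only the question of whether your outline is correct.

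There is a genuine gap in your union bound. You claim that for a cut $S$ in level $\mathcal{C}_k$ (i.e., $U_G(S) \in [2^{k-1}K, 2^k K)$ with $K$ the global min-cut) one has $U_G(S) \geq 2^{k-1}K \geq 2^{k-1}\lambda_C$. The second inequality is false: $\lambda_C = \min_{e\in C}\lambda_G(e)$ is always \emph{at least} $K$, not at most, since the connectivity of any edge dominates the global min-cut. Hence $U_G(S)/\lambda_C$ can be as small as $1$ even for cuts of arbitrarily large value (take any cut that is itself the min $u$--$v$ cut for some edge $(u,v)$ it contains). Your per-cut bound is therefore only $\exp(-\Omega(\epsilon^2\rho))$, independent of $k$, and this does not beat the $n^{O(2^k)}$ cut count as $k$ grows.

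The fix, which is the actual content of the Fung et al.\ argument, is to stratify \emph{edges} by connectivity rather than cuts by value. Partition $E$ into classes $E_i = \{e : \lambda_G(e)\in[2^{i-1}K, 2^i K)\}$, and for each $i$ work in the subgraph $G_i$ consisting of all edges with $\lambda_G(e)\geq 2^{i-1}K$. In $G_i$ the min-cut is at least $2^{i-1}K$, so Karger's cut-counting applied inside $G_i$ gives the right dependence to match the Chernoff exponent for the level-$i$ edges (which all have $1/p_e \leq 2^i K/\rho$). One then bounds the deviation contributed by each $E_i$ to each cut separately and sums over the $O(\log n)$ levels; this summation is exactly what forces $\rho$ to carry a $\log^2 n$ rather than a $\log n$. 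Your bucketing remark in the last paragraph gestures toward this, but the crucial move---passing to the subgraph $G_i$ so that cut-counting sees the elevated min-cut $2^{i-1}K$---is absent, and without it the argument does not close.
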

The utility of an $l$-SF-bundle $B$ of $G$ now becomes apparent: for every edge $e \in G \setminus B$, it holds that $\lambda_G(e) \geq l$, i.e., the bundle $B$ gives us a lower bound on the edge connectivity of edges not contained in $B$. The consequence is collected in the following lemma.
\begin{lemma}
    Let $G$ be a graph and $l > 0$. Suppose that $B$ is an $l$-SF-bundle of $G$. Let $G'$ be a graph obtained by sub-sampling each edge $e \in G \setminus B$ with probability $p \geq C_{\xi}c \log^2 n / (l\epsilon^2)$ and assigning it a capacity of $1/p$ if sampled. Then $G' \cup B$ is a $(1 \pm \epsilon)$-cut sparsifier of $G$ with probability at least $1 - 1/n^c$.
\end{lemma}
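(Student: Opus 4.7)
The plan is to reduce directly to \Cref{lma:cutsample} by defining a single sampling procedure on all of $E(G)$ that is consistent with both what the statement prescribes on $G\setminus B$ and a trivial choice on $B$. Concretely, I would define sampling probabilities $p_e$ for every edge $e\in E(G)$ as follows: set $p_e := 1$ for $e\in B$ (so such an edge is always kept and gets capacity $1/p_e = 1$, matching its status in $G'\cup B$), and set $p_e := p$ for $e\in G\setminus B$ (so the edge is kept independently with probability $p$ and, if kept, assigned capacity $1/p$, exactly as in the construction of $G'$). The sub-graph produced by this joint procedure is then identically distributed to $G'\cup B$.

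The key step is to verify the hypothesis $p_e\ge\min\{1,\rho/\lambda_G(e)\}$ of \Cref{lma:cutsample} with $\rho = C_{\xi}c\log^2 n/\epsilon^2$ for every $e\in E(G)$. For $e\in B$ this is immediate since $p_e=1$. For $e\in G\setminus B$ I would invoke the structural property of an $l$-SF-bundle: the forests $F_1,\dots,F_l$ are pairwise edge-disjoint and each is spanning, so for the endpoints $u,v$ of any $e\in G\setminus B$ the $l$ tree paths $F_i[u,v]$ are pairwise edge-disjoint paths in $G$, which by flow decomposition certifies $\lambda_G(e)\ge l$. Hence
\[
\min\{1,\rho/\lambda_G(e)\}\le \rho/l = C_{\xi}c\log^2 n/(l\epsilon^2)\le p,
\]
so the hypothesis holds. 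Applying \Cref{lma:cutsample} then yields that $G'\cup B$ is a $(1\pm\epsilon)$-cut sparsifier of $G$ with probability at least $1-1/n^c$, as desired.

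I expect the only conceptually nontrivial point to be the edge-connectivity certification for $e\in G\setminus B$, which relies on the $l$ edge-disjoint spanning forests in $B$; the rest is a direct packaging of the hypothesis of \Cref{lma:cutsample}. There are no routine calculations to grind, and no randomness beyond what is already absorbed by \Cref{lma:cutsample}.
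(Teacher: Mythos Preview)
Your proposal is correct and matches the paper's proof essentially verbatim: set $p_e=1$ on $B$, $p_e=p$ on $G\setminus B$, use that any $e\in G\setminus B$ has $\lambda_G(e)\ge l$ via the $l$ edge-disjoint tree paths in the bundle, and invoke \Cref{lma:cutsample}. The only nitpick is phrasing: each $F_i$ is a spanning forest of $G\setminus\bigcup_{j<i}F_j$ rather than of $G$, but since $e\in G\setminus B$ survives into every such residual graph, its endpoints lie in the same tree of each $F_i$, which is all you need.
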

\begin{proof}
This follows from \Cref{lma:cutsample}. Every edge $e \in B$ is included in $G' \cup B$ with probability $p_e = 1$, and every edge $e \in G \setminus B$ has $\lambda_G(e) \geq l$, so is included in $G' \cup B$ with probability $p_e \geq C_{\xi} c \log^2 n /(\epsilon^2 \lambda_G(e))$.
\end{proof}
This, in turn, yields the following corollary, which is heavily in exploited in the algorithm.
\begin{corollary}\label{lma:mainsparsifylemma}
    Let $G$ be a graph and $l \geq 2 C_{\xi} c \log^2 n / \epsilon^2$. Suppose that $B$ is an $l$-SF-bundle of $G$. Let $G'$ be a graph obtained by sub-sampling each edge $e \in G \setminus B$ with probability $1/2$ and assigning it a capacity of $2$ if sampled. Then $G' \cup B$ is a $(1 \pm \epsilon)$-cut sparsifier of $G$ with probability at least $1 - 1/n^c$.
\end{corollary}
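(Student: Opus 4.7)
The plan is to derive this corollary as a direct specialization of the preceding lemma (the one immediately above, which sub-samples $e \in G \setminus B$ with arbitrary probability $p \geq C_\xi c \log^2 n / (l\epsilon^2)$ and assigns capacity $1/p$). The only work is to verify that the sampling rate $p = 1/2$ used here satisfies the hypothesis of that lemma under the lower bound $l \geq 2 C_\xi c \log^2 n / \epsilon^2$.

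First I would rewrite the hypothesis on $l$ as
\[
\frac{C_\xi c \log^2 n}{l \, \epsilon^2} \leq \frac{1}{2},
\]
which is exactly the statement that taking $p = 1/2$ meets the threshold required by the previous lemma. Since $B$ is assumed to be an $l$-SF-bundle of $G$, all hypotheses of that lemma are satisfied with this choice of $p$; in particular, the capacity assigned to a sampled edge, namely $1/p = 2$, matches the capacity prescribed in the corollary's statement.

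Invoking the previous lemma then yields immediately that $G' \cup B$ is a $(1 \pm \epsilon)$-cut sparsifier of $G$ with probability at least $1 - 1/n^c$, which is the desired conclusion. There is no real obstacle: the corollary is purely a convenient reformulation of the preceding lemma, chosen so that in the subsequent dynamic construction one can alternately halve the number of non-bundle edges (assigning doubled capacities) and rebuild a fresh $l$-SF-bundle on the sub-sampled graph, without having to track an evolving sampling probability $p$. The only thing worth making explicit in the write-up is the arithmetic rearrangement showing that the assumed lower bound on $l$ is exactly what is needed to replace the abstract threshold $p \geq C_\xi c \log^2 n / (l\epsilon^2)$ by the concrete $p = 1/2$.
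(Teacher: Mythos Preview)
Your proposal is correct and matches the paper's approach: the corollary is stated in the paper as an immediate consequence of the preceding lemma, and your write-up simply makes explicit the one-line arithmetic that $l \geq 2C_\xi c \log^2 n / \epsilon^2$ is equivalent to $C_\xi c \log^2 n / (l\epsilon^2) \leq 1/2$, so $p = 1/2$ is an admissible sampling probability.
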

\begin{remark}
Let $G$ be a capacitated graph with uniform capacities $u_G(e) = U$ for all $e \in E_G$ and let $G'$ be as defined in \Cref{lma:mainsparsifylemma}. Then by \Cref{lma:cutunion} and the above result, by assigning each edge $e \in B$ a capacity of $u_H(e) = U$ and each edge $e \in G'$ a capacity of $u_H(e) = 2 U$, the graph $H = G' \cup B$ is a $(1 \pm \epsilon)$-cut sparsifier of $G$ with probability at least $1 - 1/n^c$.
\end{remark}
\paragraph{The Algorithm}Using the above corollary, we can now describe the algorithm. We let $G_0 = G$ and compute $B_1$ as an $l_s := \lceil 2 \cdot C_{\xi} \cdot 40 \cdot \log^2 (2n) \cdot \epsilon^{-2} \rceil$-SF-bundle of $G_0$. Then, we receive $G_1$ by sub-sampling each edge $e \in G_0 \setminus B_1$ with probability $1/2$ and, if sampled, assigning it capacity $2$. We then iteratively compute $B_i$ as an $l_s$-SF-bundle of $G_{i-1}$ \emph{without capacities}. That is, we run the data structure from \Cref{thm:dynamicbundle} on $G_{i-1}$ without considering its capacities (which are uniformly $2^{i-1}$.) Each edge $e \in B_i$ is afterwards assigned a capacity of $u_{B_i}(e) = u_{G_{i-1}}(e) = 2^{i-1}$.

Then, we construct $G_i$ by independently sub-sampling each edge in $G_{i-1} \setminus B_i$ with probability $1/2$ and assigning it capacity $2 u_{G_{i-1}}(e) = 2^i$ if selected. We perform this for $\rho = \lceil \log (2n) \rceil$ many steps, and return $H = G_{\rho} \cup \bigcup_{i=1}^{\rho} B_i$ as the final sparsifier. 

\begin{theorem}[See also section $5.3$ in \cite{7782947}]\label{thm:initcorrect}
    Let $G$ be a (multi-)graph on at most $2n$ vertices, $m \leq O(n^2)$ edges, and let $\epsilon > 0$ be a parameter. Then, the output $H$ of $\textsc{InitCutSparsifier}(G, \epsilon)$ is a $(1 \pm \epsilon)$ cut sparsifier of $G$ that satisfies $|E(H)| = \Otil(n \cdot \epsilon^{-2})$ with probability $1- 1/n^c$ for any constant $c > 0$ specified before the algorithm starts. 
\end{theorem}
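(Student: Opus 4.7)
The plan is to establish the two claims---cut-approximation and size bound---separately, both via a level-by-level analysis combined with a global union bound.

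For cut-approximation, I would proceed by induction on $i$, showing that $G_i \cup \bigcup_{k \leq i} B_k$ is a $(1 \pm \eps_i)$-cut sparsifier of $G_0$ for a controlled sequence $\eps_i$. For the inductive step, fix iteration $i$ and condition on all random choices from previous iterations. Because $B_i$ is (by construction) an $l_s$-SF-bundle of $G_{i-1}$ viewed without its uniform capacity $2^{i-1}$, \Cref{lma:mainsparsifylemma} together with the uniformly-capacitated remark immediately implies that $G_i \cup B_i$ is a $(1 \pm \eps')$-cut sparsifier of $G_{i-1}$ with probability at least $1 - 1/n^{c+1}$, provided the constants in $l_s$ are set so that $l_s \geq 2 C_\xi (c+1) \log^2(2n)/(\eps')^2$ for $\eps' := \eps/(4\rho)$. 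Since cut-sparsification composes multiplicatively and is transitive, a union bound over the $\rho = \lceil \log(2n) \rceil$ iterations yields that $H = G_\rho \cup \bigcup_k B_k$ is a $\bigl(1 \pm ((1+\eps')^\rho - 1)\bigr)$-cut sparsifier of $G_0$ with probability at least $1 - \rho/n^{c+1} \geq 1 - 1/n^c$; the elementary estimate $(1+\eps')^\rho - 1 \leq 2\rho \eps' \leq \eps$ finishes this half.

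For the size bound, I would handle the two contributions $\bigcup_i B_i$ and $G_\rho$ separately. Each $B_i$ is the edge union of $l_s$ spanning forests on at most $2n$ vertices, so $|E(B_i)| \leq 2n \cdot l_s = \Otil(n/\eps^2)$, and summing over $\rho = O(\log n)$ iterations gives $|\bigcup_i E(B_i)| = \Otil(n/\eps^2)$. For $|E(G_\rho)|$, note that an edge $e$ can end up in $G_\rho$ only if all $\rho$ of its (independently drawn) sub-sampling coin flips come up heads; hence $|E(G_\rho)|$ is stochastically dominated by a sum $Y = \sum_e Y_e$ of mutually independent $\mathrm{Bernoulli}(2^{-\rho})$ random variables, whose expectation is at most $m/(2n) = O(n)$ since $m \leq O(n^2)$ and $2^\rho \geq 2n$. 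An application of \Cref{thm:chernoff} to $Y$ then yields $|E(G_\rho)| = O(n)$ with probability at least $1 - 1/n^c$, and combining both contributions gives $|E(H)| = \Otil(n/\eps^2)$ with the claimed probability via a final union bound.

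The main obstacle will be the cascading of multiplicative errors across $\rho = \Theta(\log n)$ levels: to end up with a single $(1 \pm \eps)$-sparsifier, the per-level slack must be $\eps' = \Theta(\eps/\rho)$, which in the simple analysis above inflates $l_s$ by an extra $\log^2 n$ factor compared to the $\Theta(\log^2(n)/\eps^2)$ setting displayed in the algorithm. This is harmless since it is absorbed by the $\Otil(\cdot)$ in both the size and running-time statements; a sharper, single-Chernoff-bound argument applied directly to the random variables $U_H(S) - U_G(S)$ of a fixed cut would recover the displayed constant, but such an improvement is not required for the claimed bounds and we will rely on the simpler cascading analysis.
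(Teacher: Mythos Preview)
Your proposal is correct and follows essentially the same approach as the paper: induction over the $\rho$ levels for cut-approximation, combined with a stochastic-domination/coupling argument to bound $|E(G_\rho)|$ (the paper makes the same observation that the survival indicators are dominated by products of the independent coin flips). The one notable difference is that you explicitly manage the error compounding across levels via per-level slack $\epsilon' = \Theta(\epsilon/\rho)$, whereas the paper applies the same $\epsilon$ at each level and invokes \Cref{lma:cutunion} in its inductive step without discussing the resulting $(1\pm\epsilon)^\rho$ accumulation; your treatment is the more careful one on this point, and as you correctly note, the extra $\polylog(n)$ factor this introduces in $l_s$ is absorbed by the $\Otil(\cdot)$.
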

\begin{proof}
    We start by showing that $|E(H)| = \Otil(n \cdot \epsilon^{-2})$. Note first that the number of edges in $\bigcup_{i=1}^{\rho} B_i$ can be bounded by $\rho \cdot l_s \cdot  n = \Otil(n \epsilon^{-2})$, so we only need to bound the number of edges in $G_{\rho}$. 
    
     To do so, consider the indicator random variables $\mathbf{1}\{e \in G_{\rho}\}$. We have $\E{\mathbf{1}\{e \in G_{\rho}\}} = \Pr[e \in G_{\rho}] \leq 2^{-\rho} \leq n^{-1}$, so that $\E{|G_{\rho}|} = \sum_{e \in E} \E{\mathbf{1}\{e \in G_{\rho}\}} \leq m/n = O(n)$. We would now like to apply the Chernoff-Bound from \Cref{thm:chernoff} in order to show that also with probability at least $1-n^{-40}$, we have that $|G_{\rho}| \leq \Otil(n)$. Note however that, in general, $\mathbf{1}\{e \in G_{\rho}\}$ is not independent of $\mathbf{1}\{e' \in G_{\rho}\}$ for two edges $e, e' \in E_G$, as the status of edge $e'$ could affect whether or not $e$ gets included in a bundle $B_i$ or not. 
    
    To resolve this issue, we consider a coupling as follows: for each edge $e \in G$, we let $b_{e, i} \sim \textnormal{Ber}(1/2)$ be another indicator random variable representing the independent coin flips that determine if, in the case that $e \in G_{i-1} \setminus B_i$, the edge $e$ also gets sampled to $G_i$ (i.e., the coin flips used in \Cref{algo:cutSparsify}, Line~\ref{alg:static:sample}). Since for an edge $e \in G_{\rho}$ it is necessary that it was sampled into every $G_i$, we have that, almost surely, $\mathbf{1}\{e \in G_{\rho}\} \leq \prod_{i=1}^{\rho} b_{e, i}=: b_e$. In particular, we have that $|E(G_{\rho})| \leq \sum_{e \in E} b_e$ almost    surely. Let $X = \sum_{e \in E} b_e$, and notice that $\E{X} = m 2^{-\rho} = O(n)$. Hence, by \Cref{thm:chernoff}, we have that $\Pr[X \geq \Omega(n \log n)] \leq \exp \rbrack{-\Omega(n \log n)} \leq n^{-40}$. It follows that $|E(G_{\rho})| \leq O(n \log n)$ with probability at least $1-n^{-40}$.
    
    We now show that $H$ is a $(1 \pm \epsilon)$-cut sparsifier of $G$. First note that by \Cref{lma:mainsparsifylemma} and the remark that followed, independent of the randomness in $G_{i-1}$, we have that $G_i \cup B_i$ is a $(1 \pm \epsilon)$-cut sparsifier of $G_{i-1}$ with probability at least $1 - n^{-40}$. Now let $\cE_i$ be the event that $G_i \cup B_i$ is a $(1 \pm \epsilon)$-cut sparsifier of $G_{i-1}$, and let $\cE = \bigcup_{i=1}^{\rho} \cE_i$. We next show that conditioned on the event $\cE$ we have that $H$ is a $(1 \pm \epsilon)$-cut sparsifier of $G$. To do so, let $H_j := G_j \cup \bigcup_{i=1}^j B_j$. We prove by induction on $j$ that on the event $\cE$, $H_j$ is a $(1 \pm \epsilon)$-cut sparsifier of $G$. That $G_1 \cup B_1$ is a $(1 \pm \epsilon)$-cut sparsifier of $G$ follows from \Cref{lma:mainsparsifylemma}. Now assume that $G_j \cup \bigcup_{i=1}^j B_j$ is a $(1 \pm \epsilon)$-cut sparsifier of $G$. Then since $G_{j+1} \cup B_{j+1}$ is a $(1 \pm \epsilon)$-cut sparsifier of $G_j$ conditioned on the event $\cE$, by \Cref{lma:cutunion} we have that $G_{j+1} \cup \bigcup_{i=1}^{j+1} B_j$ is a $(1 \pm \epsilon)$-cut sparsifier of $G$, as desired. 

    Using a simple union bound we can now finish by noting that $\Pr[\neg \cE] \leq \sum_{i=1}^{\rho} \Pr[\neg \cE_i] \leq \rho n^{-40} \leq n^{-30}$ by the success probability guarantee from \Cref{lma:mainsparsifylemma}. Finally, we remark that the failure probability $n^{-30}$ was arbitrary and can straightforwardly be boosted to $1/n^c$ for any $c > 0$ by setting $l_s = \lceil 2 C_{\xi} c \log^2(2n) \epsilon^{-2} \rceil$ instead.
\end{proof}

\begin{algorithm2e}
$G_0 \gets G = (V, E)$ \\
$n \gets |V|$ \\
$\rho \gets \lceil \log(2n) \rceil$ \\
$l_s \gets \lceil 2 \cdot C_{\xi} \cdot 40 \cdot \log^2 (2n) \cdot \epsilon^{-2} \rceil$\label{alg:static:li} \CommentSty{\hspace{1em} \textbackslash\textbackslash \hspace{0.2em} Let $C_{\xi}$ be the constant from \Cref{lma:cutsample}} \\
\For{$i=1$ to $\rho$}{
    \label{algo:cutSparsify:li}
    Initialize $B_{i}$ as a dynamic $l_s$-SF-bundle of $G_{i-1}$ using \Cref{thm:dynamicbundle} \\
    $G_i \gets (V, \emptyset)$ \\
    \For{$e \in E(G_{i-1}) \setminus E(B_i)$}{
        with probability $1/2$, add $e$ to $G_i$ with capacity $u_{G_i}(e) \gets 2^{i}$ \label{alg:static:sample}
    }
}
$H \gets G_{\rho} \cup \bigcup_{i=1}^{\rho} B_i$
\caption{$\textsc{InitCutSparsifier}(G, \epsilon)$}
\label{algo:cutSparsify}
\end{algorithm2e}
\subsection{Fully Dynamic Cut Sparsifier}
We will now show how to use fully dynamic $l$-SF-bundles to prove \Cref{thm:fullydynamicsparsifier}. In \cite{7782947} the static \Cref{algo:cutSparsify} was made dynamic to handle edge updates by using a dynamic MST data structure as given by \Cref{thm:edgedynamicMST}. We will show that by using the dynamic $l$-SF-bundle data structure from \Cref{thm:dynamicbundle} that can also handle vertex splits, we can also allow for vertex splits as an update for the cut sparsifier. The correctness of this approach will follow straightforwardly from \Cref{thm:initcorrect}; the main task is to show the recourse and running time bounds through careful analysis of how updates made by vertex splits are handled in the hierarchy of fully dynamic graphs in \Cref{algo:cutSparsify}. 

\paragraph{The Dynamic Algorithm} First, let us notice that in order to make \Cref{algo:cutSparsify} dynamic, we only need to ensure that at all times $B_i$ is an $l_s$-SF-bundle of $G_{i-1}$ (where $l_s = \lceil 2 \cdot C_{\xi} \cdot 40 \cdot \log^2 (2n) \cdot \epsilon^{-2} \rceil)$, and that $G_i$ is a sample from $G_{i-1} \setminus B_i$ as specified in Line~\ref{alg:static:sample} of \Cref{algo:cutSparsify}, i.e., that an edge $e \in G_{i-1} \setminus B_i$ satisfies $\Pr[e \in G_i] = 1/2$. If this is satisfied, then by \Cref{thm:initcorrect}, we have that $G_{\rho} \cup \bigcup_{i=1}^{\rho} B_i$ is a $(1 \pm \epsilon)$-cut sparsifier of $G$ with high probability. As we are assuming an oblivious adversary, we can union bound over all updates to get the final high probability claim. It thus  remains to show how to efficiently maintain such $B_i$ and $G_i$ with low recourse. In the following, we denote by $G_i^{(t)}, B_i^{(t)}$ the graph $G_i, B_i$ respectively after the $t$'th update to $G$ has been processed by the data structure.

We can maintain $B_i$ straightforwardly, as summarized in \Cref{algo:updatecutSparsify}: let $U_{i-1}^{(t)}$ be the batch of updates performed on $G_{i-1}^{(t)}$ to receive $G_{i-1}^{(t+1)}$. Then we receive $B_i^{(t+1)}$ from $B_i^{(t)}$ by feeding the batch of updates $U_{i-1}^{(t)}$ to the data structure of \Cref{thm:dynamicbundle}. Then, we receive $G_{i}^{(t+1)}$ from $G_i^{(t)}$ by the following update batch $U_i^{(t)}$: 
\begin{itemize}
    \item If a vertex $v$ was split in the input graph $G$ such that $N_{G}(v') = N_G(v) \setminus U$ and $N_G(v'') = U$, split the vertex $v$ in $G_i$ accordingly by setting $N_{G_i}(v') =  N_{G_i}(v) \cap N_G(v) \setminus U$ and $N_{G_i}(v'') =N_{G_i}(v) \cap U$.
    \item For every edge $e$ that was added to $G_{i-1}^{(t+1)}$ but not included in its $l_s$-SF-bundle $B_i^{(t+1)}$, i.e., for $e \in (E(G_{i-1}^{(t+1)}) \setminus E(B_i^{(t+1)})) \setminus (E(G_{i-1}^{(t)}) \setminus E(B_i^{(t)}))$, add it to $G_i^{(t+1)}$ with probability $1/2$ and assign it capacity $2^i$ if sampled. 
    \item For every edge $e$ that was deleted from $E(G_{i-1}^{(t+1)}) \setminus E(B_i^{(t+1)})$, i.e., for $e \in (E(G_{i-1}^{(t)}) \setminus E(B_i^{(t)})) \setminus (E(G_{i-1}^{(t+1)}) \setminus E(B_i^{(t+1)}))$, remove the edge also from $G_i^{(t)}$ (if it had existed in the graph). 
\end{itemize}

\paragraph{Analysis of \Cref{algo:updatecutSparsify}.}
We start by showing that at all times, the distribution of $G_{\rho} \cup \bigcup_{i=1}^{\rho} B_i$ remains correct, which shows that indeed we maintain a cut-sparsifier with high probability. 
\begin{claim}\label{clm:correctness}
    At all times, $B_i^{(t)}$ is an $l_s$-SF-bundle of $G_{i-1}^{(t)}$ and, for each edge $e \in G_{i-1}^{(t)} \setminus B_i^{(t)}$, we have that $\Pr[e \in G_{i}^{(t)}] = 1/2$. In particular, $H^{(t)}$ satisfies $|E(H^{(t)})| \leq \Otil(n \cdot \epsilon^{-2})$ and is a $(1 \pm \epsilon)$-cut sparsifier of $G^{(t)}$ with probability $1-1/n^c$.
\end{claim}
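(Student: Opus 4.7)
The plan is to prove three assertions in turn by induction on $t$: (a) $B_i^{(t)}$ is a valid $l_s$-SF-bundle of $G_{i-1}^{(t)}$; (b) conditional on the pair $(G_{i-1}^{(t)}, B_i^{(t)})$, the set $G_i^{(t)}$ is distributed as an independent Bernoulli($1/2$) sample from $E(G_{i-1}^{(t)}) \setminus E(B_i^{(t)})$; (c) the size bound and the cut-sparsifier property hold with high probability. The base case $t = 0$ is furnished by \Cref{thm:initcorrect} together with the guarantees of \Cref{thm:dynamicbundle}, so only the inductive step requires work.

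For (a), it suffices to verify that the updates we feed to the data structure of \Cref{thm:dynamicbundle} for $B_i$ are exactly the sequence of insertions, deletions, and vertex splits that transform $G_{i-1}^{(t)}$ into $G_{i-1}^{(t+1)}$. By the inductive hypothesis of (b) applied at level $i-1$, the update batch $U_{i-1}^{(t)}$ we pass on does precisely describe this transformation, so \Cref{thm:dynamicbundle} maintains the correct bundle. For (b), I would argue that the coin flip deciding membership of a given edge $e$ in $G_i$ is used exactly once in the lifetime of $e$, namely at the moment $e$ first enters $E(G_{i-1}) \setminus E(B_i)$ (either by being newly inserted into $G_{i-1}$ outside $B_i$, or after initialization). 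A vertex split in $G_{i-1}$ only renames the endpoint of $e$ and by construction also renames the endpoint in $G_i$, hence preserves the coin flip; a movement of $e$ from $E(G_{i-1}) \setminus E(B_i)$ into $E(B_i)$ (triggered by a split or by a replacement in the bundle) forces removal of $e$ from $G_i$ by the update rule, and we agree that its coin flip is discarded; once $e$ leaves $G_{i-1}$ or enters $B_i$, any future re-entry of the same edge name is treated as a fresh edge with a fresh coin flip. Thus every edge currently in $E(G_{i-1}^{(t)}) \setminus E(B_i^{(t)})$ carries an independent Ber($1/2$) indicator for inclusion in $G_i^{(t)}$.

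Having established (a) and (b), the cut-sparsifier claim follows by repeating the inductive argument in the proof of \Cref{thm:initcorrect}. For a fixed time $t$, the coin flips used to sample $G_i^{(t)}$ from $G_{i-1}^{(t)} \setminus B_i^{(t)}$ are, by (b), independent of the randomness that produced $(G_{i-1}^{(t)}, B_i^{(t)})$, so \Cref{lma:mainsparsifylemma} implies that $G_i^{(t)} \cup B_i^{(t)}$ is a $(1 \pm \epsilon)$-cut sparsifier of $G_{i-1}^{(t)}$ with probability at least $1 - n^{-C}$ (for $C$ determined by the constant in $l_s$). Chaining these across the $\rho = O(\log n)$ levels via \Cref{lma:cutunion} and using the identical coupling as in the proof of \Cref{thm:initcorrect} gives that $H^{(t)}$ is a $(1 \pm \epsilon)$-cut sparsifier of $G^{(t)}$ with probability at least $1 - \rho n^{-C}$; the size bound $|E(H^{(t)})| = \Otil(n \epsilon^{-2})$ follows identically from the Chernoff bound used in \Cref{thm:initcorrect}.

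The main obstacle I anticipate is the careful bookkeeping for (b) in the presence of vertex splits: one must argue that when a vertex split in $G_{i-1}$ triggers up to $l_s$ promotions of edges from $G_{i-1} \setminus B_i$ into $B_i$ (as guaranteed by \Cref{thm:dynamicbundle}), the algorithm correctly deletes exactly these edges from $G_i$, and conversely that no edge currently in $G_{i-1} \setminus B_i$ has its Bernoulli status invalidated by the renaming of endpoints. The final step is then a union bound: since the oblivious adversary fixes a sequence of at most $\poly(n)$ updates, choosing $l_s = \lceil 2 C_\xi c' \log^2(2n) \epsilon^{-2}\rceil$ with $c'$ sufficiently larger than the target $c$ makes both the per-time failure probability and the number of time steps absorb into the claimed $1 - 1/n^c$ bound.
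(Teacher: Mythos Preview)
Your proposal is correct and follows essentially the same approach as the paper: induction on $t$, with the bundle invariant coming from \Cref{thm:dynamicbundle} and the Bernoulli invariant handled by the two-case split (edge was already in $G_{i-1}\setminus B_i$ versus newly entering), then invoking \Cref{thm:initcorrect} for the sparsifier and size guarantees. The paper's proof is much terser on part (b) and does not spell out the vertex-split bookkeeping you worry about; your more careful treatment there is fine but not strictly needed, and your final union bound over time steps actually goes beyond what this claim asserts (the claim is per-time-step; the union bound over updates is deferred to the proof of \Cref{thm:fullydynamicsparsifier}).
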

\begin{proof}
  We prove the claim by induction on time $t$. The case $t=0$ follows from initialization. Supposing the statement is true for time $t$, we show that it still holds for time $t+1$. 
  
  First, that $B_i^{(t+1)}$ is an $l_s$-SF-bundle of $G_{i-1}^{(t+1)}$ follows immediately from the correctness of \Cref{thm:dynamicbundle}. It remains to show the part about $G_i^{(t+1)}$. So let $e \in G_{i-1}^{(t+1)} \setminus B_i^{(t+1)}$. We want to verify that $\Pr[e \in G_{i}^{(t+1)}] = 1/2$. There are two cases. If $e$ was already in $G_{i}^{(t)} \setminus B_i^{(t)}$, then $\Pr[e \in G_{i}^{(t+1)}] = \Pr[e \in G_{i}^{(t)}] = 1/2$ by the induction hypothesis. Otherwise, by construction of the update batch $U_i^{(t)}$, the edge $e$ is sampled as specified in Line~\ref{alg:update:sample} of \Cref{algo:updatecutSparsify} to directly satisfy $\Pr[e \in G_{i}^{(t+1)}] = 1/2$. Since the number of vertices of $G^{(t+1)}$ is still bounded by $2n$ (as $G$ in total undergoes at most $n$ vertex splits), the guarantees of \Cref{thm:initcorrect} then also show that $H^{(t+1)}$ satisfies $|E(H^{(t+1)})| \leq \Otil(n \cdot \epsilon^{-2})$ and is a $(1 \pm \epsilon)$-cut sparsifier of $G^{(t+1)}$ with probability $1-1/n^c$.
\end{proof}
We now show the following claim on recourse.
\begin{claim}\label{clm:sparsifyrecourse}
    Let $H$ be the cut-sparsifier of $G$ as maintained by \Cref{algo:updatecutSparsify}, and $l_s = \lceil 2 \cdot C_{\xi} \cdot 40 \cdot \log^2 (2n) \cdot \epsilon^{-2} \rceil$. Then the following recourse properties hold:
    \begin{enumerate}
        \item If $G$ changes by an edge insertion, then $H$ changes at most by the insertion of the same edge.
        \item If $G$ changes by an edge deletion, then $H$ changes at most by the deletion of the same edge and possibly one extra edge insertion.
        \item If $G$ changes by a vertex split, then $H$ changes by at most $\rho \cdot l_s$ additional edge insertions.
    \end{enumerate}
\end{claim}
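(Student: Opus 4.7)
The plan is to prove each of the three properties by induction on the depth of the hierarchy, combining the per-level recourse guarantees of \Cref{thm:dynamicbundle} with the propagation rules of the update algorithm. A key structural observation used throughout is that the components of $H = G_{\rho} \cup \bigcup_{i=1}^{\rho} B_i$ are pairwise disjoint, since $B_i \subseteq G_{i-1}$ and $G_i \subseteq G_{i-1} \setminus B_i$ by construction, so every edge of $G$ appears in at most one component of $H$. Counting recourse on $H$ therefore reduces to tracking changes in each component separately and then accounting for how changes ``telescope'' across components.

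Property 1 will follow by straightforward induction on the level. \Cref{thm:dynamicbundle} property 1 ensures that $B_1$ changes by at most the insertion of $e$: if $B_1$ absorbs $e$ the update halts and $H$ gains exactly $e$; otherwise $e$ enters $G_0 \setminus B_1$ and the subsampling rule inserts it into $G_1$ with probability $1/2$. Applying the same reasoning inductively to the sub-hierarchy rooted at $G_1$, the edge $e$ ends up in at most one component of $H$ (or is dropped entirely), yielding at most one net insertion into $H$.

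Property 2 I would prove by a telescoping argument lifted from the proof of \Cref{thm:dynamicbundle} property 2. Deletion of $e$ from $G_0$ makes $B_1$ change by at most $\{-e, +e_1^{\star}\}$ for some $e_1^{\star} \in G_0 \setminus B_1$, so $G_0 \setminus B_1$ loses a single edge. If $e_1^{\star} \notin G_1$ the cascade halts and the net change to $H$ is $\{-e, +e_1^{\star}\}$; otherwise, $G_1$ loses $e_1^{\star}$ and by the inductive hypothesis applied to the sub-hierarchy $(B_2, G_2, \ldots, B_{\rho}, G_{\rho})$ the net change there is at most $\{-e_1^{\star}, +f\}$. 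The $+e_1^{\star}$ contributed by $B_1$ at the top level cancels against the $-e_1^{\star}$ at the sub-hierarchy, so the net change to $H$ is $\{-e, +f\}$. For Property 3, \Cref{thm:dynamicbundle} property 3 gives at most $l_s$ extra insertions into $B_1$ beyond performing the vertex split itself, each of which removes an edge from $G_0 \setminus B_1$ and thereby triggers a deletion-style cascade on the sub-hierarchy. Combining the inductive hypotheses of properties 2 and 3 on the sub-hierarchy gives at most $(\rho - 1) \cdot l_s$ further insertions, which together with the $l_s$ insertions at level 1 produces the claimed bound of $\rho \cdot l_s$ additional insertions to $H$.

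The main technical subtlety, which I expect to be the hardest part to make precise, is the cancellation step in property 2: one must argue that the bubbled-up edge $e_1^{\star}$ inserted into $B_1$ really does correspond to precisely the edge that the inductive deletion in the sub-hierarchy removes, and that this matching is consistent even though $e_1^{\star}$ may migrate between components of $H$ with different capacities. This mirrors the bookkeeping already present in the proof of \Cref{thm:dynamicbundle} property 2, but now must be maintained across a full hierarchy of bundles and intermediate sampled graphs, where an edge absorbed into $B_i$ may simultaneously vanish from $G_i$ or from some deeper bundle $B_j$.
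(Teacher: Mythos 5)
Your proposal is correct and follows essentially the same route as the paper's proof: per-level recourse from \Cref{thm:dynamicbundle}, an inductive cancellation argument across bundles for edge deletions, and a combination of the split and deletion cases for vertex splits (the paper merely phrases the inductions over explicit bundle indices rather than over sub-hierarchies). The capacity-migration subtlety you flag in property 2 is real but is handled no more carefully in the paper, which also counts an edge moving between components of $H$ as zero net recourse.
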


\begin{proof}
    Consider first property $1$. We want to show that when $G$ receives an edge insertion, then $H$ changes at most by the insertion of the same edge (albeit possibly with a different capacity). This follows immediately from the fact that the $B_i$'s can only change by the insertion of the given edge, as guaranteed by \Cref{thm:dynamicbundle}, and noting that the extra sub-sampling  in Line~\ref{alg:update:sample} of Algorithm~\ref{algo:updatecutSparsify} can only have the effect of the $B_i$'s not changing at all, which is even better. 

    Suppose now that $G$ changes by the deletion of edge $e$. If $e \not \in H$, the sparsifier does not change at all and we are done. So suppose that $e \in H$. In the case that $e \in G_{\rho}$, the sparsifier $H$ changes only by the deletion of edge $e$. So it remains to show the case for an edge $e$ being deleted from some $B_j$. We prove by induction on $j$ that in that case, the graph 
    $\bigcup_{i=j}^{\rho} B_i$ changes by at most one extra edge insertion. So consider first the case that $e \in B_{\rho}$. Then by \Cref{thm:dynamicbundle}, we know that $B_{\rho}$ changes at most by one extra edge insertion, as desired. Suppose the statement has been proven for all $\rho \geq i \geq j$, and that an edge from $B_{j-1}$ was deleted. Then from \Cref{thm:dynamicbundle}, we know that $B_{j-1}$ changes at most by the insertion of one extra edge $e''$. If $e''$ was not in $H$ beforehand, no further changes will be performed, and we are done. Otherwise, this corresponds to an edge deletion either from $G_{\rho}$ or some $B_i$ with $i > j-1$. In the former case, nothing remains to be shown. In the latter case, by the induction hypothesis, we know that $\bigcup_{i=j}^{\rho} B_j$ changes at most by the insertion of one extra edge, so that $\bigcup_{i=j-1}^{\rho} B_i$ changes at most by the insertion of one extra edge. 

    It remains to analyse vertex splits. We prove that the graph $\bigcup_{i=1}^j B_i$ changes by at most $j \cdot l_s$ additional edge insertions by induction on $j$. For the case $j = 1$, this follows from \Cref{thm:dynamicbundle}. So suppose the statement has been proven for $j-1$. Then to update $B_j$, we first perform the vertex split, which causes an additional $l_s$ edge insertions to $B_j$ by \Cref{thm:dynamicbundle}. By the induction hypothesis, $\bigcup_{i=1}^{j-1} B_i$ changed by up to $(j-1) \cdot l_s$ additional edge insertions. Each of those insertions can possibly correspond to an edge deletion in $B_j$. We process the edge deletions one by one in $B_j$, and by \Cref{thm:dynamicbundle}, we know that for each such deletion, $B_j$ might change by the insertion of one additional edge. In total, we have that $\bigcup_{i=1}^{j} B_i$ changes by up to $j \cdot l_s$ edge insertions, as we wanted to show. 
\end{proof}
We can now conclude with a proof of the full theorem.
\begin{proof}[Proof of \Cref{thm:fullydynamicsparsifier}]
    As mentioned in the beginning, we can assume without loss of generality that $G$ is uncapacitated. Then, correctness follows from \Cref{thm:initcorrect}, \Cref{clm:correctness}, union bounding over all $O(m)$ failure events and by noting that the capacities in $H$ are bounded by $2^\rho = O(n)$. 
    
    To prove the statement about recourse, first note that by \Cref{clm:sparsifyrecourse}, edge insertions and vertex splits performed on $G$ cause at most $1$ respectively $\rho l_s = O(\polylog(n))$ additional edge insertions in $H$ and, crucially, no edge deletions to $H$. Thus, since at all times $|E(H)| \leq \Otil(n \cdot \epsilon^{-2})$ and an edge leaves $H$ only when it gets deleted from $G$, we have that the total number of edge deletions and edge insertions by which $H$ changes must be bounded by $O(\polylog(n) \cdot \epsilon^{-2} \cdot (n+D))$, where $D$ is the total number of edge deletions that $G$ receives. Note, however, that $H$ additionally changes by $n$ vertex splits. Fortunately, \Cref{lma:splitefficient} shows that simulating the vertex splits by edge deletions and edge insertions causes at most additional $O(\polylog(n) \cdot \epsilon^{-2} \cdot (n+D))$ many updates.
    
    To show the running time guarantees, note first that each bundle $B_i$ receives at most $n$ vertex splits, so by \Cref{thm:dynamicbundle}, these can be handled in total time $\Otil(\rho l_s \cdot m + \rho \cdot l_s^2 \cdot n) \leq \Otil(m \rho l_{s}^2) = \Otil(m \cdot \epsilon^{-4})$. Furthermore, by \Cref{clm:sparsifyrecourse}, the total amount of edge updates each data structure maintaining the $B_i$ has to handle can be bounded by $O(m + \rho l_s \cdot n) = O(m + \rho l_{s} n) = \Otil(m \cdot \epsilon^{-2})$. Since each edge update can be handled in amortized time $O(\polylog(n))$, the total amount of time spent for maintaining the $B_i$'s is $\Otil(\rho \cdot m \cdot \epsilon^{-2}) = \Otil(m \cdot \epsilon^{-2})$. 
\end{proof}

\begin{algorithm2e}
\CommentSty{\textbackslash\textbackslash \hspace{1em}Here $u$ either encodes an edge insertion/deletion or a vertex split.} \\
$U_0^{(t)} \gets \setof{u}$ \\
\For{$i=1$ to $\rho$}{
    Forward the update batch $U_{i-1}^{(t)}$ of $G_{i-1}^{(t)}$ to the data structure of \Cref{thm:dynamicbundle}. \\
    Update $G_i^{(t)}$ by the update batch $U_i^{(t)}$ constructed as follows \\
    \If{$U_0^{(t)}$ \textnormal{contains vertex splits}}{
        add all vertex splits to $U_i^{(t)}$
    }
    \For{$e \in (E(G_{i-1}^{(t+1)}) \setminus E(B_i^{(t+1)})) \setminus (E(G_{i-1}^{(t)}) \setminus E(B_i^{(t)}))$}
    {
    With probability $1/2$, add insertion of edge $e$ with capacity $2^i$ to $U_i^{(t)}$\label{alg:update:sample}
    }
    \For{$e \in (E(G_{i-1}^{(t)}) \setminus E(B_i^{(t)})) \setminus (E(G_{i-1}^{(t+1)}) \setminus E(B_i^{(t+1)}))$}
    {
    Add removal of edge $e$ to $U_i^{(t)}$
}
}
\caption{$\textsc{UpdateCutSparsifier}(u)$}
\label{algo:updatecutSparsify}
\end{algorithm2e}

\section{Hierarchical \atmost{j}-Tree}
\label{sec:hierarchy}


In this section, we prove \Cref{th:main} by leveraging the tools we developed in \Cref{sec:madry,sec:cut-sparsifier} within an \((L,j)\)-hierarchy, the central concept of this section.
We start by restating \Cref{th:main} and discussing some remarks about it.

\main*


\begin{remark}
We note the trade-off in the choice of \(L\) in \Cref{th:main}.
Choosing \(L\) as a constant results in a cut quality of \poly{\log n} and an amortized update time of \atmosttilde{(n/j)^{2/L}}.
By selecting a larger constant for \(L\), the update time can be made arbitrarily small, yet still polynomial.
On the other hand, setting \(L = \Theta({\sqrt{\log n} / \log \log n})\) improves the amortized update time to \(n^{\littleo{1}}\), at the cost of a worse cut quality of \(n^{\littleo{1}}\).
\end{remark}

\begin{remark} \label{rmk:L}
The upper bound \(L \leq \littleo{ \sqrt{ \log (n/j) / \log \log n }} \) in \Cref{th:main} is to ensure that the term \(\log ^{\atmost{L}}n\) does not dominate the update time for larger values of \(L\).
This is due to the fact that
\begin{align*}
\log ^{\atmost{L}} n &= \exp\left( \atmost{L \log \log n} \right)
\\
&= \exp\left( \frac{1}{L} \atmost{L^2} \log \log n \right)
\\
&= \exp\left( \frac{1}{L} \littleo{ \frac{\log (n/j)}{\log \log n} } \log \log n \right)
&&\text{since \(L \leq \littleo{ \sqrt{ \log (n/j) / \log \log n }} \)}
\\
&= \exp\left( \littleo{\frac{1}{L} } \log (n/j) \right) 
\\
&= \left( \frac{n}{j} \right)^{\littleo{1/L}}.
\end{align*}
\end{remark}

The section is organized as follows.
We begin by defining the \((L,j)\)-hierarchy and proving some basic properties in \Cref{subsec:hierarchy}. 
Importantly, we show that the hierarchy naturally results in a set \(\mathcal H\) of $O(j)$-trees. 
In \Cref{subsec:single-level}, we describe how to initialize and maintain a single level of the hierarchy, which will be used as a crucial sub-routine in the full construction.
We then proceed to \Cref{subsec:data-structure}  where we explain the initialization and maintenance of the data structure of \Cref{th:main}.
Finally, we prove \Cref{th:main} in \Cref{subsec:proof-main-thm}.

\subsection{Hierarchy Definition and Basic Properties} \label{subsec:hierarchy}

As mentioned before, to maintain the set \(\mathcal H\) of \atmost{j}-trees, 
we use a hierarchical construction called an \((L,j)\)-hierarchy of \(G\), which is the central concept of this section.
Before formally defining the hierarchy, we first review a definition from \Cref{sec:madry}.




\SparsifiedCore*



We now proceed to define the \((L,j)\)-hierarchy.

\begin{definition}[\((L,j)\)-hierarchy of \(G\)] \label{def:hierarchy}
Given an \(n\)-vertex graph \(G=(V,E,u)\) and positive integers \(j,L \leq n\), let \(n = j_0 > j_1 > \dots > j_{L-1} > j_L = j\) be a strictly decreasing sequence of integers. Then, an \((L,j)\)-hierarchy of \(G\) is a sequence \(\mathcal H_0, \dots, \mathcal H_L\) of sets such that
\begin{enumerate}
\item
The set \(\mathcal H_0\) is a singleton containing 
\(G\),
i.e., \(\mathcal H_0 = \{ G \}\).
The graph \(G\) is treated as the trivial $n$-tree\footnote{The trivial $n$-tree consists of the core defined on the whole vertex set and the empty forest.} of \(G\) and $\Tilde{C}(G)$ denotes a $2$-approximate cut-sparsifier of $G$ (\Cref{def:sparsified-core}). 
\item For every \(1 \leq i \leq L\), and \(H_i \in \mathcal H_i\), there exists a unique parent graph $H_{i-1} \in \cH_{i-1}$ such that $H_i$ is an $O(j_i)$-tree on the vertex set of $\Tilde{C}(H_{i-1})$, a \(2\)-approximate sparsified core of $H_{i-1}$. Moreover, each graph $H_i$ has an associated $2$-approximate sparsified core $\Tilde{C}(H_i)$. 
\end{enumerate}
For every $H_{i} \in \mathcal H_{i}$, we denote by $\cH(H_{i})$ the set of graphs $H_{i+1} \in \cH_{i+1}$ that have $H_{i}$ as their parent.
\end{definition}

In an $(L, j)$-hierarchy, the set \(\mathcal H_L\) is defined as a set of \atmost{j}-trees on the vertex sets of the sparsified cores of \(\mathcal H_{L-1}\), whereas the set \(\mathcal H\) of \atmost{j}-trees in \Cref{th:main} is defined on the full vertex set of the input graph \(G\). 
We now describe how we will build the set $\cH$ using the \((L,j)\)-hierarchy. 
We begin by defining the notion of a chain in the \((L,j)\)-hierarchy.

\begin{definition}[Chain in an \((L,j)\)-hierarchy] \label{def:chain}
Consider an \((L,j)\)-hierarchy \(\mathcal H_0, \dots, \mathcal H_L\).
The sequence \(H_0, \dots, H_L\) of graphs is a chain in the hierarchy if, for every \(1 \leq i \leq L\), the graph $H_{i-1}$ is the parent of \(H_i\) in the hierarchy.
i.e., $H_i \in \cH(H_{i-1})$.
\end{definition}
Now, given a chain $H_0, H_1, \dots, H_L$ in the hierarchy, let $F_i = F(H_i)$ and $\Tilde{C}_i = \Tilde{C}(H_i)$ be the corresponding forest and the sparsified core of the \atmost{j_i}-tree \(H_i\), respectively. 
Then, we associate with the chain the graph 
\begin{equation} \label{eq:chain-graph}
H^{H_0, \dots, H_L} := \Tilde{C}_L \cup F_L \cup \dots \cup F_1.
\end{equation}
We now show that the associated graph of any chain is an $O(j)$-tree on the vertex set $V$ of the input graph $G$.
\begin{lemma}\label{lma:chaintree}
    For any chain $H_0, \dots, H_L$ in the hierarchy \(\mathcal H_0, \dots, \mathcal H_L\), the graph $H^{H_0, \dots, H_L}$ is an $O(j)$-tree on the vertex set $V(G)$.
\end{lemma}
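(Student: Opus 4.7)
The plan is to peel off the hierarchy one level at a time and directly verify the $O(j)$-tree definition. Write $R_i$ for the root set of the forest $F_i = F(H_i)$, so that $R_i$ is also the vertex set of the sparsified core $\Tilde{C}_i = \Tilde{C}(H_i)$. By \Cref{def:hierarchy} and \Cref{def:sparsified-core}, each $F_i$ is a rooted forest on the vertex set of $\Tilde{C}_{i-1}$ (with the convention $R_0 = V$) with root set $R_i$, and $|R_i| \leq O(j_i)$. Thus we get the chain of inclusions $R_L \subseteq R_{L-1} \subseteq \cdots \subseteq R_1 \subseteq V$, with $|R_L| = O(j_L) = O(j)$; this will serve as the root set of the $O(j)$-tree we are constructing.

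The first step will be to show by induction on $k$ that $F^{(k)} := F_1 \cup F_2 \cup \cdots \cup F_k$ is a rooted forest on $V$ with root set $R_k$. The base case $k = 1$ is immediate from \Cref{def:hierarchy}. For the induction step, I will crucially use the general observation that in any rooted forest, no edge has both endpoints in the root set, since distinct roots sit in distinct trees. Applied at level $i$, this means $F_i$ carries no edge within $R_i$, and in particular no edge within $R_{k+1} \subseteq R_k \subseteq R_i$ for $i \leq k$. Hence $F^{(k)}$ and $F_{k+1}$ have disjoint edge sets (as $F_{k+1}$ lives on $R_k$). The union is then a forest: each connected component of $F_{k+1}$ is a tree on some subset $U \subseteq R_k$ rooted at an $r \in R_{k+1}$, and gluing the pairwise disjoint subtrees of $F^{(k)}$ rooted at the vertices of $U$ via the edges of this component gives a single tree on $V$ rooted at $r$. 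Iterating down to $k = L$ yields that $F := F^{(L)}$ is a rooted forest on $V$ with root set $R_L$.

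The second step is to check the core. Since no $F_i$ contains an edge within $R_i$ and $R_L \subseteq R_i$ for every $i \leq L$, no edge of $\bigcup_{i=1}^{L} F_i$ lies inside $R_L$. Consequently the induced subgraph of $H^{H_0, \ldots, H_L}$ on $R_L$ is exactly $\Tilde{C}_L$, whose own vertex set is $R_L$. Combining the two steps, $H^{H_0, \ldots, H_L}$ is the union of a rooted forest on $V(G)$ with root set $R_L$ of size $O(j)$ and the core $\Tilde{C}_L = H^{H_0, \ldots, H_L}[R_L]$, which matches \Cref{def:jtree} and certifies it as an $O(j)$-tree on $V(G)$. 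I do not anticipate a real obstacle in executing this plan; the only point requiring care is tracking the vertex set on which each $F_i$ lives and exploiting the elementary fact that two distinct roots of a rooted forest cannot share an edge.
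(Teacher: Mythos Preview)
Your proof is correct and follows essentially the same inductive approach as the paper: both argue level by level that the accumulated forests $F_1 \cup \cdots \cup F_k$ form a rooted forest on $V$ with root set $R_k$, the paper doing so while carrying the sparsified core along and you handling the core separately at the end. One small slip: for the edge-disjointness of $F^{(k)}$ and $F_{k+1}$ you need that $F^{(k)}$ has no edge inside $R_k$ (not $R_{k+1}$), but this follows by the same reasoning since $R_k \subseteq R_i$ for all $i \leq k$.
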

\begin{proof}
By induction on $1 \leq i \leq L$, we prove that the graph \(\widetilde C_i \cup F_i \cup \dots \cup F_{1}\) is an \atmost{j_i}-tree.
The claim then follows from the fact that \(j_L = j\). 

\begin{itemize}
\item \underline{Base case:} 
by definition, $H_1$ is an $O(j_1)$-tree with core $C_1$ and forest $F_1$.
Hence $\Tilde{C}_1 \cup F_1$ is also an $O(j_1)$ tree by \Cref{clm:sparsifiedjtree}.

\item \underline{Induction step:}
assume that, for some level \(i\), the forest \(F_{i} \cup \dots \cup F_1\) along with the sparsified core \(\Tilde{C}_i\) forms an \atmost{j_i}-tree. 
By definition, \(H_{i+1} = C_{i+1} \cup F_{i+1}\) is an \atmost{j_{i+1}}-tree on vertex set $V(\Tilde{C}_i)$, which means that $\Tilde{C}_{i+1} \cup F_{i+1}$ is also an \atmost{j_{i+1}}-tree on the same vertex set.
Thus, the root of each tree in $F_{i} \cup \dots \cup F_1$ belongs to exactly one tree of $F_{i+1}$, so that $F_{i+1} \cup F_{i} \cup \dots F_{1}$ is indeed a rooted forest with root set $V(\Tilde{C}_{i+1})$. 
\end{itemize}
\end{proof}

Consequently, with each $(L, j)$-hierarchy, we can associate a corresponding set $\cH$ of $O(j)$-trees, as described below. 
\begin{definition}[Corresponding set \(\mathcal H\) of an \((L,j)\)-hierarchy]
    Given an $(L, j)$-hierarchy \(\mathcal H_0,\dots,\mathcal H_L\), we define the associated set $\cH$ of $O(j)$-trees as
    \[
    \cH := \setof{H^{H_0, \dots, H_L}: H_0, \dots, H_L \tn{ is a chain in the hierarchy}}.
    \]    
\end{definition}



 In addition to $\cH$ consisting of $O(j)$-trees, we  need to guarantee that each cut is approximately preserved by $\cH$ with high probability (\Cref{def:cut-preserving-collection}). 
As stated in \Cref{th:main}, the hierarchy we construct and maintain will satisfy this extra condition.

In the remainder of this section, we will briefly sketch how we initialize and maintain such an $(L, j)$-hierarchy. Then, in \Cref{subsec:single-level,subsec:data-structure} we will give a formal description, and in \Cref{subsec:proof-main-thm}, we will prove that the maintained associated set $\cH$ of $O(j)$-trees satisfies the conditions of \Cref{th:main}. 

First notice that, for any \(1 \leq i \leq L\), we can write \(\mathcal H_i = \bigcup_{H_{i-1}\in \mathcal H_{i-1}} \mathcal H(H_{i-1})\). This suggests that, to initialize and maintain an \((L,j)\)-hierarchy \(\mathcal H_0, \dots, \mathcal H_L\), we can focus on initializing and maintaining each set \(\mathcal H(H_{i-1})\), along with the sparsified cores \(\widetilde C_{i-1}\) on whose vertex set the graphs in \(\mathcal H(H_{i-1})\) are defined. To achieve this, we will employ two data structures from the previous sections:
\begin{itemize}
\item To initialize and maintain the set \(\mathcal H(H_{i-1})\), we employ \Cref{thm:jtree} to obtain a collection of \atmost{j_i}-trees on the sparsified core \(\widetilde C_{i-1}\). 
Because the collection of graphs obtained from \Cref{thm:jtree} is too large for our purposes, we drastically reduce its size by sampling \atmost{\log n} elements uniformly at random, as detailed in \Cref{subsec:single-level}.
\item The sparsified core \(\widetilde C_{i}\) of the (dense) core $C_{i}$ is maintained
using the cut-sparsifier of \Cref{thm:fullydynamicsparsifier}, which can handle vertex splits and a large number of edge insertions in $C_i$.
We re-state the guarantees of the theorem in the corollary below.
\end{itemize}

\sparsifiedcoremaintain*


%


To handle updates in the hierarchy (and thus maintain \(\mathcal H\)), 
  the data structure must efficiently handle the interactions between \(\mathcal H_{i+1}\) and \(\mathcal H_i\) for every \(1 \leq i \leq L\). We will discuss this in full detail in \Cref{subsec:hierarchy}.
Below, we briefly discuss the main points in maintaining the hierarchy. 

\paragraph{Handling Recourse Propagation.}
Given a chain $H_0, H_1, \dots, H_L$ in the hierarchy, a single update to $H_0$ might cause multiple updates to $H_1$, each of which causes multiple updates to $H_2$, and so on.
By \Cref{thm:jtree}, the core $C_{i-1}$ of $H_{i-1}$ undergoes $O(1)$ vertex splits after a single update in $H_{i-1}$.
Consequently, simulating each vertex split as a sequence of edge insertions and deletions results in $\Omega(j_{i-1})$ recourse per update,
and since each \atmost{j_i}-tree \(H_i \in \mathcal H_i\) has a core of size $\Theta(j_i)$, an update in \(G\) could result in $\Omega({j_{1} j_{2} \dots j_L})$ updates in graphs \(H_L \in \mathcal H_L\), which we cannot afford.
Thus, using \Cref{thm:jtree} directly on $C_{i-1}$ to initialize and maintain the set $\cH(H_{i-1})$ is computationally infeasible for our purpose.
 
In order to resolve this issue, we  initialize the set $\cH(H_{i-1})$ by calling \Cref{thm:jtree} on a sparsified core $\widetilde C_{i-1}$ of $C_{i-1}$, which is maintained using \Cref{thm:fullydynamicsparsifier}. 
That is, if $\gamma_{\tn{rec}}$ denotes an upper bound on the (amortized) number of updates any $H_{i}$ undergoes in response to a single update to its parent $H_{i-1}$, then a single update to the initial graph $G$ can cause up to $\gamma_{\tn{rec}}^i$ updates to each graph $H_i \in \cH_i$. 
By \Cref{thm:sparsifiedcore}, we can ensure that $\gamma_{\tn{rec}} = \poly{\log n} \log U$. 
Assuming $U = \poly{n}$, this results in $\gamma_{\tn{rec}}^i = \log^{O(i)} n$.\footnote{As  \Cref{thm:sparsifiedcore} only guarantees a recourse of $\gamma_{\tn{rec}} = \polylog(n)$ amortized over the whole sequence of updates, the final analysis in \Cref{subsec:proof-main-thm} is more involved.}

%


\paragraph{Rebuilding of Levels.}
Another issue arising during the maintenance of the hierarchy is that any $O(j_i)$-tree $H_i \in \mathcal H_i$ can only handle at most $O(j_i)$ many updates:
we have seen in \Cref{thm:jtree} that the size of the core $C_i$ grows proportionally to the number of updates the underlying graph receives. 
Hence, as we want to guarantee that $H_i$ stays an $O(j_i)$-tree throughout, we cannot afford to handle more than $O(j_i)$ many updates. 

To handle this issue, as we will discuss in \Cref{subsec:data-structure}, we re-compute the sets $\cH_i, \cH_{i+1}, \dots, \cH_L$ once there is a graph $H_i \in \cH_i$ whose core has doubled in size since its initialization.

\paragraph{Preserving the Cuts in \(G\).}
To show that $\mathcal{H}$ approximately preserves the cuts in \(G\), we use an inductive argument: 
by \Cref{lma:jtreesample}, for any $1 \leq i \leq L$, the set $\mathcal{H}(H_i) \subseteq \mathcal{H}_{i+1}$ of $O(j_{i+1})$-trees approximately preserves the cuts of its denser counterpart $\widetilde C_{i}$.
By \Cref{thm:fullydynamicsparsifier},  any sparsified core $\widetilde C_{i-1}$ itself approximately preserves the cuts in their denser counterpart $C_{i-1}$.

\bigbreak

To achieve these guarantees, 
  our data structure incorporates some elements from the data structure of \cite{Chen:2020aa},
which maintains a single-level hierarchy of \atmost{j}-trees (i.e., when \(L = 1\)).
In the next section, we explain the single-level case, which will be used as a subroutine in our hierarchy.

\subsection{Warm-Up: Single-Level Scheme} \label{subsec:single-level}


In this section, we summarize how to employ \Cref{thm:jtree} to maintain a set \(\mathcal H\) of $O(\log n)$ many \atmost{j}-trees in a single-level hierarchy (i.e., we prove \Cref{th:main} for the case \(L=1\)). This single-level scheme will later be used as a subroutine in our hierarchy, as detailed in \Cref{subsec:hierarchy}. 

For clarity, we begin by re-stating \Cref{thm:jtree}. 

\MadryJtreeCollection*

Now let $S$ be an arbitrary cut in $G$ and let $\cH' = \setof{H_i}_{i=1}^k$ be the collection of $O(j)$-trees obtained from \Cref{thm:jtree}.
Then, by \Cref{lma:cutpreserve}, we have
\begin{align}\label{eq:bound-for-sample}
U_G(S) \leq k^{-1} \sum_{i=1}^k U_{H_i}(S) \leq \alpha \cdot U_G(S).
\end{align}
In particular, this implies that there must exist a graph $H_i$ in the collection such that $U_{H_i}(S) \leq \alpha U_G(S)$. 
Consequently, the set $\cH'$ \(\alpha\)-preserves \emph{all} cuts of $G$ \emph{deterministically}, where \(\alpha = \atmosttilde{\log n}\). 

Unfortunately, the collection $\cH'$ has size $|\cH'| = \Omega(m/j)$, and thus is too large for our recursive application in \Cref{sec:hierarchy}.
As a solution to this issue, we sample $O(\log n)$ many graphs from $\cH'$ uniformly at random to initialize the set $\cH$.
This drastically reduces the size of the set, but at the cost of preserving the cuts with high probability. 

\begin{lemma}\label{lma:jtreesample}
    Let $H_1, \dots, H_k$ be the set \(\mathcal H'\) of $O(j)$-trees from \Cref{thm:jtree}, and let $(S, V \setminus S)$ be a cut in $G$. 
Then, if we sample $l = O(\log n)$ many $O(j)$-trees $H_{i_1}, \dots, H_{i_l}$ uniformly at random,
\begin{itemize}
\item $U_G(S) \leq \min_{1 \leq j \leq l} U_{H_{i_j}}(S)$, and
\item with probability at least $1-1/n^c$, $\min_{1 \leq j \leq l} U_{H_{i_j}}(S) \leq 2 \alpha \cdot U_G(S)$.
\end{itemize}
\end{lemma}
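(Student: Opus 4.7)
The first inequality is essentially immediate from what is already set up. By item~1 of Theorem~\ref{thm:jtree}, $G \preceq H_i$ for every $i$, and Lemma~\ref{lma:cutpreserve} (applied with $\alpha = 1$) then yields $U_G(S) \leq U_{H_i}(S)$ for every $i$. Taking the minimum over the sampled indices preserves the inequality, giving $U_G(S) \leq \min_{1 \leq j \leq l} U_{H_{i_j}}(S)$ unconditionally.

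For the second inequality, the plan is to combine the averaged bound \eqref{eq:bound-for-sample} with Markov's inequality and then independence-boost via the sample size. Concretely, for a uniformly random index $I \in \{1,\dots,k\}$ define the random variable $X_I := U_{H_I}(S)/U_G(S)$, which is well-defined and satisfies $X_I \geq 1$ by the first claim. Item~2 of Theorem~\ref{thm:jtree} (restated in \eqref{eq:bound-for-sample}) precisely says that $\mathbb{E}[X_I] = k^{-1}\sum_{i=1}^k U_{H_i}(S)/U_G(S) \leq \alpha$. Markov's inequality then gives
\[
\Pr\bigl[X_I > 2\alpha\bigr] \;\leq\; \frac{\mathbb{E}[X_I]}{2\alpha} \;\leq\; \frac{1}{2}.
\]

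To conclude, I will use that the $l$ indices $i_1,\dots,i_l$ are sampled independently and uniformly from $[k]$. The event $\min_{1 \leq j \leq l} U_{H_{i_j}}(S) > 2\alpha \cdot U_G(S)$ is the intersection of the $l$ independent events $\{X_{i_j} > 2\alpha\}$, each of probability at most $1/2$, so it has probability at most $2^{-l}$. Choosing the hidden constant in $l = O(\log n)$ large enough so that $l \geq c \log_2 n$ yields failure probability at most $1/n^c$, which is exactly the claimed bound.

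There is essentially no serious obstacle here; the only subtlety is to keep the sampling model clean. If one prefers to sample without replacement (say, because the outer algorithm prescribes it), negative association implies that the probability of \emph{all} samples being ``heavy'' can only decrease compared to the with-replacement case, so the same upper bound of $2^{-l}$ continues to hold and the argument goes through unchanged.
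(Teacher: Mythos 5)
Your proposal is correct and follows essentially the same route as the paper: the lower bound comes from $G \preceq H_i$ (Theorem~\ref{thm:jtree}, item 1) together with Lemma~\ref{lma:cutpreserve}, and the upper bound applies Markov's inequality to the averaged bound~\eqref{eq:bound-for-sample} to get per-sample failure probability $1/2$, then uses independence of the $l = O(c\log n)$ samples. The paper's proof is just a terser version of the same argument; your added remarks on the $2^{-l}$ boosting and on sampling without replacement are correct fillings-in of details the paper leaves implicit.
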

\begin{proof}
By using Markov's inequality and \Cref{eq:bound-for-sample}, for any randomly sampled $O(j)$-tree $H_{i_j}$, 
\[\Pr\left[U_{H_{i_j}}(S) \geq 2 \alpha U_G(S)\right] \leq 1/2.\] 
Hence, sampling $l = O(c \log n)$ many $O(j)$-trees uniformly at random suffices.
\end{proof}
Consequently, as we are satisfied with a Monte-Carlo algorithm that works again an \emph{oblivious adversary}, by first sampling $O(\log n)$ many $O(j)$-trees, we can perform all subsequent computations on a small number of $O(j)$-trees instead of the full sequence. 
The guarantees of our single-level algorithm are summarized in the lemma below, followed by a description of the algorithm and the proof of the theorem. 

\begin{lemma}\label{thm:2level}\label{th:single-level}
Given an integer \(j \geq 1\), and an $n$-vertex graph $G = (V, E, u)$ on initially $m$ edges, there is a data structure that maintains a collection \(\mathcal H\) of $O(j)$-trees that $\Otil(\log n)$-preserves the cuts of $G$ with probability at least $1-1/n^c$, for any constant $c > 0$ specified before the algorithm starts.

The data structure guarantees that at all times $|\cH| = O(\log n)$, and that the capacity ratio of each $O(j)$-tree $H \in \cH$ is $O(m U)$, where $U = \poly{n}$ is a bound on the capacity ratio of $G$.
\end{lemma}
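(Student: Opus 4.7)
The plan is to combine Theorem~\ref{thm:jtree} with the uniform sampling procedure formalized in Lemma~\ref{lma:jtreesample}. First I would invoke Theorem~\ref{thm:jtree} on the initial graph $G$ with the given parameter $j$, obtaining a collection $\cH' = \{H_1, \ldots, H_k\}$ of $k = \Otil(m/j)$ many $O(j)$-trees together with their rooted forests $F_i$. Using fresh randomness, I would then draw $l = O(c \log n)$ indices $i_1, \ldots, i_l$ uniformly at random from $\{1, \ldots, k\}$, set $\cH := \{H_{i_1}, \ldots, H_{i_l}\}$, and explicitly maintain only these $O(\log n)$ trees, discarding the rest of $\cH'$ (i.e., running the maintenance procedure of Theorem~\ref{thm:jtree} only on the sampled forests $F_{i_1}, \ldots, F_{i_l}$).

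For each subsequent edge update to $G$, I would forward the update to the maintenance procedure of Theorem~\ref{thm:jtree} applied to each sampled tree in $\cH$. Since Theorem~\ref{thm:jtree} only guarantees correctness for a sequence of $O(j)$ edge updates (after which the root sets $R_i$, and hence the cores, may have doubled in size), once this threshold is crossed I would trigger a full rebuild: re-invoke Theorem~\ref{thm:jtree} on the current version of $G$ with fresh randomness and re-sample a new set $\cH$. Note that the lemma imposes no update-time requirement, so this periodic rebuild is purely a correctness device.

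For correctness, the lower bound $U_G(S) \leq \min_{H \in \cH} U_H(S)$ holds deterministically at all times by property (1) of Theorem~\ref{thm:jtree}, which gives $G \preceq H_i$ for each maintained $H_i$ and hence $U_G(S) \leq U_{H_i}(S)$ via Lemma~\ref{lma:cutpreserve}. For the upper bound, fix a time $t$ and a fixed cut $(S, V \setminus S)$: because the update sequence is chosen by an oblivious adversary, the cut $S$ is independent of the random sample drawn at the start of the current epoch. Lemma~\ref{lma:jtreesample} then applies to the collection $\cH'^{(t)}$ maintained at time $t$ (which preserves $S$ deterministically in the averaged sense of Equation~\eqref{eq:bound-for-sample} by property (2) of Theorem~\ref{thm:jtree}), yielding $\min_{H \in \cH} U_H(S) \leq 2\alpha \cdot U_G(S) = \Otil(\log n) \cdot U_G(S)$ with probability at least $1 - 1/n^c$, provided we enlarge the constant hidden in $l = O(c \log n)$ appropriately.

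The size bound $|\cH| = l = O(\log n)$ is immediate by construction, and the capacity ratio bound $O(mU)$ transfers directly from property (3) of Theorem~\ref{thm:jtree}. The only genuine subtlety is verifying that the per-cut guarantee of Lemma~\ref{lma:jtreesample} can be boosted to the desired exponent via the sample size $l$; this reduces to a standard independence argument over the $l$ uniform draws, and is the step I would treat with the most care in a full write-up. Everything else — handling rebuilds, propagating updates to the sampled trees, and the oblivious-adversary independence — is routine bookkeeping once the above framework is in place.
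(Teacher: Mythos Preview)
Your proposal is correct and follows essentially the same approach as the paper: invoke Theorem~\ref{thm:jtree}, subsample $O(c\log n)$ of the resulting $O(j)$-trees via Lemma~\ref{lma:jtreesample}, and rebuild from scratch every $O(j)$ updates. The only cosmetic difference is that you explicitly discard the non-sampled trees and maintain only the sample (a harmless optimization, since the lemma imposes no update-time bound), whereas the paper's description keeps the full collection $\cH'$ around; your treatment of the oblivious-adversary independence is also slightly more explicit than the paper's.
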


\paragraph{The Algorithm.}
We employ the algorithm of \Cref{thm:jtree} with parameter $j$ to initialize and maintain, over $O(j)$ updates, a set $\mathcal H'$ of $O(j)$-trees 
that \atmosttilde{\log n}-preserves the cuts in \(G\).
The set \(\mathcal H\) is obtained by sub-sampling $O(\log n)$ many $O(j)$-trees from \(\mathcal H'\) uniformly at random.  After $O(j)$ updates in \(G\), we re-start the algorithm from scratch in order to ensure that, at all times, the collection $\cH'$ consists of $O(j)$-trees (i.e., has cores consisting of $O(j)$ vertices).

\begin{proof}[Proof of \Cref{thm:2level}]
We prove the bound on the update time and the high probability claim.

\underline{Update time:}
directly follows from \Cref{thm:jtree} as maintaining the graphs $H_i$ over $O(j)$ updates is done in $\Otil(m^2 /j)$ total time, and thus in amortized time $\Otil(m^2 /j^2)$.

\underline{High probability claim:}
for any fixed cut $(S, V \setminus S)$ and at any time, by \Cref{lma:jtreesample}, we have
\[\min_{H \in \mathcal H} U_{H}(S) \leq 2 \alpha \cdot U_G(S)\]
with probability at least \(1 - 1/n^c\), and \(\alpha = \atmosttilde{\log n}\). 
This means that the set \(\mathcal H\) is a collection of $O(j)$-trees that $\Otil(\log n)$-preserves the cuts of $G$ with probability at least \(1 - 1/n^c\).  
Moreover, by \Cref{thm:sparsifiedcore}, the graph $\Tilde{C}_i$ remains a \(2\)-approximate sparsified core of \(C_i\) with probability at least \(1 - 1/n^c\). 
%
%
\end{proof}

\subsection{The Data Structure of \Cref{th:main}} \label{subsec:data-structure}

The main motivation behind our data structure of \Cref{th:main}
  is to improve the update time of \Cref{thm:2level}, albeit at the cost of a larger approximation. 
We achieve this by maintaining a set \(\mathcal H\) of \atmost{j}-trees
  using an \((L,j)\)-hierarchy \(\mathcal H_{0}, \dots, \mathcal H_L\), as defined in \Cref{def:hierarchy}.
Note that if we set \(L=1\) in \Cref{th:main}, 
  we obtain the same guarantees as in \Cref{thm:2level}.


The data structure is summarized in \Cref{alg:j-tree-hierarchy}. 
We proceed by explaining the initialization and maintenance in \Cref{subsubsec:intialization,subsubsec:updates}, respectively.

\begin{algorithm2e}[]
\KwIn{An \(n\)-vertex graph \(G = (V, E, u)\), number of levels \(L\), and positive integer \(j\)}
\Maintain{a set \(\mathcal H\) of \atmost{j}-trees of \(G\) with \(|\mathcal H| = \atmost{\log^L n}\)}
\Procedure{Initialize}{
    \(C_{0} \gets G\) and \(F_{0} \gets (V, \emptyset, \vect{0})\) \;
    \(\mathcal H_{0} \gets \{C_{0} \cup F_{0}\} \) \;
    \ForEach{\(0 \leq i \leq L\)}{
        \(j_i \gets n \left( j/n \right)^{i/L}\)
        \tcc*[f]{note that \(j_0 = n\) and \(j_{L} = j\)}
    }
    initialize \(\widetilde C_{0}\) on \(C_{0} = G\) using cut sparsifier of \Cref{thm:fullydynamicsparsifier} \;
    \textsc{Compute-Hierarchy(\(L\))}
    \tcc*[f]{initialize \(H_L,\dots,H_1\)}
}

\Procedure{Update(\(U\))}{
    \tcc*[f]{\(u \in U\) is either edge insertion or deletion} \;
    pass \(U\) to data structure maintaining \(\widetilde C_{0}\) \;
    \ForEach{graph \(H_1 \in \mathcal H_1\)}{
        \(U(H_1) \gets\) changes in \(\widetilde C_{0}\)
        }
    
    \For{\(i = 1\) to \(L\)}{
        \ForEach{graph \(H_i \in \mathcal H_i\) with core \(C_i\)}{
            pass \(U(H_i)\) to data structure maintaining \(H_i\)\;
            \tcc*[f]{update \atmost{j_i}-tree \(H_{i} = C_{i} \cup F_{i}\) using \Cref{thm:2level}}
            
            pass changes in \(C_i\) to data structure maintaining \(\widetilde C_{i}\) \;
            \tcc*[f]{update underlying cut sparsifier \(\widetilde C_{i}\) using \Cref{thm:fullydynamicsparsifier}}
            
            \ForEach{graph \(H_{i+1} \in \mathcal H_{i+1}(\widetilde C_i)\)}{
                \(U(H_{i+1}) \gets\) changes in \(\widetilde C_{i}\)
                }
            }
        \If{there is \(H_i \in \mathcal H_i\) with core size \(|V(C_i)| > cj_i\)}{
            \tcc*[f]{\(c\) is a sufficiently large constant}

            \textsc{Compute-Hierarchy(\(i\))}\; 
            \Break
            }
        }
}

\Procedure(\tcc*[f]{compute \(\mathcal H_{i},\dots,\mathcal H_L\)}){Compute-Hierarchy(\(i\))}{
    \For(){\(k = i\) to \(L\)}{
        \(\mathcal H_k \gets \emptyset\) \;
        \ForEach{\(H_{k-1} \in \mathcal H_{k-1}\) with sparsified core \(\widetilde C_{k-1}\)}{
            compute \(\mathcal H_k(\widetilde C_{k-1})\) as \atmost{j_i}-trees of \(\widetilde C_{k-1}\) using \Cref{thm:2level} \; 
            add \(\mathcal H_k(\widetilde C_{k-1})\) to \(\mathcal H_k\) \;
        }
        \ForEach{\(H_{k} \in \mathcal H_{k}\) with core \(C_k\)}{
            (re-)initialize \(\widetilde C_{k}\) using cut sparsifier of \Cref{thm:fullydynamicsparsifier}\;
            \(U(H_k) \gets \emptyset\)
        }
    }
    
}
\caption{Dynamic-J-Tree(\(G, L, j\))}
\label{alg:j-tree-hierarchy}
\end{algorithm2e}

\subsubsection{Initialization} \label{subsubsec:intialization}
As mentioned in \Cref{def:hierarchy}, each level \(i\) of the data structure
  initializes the set \(\mathcal H_i\), where each graph \(H_i \in \mathcal H_i\) has a core \(C_i\) of size \atmost{j_i}.
The decreasing nature of the series \(j_0>\dots>j_L\)
  ensures that each level has cores smaller in size than those in the previous level.
We begin by explaining our choices for every \(j_i\).

For each level \(i\),
  we choose the integer \(j_i\) such that
  the cores in \(\mathcal H_i\) contain a fraction
  \(1/k = (j/n)^{1/L}\) of the vertices 
  from the cores in \(\mathcal H_{i-1}\).
i.e., starting with \(j_{0} = n\) vertices at level \(0\), the cores in 
  \(\mathcal H_1\) contains \atmost{n/k} vertices, the cores in
  \(\mathcal H_{2}\) contains \atmost{n/k^2} vertices,
  and so on,
  until the cores in \(\mathcal H_L\) contain \( n /k^L =  n (j/n)^{L/L} = j\) vertices.
Therefore, we set
\begin{equation} \label{eq:js}
j_i = n/k^i = n \left( j/n \right)^{i/L}
\end{equation}
for every \(0 \leq i \leq L\).

The initialization of the data structure is as follows.
It starts with \(\mathcal H_0 = \{ H_0 \}\), where \(H_0\) is the trivial \(n\)-tree of \(G\) with core \(C_{0} = G\) and the empty forest \(F_0 = (V, \emptyset, 0)\).
It then initializes the data structure of \Cref{thm:fullydynamicsparsifier} on \(C_{0}\)
  to maintain the sparsified graph \(\widetilde C_{0}\).

To initialize \(\mathcal H_1\),
  the data structure employs \Cref{thm:2level} on the sparsified core \(\widetilde C_0\) to maintain the set \(\mathcal H(H_{0})\) of \atmost{j_1}-trees on the vertex set of \(V(\widetilde C_0)\).
Since \(\mathcal H_0\) only consists of \(H_0\),
the \atmost{j_1}-trees of \(\mathcal H_1\) are defined as \(\mathcal H_1 = \mathcal H(H_{0}) \).

To obtain the graphs for the next level, the data structure repeats the procedure until it reaches level \(L\).
e.g., level \(2\) is initialized on the cores of the \atmost{j_1}-trees in \(\mathcal H_1\):
 each \(H_1 \in \mathcal H_1\) consists of a core \(C_{1}\),
  on which the data structure 
  initializes \Cref{thm:fullydynamicsparsifier} to maintain the sparsified core \(\widetilde C_1\).
It then initializes \Cref{thm:2level} on \(\widetilde C_1\) to 
  maintain the set \(\mathcal H(H_1)\) of \atmost{j_1}-trees on the vertex set of \(\widetilde C_1\).
The \atmost{j_2}-trees of \(\mathcal H_2\) is then defined as \(\mathcal H_2 = \bigcup_{H_1 \in \mathcal H_1} \mathcal H(H_1) \).


\Cref{fig:summary-initialization} provides a summary of the initialization for levels \(i\) and \(i+1\), where \(1 \leq i \leq L-1\).

\begin{figure}[th]
\begin{center}
\input{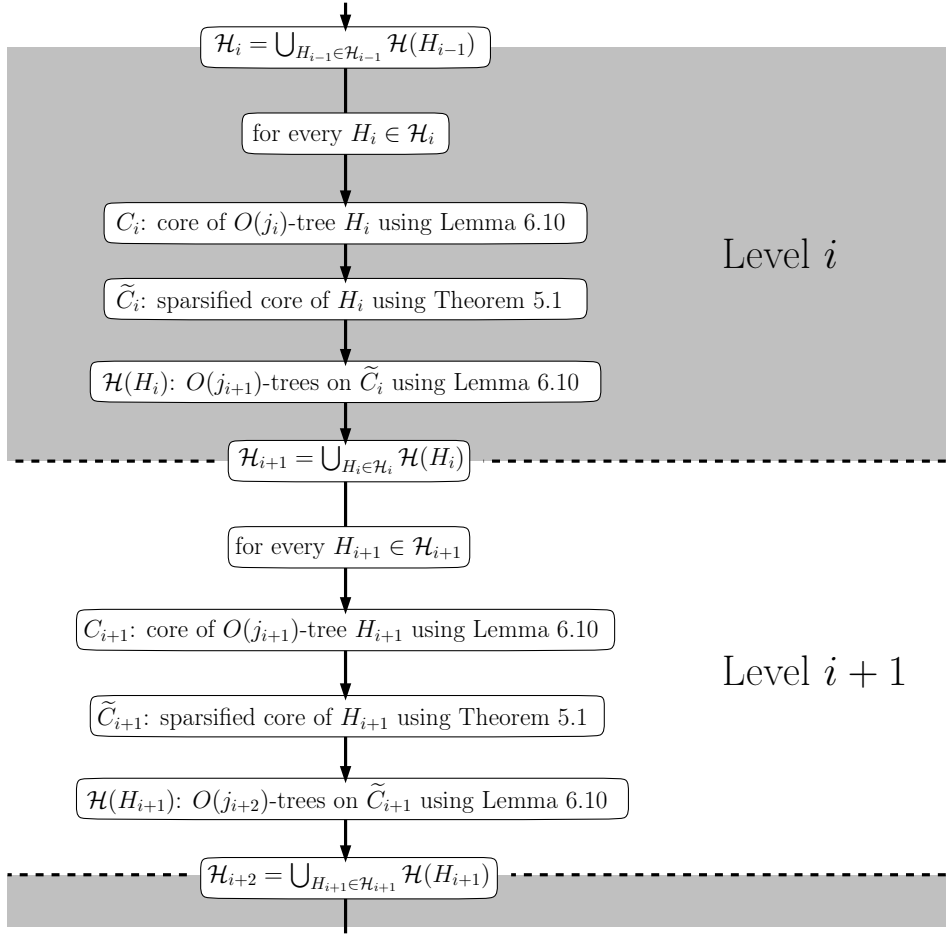}
\end{center}
\caption{Summary of initialization.}
\label{fig:summary-initialization}
\end{figure}

\subsubsection{Handling the Updates} \label{subsubsec:updates}
The data structure can handle a batch of updates \(U\) 
  where each update \(u \in U\) consists of the insertion or deletion
  of an edge into \(G\).
For simplicity, we assume that \(U\) is a singleton,
  as this can naturally be extended to the general case. 

To dynamically maintain the hierarchy \(\mathcal H_{0}, \dots, \mathcal H_L\), 
  we use a batching technique (explained below) to guarantee the size of each core in \(\mathcal H_i\) stays \atmost{j_i} at any time.
In a nutshell, the update procedure follows a top-down approach as follows. 
For any chain \(H_0,\dots,H_L\) in the hierarchy and every \(0 \leq i \leq L-1\),
  the data structure obtains a set \(U(H_{i})\), which it passes to \(H_{i}\). 
The set \(U(H_{i})\) consists of the updates that must to be passed to \(H_i\) as a result of the update \(U\) in \(G\).
While \(H_{i}\) is undergoing the updates in \(U(H_{i})\),
  the data structure obtains a set \(U(H_{i+1})\).
The set \(U(H_{i+1})\) is then passed to \(H_{i+1}\), and so on.

Using this terminology,
for any chain \(H_0,\dots,H_L\) in the hierarchy, the update starts from \(H_0\) (i.e., \(G\)) by setting \(U(H_0) = U\).
The update \(U(H_0)\) is passed to the data structure of \Cref{thm:fullydynamicsparsifier} maintaining the sparsified core \(\widetilde C_0\).
Since every graph \(H_1 \in \mathcal H_1\) is defined on \(\widetilde C_0\), the algorithm sets \(U(H_1)\) to be the changes in \(\widetilde C_0\) due to the update.
For every \(H_1 \in \mathcal H_1\), the data structure passes \(U(H_1)\) to the data structure  of \Cref{thm:2level} maintaining \(H_1\)\footnote{As the cores $\Tilde{C}_i$ grow over time, the data structure of \Cref{thm:2level} also needs to be able to handle isolated vertex insertions. However, we can straightforwardly handle them by including a set of $O(j_{i+1})$ dummy vertices upon initialization and re-naming them on demand.}.
This results in some changes in the core \(C_1\) of \(H_1\), 
which is passed to the data structure of \Cref{thm:fullydynamicsparsifier} maintaining the sparsified core \(\widetilde C_1\).
The changes in \(\widetilde C_1\) is then set as \(U(H_2)\) for every \(H_2 \in \mathcal H(H_1)\).
This procedure continues until we update \(H_L\), \textit{or} the size of a core at some level \(i\) has doubled since its initialization (i.e., for a fixed sufficiently large constant \(c\), once we 
have 
\(|V(C_i)| > c j_i \) 
after the update).
In the latter case, the data structure stops passing the updates to higher levels, and instead, re-initializes the sets \(\mathcal H_i, \dots, \mathcal H_L\) of the hierarchy from level \(i\) to \(L\).


\Cref{fig:summary-update} provides a summary of the update for levels \(i\) and \(i+1\), where \(1 \leq i \leq L-1\).

\begin{figure}[th]
\begin{center}
\input{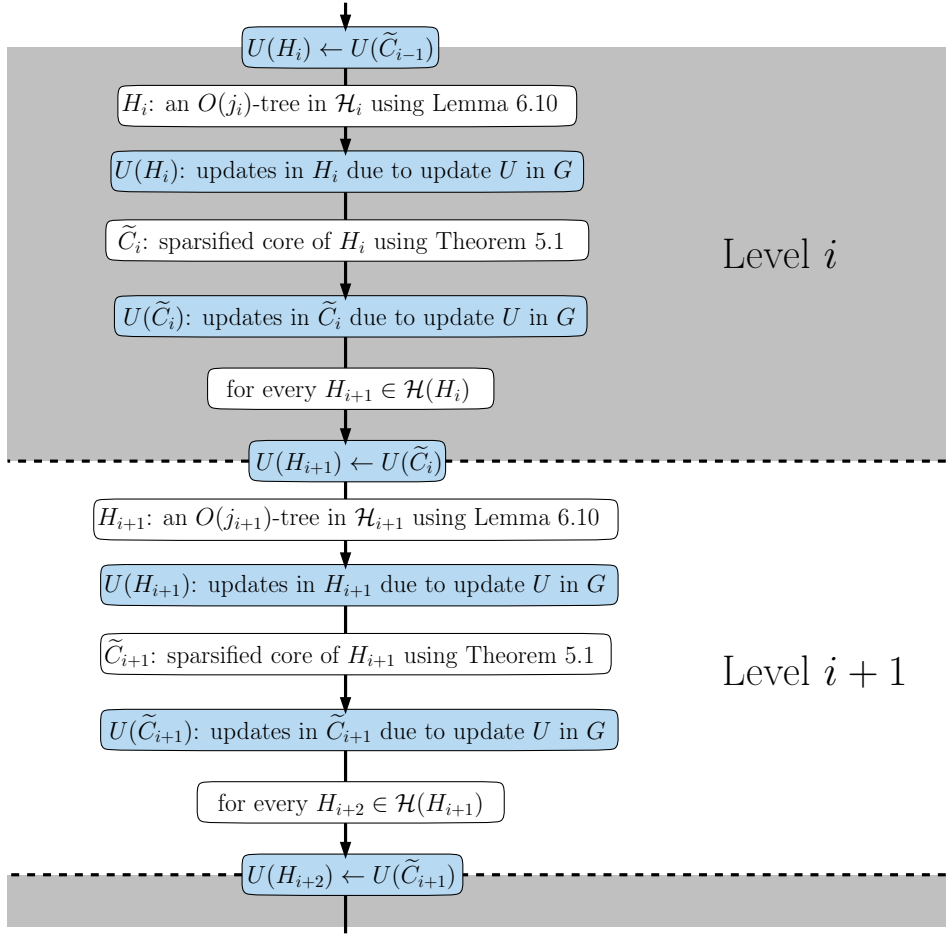}
\end{center}
\caption{Summary of update.}
\label{fig:summary-update}
\end{figure}

%

\subsection{Proof of \Cref{th:main}} \label{subsec:proof-main-thm}
In this section, we finalize the proof of \Cref{th:main}.
We begin by proving some auxiliary claims in \Cref{subsub:auxiliary}, which will be used in the update time analysis in \Cref{subsub:update-time} and in the proof of \Cref{th:main} in \Cref{subsub:proof}.

\subsubsection{Auxiliary Claims} \label{subsub:auxiliary}
Given the \((L,j)\)-hierarchy \(\mathcal H_0, \dots, \mathcal H_L\) from \Cref{alg:j-tree-hierarchy}, we first obtain an upper bound on the size of each \(\mathcal H_i\).
We then proceed to bound the total recourse in every \atmost{j_i}-tree \(H_i \in \mathcal H_i\), which will be used to obtain the total recourse of \(\mathcal H_i\).
Along the way, we will see how the assumption \(U = \poly{n}\) in \Cref{th:main} plays a role in the analysis.

\begin{claim} \label{clm:size-Hi}
For every \(0 \leq i \leq L\), \(|\mathcal H_i| = \log ^{\atmost{i}} n\). 
\end{claim}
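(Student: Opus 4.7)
The plan is to prove the bound by straightforward induction on $i$, leveraging the definition of the hierarchy in \Cref{def:hierarchy} together with the size guarantee of the single-level scheme in \Cref{thm:2level}.

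\textbf{Base case.} For $i = 0$, the definition of the $(L,j)$-hierarchy gives $\mathcal H_0 = \{G\}$, so $|\mathcal H_0| = 1 = \log^0 n$, which matches the claimed bound.

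\textbf{Inductive step.} Assume that $|\mathcal H_{i-1}| = \log^{O(i-1)} n$. By the construction in \Cref{alg:j-tree-hierarchy} (specifically, the \textsc{Compute-Hierarchy} procedure) and the decomposition $\mathcal H_i = \bigcup_{H_{i-1} \in \mathcal H_{i-1}} \mathcal H(H_{i-1})$, each set $\mathcal H(H_{i-1})$ is produced by invoking \Cref{thm:2level} on the sparsified core $\widetilde C_{i-1}$. By the guarantee of \Cref{thm:2level}, this yields $|\mathcal H(H_{i-1})| = O(\log n)$. Therefore
\[
|\mathcal H_i| \;\leq\; \sum_{H_{i-1} \in \mathcal H_{i-1}} |\mathcal H(H_{i-1})| \;\leq\; |\mathcal H_{i-1}| \cdot O(\log n) \;=\; \log^{O(i-1)} n \cdot O(\log n) \;=\; \log^{O(i)} n,
\]
which completes the induction.

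\textbf{Remarks on potential obstacles.} There is no substantive technical obstacle here: the bound is a purely structural consequence of composing the single-level scheme $L$ times. The only subtlety is to be careful that the hidden constant inside the $O(\cdot)$ in the exponent grows additively (not multiplicatively) across levels, which follows from the product-of-logarithms telescoping shown in the display above. This same multiplicative blow-up by $O(\log n)$ per level is precisely what later feeds into the $\log^{O(L)} n$ factors appearing in the update time and approximation quality of \Cref{th:main}, so the bound proved here is tight for the analysis.
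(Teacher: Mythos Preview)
Your proof is correct and essentially identical to the paper's own proof: both argue by induction on $i$, using $\mathcal H_0 = \{G\}$ for the base case and the bound $|\mathcal H(H_{i-1})| = O(\log n)$ from \Cref{thm:2level} for the inductive step.
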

\begin{proof}
We prove by induction on \(i\). 
\begin{itemize}
\item \underline{Base case:}
for $i=0$, this is trivial since $\cH_0 = \setof{G}$ by \Cref{def:hierarchy}.

\item \underline{Induction step:}
suppose that \(|\mathcal H_i| = \log ^{\atmost{i}} n\).
\Cref{thm:2level} guarantees that for every $H_i \in \cH_i$, we have that $|\cH(H_i)| = O(\log n)$. Hence, we have \[|\cH_{i+1}| \leq O(\log n) |\cH_i| = \log ^{\atmost{i+1}} n.\]
\end{itemize}
\end{proof}

\begin{claim} \label{clm:recourse-chain}
Let \(H_0,\dots,H_L\) be an arbitrary chain in the hierarchy and
let \rec be an upper bound on the amortized recourse of the sparsified core \(\widetilde C_i\) after a single update in \(H_i\).
Then, following \(l = \Omega(j_1)\) total updates in \(G\), if \(H_i\) has not been re-initialized by \Cref{alg:j-tree-hierarchy},  
the update in \(G\) results in \atmost{l \rec ^{i+1}} total recourse in \(\widetilde C_i\). 

\end{claim}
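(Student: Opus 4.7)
The plan is to prove the claim by induction on the level index $i$, following the update-propagation structure of \Cref{alg:j-tree-hierarchy} (the arrows going downward in \Cref{fig:summary-update}). The key observation is that for every chain $H_0, H_1, \dots, H_L$, updates travel strictly top-down: an update in $G$ becomes a sequence of updates to $\widetilde C_0$, which the algorithm feeds to $H_1$, which in turn causes changes to $C_1$ and thus to $\widetilde C_1$, and so on. So bounding the recourse at $\widetilde C_i$ reduces to bounding the number of updates the data structure maintaining $H_i$ receives, and then applying the definition of $\rec$ one more time.

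For the base case $i = 0$, the graph $H_0 = G$ and $\widetilde C_0$ is maintained by the cut-sparsifier of \Cref{thm:fullydynamicsparsifier} directly on $G$. By the hypothesis on $\rec$ (applied at level $0$), a single update in $H_0$ amortizes to at most $\rec$ recourse in $\widetilde C_0$. Over $l$ updates the total recourse in $\widetilde C_0$ is at most $\atmost{l \rec} = \atmost{l \rec^{0+1}}$, which matches the claim. For the induction step, assume that $l$ updates in $G$ cause at most $\atmost{l \rec^{i}}$ total recourse in $\widetilde C_{i-1}$. Since every $H_i \in \mathcal H_i$ is built on the vertex set of $\widetilde C_{i-1}$ (by \Cref{def:hierarchy}) and \Cref{alg:j-tree-hierarchy} sets $U(H_i)$ equal to the batch of changes in $\widetilde C_{i-1}$, the data structure maintaining $H_i$ receives at most $\atmost{l \rec^{i}}$ updates. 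As $H_i$ has not been re-initialized by assumption, its $\rec$-amortization guarantee still holds, so the induced recourse in $\widetilde C_i$ is at most $\atmost{l \rec^{i}} \cdot \rec = \atmost{l \rec^{i+1}}$, completing the induction.

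The main obstacle is justifying the amortized guarantee at each level: the bound $\rec$ comes from \Cref{thm:sparsifiedcore}, which is an amortized bound over a whole epoch of $O(j_i)$ updates to $H_i$. The induction is only valid if (i) the number of updates $H_i$ receives during the epoch is enough for the amortization to be meaningful and (ii) this number never exceeds what the epoch is allowed to handle. The assumption $l = \Omega(j_1)$ takes care of (i) by ensuring that any lower-order additive terms in the amortization are absorbed into $\atmost{l \rec^{i+1}}$, while the ``$H_i$ has not been re-initialized'' hypothesis takes care of (ii), since the rebuild trigger of \Cref{alg:j-tree-hierarchy} (core size exceeding $c j_i$) is exactly the condition that keeps $H_i$ within the valid regime of \Cref{thm:jtree} and \Cref{thm:sparsifiedcore}. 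With these two observations in place, the chained multiplication $\rec \cdot \rec \cdots \rec$ over the $i+1$ levels is exactly what gives the factor $\rec^{i+1}$, and no further calculation is required.
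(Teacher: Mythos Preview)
Your proposal is correct and follows essentially the same approach as the paper: an induction on the level index $i$, using that updates to $H_i$ are exactly the recourse of $\widetilde C_{i-1}$ and then multiplying by one more factor of $\rec$. The paper additionally remarks up front that if $H_i$ has not been re-initialized then neither have $H_0,\dots,H_{i-1}$ (since rebuilds cascade downward), which is what justifies applying the induction hypothesis at level $i-1$; you use this implicitly, and your third paragraph on the role of $l=\Omega(j_1)$ and the rebuild trigger actually goes a bit beyond what the paper spells out.
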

\begin{proof}
Recall that, after an update in \(G\),
  the data structure passes the updates from one level to the next,
  until the size of a core \(C_k\) of an \atmost{j_k}-tree \(H_k \in \mathcal H_k\) exceeds the value \(c j_k\), where \(c\) is a constant specified by the algorithm before the procedure begins.
Therefore, if \(H_i \in \mathcal H_i\) is not re-initialized after the update, then neither are \(H_0, \dots, H_{i-1}\).

We prove by induction on \(i\).
\begin{itemize}
\item \underline{Base case:}
since \(\mathcal H_0 = \{ G \}\) by \Cref{def:hierarchy}, \(l\) updates in \(C_0 = G\) results in \(l\) total recourse in \(C_0\), and so by \Cref{thm:fullydynamicsparsifier}, \atmost{l \rec} is serves as an upper bound on the total recourse in \(\widetilde C_0\). 

\item \underline{Induction step:}
suppose that the total recourse in \(\widetilde C_{i}\) is \atmost{l \rec^{i+1}}.
Since \(H_{i+1}\) is defined on the vertex set of \(\widetilde C_{i}\), the total number of updates passed to \(H_{i+1}\) cannot exceed the total recourse in \(\widetilde C_{i}\). 
Thus, \(\atmost{l \rec^{i} \rec} = \atmost{l \rec^{i+1}}\) serves as an upper bound on the total recourse in \(\widetilde C_{i+1}\).

\end{itemize}
\end{proof}

\begin{claim} \label{clm:recourse-hierarchy}
Given an \(n\)-vertex graph \(G=(V,E,u)\) with an upper bound \(U = \poly{n}\) on its capacity ratio. 
Then, following \(l = \Omega(j_1)\) total updates in \(G\), if \(\mathcal H_i\) has not been re-initialized by \Cref{alg:j-tree-hierarchy},  
the update in \(G\) results in \(l \log ^{\atmost{i}} n\) total recourse in the sparsified cores of \(\mathcal H_i\).
\end{claim}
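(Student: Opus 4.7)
The plan is to combine the per-chain recourse bound from \Cref{clm:recourse-chain} with the size bound from \Cref{clm:size-Hi}, using the assumption $U = \poly{n}$ to control $\gamma_{\text{rec}}$.

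First, I would pin down the value of $\rec$ entering \Cref{clm:recourse-chain}. By \Cref{thm:sparsifiedcore} (or directly by \Cref{thm:fullydynamicsparsifier}), a single update passed to the data structure maintaining a sparsified core $\widetilde C_i$ triggers only $\rec = \poly{\log n}\cdot \log U$ amortized recourse in $\widetilde C_i$. Since $U = \poly{n}$ by assumption, $\log U = O(\log n)$, and hence $\rec = \poly{\log n}$.

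Next, I would fix any chain $H_0,\dots,H_i$ in the hierarchy ending at some $H_i \in \mathcal H_i$ that has not been re-initialized. Because re-initialization of $H_i$ would force re-initialization of the entire sub-hierarchy rooted at $H_i$'s ancestors, none of $H_0,\dots,H_{i-1}$ have been re-initialized either, so the hypothesis of \Cref{clm:recourse-chain} is satisfied along this chain. Applying that claim, the total recourse accumulated in $\widetilde C_i$ after $l$ updates to $G$ is at most
\[
O\!\left(l \cdot \rec^{\,i+1}\right) = l \cdot \log^{O(i+1)} n = l \cdot \log^{O(i)} n.
\]

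Finally, I would sum over all graphs in $\mathcal H_i$. By \Cref{clm:size-Hi}, $|\mathcal H_i| = \log^{O(i)} n$, and every $H_i \in \mathcal H_i$ sits at the end of a unique chain, so applying the per-chain bound to each gives a total sparsified-core recourse in $\mathcal H_i$ of at most
\[
|\mathcal H_i|\cdot l \cdot \log^{O(i)} n \;=\; \log^{O(i)} n \cdot l \cdot \log^{O(i)} n \;=\; l\cdot \log^{O(i)} n,
\]
as claimed. The only subtle point here, and essentially the one place the bookkeeping has to be done carefully, is ensuring the two $O(i)$ exponents genuinely compose into another $O(i)$ exponent rather than blowing up to something like $i^2$; this is fine because both bounds hide absolute constants, so their sum is still a constant times $i$. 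I expect no further obstacles, as all the heavy lifting has been done in \Cref{thm:fullydynamicsparsifier}, \Cref{clm:size-Hi}, and \Cref{clm:recourse-chain}.
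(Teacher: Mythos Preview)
Your proposal is correct and follows essentially the same approach as the paper: bound the per-chain recourse via \Cref{clm:recourse-chain}, use $U=\poly{n}$ and \Cref{thm:sparsifiedcore} to get $\rec=\log^{O(1)}n$, then multiply by $|\mathcal H_i|=\log^{O(i)}n$ from \Cref{clm:size-Hi}. One minor wording issue: the re-initialization logic goes the other way (re-initializing level $k$ forces re-initializing levels $k+1,\dots,L$, not the ancestors), but your conclusion that none of $H_0,\dots,H_{i-1}$ have been re-initialized is still correct and in any case already handled inside \Cref{clm:recourse-chain}.
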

\begin{proof}
From \Cref{clm:recourse-chain}, each sparsified core \(\widetilde  C_i\) of \(H_i \in \mathcal H_i\) undergoes \atmost{l \rec ^{i+1}} recourse.
By \Cref{clm:size-Hi}, \(|\mathcal H_i| = \log ^{\atmost{i}}n\).
Thus, the total recourse of the sparsified cores in \(\mathcal H_i\) is
\[
\atmost{ \sum_{H_i \in \mathcal H_i} l \rec ^{i+1}} = l \rec ^{i+1} \log ^{\atmost{i}}n = l \log ^{\atmost{i}} n,
\]
where the last equality follows from the fact that \(\rec = \poly{\log n} \log U = \log ^{\atmost{1}} n\) by \Cref{thm:sparsifiedcore} and the assumption \(U = \poly{n}\).
\end{proof}

\begin{claim} \label{clm:recomputation-number}
Given an \(n\)-vertex graph \(G=(V,E,u)\) with an upper bound \(U = \poly{n}\) on its capacity ratio. 
Then, following \(l = \Omega(j_1)\) total updates in \(G\), the number of times the set \(\mathcal H_i\) has been re-initialized by \Cref{alg:j-tree-hierarchy} is \atmost{l \log^{\atmost{i}} n/j_i}. 
\end{claim}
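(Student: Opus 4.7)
The plan is to separate re-initializations of $\mathcal{H}_i$ according to which level $k \leq i$ actually triggered them. Let $R_k$ denote the number of triggers at level $k$, i.e., events where some core $C_k$ of an $O(j_k)$-tree $H_k \in \mathcal{H}_k$ first exceeds the threshold $c j_k$, causing $\textsc{Compute-Hierarchy}(k)$ to be invoked. Since $\textsc{Compute-Hierarchy}(k)$ re-initializes $\mathcal{H}_k,\dots,\mathcal{H}_L$, the total number of re-initializations of $\mathcal{H}_i$ is exactly $\sum_{k=1}^{i} R_k$. So it suffices to upper-bound each $R_k$ separately and then sum.

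To bound $R_k$, I would use a charging argument. After each re-initialization of $\mathcal{H}_k$, every core $C_k$ has size at most $c_0 j_k$ for some constant $c_0 < c$ (by the initialization guarantee of \Cref{thm:2level} / \Cref{thm:jtree}). Between two consecutive re-initializations of $\mathcal{H}_k$, the unique $H_k$ responsible for the next trigger must have grown its core by at least $(c-c_0)j_k = \Omega(j_k)$, which by \Cref{thm:jtree} requires $\Omega(j_k)$ updates to $H_k$. Hence if $T_k$ denotes the total number of updates propagated to \emph{any} (ephemeral) graph in $\mathcal{H}_k$ over the whole update sequence, then $R_k \leq O(T_k / j_k)$.

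Next, I need to bound $T_k$. Since every $H_k \in \mathcal{H}_k$ is defined on the vertex set of some sparsified core $\widetilde C_{k-1}$ of its parent $H_{k-1} \in \mathcal{H}_{k-1}$, and every $\widetilde C_{k-1}$ is the common input of $|\mathcal{H}(H_{k-1})| = O(\log n)$ graphs, we have
\[
T_k \;\leq\; O(\log n) \cdot \bigl(\text{total recourse across all sparsified cores of }\mathcal{H}_{k-1}\bigr).
\]
Applying \Cref{clm:recourse-hierarchy} to level $k{-}1$ on each maximal time interval during which $\mathcal{H}_{k-1}$ is not re-initialized, and summing over these intervals (whose lengths add up to $l$), bounds the total recourse in the sparsified cores of $\mathcal{H}_{k-1}$ by $O(l \log^{O(k-1)} n)$. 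Therefore $T_k = O(l \log^{O(k)} n)$ and consequently $R_k = O\!\bigl(l \log^{O(k)} n / j_k\bigr)$.

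Finally, I would sum over $k=1,\dots,i$. Since $j_k = n(j/n)^{k/L}$ is strictly decreasing in $k$ while $\log^{O(k)} n$ is increasing, each term in the sum is dominated by the $k=i$ term: $l \log^{O(k)} n / j_k \leq l \log^{O(i)} n / j_i$ for $k \leq i$. Summing gives $\sum_{k=1}^{i} R_k \leq i \cdot O(l \log^{O(i)} n / j_i)$, and since $i \leq L \leq o(\sqrt{\log n / \log\log n})$ (see \Cref{rmk:L}), the factor of $i$ is absorbed into the $\log^{O(i)} n$ term. This yields $\sum_{k=1}^{i} R_k = O(l \log^{O(i)} n / j_i)$, as desired. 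The main subtlety is the bookkeeping in step two: the graphs in $\mathcal{H}_k$ change identity at each re-initialization, so the charging must be carried out against ``ephemeral'' instances of $H_k$, each of which can be responsible for at most one trigger and in that case must have absorbed $\Omega(j_k)$ of the $T_k$ total updates.
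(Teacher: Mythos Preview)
Your proof is correct and takes a genuinely different route from the paper. The paper proceeds by induction on $i$: it writes $N_{i+1} = O(N_i + k N_i)$, where $k$ is the number of level-$(i{+}1)$ triggers that fit between two consecutive re-initializations of $\mathcal H_i$, and computes $k$ via the ratio $l_i/l_{i+1}$ of ``time-to-first-trigger'' at levels $i$ and $i{+}1$ using \Cref{clm:recourse-hierarchy}. You instead decompose the count directly as $\sum_{k\le i} R_k$ by trigger level, charge every level-$k$ trigger to $\Omega(j_k)$ updates absorbed by the responsible ephemeral $H_k$, bound the aggregate $T_k$ via the recourse at level $k{-}1$, and then observe that the $k=i$ term dominates the sum. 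Your decomposition is arguably more transparent --- it makes the charging to core growth explicit and avoids the somewhat opaque $l_i/l_{i+1}$ computation --- while the paper's induction makes the multiplicative propagation of the bound across levels more visible. One small caveat: when you invoke \Cref{clm:recourse-hierarchy} on each maximal interval where $\mathcal H_{k-1}$ is fixed and sum, the individual intervals need not satisfy the stated hypothesis $l=\Omega(j_1)$; this is harmless here (the amortized recourse of the sparsified cores still sums linearly over a collection of full lifetimes), but you should say a word about why the summing-over-intervals step is legitimate. The paper's proof handles this point at the same level of informality.
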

\begin{proof}
Recall from \Cref{sec:madry} that each update in a \(j\)-tree could result in the moving of at most $O(1)$ vertices into its core. 
Thus, an \atmost{j_i}-tree in \Cref{alg:j-tree-hierarchy} is re-initialized if either (1) the size of its core exceeds \(cj_i\) or (2) there exists an \atmost{j_k}-tree in \(\mathcal H_k\), \(k>i\),  whose core size exceeds \(cj_k\).

We prove by induction on \(i\).
Let \(N_i\) denote the number of times \(\mathcal H_i\) has been re-initialized.

\begin{itemize}
\item \underline{Base case:}
since \(\mathcal H_0 = \{ G \}\) by \Cref{def:hierarchy}, the set \(\mathcal H_0\) is re-initialized after \atmost{j_0} updates in \(G\), which means that \(N_0 = \atmost{l/j_0}\). 

\item \underline{Induction step:}
suppose that \(\mathcal H_i\) has been re-initialized  \atmost{l \log^{\atmost{i}} n/j_i} times.
By item (2) above, each such re-initialization also triggers re-initialization of \(\mathcal H_{i+1}\).
Therefore, we only need to bound the number of re-initializations due to item (1). i.e., we find an upper bound \(k\) on the number of times \(\mathcal H_{i+1}\) has been re-initialized while \(\mathcal H_{i}\) has not.

To this end, let \(l_i\) be the number of updates in \(G\) that result in the \textit{first} re-initialization of \(\mathcal H_{i}\).
By \Cref{clm:recourse-hierarchy}, this happens when \(l_i \log ^{\atmost{i}}n = cj_{i}\) as the number of recourses in \(\widetilde C_i\) upper bounds the number of vertices moved to \(C_i\) due to the updates.
Using a straightforward computation, we have \(k = \atmost{l_i/l_{i+1}} = j_i\log ^{\atmost{1}}n / j_{i+1}\).
Therefore, we have
\[
N_{i+1} = \atmost{N_i + kN_i} = \frac{j_i}{j_{i+1}}\log ^{\atmost{1}}n  \left( \frac{l}{j_i} \log^{\atmost{i}} n \right) = \frac{l}{j_{i+1}} \log^{\atmost{i+1}} n. 
\]
\end{itemize}
\end{proof}

\subsubsection{Total Update Time} \label{subsub:update-time}
Given an \((L,j)\)-hierarchy \(\mathcal H_0, \dots, \mathcal H_L\) in \Cref{alg:j-tree-hierarchy}, we first analyze the total update time of an \atmost{j_i}-tree \(H_i \in \mathcal H_i\), and then use this to obtain the total update time of \(\mathcal H_i\).
Along the way, we will see how our choice of the values for \(j_i\)'s results in the desired update times.
Specifically, we will use the crucial fact that  \(j_{i-1}/j_i = (n/j)^{1/L} \), as given by \Cref{eq:js}.

\begin{lemma} \label{lem:total-time-single}
Given an \(n\)-vertex graph \(G=(V,E,u)\) with an upper bound \(U = \poly{n}\) on its capacity ratio.
Then, following \(l = \Omega(j_1)\) total updates in \(G\), the total update time for maintaining an \atmost{j_i}-tree \(H_i \in \mathcal H_i\) and its sparsified core \(\widetilde C_i\) is
\[
l \left( \frac{j_{i-1}^2}{j_i^2} + \frac{j_{i-1}}{j_i} \right) \log^{\atmost{i}}n
\]
for every \(1 \leq i \leq L\).
\end{lemma}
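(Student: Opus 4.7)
The plan is to decompose the total time for maintaining $H_i$ and $\widetilde{C}_i$ across re-initialization epochs of $H_i$ and apply the per-epoch cost estimates provided by \Cref{thm:2level} and \Cref{thm:sparsifiedcore}. An ``epoch'' is a maximal time window during which $H_i$ is not rebuilt; by construction of \Cref{alg:j-tree-hierarchy}, this happens precisely when the set $\mathcal H_i$ is re-initialized. By \Cref{clm:recomputation-number}, following $l$ updates to $G$, the number of epochs of $H_i$ is bounded by $\widetilde{O}\big(l \log^{O(i)} n / j_i\big)$.

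For the per-epoch cost of $H_i$, recall that $H_i$ is built on the sparsified core $\widetilde C_{i-1}$ of its parent, which (using $U=\poly{n}$ and \Cref{def:sparsified-core}) has $O(j_{i-1})$ vertices and $\widetilde{O}(j_{i-1})$ edges at all times. The data structure of \Cref{thm:2level}, instantiated with parameter $j_i$ on this graph, has total running time $\widetilde{O}(j_{i-1}^2 / j_i)$ over the $O(j_i)$ updates received within the epoch (this total subsumes both the initial $j$-tree construction from \Cref{thm:jtree} and the subsequent update handling). Multiplying by the epoch count gives a contribution of
\[
\widetilde{O}\!\left( \frac{l \log^{O(i)} n}{j_i} \right) \cdot \widetilde{O}\!\left( \frac{j_{i-1}^2}{j_i} \right) \;=\; l \cdot \frac{j_{i-1}^2}{j_i^2} \cdot \log^{O(i)} n.
\]
For $\widetilde C_i$, I would invoke \Cref{thm:sparsifiedcore}: during one epoch, the core $C_i$ starts empty and undergoes at most $\widetilde{O}(j_{i-1})$ total recourse (edge insertions into the core are bounded by the edge count $\widetilde{O}(j_{i-1})$ of $\widetilde C_{i-1}$, while deletions and vertex splits are absorbed by the $O(j_i)$ updates to $H_i$). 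The theorem then guarantees a total maintenance time of $\widetilde{O}(j_{i-1})$ per epoch, yielding a contribution of $l \cdot (j_{i-1}/j_i) \cdot \log^{O(i)} n$ over all epochs. Adding the two contributions gives the claimed bound.

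The main obstacle is getting the epoch bookkeeping right: it must be verified that the $O(j_i)$-update budget assumed by \Cref{thm:2level} per invocation indeed covers all updates $H_i$ sees in one epoch (which is forced by the rebuilding rule $|V(C_i)| > c\, j_i$, since each update grows the core by $O(1)$), and that $\widetilde C_{i-1}$ remains of size $\widetilde{O}(j_{i-1})$ throughout the epoch so that the $\widetilde{O}(j_{i-1})$-edge assumption used to apply \Cref{thm:2level} and \Cref{thm:sparsifiedcore} is valid at every point. The former is immediate from the re-initialization trigger, and the latter follows from the edge-count guarantee of \Cref{thm:fullydynamicsparsifier} combined with $U=\poly{n}$, at which point the chain of black-box applications closes.
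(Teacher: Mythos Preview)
Your proposal is correct and follows essentially the same approach as the paper: bound the per-epoch cost of maintaining $H_i$ via \Cref{thm:2level} applied to the $\widetilde{O}(j_{i-1})$-edge graph $\widetilde C_{i-1}$, bound the per-epoch cost of $\widetilde C_i$ via \Cref{thm:sparsifiedcore}, and multiply by the number of epochs from \Cref{clm:recomputation-number}. The paper's only cosmetic difference is that it carries the $\log U$ factors explicitly before absorbing them into $\log^{O(i)} n$ using $U=\poly{n}$, whereas you absorb them from the start.
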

\begin{proof}

We discuss the total time for the \atmost{j_i}-tree \(H_i \in \mathcal H_i\) and its sparsified core \(\widetilde C_i\) separately. 

\begin{itemize}
\item \underline{For \(H_i\):}
by \Cref{th:single-level}, the total time spent to initialize and maintain \(H_i\) before its re-initialization is \atmosttilde{l_i^2/j_i}, where $l_i$ is the initial number of edges of $\Tilde{C}_{i-1}(H_{i-1})$, the sparsified core of its parent $H_{i-1} \in \cH_{i-1}$. As \(\widetilde C_{i-1}\) contains only $\atmosttilde{j_{i-1} \log U}$ many edges, it follows that $l_i = \atmosttilde{j_{i-1} \log U}$, and so the total time is $\atmosttilde{ j_{i-1}^2/j_i \log^2 U }$. 

\item \underline{For \(\widetilde C_i(H_i)\):}
by \Cref{thm:sparsifiedcore}, the total time spent to initialize and maintain a sparsified core \(\widetilde C_i(H_i)\) over $O(j_i)$ updates is $\Otil(l_i \log U)$, where $l_i$ is the initial number of edges in $H_i$. 
Since \(H_i\) is built on \(\widetilde C_{i-1}\), \(l_i \leq |E(\widetilde C_{i-1})| = \atmosttilde{j_{i-1} \log U}\).
Thus, the total time spent to maintain \(\widetilde C_i\) before its re-initialization is $\atmosttilde{j_{i-1} \log^2 U}$. 
\end{itemize}
Therefore, the total time 
for maintaining \(\widetilde C_i\) and \(H_i\) before re-initialization is 
\[
\atmosttilde{
\frac{j_{i-1}^2}{j_i} \log^2 U + j_{i-1} \log^2 U}.
\]

By \Cref{clm:recomputation-number}, \(H_i\) is re-initialized  \atmost{l \log ^{\atmost{i}} n /j_i} times.
Hence, the total time for maintaining \(H_i\) is
\begin{align*}
\frac{l \log ^{\atmost{i}} n}{j_i} \left( \frac{j_{i-1}^2}{j_i} + j_{i-1} \right)\log^2 U
=& l \left( \frac{j_{i-1}^2}{j_i^2} + \frac{j_{i-1}}{j_i} \right) \log^{\atmost{i}}n. 
\end{align*}
\end{proof}

\begin{lemma} \label{lem:total-time_Hi}
Given an \(n\)-vertex graph \(G=(V,E,u)\) with an upper bound \(U = \poly{n}\) on its capacity ratio.
Then, following \(l = \Omega(j_1)\) total updates in \(G\), the total update time for maintaining \(\mathcal H_i\) is
\[l \left( \frac{n}{j} \right)^{2/L}\log^{\atmost{i}} n\]
for every \(1 \leq i \leq L\).
\end{lemma}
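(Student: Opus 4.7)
The plan is to derive the total update time for $\mathcal{H}_i$ by summing the per-tree bound from \Cref{lem:total-time-single} across all trees in $\mathcal{H}_i$, invoking the size bound from \Cref{clm:size-Hi}, and then simplifying using the ratio $j_{i-1}/j_i$ implied by \Cref{eq:js}.

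First, I would apply \Cref{clm:size-Hi}, which yields $|\mathcal{H}_i| = \log^{O(i)} n$. By \Cref{lem:total-time-single}, the total update time for maintaining a single $O(j_i)$-tree together with its sparsified core over the $l$ updates to $G$ (and across all of its re-initializations) is $l \bigl( \tfrac{j_{i-1}^2}{j_i^2} + \tfrac{j_{i-1}}{j_i} \bigr) \log^{O(i)} n$. Hence, summing this bound over every slot in $\mathcal{H}_i$ and absorbing the resulting $|\mathcal{H}_i| = \log^{O(i)} n$ factor into the existing polylog, the total update time for $\mathcal{H}_i$ is bounded by
\[
|\mathcal{H}_i| \cdot l \left( \frac{j_{i-1}^2}{j_i^2} + \frac{j_{i-1}}{j_i} \right) \log^{O(i)} n = l \left( \frac{j_{i-1}^2}{j_i^2} + \frac{j_{i-1}}{j_i} \right) \log^{O(i)} n .
\]

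Next, I would use that by \Cref{eq:js}, $j_{i-1}/j_i = (n/j)^{1/L}$, so that $\tfrac{j_{i-1}^2}{j_i^2} = (n/j)^{2/L}$ and $\tfrac{j_{i-1}}{j_i} = (n/j)^{1/L}$. Since $n \geq j$, the quadratic term dominates the linear term, giving the claimed bound of $l \cdot (n/j)^{2/L} \log^{O(i)} n$.

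I do not expect any significant obstacle, as this is essentially a summation across the trees in $\mathcal{H}_i$ followed by an algebraic simplification. The only mild subtlety, already taken care of by \Cref{lem:total-time-single}, is that the per-slot bound already incorporates all re-initializations of a single $H_i$ over the $l$ updates to $G$ (via \Cref{clm:recomputation-number}), so that multiplying by $|\mathcal{H}_i|$ does not undercount the work.
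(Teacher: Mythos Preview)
Your proposal is correct and follows essentially the same approach as the paper: multiply the per-tree bound from \Cref{lem:total-time-single} by $|\mathcal H_i|$ from \Cref{clm:size-Hi}, absorb the extra $\log^{O(i)} n$ factor, and simplify via $j_{i-1}/j_i = (n/j)^{1/L}$ from \Cref{eq:js}.
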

\begin{proof}
Combining the total update time for each \(H_i \in \mathcal H_i\) from \Cref{lem:total-time-single} with the size of \(\mathcal H_i\) from \Cref{clm:size-Hi}, 
we conclude the total time for \(\mathcal H_i\) is
\begin{align*}
\left( l \left( \frac{j_{i-1}^2}{j_i^2} + \frac{j_{i-1}}{j_i} \right) \log^{\atmost{i}}n \right) \log ^{\atmost{i}} n
&= 
l \left( \frac{j_{i-1}^2}{j_i^2} \right) \log^{\atmost{i}}n
\\
&= l \left( \frac{n}{j} \right)^{2/L} \log ^{\atmost{i}} n,
\end{align*}
since \(j_{i-1}/j_i = (n/j)^{1/L} \) by \Cref{eq:js}.
\end{proof}


In the remainder of this section,
we prove \Cref{th:main}.


\subsubsection{Proof of \Cref{th:main}.} \label{subsub:proof}
We conclude the section by proving \Cref{th:main}.

The fact that the algorithm correctly maintains an $(L, j)$-hierarchy with high probability follows from the correctness of \Cref{th:single-level}, by (1) union bounding the failure probabilities of all data structures of \Cref{thm:fullydynamicsparsifier}, and (2) noting that the graphs in $\cH_L$ at all times constitute $O(j_i)$-trees by the fact that \Cref{alg:j-tree-hierarchy} re-initializes \(\mathcal H_i, \dots, \mathcal H_L\) once there is an \atmost{j_i}-tree \(H_i \in \mathcal H_i\) with core size \(|V(C_i)|>cj_i\), where \(c\) is a constant specified by the algorithm before the procedure begins.
The correctness of \(\mathcal H\) then follows from \Cref{lma:chaintree}.

It remains to prove (1) \(\mathcal H\) is set that $\Otil(\log)^L$-preserves the cuts in \(G\) with high probability, (2) the bound on \(|\mathcal H|\),
(3) \(\mathcal H\) is explicitly maintained,
(4) the capacity ratio,
and (5) the update time.

\underline{$\Otil(\log)^L$-preserving the cuts in \(G\) with high probability:}
by definition, we  need to show that at any time and for any cut $(S, V \setminus S)$ in $G$, with high probability, there exists a graph \(H \in \mathcal H\) such that 
$U_{H}(S) \leq \Otil(\log n)^L U_G(S)$. 
 As every $H \in \cH$ corresponds to a unique chain \(H_0, \dots, H_L\) (i.e., \(H = H^{H_0, \dots, H_L}\)), it suffices to show by induction on $i$ that, with high probability, there exists a chain such that $U_{H^{H_0, \dots, H_i}}(S) \leq \Otil(\log n)^i U_G(S)$. 

\begin{itemize}
\item \underline{Base case:}
by \Cref{th:single-level}, with high probability, there exists a graph $H_1 \in \cH_1$ such that $U_G(S) \leq U_{H_1}(S) \leq \Otil(\log n) \cdot U_G(S)$.
The base step follows as \(H^{H_0,H_1} = \Tilde{C}_1 \cup F_1\) by \Cref{eq:chain-graph}, and that \(\Tilde{C}_1 \cup F_1\) is an \atmost{j_1}-tree of \(H_1\) using the proof of \Cref{lma:chaintree}.

\item \underline{Induction step:}
Suppose that with high probability there exists $H_0, H_1, \dots, H_i$ such that $U_{H^{H_0, \dots, H_i}}(S) \leq \Otil(\log n)^i U_G(S)$. 
By \Cref{th:single-level}, with high probability, there exists a graph $H_{i+1} \in \cH(H_i)$, with core \(C_{i+1}\) and forest \(F_{i+1}\), that $\Otil(\log n)$-preserves the cut $S$ in \(V(\widetilde C_i)\). 
Since $\Tilde{C}_{i+1}$ is a $2$-approximate cut-sparsifier of $C_{i+1}$, it follows that, with high probability, the graph $\Tilde{C}_{i+1} \cup F_{i+1} \cup \dots \cup F_1$ $\Otil(\log n)^{i+1}$-preserves the cut $S$ in \(G\). 
\end{itemize}

\underline{Size of \(\mathcal H\):} 
since there is a bijection between graphs in \(\mathcal H_L\) and chains in the hierarchy, $|\cH| = |\cH_L|$ as discussed in \Cref{clm:size-Hi}.
The claim then follows immediately from \Cref{clm:size-Hi}.

\underline{Explicit maintenance of \(\mathcal H\):}
for every any chain \(H_0,\dots,H_L\) in the hierarchy, the forest \(F_i\) and the sparsified core \(\widetilde C_i\) is explicitly maintained by the data structure.
Thus, we can explicitly maintain \(H^{H_0, \dots, H_L} := \Tilde{C}_L \cup F_L \cup \dots \cup F_1\)
by updating the sets during the maintenance of \Cref{th:main}.
If some sets are re-initialized by the data structure, the set \(\mathcal H\) can be updated in \(L \log ^{\atmost{L}} n\) worst-case update time,  which does not add any overhead in the update time of the data structure.
This is due to the fact that \(|H| = \log ^{\atmost{L}} n\), and each chain is a union of \(L\) sets.

\underline{Capacity ratio:}
since \(\widetilde C_0\) is obtained from \(C_0 = G\), the capacity ratio of \(\widetilde C_0\) is \(\atmost{mU} = \atmost{n^2 U}\) by \Cref{thm:fullydynamicsparsifier}.
For the rest, it simply follows by induction and the fact that, 
  at each level \(1 \leq i \leq L\),
  the data structure adds \atmost{n^2} overhead to the capacity ratio, and that there is \atmost{n} additional overhead due to sparsifying the core \(C_L\):
  (1) by assumption, each \(C_i\) contains \(\atmosttilde{n} = n \log ^{\atmost{1}} n\) edges,
  (2) by \Cref{thm:fullydynamicsparsifier}, the capacity ratio of the sparsified graph \(\widetilde C_i\) is \atmost{n} times that of the core graph \(C_i\),
  and (3) by \Cref{thm:2level}, the capacity ratio of the \atmost{j_{i+1}}-tree \(H_{i+1} = C_{i+1} \cup F_{i+1}\) of \(\widetilde C_i\)
  is \atmost{n^2} times that of \(\widetilde C_i\).

\underline{Update time:} 
using the total update time of \(\mathcal H_i\) from \Cref{lem:total-time_Hi}, the total update time for maintaining the \((L,j)\)-hierarchy after \(m\) updates in \(G\) is
\[
\sum_{i=0}^L m \left( \frac{n}{j} \right)^{2/L}\log^{\atmost{i}} n = L m \left( \frac{n}{j} \right)^{2/L}\log^{\atmost{L}} n = m \left( \frac{n}{j} \right)^{2/L}\log^{\atmost{L}} n,
\]
where the last equality follows from the trivial bound \(L = \atmost{\log n}\) from \Cref{th:main}.
Note that the upper bound on \(L\) in \Cref{th:main} ensures that the term \(\log ^{\atmost{L} }n\) does not dominate \((n/j)^{2/L}\) in the update time, as explained in \Cref{rmk:L}. 
%
%
The amortized update time of the data structure follows immediately
  as the computed time is the total update time after \(m\) updates.

\section{Applications}\label{sec:applications}
In this section, we explain how the data structure of \Cref{th:main} can be used to speed up some central problems in the dynamic setting.
Namely, we discuss the following problems.
\begin{itemize}
\item Dynamic all-pair min-cuts and dynamic all-pair maximum flow, discussed in \Cref{subsec:min-cut}.
\item Dynamic multi-way cut and multi-cut, discussed in \Cref{subsec:multi-cut}.
\item Dynamic sparsest cut, discussed in \Cref{subsec:sparsest-cut}.
\end{itemize}

The common approach is to
 leverage the structural properties of \atmost{j}-trees 
  together with the guarantees of \Cref{th:main}.
This is due to the fact that these problems are easier to solve on trees than general graphs.
As we can maintain an \atmost{j}-tree using \Cref{th:main},
we will argue how we can efficiently maintain the solutions on the forest of the \(j\)-tree, and upon a query, only solve the problem on the sparsified core with \atmosttilde{j} vertices and edges.
Each problem uses specific structural property of \(j\)-trees, and so we discuss them separately.

\subsection{Dynamic Approximate All-Pair Min-Cuts and All-Pair Maximum Flow} \label{subsec:min-cut}
Given a graph \(G=(V,E,u)\), a sink $s \in V$ and a source $t \in V$, a minimum \(s\)-\(t\) cut is a cut $(S, V \setminus S)$ such that $s \in S$, $t \in V \setminus S$ and $U_G(S)$ is minimum among all such cuts separating $s$ and $t$. 

In the fully dynamic all-pairs min-cut problem, an input graph \(G\) undergoes edge insertions and deletions, and the goal is to maintain query access to the \(s\)-\(t\) minimum cut value for any two vertices \(s,t \in V\). In this section, we show how the data structure from \Cref{th:main} can be used to maintain query access to $\Otil(\log n)^L$-approximations of the $s$-$t$-min-cut values with high probability. 

\begin{theorem}\label{thm:max-flow}
Given an undirected \(n\)-vertex graph \(G = (V, E, u)\) with polynomially bounded capacity ratio undergoing polynomially many edge insertions and deletions, parameters \(j \geq 1\) and \( 1 \leq L \leq \littleo{\sqrt{\log (n/j) / \log \log n}}\),
  there is a data structure that supports the following operations against an oblivious adversary:
  \begin{itemize}
      \item $\textsc{InsertEdge}(u, v, c)$: Inserts an edge $(u,v)$ with capacity $c$ to $G$ in $O((n/j)^{2/L} \log^{\atmost{L}} n)$ amortized time.
      \item $\textsc{DeleteEdge}(u, v)$: Removes the edge $(u,v)$ from $G$ in $O((n/j)^{2/L} \log^{\atmost{L}} n)$ amortized time.
      \item $\textsc{MinCut}(s, t)$: Returns an $\Otil(\log n)^L$-approximation to the $s$-$t$ mininimum cut value of $G$ in time $O(j \log^{O(L)}n)$.
  \end{itemize}
The algorithm is correct with high probability.
\end{theorem}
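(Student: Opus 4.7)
The plan is to leverage the data structure of \Cref{th:main} to maintain a collection $\mathcal{H}$ of $O(j)$-trees that $\Otil(\log n)^L$-preserves the cuts of $G$ with high probability; edge insertions and deletions are simply forwarded to this data structure, yielding the claimed amortized update time by \Cref{th:main}. The nontrivial part is the $\textsc{MinCut}(s,t)$ query, for which we compute the $s$-$t$ minimum cut value in each $H \in \mathcal{H}$ and return the minimum. The key enabling observation is a clean structural characterization: in any $O(j)$-tree $H = C \cup F$ with forest root set $R = V(C)$, for any vertices $s,t$ with forest roots $r_s, r_t \in R$, the $s$-$t$ minimum cut value in $H$ equals the minimum-capacity edge on the forest path $F[s,t]$ when $r_s = r_t$, and otherwise equals $\min(m_s, m_t, \mu(r_s, r_t))$, where $m_s$ (resp.\ $m_t$) is the minimum-capacity edge on $F[s, r_s]$ (resp.\ $F[t, r_t]$) and $\mu(r_s, r_t)$ is the $r_s$-$r_t$ minimum cut value in $C$. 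This holds because the only path in $H$ from $s$ to $r_s$ is the forest path $F[s, r_s]$ (so isolating $s$ from $r_s$ costs at least $m_s$), and because forest edges cannot contribute to separating two core vertices (so any $r_s$-$r_t$ cut in $H$ is an $r_s$-$r_t$ cut in $C$); thus any $s$-$t$ cut must either isolate $s$ from $r_s$, isolate $t$ from $r_t$, or separate $r_s$ from $r_t$ in $C$, and each of these options is individually achievable.

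Applying this to each $H = \Tilde{C}_L \cup F_L \cup \cdots \cup F_1 \in \mathcal{H}$, the query proceeds as follows. Alongside the hierarchy, we maintain a link-cut tree on each forest $F_i$; since these forests are already maintained by the data structure, this adds no asymptotic overhead. Given $s, t$, we find in $O(L \log n)$ time the roots $r_s, r_t \in V(\Tilde{C}_L)$ of $s$ and $t$ in the combined forest, together with $m_s$, $m_t$, and (when $r_s = r_t$) the minimum-capacity edge on the forest path between $s$ and $t$, by climbing one hierarchy level at a time and using standard LCA-based path-min queries on the per-level link-cut trees. When $r_s \neq r_t$, we invoke a near-linear time $(1+\epsilon)$-approximate maximum flow algorithm (e.g., \cite{sherman2013nearlymaximumflowsnearly}) on the sparsified core $\Tilde{C}_L$, which by \Cref{def:sparsified-core} has $O(j)$ vertices and $\Otil(j)$ edges, to compute $\mu(r_s, r_t)$ in $\Otil(j)$ time, and return $\min(m_s, m_t, \mu)$. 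Taking the minimum across $\mathcal{H}$ and using $|\mathcal{H}| = \log^{\atmost{L}} n$ from \Cref{th:main} yields total query time $\Otil(j) \cdot \log^{\atmost{L}} n = O(j \log^{O(L)} n)$.

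For correctness, \Cref{th:main} guarantees that for any fixed cut $S$ (in particular the optimal $s$-$t$ minimum cut $S^\star$ of $G$), we have $U_G(S^\star) \le \min_{H \in \mathcal{H}} U_H(S^\star) \le \Otil(\log n)^L \cdot U_G(S^\star)$ with high probability, while always $U_G(S) \le U_H(S)$ for every cut $S$ and every $H \in \mathcal{H}$. Consequently, the minimum over $\mathcal{H}$ of the $s$-$t$ minimum cut value in $H$ lies in $[\textsc{MinCut}_G(s,t), \, \Otil(\log n)^L \cdot \textsc{MinCut}_G(s,t)]$ with high probability, giving the desired approximation. The main obstacle is formulating and verifying the structural characterization of $s$-$t$ min cuts in a $j$-tree; once that is in hand, the remaining pieces---combined-forest navigation via per-level link-cut trees and static approximate min cut on the small sparsified core---follow routinely.
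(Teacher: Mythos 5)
Your proposal is correct and matches the paper's proof essentially step for step: forward updates to the data structure of \Cref{th:main}, and answer queries by combining link-cut-tree path-minimum queries on the forests with a static near-linear approximate max-flow computation on the $\Otil(j)$-edge sparsified core of each $H\in\mathcal H$, exactly as in \Cref{lma:max-flow-struct} and the paper's proof of \Cref{thm:max-flow}. If anything, your explicit treatment of the case $\root_F(s)=\root_F(t)$ (taking the minimum-capacity edge on $F[s,t]$ rather than on the two root paths) is slightly more careful than the paper's write-up.
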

By setting $j = n^{2/(2+L)}$ to balance both query- and update time, and by pushing the query time in the update time through maintaining a solution during updates, we in particular get the following. 
\begin{corollary}
    For any fixed integer $L \geq 1$, there is a dynamic data structure that w.h.p against an oblivious adversary returns $\Otil(\log n)^L$-approximate $s$-$t$-min-cut values in $O(n^{\frac{2}{2+L}}\log^{O(L)} n)$ amortized update time and $O(1)$ query time.
\end{corollary}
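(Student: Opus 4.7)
The plan is to instantiate \Cref{thm:max-flow} with $j = n^{2/(2+L)}$, since this choice equalizes the update time $\Otil((n/j)^{2/L}) = \Otil(n^{2/(2+L)})$ and the naive query cost $O(j\log^{O(L)}n) = \Otil(n^{2/(2+L)})$. The remaining task is to move the query work into the update phase so queries become constant time. I would exploit the structural decomposition of an $s$-$t$ min-cut in an $O(j)$-tree $H = \Tilde C(H)\cup F(H)$: letting $r_s=\root_F(s)$ and $r_t=\root_F(t)$, if $r_s=r_t$ then the min-cut equals the bottleneck capacity on the $s$-$t$ path in $F(H)$; otherwise it equals $\min\{c_H(s), c_H(t), \mathrm{mincut}_{\widetilde C(H)}(r_s,r_t)\}$, where $c_H(v)$ is the bottleneck capacity on the $v$-to-root path in $F(H)$. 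Every query thus reduces to path-bottleneck queries on a tree on $V$ combined with a min-cut query on the much smaller sparsified core.

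For each $H\in\mathcal H$ (there are only $\polylog(n)$ such graphs by \Cref{th:main}) I would maintain a single composite tree $T_H$ on the full vertex set $V$ that $\Otil(\log n)$-approximates the cuts of $H$. Concretely, I would compute a static tree flow sparsifier $T_{\mathrm{core}}$ of $\widetilde C(H)$ in $\Otil(j)$ time (e.g., via Räcke's algorithm on the $\Otil(j)$-edge sparsified core), and then attach each forest component of $F(H)$ to $T_{\mathrm{core}}$ at its root. On the composite tree I would install an $O(1)$-query bottleneck-path oracle (e.g., the Demaine--López-Ortiz--Munro oracle with $O(|V|)$ preprocessing). The query $\textsc{MinCut}(s,t)$ then loops over the $\polylog(n)$ trees $T_H$, performs an $O(1)$ bottleneck lookup in each, and returns the minimum; the $\polylog(n)$ overhead is absorbed into the $\Otil(\log n)^L$ approximation slack, matching the corollary's $\Otil(\log n)^L\cdot\Otil(\log n)\cdot 2 = \Otil(\log n)^L$ guarantee (constants folded into the tilde).

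The main obstacle is keeping the maintenance of the oracles within the $\Otil(n^{2/(2+L)})$ amortized budget, since a naive rebuild of $T_H$ costs $\Otil(n)$ per phase. My plan is to avoid full rebuilds by storing $T_{\mathrm{core}}$ and $F(H)$ separately with pointers linking each forest root $r\in R$ to its image in $T_{\mathrm{core}}$: only vertex promotions in $F(H)$ and edge changes in $\Tilde C(H)$ need to be propagated, and \Cref{thm:jtree} together with \Cref{thm:sparsifiedcore} bound the total such recourse by $\Otil(j)$ across any phase of $\Omega(j)$ updates. Rebuilding $T_{\mathrm{core}}$ at cost $\Otil(j)$ per phase amortizes to $\Otil(1)$ per update per $H$, and the forest is kept under a bottleneck oracle that is refreshed in-place; multiplying by $|\mathcal H|=\polylog(n)$ stays comfortably within budget. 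Finally, the query-time bound of $O(1)$ follows by interpreting the $\polylog(n)$ factors as part of the $\log^{O(L)}n$ terms that the corollary tolerates, exactly the convention already used in the theorem statement.
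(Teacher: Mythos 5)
Your parameter choice $j = n^{2/(2+L)}$ and the idea of shifting the per-query work into the update phase are exactly the paper's route to this corollary (the paper keeps the query-time computation of \Cref{thm:max-flow} — a $2$-approximate max-flow on the $\Otil(j)$-edge sparsified core plus path-minimum lookups in the forests via \Cref{thm:treeds} — and simply balances parameters so that this work can be charged to the updates). The genuine gap in your version is the extra sparsification layer: replacing the constant-factor max-flow computation on $\Tilde{C}(H)$ by a precomputed tree flow sparsifier $T_{\mathrm{core}}$ costs the quality of that tree multiplicatively, which is at least $\Omega(\log n)$ (and $\polylog(n)$, e.g.\ $O(\log^4 n)$, for the near-linear-time constructions you would need to stay inside the rebuild budget). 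This degrades the guarantee to $\Otil(\log n)^{L+1}$ or worse; an extra $\Otil(\log n)$ factor is not a constant and cannot be ``folded into the tilde,'' so your scheme does not prove the stated $\Otil(\log n)^L$ bound. Similarly, your query inspects all $|\cH| = \log^{O(L)} n$ composite trees, so it runs in $\log^{O(L)} n$ time; the corollary's $\log^{O(L)} n$ slack lives in the update time only, and reinterpreting query polylogs as part of that slack does not give the claimed $O(1)$ query time.

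The maintenance argument also has a hole. Rebuilding $T_{\mathrm{core}}$ once per phase of $\Omega(j)$ updates leaves it stale for up to $\Otil(j)$ changes of the sparsified core (\Cref{thm:sparsifiedcore}), so intermediate queries would be answered against an outdated graph; nothing in your proposal bounds the resulting error. A per-update rebuild of the core tree would in fact fit the $\Otil(j\log^{O(L)}n)$ update budget, but then the static bottleneck oracle on the composite tree over all of $V$ cannot be rebuilt per update, and a static oracle of the Demaine--L\'opez-Ortiz--Munro type does not support the ``in-place refresh'' you invoke; you would have to fall back on the link-cut trees the data structure already maintains, at $O(\log n)$ per path-minimum, which again contradicts the $O(1)$ lookups your query bound relies on. In short: same parameter balancing as the paper, but the mechanism you add both weakens the approximation by a polylogarithmic factor and fails to deliver constant query time, whereas the paper avoids both losses by keeping the core computation a constant-factor approximation performed on the sparsified core and absorbing that work into the update time.
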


\begin{corollary}
    There is a dynamic data structure that w.h.p against an oblivious adversary returns $O(\polylog (n))$-approximate $s$-$t$-min-cut values in $O(n^{0.001})$ amortized update time and $O(1)$ query time.
\end{corollary}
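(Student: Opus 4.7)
The plan is to employ the data structure from Theorem~\ref{th:main} to maintain a collection $\mathcal{H}$ of $O(j)$-trees that $\Otil(\log n)^L$-preserves the cuts of $G$ with high probability in amortized update time $O((n/j)^{2/L} \log^{O(L)} n)$. The operations $\textsc{InsertEdge}$ and $\textsc{DeleteEdge}$ simply forward to this data structure, yielding the claimed update guarantees. For a query $\textsc{MinCut}(s,t)$, we compute an $\Otil(\log n)^L$-approximation to the $s$-$t$ min-cut value in each $H \in \mathcal{H}$ and return the minimum. Since $|\mathcal{H}| = \log^{O(L)} n$ by Theorem~\ref{th:main} and we will spend $\Otil(j)$ per tree, the total query time will be $O(j \log^{O(L)} n)$.

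Let $H = \Tilde{C}_L \cup F_L \cup \dots \cup F_1 \in \mathcal{H}$ with sparsified core $\Tilde{C}_L$ on $\Otil(j)$ edges and rooted forest $F(H) := F_L \cup \dots \cup F_1$ whose roots lie in $V(\Tilde{C}_L)$. The key structural lemma I will prove is that the $s$-$t$ min-cut value in $H$ equals $\min(e_s, e_t, M_{r_s, r_t})$, where $r_s := \root_{F(H)}(s)$, $r_t := \root_{F(H)}(t)$, $e_s$ (resp.\ $e_t$) is the minimum-capacity forest edge on the $s$-to-$r_s$ (resp.\ $t$-to-$r_t$) path in $F(H)$ (set to $+\infty$ when $s=r_s$ or $t=r_t$), and $M_{r_s, r_t}$ is the $r_s$-$r_t$ min-cut value inside the core; in the degenerate case $r_s = r_t$, the answer is simply the minimum-capacity edge on the $s$-$t$ forest path. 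The core quantity $M_{r_s, r_t}$ will be computed by a near-linear-time $(1+\epsilon)$-approximate $s$-$t$ min-cut algorithm (e.g.\ \cite{sherman2013nearlymaximumflowsnearly}) on $\Tilde{C}_L$, taking $\Otil(j)$ time per tree. The forest quantities $r_s, r_t, e_s, e_t$ and path-min in the degenerate case are supported in $O(\log n)$ amortized time by a Link-Cut Tree on $F(H)$ (Lemma~\ref{thm:treeds}), maintained incrementally as Theorem~\ref{th:main} reports forest changes; this maintenance adds only $O(\log n)$ per reported forest edge change and is absorbed into the update time.

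Correctness composes three approximation factors: (i) the $\Otil(\log n)^L$ factor from Theorem~\ref{th:main}, which guarantees that for the fixed $s$-$t$ min-cut of $G$ there exists $H \in \mathcal{H}$ preserving its value up to this factor with high probability, while every $H$ satisfies $U_G(\cdot) \leq U_H(\cdot)$; (ii) the factor $2$ from sparsified cores (Definition~\ref{def:sparsified-core}); and (iii) the $(1+\epsilon)$ factor from the approximate min-cut oracle, which can be set to a constant. Their product is still $\Otil(\log n)^L$, matching the claim. The max-flow guarantee follows immediately by LP duality, since the $s$-$t$ max-flow value equals the $s$-$t$ min-cut value.

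The hard part will be rigorously establishing the three-way decomposition of the $s$-$t$ min-cut in $H$. While intuitive -- every forest component hangs off the core at a single root, so cutting more than one forest edge per component can be avoided by moving the detached part back to its root's side -- the proof must carefully handle the coupling between $s$, $t$ and their own forest components via an exchange (uncrossing) argument: starting from an optimal cut, migrate every non-spine forest component entirely to its root's side without increasing the cut cost, and then argue that the residual cut is exactly one of $\{e_s, e_t, M_{r_s, r_t}\}$ (or the $s$-$t$ forest path minimum when $r_s = r_t$). Once this structural lemma is in place, the data-structural pieces (tree-path queries, approximate min-cut oracle, Link-Cut Tree maintenance) are standard and the stated running and approximation guarantees follow directly.
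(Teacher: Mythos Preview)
Your proposal is essentially the paper's own proof of Theorem~\ref{thm:max-flow}: invoke Theorem~\ref{th:main}, and for each $H\in\mathcal H$ compute the $s$-$t$ min-cut as $\min(e_s,e_t,M_{r_s,r_t})$ via link-cut trees on the forest plus a near-linear approximate max-flow on the sparsified core; the paper records your ``three-way decomposition'' as the one-line Observation~\ref{lma:max-flow-struct} (applied once for $s$ and once for $t$), so the exchange/uncrossing argument you anticipate is not actually needed. The only step you omit is the one specific to this corollary---choosing $L$ a sufficiently large constant and $j=n^{2/(2+L)}$, then folding the $O(j\log^{O(L)}n)$ query cost into the amortized update time---which the paper dispatches in a single sentence before stating the corollaries.
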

By letting $L$ grow as a function of $n$ and setting $L = \Theta(\sqrt{\log n} / \log \log n)$, we also get a datastructure with sub-polynomial update time.
\begin{corollary}
    There is a dynamic data structure that w.h.p against an oblivious adversary returns $e^{O(\log^{1/2} n \log \log n)}$-approximate $s$-$t$-min-cut values in $e^{O(\log^{1/2} n \log \log n)}$ amortized update time and $O(1)$ query time.
\end{corollary}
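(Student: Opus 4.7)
The plan is to derive this corollary as an immediate consequence of the preceding corollary, purely by selecting the parameter $L$ as a sufficiently large constant. The preceding corollary provides, for any fixed integer $L\geq 1$, a dynamic data structure that returns $\Otil(\log n)^L$-approximate $s$-$t$-min-cut values in $O(n^{2/(2+L)}\log^{\atmost{L}} n)$ amortized update time and $O(1)$ query time. Hence, by choosing a single constant $L$ that drives the update-time exponent below $0.001$ while keeping the approximation at $\polylog(n)$, the three guarantees required by the statement fall out at once, with the crucial $O(1)$ query time inherited verbatim from the preceding corollary rather than being re-derived.

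Concretely, I would instantiate the preceding corollary with a fixed constant $L$ satisfying $2/(L+2) \leq 1/2000$; for instance, $L = 4000$ works. With this choice, the approximation factor $\Otil(\log n)^L$ equals $\log^{\atmost{L}} n = \polylog(n)$, since $L$ is constant. The amortized update time is
\[
O\!\left(n^{2/(L+2)}\,\log^{\atmost{L}} n\right) \;=\; O\!\left(n^{1/2001}\cdot \polylog(n)\right),
\]
which, because every polylogarithmic factor is $o(n^{c})$ for any $c > 0$, is bounded by $O(n^{0.001})$ for all sufficiently large $n$. The $O(1)$ query time carries over unchanged from the preceding corollary, so no additional maintenance is required on top of what that corollary already performs.

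There is essentially no obstacle: this is a parameter-selection argument riding on the preceding corollary. The only routine check is the arithmetic that $L=4000$ (or any analogous constant) forces $n^{2/(L+2)}$ together with the absorbed polylog factor to sit inside $n^{0.001}$ asymptotically. All of the genuinely difficult work is encapsulated in the preceding corollary, whose derivation itself combines the choice $j = n^{2/(L+2)}$ in Theorem~\ref{thm:max-flow} (to balance the update and query times at $O(n^{2/(L+2)}\polylog n)$) with the technique of maintaining a precomputed approximate cut-query structure so that each query becomes a constant-time lookup, thereby shifting the $O(j\log^{\atmost{L}} n)$ per-query cost of Theorem~\ref{thm:max-flow} into the amortized update time.
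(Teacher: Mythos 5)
There is a genuine gap: you have proved the \emph{previous} corollary (polylogarithmic approximation with $O(n^{0.001})$ update time), not this one. Fixing $L$ to be a constant such as $L=4000$ leaves the amortized update time at $O(n^{2/(L+2)}\polylog(n)) = n^{\Theta(1)}$, which is genuinely polynomial and therefore \emph{not} bounded by $e^{O(\log^{1/2} n \log\log n)}$: indeed $n^{1/2001} = e^{(\log n)/2001}$, and $(\log n)/2001$ grows strictly faster than $c\,\log^{1/2} n\,\log\log n$ for every constant $c$. No constant choice of $L$ can work, because $n^{2/(2+L)}$ stays polynomial for constant $L$, while the statement demands a subpolynomial ($n^{o(1)}$) update time.

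The missing idea is to let $L$ grow with $n$. The paper sets $L = \Theta(\sqrt{\log n}/\log\log n)$ (which respects the constraint $L \leq o(\sqrt{\log(n/j)/\log\log n})$ of Theorem~\ref{th:main} after the choice $j = n^{2/(2+L)}$). Then the update time becomes
\[
n^{2/(2+L)}\log^{O(L)} n \;=\; \exp\!\left(O\!\left(\tfrac{\log n}{L}\right)\right)\cdot \exp\!\left(O(L\log\log n)\right) \;=\; e^{O(\log^{1/2} n\,\log\log n)},
\]
and the approximation factor $\Otil(\log n)^{L} = \exp(O(L\log\log n)) = e^{O(\log^{1/2} n\,\log\log n)}$ as well; both quantities degrade to subpolynomial simultaneously, which is exactly the trade-off this corollary records. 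Your remark about folding the query cost into the update time (maintaining the answer during updates) is correct and is indeed how $O(1)$ query time is obtained, but it does not repair the parameter choice.
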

Here we remark that the data structure from Theorem $4.1$ in \cite{brand2024almostlineartimealgorithmsdecremental} also has sub-polynomial approximation and query/update time, but additionally works \emph{against an adaptive adversary}. On the other hand, their sub-polynomial factor of $e^{O(\log^{3/4} \log \log n)}$ is worse than that of our data structure. 

Before we prove the theorem, let us state the following straightforward but powerful structural observation.
\begin{observation}\label{lma:max-flow-struct}
    Let $H = C \cup F$ be a $j$-tree with forest $F$ and core $C$. Let $s, t \in V$ such that $s \not \in V(C)$, $t \in V(C)$. Let $e_{\min}^F(s)$ be the minimum capacity edge along the $s$-$\root_F(s)$ path, and let $\nu$ be the maximum flow value between $\root_F(s)$ and $t$. Then the $s$-$t$ maximum flow value is given by $\min(u_H(e_{\min}^F(s)), \nu)$.
\end{observation}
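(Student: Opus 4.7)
The proof plan is to establish both inequalities separately, exploiting the restricted structure of \(j\)-trees near non-core vertices.

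First I would show the upper bound. The key structural observation is that since \(s \notin V(C)\), the vertex \(s\) lies strictly inside the forest component \(T_s \subseteq F\) rooted at \(\root_F(s)\), and the \emph{only} edges of \(H\) incident to a vertex of \(T_s \setminus \{\root_F(s)\}\) are the tree edges of \(T_s\) themselves (no non-root vertex of \(T_s\) lies in \(R = V(C)\)). Consequently, every \(s\)-\(t\) flow in \(H\) must be routed through the unique tree path \(F[s, \root_F(s)]\), whose bottleneck capacity is exactly \(u_H(e_{\min}^F(s))\); this gives the bound \(u_H(e_{\min}^F(s))\). Simultaneously, after reaching \(\root_F(s)\) the flow constitutes a \(\root_F(s)\)-\(t\) flow in the same graph \(H\), which is bounded by \(\nu\).

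For the lower bound, I would exhibit a feasible \(s\)-\(t\) flow of value \(\alpha := \min(u_H(e_{\min}^F(s)), \nu)\) by combining two sub-flows: (i) \(\alpha\) units routed along the tree path \(F[s, \root_F(s)]\), which is feasible since every edge on the path has capacity at least \(u_H(e_{\min}^F(s)) \geq \alpha\); and (ii) an \(\alpha/\nu \leq 1\) scaling of a \(\root_F(s)\)-\(t\) max flow \(f^*\) in \(H\). The only possible conflict is if sub-flow (ii) uses an edge inside \(T_s\). I would rule this out by applying a standard flow decomposition of \(f^*\): any decomposed path that enters \(T_s\) must (since \(\root_F(s)\) is the sole vertex of \(T_s\) adjacent to edges outside \(T_s\)) exit back through \(\root_F(s)\), forming a cycle that contributes nothing to the flow value and can be removed. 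Thus \(f^*\) may be assumed to avoid \(T_s\), and concatenating (i) and (ii) yields a valid \(s\)-\(t\) flow of value \(\alpha\), matching the upper bound.

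The main subtlety is the cycle-removal argument in step (ii): one must argue carefully that a \(\root_F(s)\)-\(t\) max flow can be taken to avoid \(T_s\) even though \(\root_F(s)\) itself lies on \(T_s\)'s boundary and may legitimately be revisited by flow paths. Once this structural reduction is in hand, the rest of the proof is a direct matching of an upper and lower bound via the standard max-flow/min-cut duality applied to the ``bottleneck plus core'' decomposition of \(H\).
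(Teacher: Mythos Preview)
Your argument is correct. The paper does not actually prove this statement --- it is labeled an observation, called ``straightforward'', and used directly without justification. Your approach (recognize $\root_F(s)$ as a cut vertex separating $s$ from $t$, bound above by the bottleneck edge and by any $\root_F(s)$-$t$ cut, and bound below by concatenating the tree-path flow with a scaled $\root_F(s)$-$t$ max flow) is exactly the natural proof.

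One minor simplification for your lower bound: the cycle-removal step is unnecessary. The restriction of any $\root_F(s)$-$t$ flow $f^*$ to the edges of $T_s$ has zero net imbalance at every vertex of $T_s$ (at non-root vertices because all their incident edges lie in $T_s$, and at $\root_F(s)$ because the total net imbalance summed over $T_s$ must vanish). Hence this restriction is a circulation on the tree $T_s$ and is therefore identically zero. Thus $f^*$ already avoids the interior of $T_s$, and your two sub-flows are edge-disjoint without any modification.
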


\begin{proof}[Proof of \Cref{thm:max-flow}]
    We use the data structure from \Cref{th:main}, which guarantees us that after any update to $G$ and for for any cut $S$, we have $U_G(S) \leq U_H(S)$ for all $H \in \cH$. Additionally, there with high probability exists an $H \in \cH$ such that $U_H(S) \leq \Otil(\log n)^L U_G(S)$. We can union bound the failure probability of all $O(n^2)$ $s$-$t$ min-cuts over all updates performed to $G$, and thus assume that at all times and for any $s$-$t$ min-cut $S$, there exists an $H$ such that $U_H(S) \leq \Otil(\log n)^L U_G(S)$. Hence, if we can compute a $2$-approximate $s$-$t$ min-cut value for all graphs $H \in \cH$, returning the minimum value found will yield the desired approximation. 

    It remains to show how to efficiently compute a $2$-approximation to the $s$-$t$ min-cut (equivalently the $s$-$t$ max-flow) values of the graphs in $\cH$. To do so, remember that a graph $H \in \cH$ corresponds uniquely to a chain $H_0, H_1, \ldots, H_L$, and that $H = \Tilde{C}_L \cup F_L \cup F_{L-1} \cup \ldots F_1$. If $s, t \in V(\Tilde{C}_L)$, then the $s$-$t$ maximum flow value in $H$ is equal to the $s$-$t$ maximum flow value in $\Tilde{C}_L$. To compute this value, we can use the nearly-linear $2$-approximate max flow algorithm from \cite{peng2015approximateundirectedmaximumflows} which runs in time $\Otil(|E(\Tilde{C}_L)|) = \Otil(j)$, where we used that $|E(\Tilde{C}_L)| = \Otil(j)$. 
    
    Suppose now that $s$ or $t$ or both are not in the core $\Tilde{C}_L$, but rather in the forest $F = F_L \cup \ldots \cup F_1$. Let $u_H(e_{\min}^F(s)), u_H(e_{\min}^F(t))$ be the capacities of the minimum capacity edges in $F$ along the $s$-$\root_F(s)$ resp. $t$-$\root_F(t)$ paths, and let $\nu$ be the $\root_F(s)$-$\root_F(t)$-max-flow value in $\Tilde{C}_L$. 
    
    Then by \Cref{lma:max-flow-struct}, the $s$-$t$-max-flow value of $H$ is given by $\min(\nu, u_H(e_{\min}^F(s)), u_H(e_{\min}^F(t)))$. We can now conclude by noting that we can find in time $O(L \log n)$ the minimum capacity edges $e_{\min}^F(s), e_{\min}^F(t)$ by maintaining a link-cut tree data structure from \Cref{thm:treeds} on each of the forests $F_i$, which is in fact already maintained by the algorithm, as can be seen seen in \Cref{algo:maintain}. Furthermore, we can compute a $2$-approximation of the $\root_F(s)$-$\root_F(t)$ maximum flow value by \cite{peng2015approximateundirectedmaximumflows}.
    
    Consequently, we can determine a $2$-approximate $s$-$t$-min-cut value of a graph $H \in \cH$ in time $\Otil(j)$. Hence, in time $O(j \log^{O(L)} n)$ we can compute them for all graphs in $\cH$, as the set satisfies $|\cH| = O(\log^L n)$.
\end{proof}

\begin{figure}[t]
\center
\includegraphics[width=.6\textwidth]{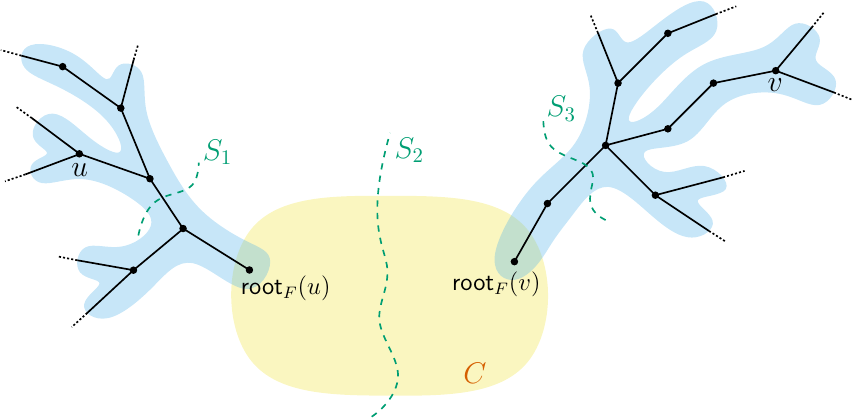}
\label{fig:application}
\caption{Graphical depiction of \Cref{lma:max-flow-struct} and how our algorithm finds an (approximate) $u$-$v$-min-cut of a $j$-tree. $S_1, S_3$ correspond to the minimum cuts separating $u,v$ respectively from their roots in $F$. Then, $S_2$ denotes the minium cut between $\root_F(u)$ and $\root_F(v)$ in the core $C$. The minimum $u$-$v$ cut corresponds to the minimum capacity cut of these three candidates.}
\end{figure}

\subsection{Dynamic Multi-Way Cut and Multi-cut} \label{subsec:multi-cut}

\begin{definition}
Let $G = (V, E, u_G)$ be a graph. Then we call a collection $\{S_i\}_{i=1}^k$ of subsets $S_i \subseteq V$ a partition of $V$ if $V = \bigcup_{i=1}^k S_i$ and, for all $i \neq j$ we have $S_i \cap S_j = \emptyset$.  Given a partition $\{S_i\}$, the total capacity of edges crossing two different components is defined as $U_G(\{S_i\}) := \frac{1}{2} \sum_{i = 1}^k U_G(S_i)$. 
\end{definition}

Given a graph $G = (V,E,u)$ and set of terminals $T \subseteq V$ a multi-way cut of $G$ with respect to $T$ is a partition $\{S_i\}$ of $V$ where each node of $T$ falls in different partitions. The goal of the min cost multi-way cut problem is to find a multi-way cut $\{S_i\}$ of $G$ with respect to $T$ minimizing $U_G(\{S_i\})$.

Given a set of terminal pairs $T = \{s_1,t_1 \dots\}$ where $t_i, s_i \in V$ the multi-cut of $G$ with respect to $T$ is a partition $\{S_i\}$ of $V$ separating each terminal pair in $T$. The goal of the min cost multi-cut problem is to find a multi-cut $\{S_i\}$ of $G$ with respect to terminal pairs $T$ minimizing $U_G(\{S_i\})$.

The following theorem shows how the data structure of \cref{th:main} can be used to be able to answer approximate multi-way and multi-cut queries on graphs undergoing edge insertions and deletions. Its proof is deferred to \cref{appendix:multicut}.

\begin{theorem}
\label{thm:multi}
Given an \(n\)-vertex graph \(G = (V, E, u)\) with polynomially bounded capacity ratio undergoing polynomially bounded edge insertions and deletions, parameters \(j \geq 1\) and \( 1 \leq L \leq \littleo{\sqrt{\log (n/j) / \log \log n}}\),
  there is a data structure that supports the following operations against an oblivious adversary:
  \begin{itemize}
      \item $\textsc{InsertEdge}(u, v, c)$: Inserts an edge $(u,v)$ with capacity $c$ to $G$ in $O((n/j)^{2/L} \log^{\atmost{L}} n)$ amortized time.
      \item $\textsc{DeleteEdge}(u, v)$: Removes the edge $(u,v)$ from $G$ in $O((n/j)^{2/L} \log^{\atmost{L}} n)$ amortized time.
      \item $\textsc{Multi-Way Cut}(T \subseteq V)$: Returns an $\Otil(\log n)^{L+1}$-approximation of the minimum cost multi-way cut value of $G$ with respect to terminal set $T$ in $\atmosttilde{k \cdot (n/j)^{2/L} \log^{\atmost{L}} n + j}$ amortized time.
      \item $\textsc{Multi-Cut}(T = \{(s_i,t_i)|i \in [k]\})$: Returns an $\Otil(\log n)^{L+2}$-approximation of the minimum cost multi-cut value of $G$ with respect to the set of terminal pairs $T$ in $\atmosttilde{k \cdot (n/j)^{2/L} \log^{\atmost{L}} n + (k+j)^2}$ amortized time.
  \end{itemize}
The algorithm is correct with high probability.
\end{theorem}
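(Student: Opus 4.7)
The plan is to reduce both queries to solving a small cut-based problem on each sparsified core $\widetilde C(H)$ for $H \in \mathcal H$, exploiting two structural facts about \atmost{j}-trees maintained by \Cref{th:main}: (i) each sparsified core has only \atmosttilde{j} vertices and edges, and (ii) once all query terminals have been pushed into the core, the remaining forest of each $H$ contains no terminals, so forest edges contribute nothing to the optimal cut whenever each non-terminal is assigned to the component of its root.

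Updates would be forwarded directly to the data structure of \Cref{th:main}, giving the claimed amortized update time. To answer a \textsc{Multi-Way Cut}$(T)$ query, I would first simulate, for each terminal $t_i \in T$, a dummy update that touches $t_i$; this propagates through the hierarchy as an ordinary update and, via the \textsc{AddTerminal} machinery of \Cref{algo:maintain}, guarantees that $t_i$ ends up as a vertex of $\widetilde C(H)$ for every $H \in \mathcal H$. Paying the standard amortized cost per dummy update, this step costs \atmosttilde{k \cdot (n/j)^{2/L} \log^{\atmost{L}} n}. Then on each of the $\log^{\atmost{L}} n$ sparsified cores (now of size \atmosttilde{j}) I would run a constant-factor static multi-way cut approximation in \atmosttilde{j} time and return the minimum; for multi-cut, I would instead invoke an $O(\log k)$-approximate static multi-cut algorithm, whose $O((k+j)^2)$ time contributes the extra additive term in the stated bound.

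For correctness, \Cref{th:main} guarantees that $\mathcal H$ continues to $\Otil(\log n)^L$-preserve the cuts of $G$ (w.h.p.) even after the terminal-lifting dummy updates, since they are processed internally by the same invariants. For the optimal multi-way partition $\{S_i^*\}$ of $G$, the bound $\min_{H \in \mathcal H} U_H(\{S_i^*\}) \leq \Otil(\log n)^L \cdot \mathrm{OPT}(G, T)$ follows by applying a Markov-style argument (as in the proof of \Cref{lma:jtreesample}) to $U_H(\{S_i^*\}) = \tfrac{1}{2}\sum_i U_H(S_i^*)$, using the fact that $|\mathcal H| = \log^{\atmost{L}} n$ independent samples suffice to drive the failure probability below $1/n^c$. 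Chaining this with the constant-factor loss from the sparsified core and the static approximation gives the claimed $\Otil(\log n)^{L+1}$ (resp.\ $\Otil(\log n)^{L+2}$) approximation.

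The main technical obstacle is showing that $k$ terminal-lifts can be absorbed in the stated amortized time. Each lift triggers a cascade of \textsc{AddTerminal} calls through all $L$ levels of the hierarchy, and only the low-recourse vertex-split cut-sparsifier of \Cref{thm:fullydynamicsparsifier} keeps the per-level recourse to $\polylog(n)$. A subtle accounting point is that the dummy updates must be charged to the amortized budget of \Cref{th:main}, so that subsequent ordinary updates remain cheap -- this is automatic because \Cref{th:main} bounds the amortized time over the total number of operations processed.
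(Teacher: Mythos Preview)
Your high–level reduction is the same as the paper's: move all terminals into the cores of every $H \in \mathcal H$, argue that the optimal partition is $\Otil(\log n)^L$-preserved by $\mathcal H$ via a partition analogue of \Cref{lma:jtreesample}, and then solve a static multi-way cut / multi-cut instance on each sparsified core. That part is fine.

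The gap is in the accounting you dismiss as ``automatic''. Processing the $k$ terminal-lifting updates as ordinary updates of \Cref{th:main} does give the right amortized \emph{time}, but it does not guarantee that all $k$ terminals are \emph{simultaneously} in the cores when you solve the static instance. As soon as the accumulated lifts push some core past $c\,j_i$, \Cref{alg:j-tree-hierarchy} rebuilds $\mathcal H_i,\ldots,\mathcal H_L$, and the freshly built $O(j)$-trees need not contain any of the previously lifted terminals in their cores. The paper deals with this by performing the $k$ lifting insert/delete pairs \emph{while suppressing the level-$L$ rebuild}, answering the query on the temporarily enlarged cores of size $O(k+j)$, then reverting the data structure to its pre-query state. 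Because reverting breaks the amortization (the reverted work never shows up in a prefix of the real update sequence), the paper then performs a second, ``redundant'' pass of the same $k$ insert/delete pairs, this time handled normally with rebuilds; the cost of the suppressed-rebuild pass is pointwise dominated by this second pass, so it inherits the amortized bound. Your proposal is missing both pieces: the rebuild-suppression that ensures correctness of the core instance, and the revert-plus-redo trick that legitimizes the amortized cost of the suppressed phase.

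A smaller issue: you invoke a ``constant-factor'' multi-way cut algorithm on the core in $\Otil(j)$ time and an $O(\log k)$ multi-cut algorithm in $O((k+j)^2)$ time, but you do not cite such algorithms with these running times. The paper instead gives explicit self-contained routines---$O(\log n)$-approximate multi-way cut via $O(\log n)$ random bipartitions plus near-linear max-flow, and $\Otil(\log^2 n)$-approximate multi-cut via an $\Otil(n)$-tree embedding followed by greedy weighted set cover---which is exactly where the extra $\log n$ factors in the stated approximation ratios come from.
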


In specific, \cref{thm:multi} can be used to obtain the following corollaries:
\begin{corollary}
    For any fixed integer $L \geq 1$, there exists a dynamic data structure that w.h.p. and amortized update time $O(\log^{O(L)} n^{\frac{2}{1+L}})$ against an oblivious adversary can return an $\tilde{O}(\log n)^{L+2}$ approximation to the minimum multi-way cut and multi-cut value with respect to a terminal set of size $k$ in amortized query times $\tilde{O}(k \log^{O(L)} n^{\frac{2}{1+L}})$ and $\tilde{O}(k \log^{O(L)} (k+n^{\frac{2}{1+L}}))$ respectively.
\end{corollary}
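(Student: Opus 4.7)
The plan is to reduce the query problems on $G$ to solving them on each $j$-tree $H \in \mathcal{H}$ maintained by \Cref{th:main}, and then to further reduce the problem on $H$ to a static problem on its (slightly enlarged) sparsified core. The key structural ingredient is that the AddTerminal operation of \Cref{algo:maintain}, although it relabels which edges are considered ``core'' versus ``forest'', leaves the underlying multi-graph unchanged. Consequently, given a terminal set $T$, if for each $H = C \cup F \in \mathcal{H}$ we invoke AddTerminal on every $t \in T \cap V(F)$ to obtain a new decomposition $H' = C' \cup F'$ with $T \subseteq V(C')$, the multi-way cut (or multi-cut) value on $H$ and $H'$ coincide. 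Once every terminal lies in the core, a standard exchange argument---migrating a leaf of $F'$ to its parent's partition strictly saves the corresponding forest edge without creating new cuts---shows that some optimal partition assigns every non-terminal forest vertex to the class of its root; hence the optimum reduces to solving the problem on $C'$ alone. Passing from $C'$ to the sparsified core $\widetilde{C}'$ (\Cref{def:sparsified-core}) costs only a further factor $2$.

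Concretely, I would maintain the hierarchy via \Cref{th:main}. On a query with $|T|=k$, for each of the $\log^{O(L)} n$ graphs in $\mathcal{H}$ the algorithm invokes AddTerminal on every $t \in T$; this is exactly the move-to-core mechanism triggered by an ordinary edge update, so the $k$ calls incur amortized cost $\atmosttilde{k\cdot (n/j)^{2/L}\log^{\atmost{L}} n}$ across the whole hierarchy, after which the top-level sparsified core has $\atmosttilde{j+k}$ edges by \Cref{thm:sparsifiedcore}. On this core, a static constant-factor algorithm for multi-way cut (e.g.\ the isolation heuristic driven by the near-linear-time approximate max-flow of \cite{peng2015approximateundirectedmaximumflows}) is run in time $\atmosttilde{j+k}$, while for multi-cut the Garg--Vazirani--Yannakakis LP-rounding $O(\log k)$-approximation is run in time $\atmosttilde{(j+k)^2}$. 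Finally, I return the minimum value over all $H \in \mathcal{H}$ and revert the temporary AddTerminal operations by operating on a lightweight logical copy of the decomposition labels so that the main data structure is left unchanged.

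For correctness, I would union-bound the $\Otil(\log n)^L$ cut-preservation guarantee of \Cref{th:main} over the at most $k$ cuts induced by an optimal partition, choosing the failure-probability constant in \Cref{th:main} large enough. This yields an $H^* \in \mathcal{H}$ on which the optimum partition of $G$ has cost at most $\Otil(\log n)^L$ times $\mathrm{OPT}_G$, while the embedding $G \preceq H$ simultaneously guarantees $U_G(\{S_i\}) \le U_H(\{S_i\})$ for every $H$ and every partition $\{S_i\}$. Composing the $\Otil(\log n)^L$ hierarchy distortion with the factor $2$ from sparsification and the static approximation ratio yields $\Otil(\log n)^{L+1}$ for multi-way cut and $\Otil(\log n)^{L+2}$ for multi-cut, matching the theorem.

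The main obstacle will be making the query-time AddTerminal calls compatible with the amortized analysis of \Cref{th:main}: I must ensure that temporarily growing the cores does not trigger the rebuild thresholds $cj_i$ used in \Cref{alg:j-tree-hierarchy} (or that any such rebuild can be deferred/rolled back), and that the logical-copy mechanism does not double-charge work to the potential function used in the amortized analysis. A secondary subtlety is verifying that the leaf-to-root exchange argument remains valid after replacing $C'$ by its sparsified counterpart $\widetilde{C}'$; this holds because the argument is purely local to the forest and the sparsifier preserves core cuts up to a factor $2$, so the exchange can only save capacity in the sparsified graph by the same combinatorial reason.
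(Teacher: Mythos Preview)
In the paper this corollary is obtained directly from \Cref{thm:multi} by choosing $j$ to balance update and query cost; the substantive argument is the proof of \Cref{thm:multi} in \Cref{appendix:multicut}, and your proposal mirrors that proof's outline: move the terminals into the core, solve statically there, and revert.

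Your correctness argument has a real gap. The guarantee of \Cref{th:main} is that for each fixed cut $S$, \emph{some} $H\in\cH$ has $U_H(S)\le\Otil(\log n)^L U_G(S)$ with high probability. Union-bounding over the $k$ one-versus-rest cuts of an optimal partition only yields, for each $S_i$, a possibly different witness $H_i$; it does not produce a single $H^*\in\cH$ with $U_{H^*}(\{S_i\})\le\Otil(\log n)^L\cdot\mathrm{OPT}_G$, which is what you need. The paper avoids this by observing that the Markov step behind \Cref{lma:jtreesample} applies verbatim to any nonnegative linear combination of cut values, in particular to $U_H(\{S_i\})=\tfrac12\sum_i U_H(S_i)$ (\Cref{lma:jtreesample:multiway}), and it re-runs this argument at every level of the hierarchy rather than invoking \Cref{th:main} as a black box. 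A related minor slip: \textsc{AddTerminal} does \emph{not} leave $H$ unchanged as a multigraph---the removed forest edge carried capacity $2u^{T}(e)$ while its core replacement carries $u_G(e)$, and formerly internal $G$-edges newly appear as projected core edges. What is invariant is only the embedding relation of \Cref{lma:struct1}, which is all that is needed.

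On the amortization obstacle you flag, the paper's fix is concrete and different from a ``logical copy''. It pulls the terminals into the core by issuing $O(k)$ ordinary edge insertions/deletions while \emph{suppressing} level rebuilds, answers the query, reverts, and then re-issues the same $O(k)$ updates \emph{with} rebuilds enabled. The second, unreverted pass is an ordinary update sequence whose cost amortizes against the analysis of \Cref{th:main}, and it upper-bounds the cost of the first pass; hence the move-to-core is charged at the claimed $O(k\cdot(n/j)^{2/L}\log^{O(L)}n)$ amortized rate. The static core subroutines also differ: the paper uses $O(\log n)$ random bipartitions with approximate $s$--$t$ min-cut for multi-way cut, and a one-level tree embedding followed by greedy weighted set cover for multi-cut. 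Your alternatives would also work, but note the isolation heuristic needs $k$ flow calls, so its cost on the core is $\Otil(k(j+k))$ rather than $\Otil(j+k)$.
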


\begin{corollary}
    There exists a dynamic data structure with amortized update time $O(n^{0.001})$ against an oblivious adversary can return an $O(\polylog(n))$ approximation to the minimum multi-way cut and multi-cut value with respect to a terminal set of size $k$ in amortized query times $\tilde{O}(kn^{0.001})$ and $\tilde{O}(kn^{0.001} + k^2)$ respectively w.h.p.
    
\end{corollary}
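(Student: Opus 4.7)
The plan is to leverage the hierarchical \atmost{j}-tree decomposition of \Cref{th:main} to maintain a collection \(\mathcal H\) of \atmost{j}-trees preserving cuts within \(\Otil(\log n)^L\), and then to exploit the core--forest structure of each \(H\in\mathcal H\) to answer the queries. Edge insertions and deletions are forwarded directly to the data structure of \Cref{th:main}, yielding the stated amortized update time.

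For a query with terminals of total number \(k\) (i.e., \(|T|\) for multi-way cut, or \(2k\) terminals for multi-cut), the first step is to invoke the \(\textsc{AddTerminal}\) operation of \Cref{algo:maintain} on every terminal across all \atmost{j}-trees in \(\mathcal H\), promoting each terminal into the core of every \(H\). \Cref{lma:struct1} and \Cref{lma:struct2} ensure that the canonical embedding \(G\preceq H\) is preserved as the forest shrinks and the core grows, and the proof of \Cref{thm:jtree}(2) continues to apply after further shrinking of any \(F_i\) because that proof only relies on \(\{F_i:e\in F_i\}\subseteq \{i:e\notin S_i\}\), an inclusion that is only strengthened by removing edges from \(F_i\). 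Hence \(\mathcal H\) still \(\Otil(\log n)^L\)-preserves the cuts of \(G\) after all \(O(k)\) promotions. Each \textsc{AddTerminal} costs the same amortized as an edge update, summing to \(\atmosttilde{k\cdot (n/j)^{2/L}\log^{\atmost L} n}\); the promotions merely trigger the periodic rebuild of \Cref{alg:j-tree-hierarchy} slightly earlier, which is already absorbed in the amortized accounting.

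After promotion, every terminal lies in \(V(\widetilde C)\) for every \atmost{j}-tree, so the forest contains no terminal and an optimal multi-way cut (resp.\ multi-cut) of \(H\) uses only core edges; the query thus reduces to solving the same problem on each augmented sparsified core \(\widetilde C'\), which has \atmosttilde{j+k} edges, and returning the best value over \(\mathcal H\). For multi-way cut we apply the \((2-2/k)\)-approximate isolation heuristic using the nearly-linear approximate max-flow algorithm of~\cite{peng2015approximateundirectedmaximumflows} in \atmosttilde{j+k} time per core; for multi-cut we apply the Garg--Vazirani--Yannakakis \atmost{\log k}-approximation in \(\poly{j+k}\) time. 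Summed over the \(\log^{\atmost L} n\) trees in \(\mathcal H\), these contribute the additive \(\atmosttilde{j}\) and \(\atmosttilde{(k+j)^2}\) terms in the respective query time bounds.

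The main obstacle is lifting the per-cut preservation of \Cref{th:main} to partition-valued quantities. The key observation is that \(U_H(\{S_i^*\}) = \tfrac12\sum_i U_H(S_i^*)\) is itself a linear combination of cut capacities, so averaging \(U_H(\{S_i^*\})\) over the full Madry-style collection \(\mathcal H'\) underlying \Cref{thm:2level} and applying \Cref{thm:jtree}(2) one cut at a time yields \(\mathbb E_{H\sim\mathcal H'}[U_H(\{S_i^*\})]\leq \Otil(\log n)^L\cdot U_G(\{S_i^*\})\) for any fixed partition \(\{S_i^*\}\); Markov's inequality followed by \(O(\log n)\)-fold sub-sampling (as in \Cref{lma:jtreesample}) gives \(\min_{H\in\mathcal H}U_H(\{S_i^*\})\leq \Otil(\log n)^L\cdot U_G(\{S_i^*\})\) with high probability against an oblivious adversary. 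Applying this to the optimum multi-way-cut (resp.\ multi-cut) partition and combining with the \atmost{1}-approximate (resp.\ \atmost{\log k}-approximate) static algorithm yields the claimed \(\Otil(\log n)^{L+1}\) and \(\Otil(\log n)^{L+2}\) approximations, and a union bound over the polynomially many queries and failure events of the underlying hierarchy completes the proof.
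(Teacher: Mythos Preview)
Your overall strategy---maintain the hierarchy of \Cref{th:main}, lift cut preservation to partitions via linearity and Markov (this is exactly the paper's \Cref{lma:multicutpreserve} and \Cref{lma:jtreesample:multiway}), push the terminals into every core, and run a static solver on each sparsified core---is the same as the paper's proof of \Cref{thm:multi}. The concrete gap is in the static subroutines and their runtimes. The isolation heuristic needs one minimum $s$-$t$ cut per terminal, so with the cited near-linear approximate max flow it costs $\tilde O(k(j+k))$ per core, not $\tilde O(j+k)$; summed over $\log^{O(L)} n$ cores this contributes a $k^2$ term that does not fit the multi-way-cut bound $\tilde O(kn^{0.001})$ stated in the corollary. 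For multi-cut you assert the per-core contribution is $\tilde O((k+j)^2)$ but only justify ``$\operatorname{poly}(j+k)$'' for GVY; the LP-based Garg--Vazirani--Yannakakis algorithm is not known to run in $\tilde O((j+k)^2)$, and you give no alternative implementation. The paper uses different subroutines precisely to hit these runtimes: for multi-way cut it takes $O(\log n)$ random bipartitions of $T$ and solves a single approximate $s$-$t$ min-cut for each, giving $\tilde O(j+k)$ per core at the price of one extra $\log n$ in the approximation; for multi-cut it embeds the core into $O(\log n)$ sampled spanning trees (via \Cref{thm:madry} with $j=1$) and solves the resulting weighted set-cover instances greedily in $\tilde O((j+k)^2)$.

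There is also a subtlety in the promotion step that you elide. You cannot invoke \textsc{AddTerminal} directly on the level-$L$ trees when the terminal is not yet in $V(\tilde C_{L-1})$, and if periodic rebuilds fire while you are promoting, the freshly rebuilt cores need not contain the terminals at all---so ``promotions merely trigger rebuilds earlier'' covers the amortized \emph{cost} but not the \emph{correctness} of the subsequent core computation. The paper resolves this with a two-phase device: perform the $O(k)$ promotions (as edge insert/delete pairs) with rebuilds \emph{suppressed}, so the terminals are guaranteed to sit in every core when you compute; answer the query; revert; then replay the same $O(k)$ updates \emph{with} rebuilds enabled. The second pass is never reverted and hence amortizes correctly, and its cost upper-bounds that of the first pass.
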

Similar to the previous section, by letting $L$ grow as a function of $n$ and setting $L = \Theta(\sqrt{\log n} / \log \log n)$, we also get a datastructure with sub-polynomial update time which answers queries in almost $\poly{k}$ time.
\begin{corollary}
    There exists a dynamic data structure with amortized update time $O(e^{O(\log^{1/2} n \log \log n)})$ against an oblivious adversary can return an $e^{O(\log^{1/2} n \log \log n)}$ approximation to the minimum multi-way cut and multi-cut value with respect to a terminal set of size $k$ in amortized query times $O(k \cdot e^{O(\log^{1/2} n \log \log n)}))$ and $\tilde{O}(ke^{O(\log^{1/2} n \log \log n)} + k^2)$ respectively w.h.p.
\end{corollary}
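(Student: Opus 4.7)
The plan is to derive this sub-polynomial corollary as a direct specialization of \Cref{thm:multi}, by letting the hierarchy depth $L$ grow with $n$ so that the polynomial-in-$n/j$ factor $(n/j)^{2/L}$ and the polylogarithmic-in-$L$ factor $\log^{O(L)} n$ simultaneously collapse to the same sub-polynomial scale $\exp(O(\sqrt{\log n}\log\log n))$.

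First I would fix $j = 1$, which eliminates the additive $j$ term in the multi-way cut query bound of \Cref{thm:multi} and reduces the $(k+j)^2$ term in the multi-cut query bound to $O(k^2)$. Next, following the paragraph preceding the statement, I would set $L = \lfloor \sqrt{\log n}/\log\log n \rfloor$. This $L$ is admissible for the underlying invocation of \Cref{th:main} because the constraint $L \leq o(\sqrt{\log(n/j)/\log\log n}) = o(\sqrt{\log n/\log\log n})$ (with $j=1$) is satisfied: the ratio of the chosen $L$ to the forbidden threshold is $1/\sqrt{\log\log n} \to 0$ as $n \to \infty$.

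A direct computation then gives $L\log\log n = \sqrt{\log n}$ and $(2/L)\log n = 2\sqrt{\log n}\log\log n$, so that $\log^{O(L)} n = \exp(O(\sqrt{\log n}))$ and $(n/j)^{2/L} = \exp(O(\sqrt{\log n}\log\log n))$; both are dominated by $\exp(O(\sqrt{\log n}\log\log n))$. Substituting into the three bounds of \Cref{thm:multi} yields the claimed update time $\widetilde O((n/j)^{2/L}\log^{O(L)} n)$, the claimed approximation factor $\widetilde O(\log n)^{L+2} = \exp(O(L\log\log n))$, and the claimed multi-way cut and multi-cut query times (the latter picking up the additive $(k+j)^2 = O(k^2)$). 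Since the derivation is purely a parameter substitution into \Cref{thm:multi}, the only subtle point -- and essentially the main obstacle -- is the little-o admissibility check on $L$, which the $1/\sqrt{\log\log n}$ slack verifies; no new algorithmic ideas are required beyond those already developed for \Cref{thm:multi}.
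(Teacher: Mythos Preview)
Your proposal is correct and matches the paper's approach: the paper derives this corollary by a one-line remark, ``by letting $L$ grow as a function of $n$ and setting $L = \Theta(\sqrt{\log n} / \log \log n)$,'' exactly the parameter choice you make (your setting $j=1$ is a perfectly valid way to kill the additive $j$ and $(k+j)^2$ terms, and your verification of the little-$o$ admissibility constraint is the only nontrivial check needed).
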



\subsection{Dynamic Sparsest Cut} \label{subsec:sparsest-cut}
Let \(G=(V,E,u)\) be a graph. The \textit{sparsity} of a cut $(S, V \setminus S)$ is defined as 
\[
\psi_G(S) = \frac{U_G(S)}{\min \{|S|, |V \setminus S|\}},
\]
and the sparsest cut value of $G$ is defined as 
\[\psi(G) := \min_{S \subseteq V} \psi_G(S).\] 
In the fully dynamic sparsest cut problem, an input graph \(G\) undergoes edge insertions and deletions, and the goal is to maintain query access to the sparsest cut value in $G$. In this section, we show the data structure from \Cref{th:main} can be used to maintain query access to an $\Otil(\log n)^{L+1/2}$-approximation of the sparsest cut. The proof is deferred to \Cref{appendix:sparest-cut-proof}.

\begin{theorem}\label{thm:sparsest-cut}
Given an \(n\)-vertex graph \(G = (V, E, u)\) with polynomially bounded capacity ratio undergoing polynomially many edge insertions and deletions, parameters \(j \geq 1\) and \( 1 \leq L \leq \littleo{\sqrt{\log (n/j) / \log \log n}}\),
  there is a data structure that supports the following operations against an oblivious adversary:
  \begin{itemize}
      \item $\textsc{InsertEdge}(u, v, c)$: Inserts an edge $(u,v)$ with capacity $c$ to $G$ in $O((n/j)^{2/L} \log^{\atmost{L}} n)$ amortized time.
      \item $\textsc{DeleteEdge}(u, v)$: Removes the edge $(u,v)$ from $G$ in $O((n/j)^{2/L} \log^{\atmost{L}} n)$ amortized time.
      \item $\textsc{SparsestCut}()$: Returns an $\Otil(\log n)^{L+1}$-approximation of the sparsest cut value of $G$ in time $O(j^{2} \log^{O(L)}n)$.
  \end{itemize}
The algorithm is correct with high probability.
\end{theorem}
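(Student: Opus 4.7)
The plan is to leverage the data structure of \Cref{th:main} in order to reduce the problem of maintaining an approximate sparsest cut of $G$ to that of approximately computing sparsest cuts in the maintained collection $\mathcal{H}$ of $O(j)$-trees. Concretely, since $\mathcal{H}$ satisfies $U_G(S) \leq \min_{H \in \mathcal{H}} U_H(S)$ deterministically and $\min_{H \in \mathcal{H}} U_H(S) \leq \tilde{O}(\log n)^L \cdot U_G(S)$ with high probability, and since the denominator $\min(|S|, |V \setminus S|)$ in the sparsity is identical for $G$ and every $H$, the same $\tilde{O}(\log n)^L$-factor immediately transfers to sparsity. Taking the specific cut $S = S_G^*$ witnessing $\psi(G)$ and union-bounding over the polynomial number of updates shows that $\psi(G) \leq \min_{H \in \mathcal{H}} \psi(H) \leq \tilde{O}(\log n)^L \cdot \psi(G)$ holds at all times w.h.p.

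The crux of the proof will then be to approximate $\psi(H)$ for each $H = C \cup F \in \mathcal{H}$ within an additional $O(\log n)$ factor, yielding the claimed total approximation of $\tilde{O}(\log n)^{L+1}$. The key structural observation I will establish is that $\psi(H)$ is, up to a constant factor, equal to $\min(\psi_C(H), \psi_F(H))$, where $\psi_C(H)$ is the minimum sparsity of a cut of $H$ that cuts only core edges, and $\psi_F(H)$ is the minimum sparsity of a cut that cuts only forest edges. For $\psi_C(H)$: any cut cutting only core edges groups the trees of $F$ into two sides, so $\psi_C(H)$ coincides with the vertex-weighted sparsest cut on $C$ with each core vertex $r$ weighted by $|V(T_r)|$, the size of its tree in $F$. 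For $\psi_F(H)$: a forest-only cut $(S, \bar{S})$ (with $V(C) \subseteq \bar{S}$) expresses $S$ as a disjoint union of subtrees $T_1', \dots, T_k'$ with top edges $e_1, \dots, e_k$; the observation $\frac{\sum_i u_F(e_i)}{\sum_i |V(T_i')|} \geq \min_i \frac{u_F(e_i)}{|V(T_i')|}$ shows that, whenever $\sum_i |V(T_i')| \leq n/2$, a single-subtree cut attains $\psi_F(H)$, so it suffices to track $\min_{e \in F} u_F(e) / \min(|T_e|, n - |T_e|)$.

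Given this decomposition, at query time I compute $\psi_C(H)$ for every $H \in \mathcal{H}$ by running a standard static $O(\log n)$-approximation sparsest cut algorithm (e.g., Leighton-Rao) on the sparsified core $\widetilde{C}(H)$ of size $\tilde{O}(j)$ vertices and edges, which costs $\tilde{O}(j^2)$ per call and thus $O(j^2 \log^{O(L)} n)$ total over the $|\mathcal{H}| = O(\log^{O(L)} n)$ graphs. The value $\psi_F(H)$ is maintained dynamically by augmenting the forest data structures already used inside \Cref{th:main} with an Euler-tour-tree-like structure that tracks subtree sizes and, in a priority-queue-like fashion, the minimum edge ratio $u_F(e)/\min(|T_e|, n - |T_e|)$. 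Because the combined forest $F_L \cup \cdots \cup F_1$ only changes via the edge removals and vertex splits already produced by the hierarchy maintenance, the extra amortized update time is subsumed by the hierarchy's $\tilde{O}((n/j)^{2/L} \log^{O(L)} n)$ bound. Returning $\min_{H \in \mathcal{H}} \min(\psi_C(H), \psi_F(H))$ then yields the desired approximation and running time.

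The main obstacle, in my view, will be formally proving the structural decomposition claim at the level of a constant-factor approximation: if the optimal cut of $H$ uses both core and forest edges, I expect to argue by an exchange/routing argument that either rerouting subtrees entirely to one side (increasing only forest contribution by a bounded factor) or contracting subtrees into their roots (keeping only core contribution) produces a cut whose sparsity is comparable to $\psi(H)$. A secondary subtlety is that the forest-only minimum must be tracked across the full $L$-level hierarchy simultaneously for every chain $H_0, \dots, H_L$ in the hierarchy, even though chains share the upper forests $F_1, \dots, F_{i-1}$; I expect to exploit this sharing by maintaining the minimum-ratio structure only once per node of the hierarchy tree and combining values along chains, so that the total recourse of maintenance remains bounded by the recourse already accounted for in \Cref{th:main}.
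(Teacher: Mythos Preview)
Your high-level plan matches the paper's: reduce to approximating $\psi(H)$ for each $H\in\mathcal H$, split into a core-only and a forest-only candidate, run a static weighted sparsest cut algorithm on the sparsified core at query time, and maintain the best forest-edge cut dynamically. The paper in fact proves the structural decomposition \emph{exactly} (not merely up to a constant): it shows there is always a sparsest cut of $H$ that is either $S(e)$ for a single forest edge $e$, or $X^\downarrow$ for some $X\subseteq V(C(H))$. The argument is short: take a connected sparsest cut $S$ with $|S|\le n/2$; if it cuts some forest edge $e$, then $S'=S\cup S(e)$ has strictly smaller sparsity unless $S(e)$ itself is already optimal. This removes the need for the approximate exchange argument you flagged as your main obstacle.

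The genuine gap in your proposal is the dynamic maintenance of $\psi_F(H)=\min_{e\in E_F} u_F(e)/\min(|S(e)|,\,n-|S(e)|)$. You assert that an Euler-tour-tree-style structure with a priority queue suffices, but this does not follow. When a single forest edge $e_0$ is removed (which is how the decremental forests evolve), the subtree size $|S(e)|$ changes for \emph{every} edge $e$ on the path from $e_0$ to the root; in the hierarchical setting, a weight change $w(v)\gets w(v)+\Delta$ at one vertex likewise changes the ratio for all ancestor edges. Maintaining the minimum of the non-linear quantities $u_F(e)/\min(w(S(e)),\,W-w(S(e)))$ under such path-wide changes is not a standard Euler-tour-tree or top-tree operation, and there is no evident $O(\polylog n)$ solution. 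The paper sidesteps this entirely by a \emph{path-length reduction} trick (borrowed from \cite{Chen:2020aa}): at initialization of each forest $F_i$ it adds $O(j_i)$ extra roots so that every leaf-to-root path has length $\widetilde O(j_{i-1}/j_i)=\widetilde O((n/j)^{1/L})$. Then, after any edge removal or weight update, one simply recomputes the $O((n/j)^{1/L})$ affected ratios directly; this cost is already absorbed by the update-time budget of \Cref{th:main}. Without this idea, your claimed amortized bound for maintaining $\psi_F(H)$ is not established.
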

By setting $j = n^{1/(1+L)}$ to balance both query- and update time, and by pushing the query time in the update time through maintaining a solution during updates, we in particular get the following. 
\begin{corollary}
    For any fixed integer $L \geq 1$, there is a dynamic data structure that w.h.p against an oblivious adversary returns an $\Otil(\log n)^{L+1}$ approximation of the sparsest cut value in $O(\log^{O(L)} n^{\frac{2}{1+L}})$ amortized update time and $O(1)$ query time.
\end{corollary}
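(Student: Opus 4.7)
The approach is to invoke Theorem~\ref{th:main} as a black box: maintain the collection $\cH$ of $O(j)$-trees under edge updates in amortized time $O((n/j)^{2/L}\log^{O(L)} n)$. This guarantees $U_G(S^\ast) \leq \min_{H\in\cH} U_H(S^\ast) \leq \Otil(\log n)^{L}\cdot U_G(S^\ast)$ for the optimum sparsest cut $S^\ast$ of $G$ with high probability (note we only need preservation of the single cut $S^\ast$, not all exponentially many cuts). Since $|\cH|=\log^{O(L)} n$, if I can $O(\log n)$-approximate $\psi(H)$ for each $O(j)$-tree $H=\Tilde{C}\cup F$ in time $\Otil(j^2)$, then reporting the minimum of the per-$H$ estimates meets the query-time bound and yields the claimed $\Otil(\log n)^{L+1}$ approximation for $\psi(G)$.

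The main structural claim I would establish is that $\psi(H)$ is $O(\log n)$-approximated by the minimum of two restricted sparsest-cut values: (a) the sparsest \emph{single-forest-edge cut}, obtained by deleting a single edge of $F$, and (b) the sparsest \emph{core-respecting cut}, in which every non-root vertex lies on the same side of the cut as the root of its forest tree. Given any cut $(S, V\setminus S)$ of $H$ with sparsity $\psi$, consider its projection $\pi(S)$ that sends every vertex in a forest subtree $T_r$ to the side of $r$: if $\min(|\pi(S)|,|V\setminus\pi(S)|)\ge \tfrac{1}{2}\min(|S|,|V\setminus S|)$ then $\pi(S)$ is a core-respecting cut of sparsity $\le 2\psi$, since it cuts only a subset of the core edges already cut by $S$ and no forest edges. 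Otherwise $S$ disagrees substantially with $\pi(S)$ inside some forest subtrees, so the forest edges cut by $S$ account for a constant fraction of $U_H(S)$; an averaging argument combined with the standard $O(\log n)$-gap between the sparsest cut of a tree and its best single-edge cut then produces a single forest edge whose removal yields a cut of sparsity $O(\psi\log n)$.

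With this reduction the two sub-problems become tractable. Sub-problem (a) reduces to scanning the forest edges $e\in F$ and evaluating $u_H(e)/\min(|T_e|,n-|T_e|)$ where $|T_e|$ is the size of the subtree cut off by $e$; the subtree sizes are already tracked by the Euler-Tour structures from Lemma~\ref{thm:ett} that Algorithm~\ref{alg:j-tree-hierarchy} maintains, so each query costs $\Otil(n)$ time and can be driven down by maintaining the per-tree minimum as auxiliary bookkeeping during updates. Sub-problem (b) is equivalent to a sparsest cut in the vertex-weighted graph $\hat{C}$ obtained from $\Tilde{C}$ by assigning each root $r$ node-weight $n_r=|T_r|$: a cut $(S_R,R\setminus S_R)$ of $\hat{C}$ corresponds bijectively to the core-respecting cut of $H$ sending each $T_r$ to the side of $r$, with coinciding edge-capacities and with vertex counts $\sum_{r\in S_R} n_r$ on each side. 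I would invoke R\"acke's $O(\log n)$-approximation sparsest-cut algorithm on $\hat{C}$, which has $\Otil(j)$ vertices and edges, in time $\Otil(j^2)$.

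The principal obstacle lies in the structural claim, particularly the case where the symmetric difference $S\triangle\pi(S)$ is spread across many forest subtrees simultaneously; here one must average over the individual forest subproblems to extract a single global forest edge witnessing sparsity $O(\psi\log n)$, rather than only a per-subtree guarantee. A secondary technicality is maintaining the node weights $n_r$ correctly under the re-initializations of the hierarchy across levels; fortunately, each forest $F_i$ is explicitly maintained by Algorithm~\ref{alg:j-tree-hierarchy}, and augmenting the associated Euler-Tour trees with subtree-size counters incurs only a polylogarithmic overhead per update and per vertex split.
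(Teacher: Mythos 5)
Your overall architecture matches the paper's: reduce $\psi(H)$ for each $H\in\cH$ to (a) the best single-forest-edge cut, maintained dynamically, and (b) a vertex-weighted sparsest cut on the sparsified core with weights $|v^{\downarrow}|$, solved at query time in $\Otil(j^2)$. However, your key structural claim is where the argument breaks. The paper's Lemma on sparsest-cut structure is \emph{exact}: some sparsest cut of $H$ is either $S(e)$ for a forest edge $e$ or of the form $X^{\downarrow}$ for $X\subseteq V(C(H))$, proved by a short exchange argument (if a connected sparsest cut $S$ cuts a forest edge $e$ and is not itself $S(e)$ of equal sparsity, then $S\cup S(e)$ is strictly sparser, a contradiction). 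Your lossy version hinges on the step ``if $\min(|\pi(S)|,|V\setminus\pi(S)|)<\tfrac12\min(|S|,|V\setminus S|)$ then the forest edges cut by $S$ carry a constant fraction of $U_H(S)$,'' which does not follow: the size of the symmetric difference $S\triangle\pi(S)$ says nothing about capacities, since a single forest edge of arbitrarily small capacity can hang an arbitrarily large subtree. The subsequent ``averaging over subtrees plus the standard $O(\log n)$ tree gap'' is exactly the obstacle you flag yourself, and it is not resolved by the sketch. Moreover, even if that case were repaired, your decomposition loses an extra $O(\log n)$ on top of the $\Otil(\log n)^L$ from $\cH$ and the $\Otil(\log n)$ from the core algorithm, giving $\Otil(\log n)^{L+2}$ rather than the claimed $\Otil(\log n)^{L+1}$; the paper avoids this because its dichotomy is exact.

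A secondary gap is the claim that maintaining the per-tree minimum of $\psi_H(S(e))$ costs only polylogarithmic overhead per update. Deleting one forest edge changes $|v^{\downarrow}|$ for every ancestor $v$, hence changes $\psi_H(S(e))$ for every edge on the affected root path, and the minimum of $u_H(e)/\min\{w(S(e)),w(V)-w(S(e))\}$ is not maintainable by standard lazy path-updates because the tracked quantity is a non-linear function of the shifted subtree weights. The paper handles this by enforcing, via extra roots added at initialization, that all forest paths have length $\Otil(n/j)$, and then recomputing the affected sparsities explicitly in $\Otil(n/j)$ time per update (which fits inside the update budget). Your core step (b) is correct and matches the paper's reduction to a vertex-weighted sparsest cut on $\Otil(j)$ edges.
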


\begin{corollary}
    There is a dynamic data structure that w.h.p against an oblivious adversary returns an $O(\polylog (n))$ approximation of the sparsest cut value in $O(n^{0.001})$ amortized update time and $O(1)$ query time.
\end{corollary}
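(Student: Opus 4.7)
My plan is to leverage the collection $\mathcal{H}$ maintained by Theorem~\ref{th:main}: since $\mathcal{H}$ is an $\Otil(\log n)^L$-cut-preserving collection for $G$, and $|\mathcal{H}| = \log^{\atmost{L}} n$, a union bound over all $O(n^2)$ potential sparsest cuts together with the oblivious adversary assumption yields that, whp and at all times, $\psi(G) \leq \min_{H \in \mathcal{H}} \psi(H) \leq \Otil(\log n)^L \cdot \psi(G)$. It therefore suffices to approximately compute $\psi(H)$ for each $H \in \mathcal{H}$ and return the minimum, losing one further approximation factor in the process.

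\textbf{Decomposing sparsest cut on a \(j\)-tree.} Fix $H = C \cup F \in \mathcal{H}$, where $C$ has $O(j)$ vertices and the forest $F$ is rooted at the core vertices; denote by $T_r$ the subtree rooted at $r \in V(C)$. I will prove the structural fact that $\psi(H)$ is within a constant of $\min(\psi_{\mathrm{core}}(H), \psi_{\mathrm{for}}(H))$, where $\psi_{\mathrm{core}}(H)$ is the \emph{node-weighted} sparsest cut on the core with node weights $w(r) := |V(T_r)|$, and $\psi_{\mathrm{for}}(H)$ is the sparsest single-forest-edge cut. The proof idea: for any cut $(S, V\setminus S)$ write $U_H(S) = U_C(S) + U_F(S)$; if $U_C(S) \geq U_H(S)/2$, round $S$ by resolving each subtree split by $S$ to the majority side (losing at most a constant factor in the denominator), obtaining a core-respecting cut; if $U_F(S) \geq U_H(S)/2$, one of the cut forest edges must witness sparsity $\leq \psi(S)$ by an averaging argument over the forest edges it contains.

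\textbf{Computation and maintenance per \(j\)-tree.} Upon a query, $\psi_{\mathrm{core}}(H)$ is approximated to within $O(\sqrt{\log n})$ by running the ARV algorithm on the sparsified core $\Tilde{C}(H)$ equipped with node weights $w(r) = |V(T_r)|$; since $\Tilde{C}(H)$ has $\Otil(j)$ edges and the node weights can be read off in $O(j)$ time by maintaining subtree sizes through the Euler-Tour-Tree data structure of \Cref{thm:ett} on each forest in the chain, this step costs $\Otil(j^2)$ per $j$-tree (and is also where the extra $\sqrt{\log n}$ approximation factor enters). For $\psi_{\mathrm{for}}(H)$, I maintain, per $j$-tree, the minimum value of $u_F(e)/\min(|V(T_1^e)|, |V(T_2^e)|)$ over all forest edges $e$ of $F = F_L \cup \ldots \cup F_1$, where $T_1^e, T_2^e$ are the two components created by cutting $e$. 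Because each $F_i$ in the chain is decremental except for the insertions caused by hierarchical rebuilds, this minimum can be maintained with polylogarithmic overhead per update using an extended 2-level scheme on top of the Euler-Tour trees, since each forest edge update triggers only $O(1)$ changes in the minimum per affected component. The final value returned for $H$ is $\min(\psi_{\mathrm{core}}(H), \psi_{\mathrm{for}}(H))$, and we output the minimum over $H \in \mathcal{H}$.

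\textbf{Wrap-up and main obstacle.} Combining the structural constant-factor loss, the $O(\sqrt{\log n})$ factor from ARV on the core, and the $\Otil(\log n)^L$-preservation of $\mathcal{H}$ yields the claimed $\Otil(\log n)^{L+1}$ approximation whp. The update time is inherited from Theorem~\ref{th:main} plus the polylogarithmic overhead of the forest-side min-maintenance per $j$-tree, and the query time is $O(j^2 \log^{O(L)} n)$ dominated by running ARV on the $\log^{\atmost{L}} n$ sparsified cores. The main obstacle will be formalizing the structural decomposition claim (particularly the rounding step when both $U_C(S)$ and $U_F(S)$ are significant relative to the denominator), and ensuring that the node-weighted sparsest cut on $\Tilde{C}(H)$ genuinely approximates that on $C(H)$ despite edge-sparsification; the latter follows since node weights depend only on the forest, and edge cut capacities are preserved up to a factor of $2$ by \Cref{def:sparsified-core}.
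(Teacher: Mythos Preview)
Your overall plan is exactly the paper's: reduce to computing $\psi(H)$ for each $H\in\cH$, split that into a forest-edge contribution and a core contribution, maintain the former and compute the latter on query via a node-weighted sparsest-cut routine on $\Tilde C(H)$. Where you diverge is in the two hard steps, and in both places your sketch has a real gap.

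\textbf{Structural decomposition.} Your majority-rounding argument does not work as stated. If $S$ touches many subtrees only as a minority, rounding every subtree to its majority side collapses $X$ to $\emptyset$ (or to $V(C)$), so you get no valid core-respecting cut at all; and even when $X$ is nontrivial, moving a root $r$ from one side to the other can change $U_C$ by an unbounded amount, so there is no reason $U_C(X)\le O(U_H(S))$. The paper avoids rounding entirely and proves an \emph{exact} statement: one may assume $H[S]$ is connected, and then if any forest edge $e$ lies in the cut, either $S(e)$ is itself sparsest, or $S\cup S(e)$ has strictly smaller sparsity than $S$---a contradiction. This gives that some sparsest cut is already of the form $S(e)$ or $X^{\downarrow}$, with no approximation loss.

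\textbf{Forest-side maintenance.} Saying ``each forest edge update triggers only $O(1)$ changes in the minimum per affected component'' is not true: deleting one forest edge changes the subtree size $|v^{\downarrow}|$ (and hence the sparsity) of \emph{every} edge on the root-path above the deletion point, which can be $\Omega(n)$ edges. The paper handles this by a path-length reduction: at initialization of each forest $F_i$ it adds $O(j)$ extra roots so that every leaf-to-root path has length $O(n/j)$, and then after any update only one such short path needs its sparsities recomputed, costing $\Otil(n/j)$ per update via Euler-tour trees. This extra preprocessing is essential and missing from your plan.

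Your choice of ARV on the node-weighted core is fine (the paper instead uses a tree-decomposition to get $\Otil(\log n)$ approximation in $\Otil(j^2)$ time); either suffices for the polylog bound in the corollary.
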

Similar to the previous section, by letting $L$ grow as a function of $n$ and setting $L = \Theta(\sqrt{\log n} / \log \log n)$, we also get a datastructure with sub-polynomial update time.
\begin{corollary}
    There is a dynamic data structure that w.h.p against an oblivious adversary returns an $e^{O(\log^{1/2} n \log \log n)}$ approximation of the sparsest cut value in $O(e^{O(\log^{1/2} n \log \log n)})$ amortized update time and $O(1)$ query time.
\end{corollary}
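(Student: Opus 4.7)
}
The plan is to reduce the dynamic sparsest cut of $G$ to the dynamic sparsest cut of the \atmost{j}-trees in the collection $\cH$ maintained by \Cref{th:main}. First, since the sparsity $\psi_G(S) = U_G(S)/\min(|S|,|V\setminus S|)$ depends on $G$ only through the numerator $U_G(S)$ while the vertex set $V$ is shared with every $H \in \cH$, the $\Otil(\log n)^L$-cut-preservation guarantee of \Cref{th:main} directly implies
\[
\psi(G) \;\leq\; \min_{H\in\cH}\psi(H) \;\leq\; \Otil(\log n)^L\cdot\psi(G)
\]
with high probability, after a union bound over the polynomially many cuts that can be optimal during the update sequence. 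It thus suffices, for each $H \in \cH$, to design a data structure that supports an $\Otil(\log n)$-approximate query for $\psi(H)$ within the claimed running time.

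To do this, I exploit the structure of each $H = \widetilde C_L \cup F_L \cup \dots \cup F_1$ from the hierarchy, where $\widetilde C_L$ has $\atmosttilde{j}$ edges and $F := F_L \cup \dots \cup F_1$ is a rooted forest whose root set is exactly $V(\widetilde C_L)$. I would split the search for a sparsest cut of $H$ into two restricted optimization problems: (i) cuts of $H$ that cut only edges of $\widetilde C_L$, and (ii) cuts of $H$ that cut only edges of $F$. The direction $\psi(H) \leq \min(\psi^{(i)}(H), \psi^{(ii)}(H))$ is immediate. For the reverse direction I would prove a structural rounding lemma: given any optimal cut $S^*$, either $U_F(S^*) \geq U_{\widetilde C_L}(S^*)$ and then rounding $S^*$ so that every connected component of $\widetilde C_L$ sits entirely on one side (picking the heavier half of each component) produces a forest-only cut whose numerator is $O(U_F(S^*))$ and whose smaller side has size $\Omega(\min(|S^*|, |V\setminus S^*|))$, or the symmetric core-dominant case applies. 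Taking $S^*$ optimal yields $\psi(H) = \Theta(\min(\psi^{(i)}(H), \psi^{(ii)}(H)))$, so an $\Otil(\log n)$-approximation of each suffices.

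For (i), at query time I run a static $\Otil(\log n)$-approximate sparsest cut algorithm (e.g.\ the tree-embedding result of \cite{Racke:2008aa}) on a vertex-weighted version of $\widetilde C_L$ in which each core vertex $r$ carries weight $|r^{\downarrow}|$ measured in $F$; these weights correctly count the $H$-vertices attached to each root under any core-only cut. Since $\widetilde C_L$ has $\atmosttilde{j}$ vertices and edges, this takes $\poly{j}$ time per tree and $O(j^2 \log^{O(L)} n)$ total after multiplying by $|\cH| = \log^{O(L)} n$. The weights $|r^{\downarrow}|$ are maintained across the hierarchy by augmenting the Euler-Tour tree of \Cref{thm:ett}, at only $\polylog n$ amortized overhead per update to $G$.

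Subproblem (ii) is the main obstacle, since $F$ spans $\Omega(n)$ vertices and must be answered in $O(1)$ amortized time per query. My plan is to extend the two-level scheme of \Cref{thm:2level} by leveraging the tree structure: a standard single-edge rounding argument, in which one picks the cut edge achieving the best individual ratio and discards the remaining cut edges, yields that the optimal forest-only sparsity is, up to an $O(1)$ factor, $\min_{e \in F} u_F(e)/\min(|A_e|, |B_e|)$, where $A_e, B_e$ are the two vertex sets obtained by deleting $e$ from $F$. I would maintain this minimum explicitly by augmenting the Euler-Tour tree of \Cref{thm:ett} at every hierarchy level so that each edge stores a candidate sparsity and a global min-heap tracks the winner; all updates to these structures are charged to the recourse bounds of \Cref{th:main}, adding only a $\polylog n$ factor to the update time. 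The approximation then composes as $\Otil(\log n)^L \cdot \Otil(\log n) = \Otil(\log n)^{L+1}$, and correctness against an oblivious adversary follows by union-bounding the failure probabilities of \Cref{th:main} and the randomized core sparsest-cut subroutine over all queries.
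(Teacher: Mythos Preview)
Your high-level decomposition into ``core-only'' versus ``forest-only'' cuts matches the paper's strategy, and the treatment of the core side (vertex-weighted sparsest cut on $\widetilde C_L$ with weights $|r^{\downarrow}|$) is essentially what the paper does. However, there are two genuine gaps.

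\textbf{The structural rounding.} Your rounding lemma is not correctly specified. Saying ``move every connected component of $\widetilde C_L$ to the heavier side'' does not control the forest edges that are newly cut or uncut when core vertices move, and in degenerate cases the rounded set can collapse to $V$ or $\emptyset$. The paper avoids this by proving an \emph{exact} structural lemma: assuming $H[S^*]$ is connected (which one may always assume for a sparsest cut), if some forest edge $e$ is cut then $S' = S^* \cup S(e)$ has strictly smaller sparsity, a contradiction. Hence the sparsest cut is either $S(e)$ for a single $e \in E_F$ or $X^{\downarrow}$ for some $X \subseteq V(\widetilde C_L)$, with no approximation loss at this step.

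\textbf{Maintaining $\min_{e \in F} \psi_H(S(e))$.} This is where the main idea is missing. Storing each edge's sparsity explicitly and tracking the minimum with a heap does not run in the claimed time: when a forest edge $(u,v)$ is deleted (with $v$ the parent), the subtree size $|S(e')|$ changes for \emph{every} edge $e'$ on the $v$-to-root path, so $\Omega(\text{path length})$ stored sparsities must be updated. Without further structure this can be $\Omega(n)$, and since $u_F(e)/\min(|S(e)|,\,n-|S(e)|)$ is not linear in $|S(e)|$ there is no standard lazy-propagation trick that yields the global minimum in polylog time. The paper's key fix is a \emph{path-length reduction}: at initialization of each forest $F_i$ one adds $O(j_i)$ extra roots so that every leaf-to-root path has length $O(n_i/j_i)$, where $n_i = |V(\widetilde C_{i-1})|$. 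Since the forests are decremental, paths only shorten, and each edge deletion now affects only $O(n_i/j_i) = O((n/j)^{1/L})$ sparsities, which fits inside the update budget. Charging ``to the recourse bounds of \Cref{th:main}'' does not cover this cost, because recourse bounds count edge insertions/deletions in the maintained graphs, not the number of edges whose stored sparsity value changes.
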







\section{Acknowledgement}

Monika Henzinger: Funded by the European union. Views and opinions expressed are however those of the author(s) only and do not necessarily reflect those of the European Union or the European Research Council Executive Agency. Neither the European Union nor the granting authority can be held responsible for them.

\noindent
\parbox[b]{0.65\textwidth}{
\setlength{\parindent}{18pt}    
\indent This project has received funding from the European Research Council (ERC) under the European Union's Horizon 2020 research and innovation programme (MoDynStruct, No. 101019564) and the Austrian Science Fund (FWF) grant  \href{https://www.doi.org/10.55776/I5982}{DOI 10.55776/I5982}.
For open access purposes, the author has applied a CC BY public copyright license to any author-accepted manuscript version arising from this submission.
}
\parbox[b]{0.3\textwidth}{
\hfill    \null\vfill \centering \includegraphics[width=1.1\linewidth]{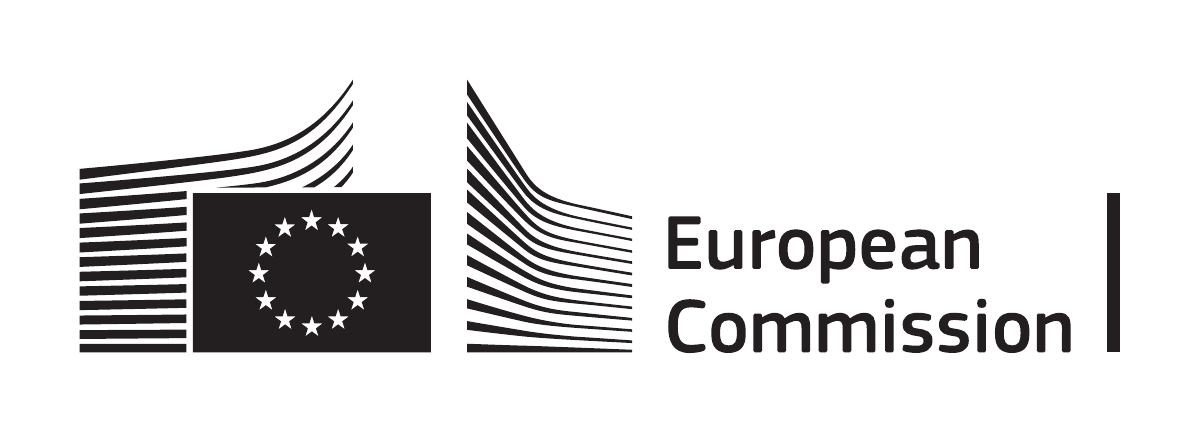} \null\vfill
}

Peter Kiss: This research was funded in whole or in part by the Austrian Science Fund (FWF) 10.55776/ESP6088024.

\newpage
\appendix
\appendix
\section{Proofs of auxiliary lemmas}\label{appendix:splitefficient}
We start with a proof of \Cref{lma:splitefficient}.
\splitefficient*
\begin{proof}
Let $G^{(t)}$ be the graph at time $t$ and consider the following potential function
\[
\Phi^{(t)} := \sum_{v \in V(G^{(t)})} \deg_{G^{(t)}}(v) \log \deg_{G^{(t)}}(v).
\]
Clearly, upon initialization $\Phi^{(0)} \leq 2m \log n$, and if $G^{(t+1)}$ differs from $G^{(t)}$ by an edge insertion/deletion, we have $\Phi^{(t+1)} \leq \Phi^{(t)} + 2\log n$. Thus, the total increase of the potential over all edge updates is bounded by  $2m \log n$. Consider now a vertex split of $v$ into $u$ and $w$. Then the potential function decreases as follows:
\begin{align*}
& \Phi^{(t)} - \Phi^{(t+1)} = \deg_{G^{(t)}}(v) \log \deg_{G^{(t)}}(v) - \deg_{G^{(t+1)}}(u) \log \deg_{G^{(t+1)}}(u) - \deg_{G^{(t+1)}}(w) \log \deg_{G^{(t+1)}}(w) \\
& = \deg_{G^{(t+1)}}(u) \cdot (\log \deg_{G^{(t)}}(v) - \log \deg_{G^{(t+1)}}(u)) + \deg_{G^{(t+1)}}(w) \cdot (\log \deg_{G^{(t)}}(v) - \log \deg_{G^{(t+1)}}(w)) \\
& \geq \deg_{G^{(t+1)}}(u) \cdot (\log \deg_{G^{(t)}}(v) - \log \deg_{G^{(t+1)}}(u)) \\
& \geq \deg_{G^{(t+1)}}(u) \cdot (\log 2\deg_{G^{(t+1)}}(u) - \log \deg_{G^{(t+1)}}(u)) \\
& = \Omega(\deg_{G^{(t+1)}}(u))
\end{align*}
Here we used that $\deg_{G^{(t)}}(v) = \deg_{G^{(t+1)}}(u) + \deg_{G^{(t+1)}}(w)$ and that without loss of generality we may assume that $\deg_{G^{(t+1)}}(u) \leq \deg_{G^{(t+1)}}(v)/2$. 

Now note that we may simulate the vertex split of $v$ by first deleting the edges of $v$ that would be incident on $u$  and then inserting them incident on an isolated vertex. This requires $O(\deg_{G^{(t+1)}}(u)) = O(\Phi^{(t)} - \Phi^{(t+1)})$ edge updates. Since the potential function always satisfies $\Phi \geq 0$, initially satisfies $\Phi^{(0)} \leq 2m \log n$, and can increase through edge insertions by at most $2 m \log n$ over time, all vertex splits can be simulated through $O(m \log n)$ edge updates.
\end{proof}

\section{Counterexample to Tree Embeddability}\label{appendix:madry-almost-tight}
Here, we present a well-known argument that there exist sparse graphs $G$ that require $k = \Omega(n/\alpha)$ spanning trees $T_i \subseteq G$ in order to satisfy embeddability of their average $k^{-1} \sum_{i=1}^k T_i \preceq_{\alpha} G$. This also shows that in \Cref{thm:jtree} and \Cref{thm:madry}, the results are almost tight.

The construction is based on constant degree expanders. We remind the reader that a $d$-regular graph $G = (V, E)$ on $n$ vertices is called a $\delta$-expander if, for any set $S \subseteq V$ with $|S| \leq n/2$, we have $|E(S, V \setminus S)| \geq \delta \cdot d \cdot |S|$.
\begin{theorem}[\cite{pinsker1973concentrator}]
    There exists a fixed $\delta > 0$, such that for any $d \geq 3$ and even integer $n$, there is a $d$-regular $\delta$-expander $G$ on $n$ vertices.
\end{theorem}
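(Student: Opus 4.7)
The plan is to establish this by the probabilistic method, showing that a uniformly random $d$-regular graph on $n$ vertices is a $\delta$-expander with positive probability for an appropriate absolute constant $\delta > 0$. I would work in the permutation model: for $d$ even, sample $d/2$ independent uniform permutations $\pi_1, \ldots, \pi_{d/2}$ of $[n]$ and take the edge set $\bigcup_k \{\{i, \pi_k(i)\} : i \in [n]\}$; for odd $d$, append a uniformly random perfect matching, which exists since $n$ is even. The result is a $d$-regular multigraph, and since expansion only improves after collapsing parallel edges and deleting self-loops, it suffices to certify $\delta$-expansion in the multigraph.

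First I would fix $S \subseteq V$ with $|S| = s \le n/2$ and control $\Pr[|E(S, V \setminus S)| < \delta d s]$. For a single permutation $\pi_k$, the contribution of edges leaving $S$ equals $|\{i \in S : \pi_k(i) \notin S\}|$, which is hypergeometrically distributed with mean $s(n-s)/n \ge s/2$. A Chernoff-type tail bound for hypergeometric variables, combined across the $d/2$ independent permutations, yields
\[
\Pr[|E(S, V \setminus S)| < \delta d s] \le \exp(-c\, d\, s)
\]
for a universal constant $c > 0$, provided $\delta$ is chosen as a sufficiently small absolute constant.

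Second, I would take a union bound over the $\binom{n}{s} \le (en/s)^s$ subsets of size $s$ and over $s = 1, \ldots, n/2$, bounding the total failure probability by
\[
\sum_{s=1}^{n/2} \exp\bigl(s \log(en/s) - c\, d\, s\bigr).
\]
The main obstacle is the small-$s$ regime, where $\log(en/s)$ can be as large as $\log n$ and swamp the linear term $cd$ when $d = 3$. For $d$ above an absolute threshold $d_0 = O(1/\delta)$, the exponent is uniformly negative in $s$ and the whole sum is bounded by a constant strictly below $1$, immediately giving the theorem. For the boundary case $d = 3$, the standard patch is a direct configuration-model calculation: the expected number of short cycles in a random $d$-regular graph is $O(1)$, so with positive probability every small set $S$ (say $|S| \le c' \log n$) spans no cycle and therefore has $(d-1)|S|+1$ distinct neighbors, which trivially provides $\delta$-expansion at small scales; the generic Chernoff plus union-bound argument then handles the complementary regime $|S| = \Omega(\log n)$. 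Combining the two regimes establishes existence of the desired $\delta$-expander for every $d \ge 3$.
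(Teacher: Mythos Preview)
The paper does not prove this statement; it merely cites it as a classical result from \cite{pinsker1973concentrator} and uses it as a black box. So there is no proof in the paper to compare your proposal against.

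Your overall strategy---the probabilistic method via random $d$-regular graphs, union of permutations for even $d$ plus a perfect matching for odd $d$, hypergeometric tail bounds, and a union bound over subsets---is exactly the classical route and is sound for $d$ above some absolute constant.

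However, your patch for the boundary case $d=3$ has a gap. You write that ``the expected number of short cycles in a random $d$-regular graph is $O(1)$, so with positive probability every small set $S$ (say $|S| \le c'\log n$) spans no cycle.'' The expected number of cycles of length $\ell$ in a random $d$-regular graph is roughly $(d-1)^\ell/(2\ell)$, which is $O(1)$ only for bounded $\ell$; it is not $O(1)$ summed over $\ell$ up to $c'\log n$, and so you cannot conclude that sets of size up to $c'\log n$ are acyclic with positive probability. The standard fix is not a girth argument but a direct configuration-model count: for a fixed set $S$ of size $s$, bound the probability that at least $(1-\delta)ds/2$ of the $ds$ half-edges emanating from $S$ are matched internally, and verify that this probability beats $\binom{n}{s}$ for all $s \le n/2$ when $\delta$ is a sufficiently small absolute constant. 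That calculation goes through uniformly for all $d \ge 3$.
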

We now show the following.
\begin{theorem}
    Let $G = (V, E)$ be a $d$-regular $\delta$-expander, and $T_1, \ldots, T_k \subseteq G$ be spanning trees of $G$. Let $\alpha \geq 1$ and suppose that 
    \[
    \frac{1}{k} \sum_{i=1}^k T_i \preceq_{\alpha} G.
    \]
    Then $k  \geq \delta n / (2 \alpha)$.
\end{theorem}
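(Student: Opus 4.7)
The plan is to derive the bound by applying the cut-preservation consequence of the embedding hypothesis at a single, carefully chosen singleton cut. Assuming each $T_i$ is equipped with its induced capacities $u^{T_i}(e) = |E_G(A_e, V\setminus A_e)|$ (where $A_e, V \setminus A_e$ are the two components of $T_i - e$), so that $G \preceq T_i$ canonically as in \Cref{sec:madry}, \Cref{lma:cutpreserve} applied to $\tfrac{1}{k}\sum_{i=1}^k T_i \preceq_\alpha G$ gives, for every vertex $v \in V$,
\[
\frac{1}{k}\sum_{i=1}^k U_{T_i}(\{v\}) \;\leq\; \alpha\, U_G(\{v\}) \;=\; \alpha d,
\]
using that $G$ is $d$-regular with unit capacities. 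Dropping the non-negative contributions of all but one tree yields the per-tree bound $U_{T_i}(\{v\}) \leq \alpha d k$ for any choice of index $i$ and vertex $v$.

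The second ingredient is the choice of $v$: for each tree $T_i$, pick a \emph{centroid} vertex $v_i^*$, i.e., a vertex such that every subtree of $T_i$ hanging off $v_i^*$ has size at most $n/2$; every tree admits such a vertex. Let $e_1, \ldots, e_r$ be the tree edges of $T_i$ incident to $v_i^*$ and let $s_j$ be the size of the subtree attached via $e_j$, so that $\sum_{j=1}^r s_j = n - 1$ and $s_j \leq n/2$ by the centroid property. The $\delta$-expansion of $G$ then forces
\[
u^{T_i}(e_j) \;=\; |E_G(A_{e_j}, V \setminus A_{e_j})| \;\geq\; \delta d \min(s_j, n-s_j) \;=\; \delta d s_j,
\]
and summing over the $r$ tree edges at $v_i^*$ gives
\[
U_{T_i}(\{v_i^*\}) \;=\; \sum_{j=1}^r u^{T_i}(e_j) \;\geq\; \delta d \sum_{j=1}^r s_j \;=\; \delta d (n-1).
\]

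Combining the two displays for a fixed $i$ with the choice $v = v_i^*$ yields $\delta d (n-1) \leq \alpha d k$, which rearranges to $k \geq \delta(n-1)/\alpha \geq \delta n/(2\alpha)$ for $n \geq 2$, as claimed. The only subtlety to handle carefully in a full write-up is the book-keeping around tree capacities: one must spell out that the trees are being viewed with their induced capacities (the natural convention for induced $1$-trees used elsewhere in the paper) so that $U_{T_i}(\{v\}) = \sum_{e \ni v,\, e \in T_i} u^{T_i}(e)$, and that a centroid of each $T_i$ in the tree-theoretic sense always exists. Both are immediate, and no further averaging or packing arguments over cuts or trees are required.
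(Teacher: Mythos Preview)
Your proof is correct and rests on the same core idea as the paper's: pick a centroid (the paper says ``vertex-separator'') of a single tree $T_i$ and use the $\delta$-expansion of $G$ to lower-bound the induced tree capacities there. The execution differs slightly. The paper singles out one subtree of size at least $n/(2d)$ at the separator, obtaining a single edge $e$ with $u^{T_i}(e) \geq \delta n/2$, and then concludes that $H \preceq_\alpha G$ forces $u_H(e) \leq \alpha$; this last step implicitly treats the embedding of $H$ into $G$ as the identity on that edge. You instead sum over \emph{all} subtrees hanging off the centroid to get $U_{T_i}(\{v_i^*\}) \geq \delta d(n-1)$, and compare against the singleton-cut bound $U_H(\{v_i^*\}) \leq \alpha U_G(\{v_i^*\}) = \alpha d$ coming from \Cref{lma:cutpreserve}. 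Your route is a touch more careful, since it derives the inequality purely from the cut-preservation lemma without any assumption on the form of the embedding; the paper's route is slightly shorter but leans on the identity embedding being the natural one in this setting.
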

\begin{proof}
    Let $T \subseteq V$ be a spanning tree of $G$, and let $v \in V$ be a vertex-separator of $T$, i.e., a vertex $v$ such that upon removal, all resulting sub-trees have size at most $n/2$. As $G$ has maximum degree $d$ (and consequently so does $T \subseteq G$), there must exist a resulting sub-tree $T'$ with $|T'| \geq  n/(2d)$. But then, as $G$ is a $\delta$-expander, we know that $E(T', V \setminus T') \geq \delta d \cdot n /(2d)$. Any embedding $\f$ of $G$ into $T$ thus must route $\delta n /2$ units of flow along the edge connecting $T'$ to $v$. Consequently, there exists an edge $e \in T$ with $u^T(e) \geq \delta n/2$. 

    In particular, if $H := \frac{1}{k} \sum_{i=1}^k T_i$, it is the case that for the given edge $e$, $u_{H}(e) \geq \delta n/(2k)$. Consequently, in order to have $H \preceq_{\alpha} G$, we need to have $\delta n/(2k) \leq \alpha$, i.e., $k \geq \delta n/(2\alpha)$. 
\end{proof}

\begin{remark}
    A similar argument shows that if $G$ is a $d$-regular $\delta$-expander and $H$ is an induced $j$-tree of $G$, there must exist an edge with $u_H(e) \geq \Omega(\delta n /  j)$, so that $k = \Omega(\delta n / (j \alpha))$ many induced $j$-trees are required in order for their average to $\alpha$-embed back into $G$.
\end{remark}


\section{Proof of \Cref{thm:madry}}\label{appendix:madryproof}
Here, we want to provide a complete proof of \Cref{thm:madry}. We believe the proof is somewhat simpler than the one provided by Madry in \cite{Madry:2010ab}. The main difference in our proof is that we can use a more standard application of the MWU framework by directly bounding the \emph{width} of the instance. This requires introducing a new regularization term in the Low-Stretch-Spanning-Tree based approach of Madry. 

We remind the reader that for a spanning tree $T$ of $G$ and edge $e \in E_G$, we defined $\fcong_G^T(e) = u^T(e) / u_G(e)$ if $e \in E_T$ and $\fcong_G^T(e) = 1$ otherwise. For convenience of the reader, we now re-state the theorem we will prove. 
\madry*
\subsection{Application of MWU}
In this section, we will show the proof of \Cref{thm:madry} conditioned on the following result, which we will prove in the next section.
\begin{restatable}{lemma}{MWUlemma}\label{lma:MWUlemma}
Let $G$ be a graph, $j$ be a parameter and $w \in \R_{>0}^E$ be an arbitrary weight vector. Then in time $\Otil(m)$ one can find a spanning tree $T$ of $G$ and a set of edges $S \subseteq E_T$ satisfying $|S| \leq j$ such that
\begin{enumerate}
    \item $\sum_{e \in E_G} w(e) \cdot \fcong_G^T(e) \leq \Otil(\log n) \cdot \norm{w}_1$, and
    \item $\max\limits_{e \not \in S} \fcong_G^T(e) \leq \Otil(\log n) \cdot m/j$.
\end{enumerate}
\end{restatable}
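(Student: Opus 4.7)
The plan is to reduce the lemma to a single application of a weighted low-stretch spanning tree (LSST) algorithm applied to a \emph{regularized} weight vector, followed by a simple Markov-type extraction of the set $S$. Concretely, I will define $\hat{w}_e := w_e + \norm{w}_1/m$ for every $e \in E_G$. This regularization preserves $w_e \leq \hat{w}_e$ while establishing a uniform lower bound $\hat{w}_e \geq \norm{w}_1/m$ and total mass $\norm{\hat{w}}_1 = 2\norm{w}_1$. I will then invoke the weighted LSST of Abraham-Bartal-Neiman / Abraham-Neiman~\cite{AbrahamBN08,Abraham:2019aa} on $G$ with lengths $\ell_e := \hat{w}_e/u_G(e)$ and edge weights $\hat{w}_e$, obtaining in $\Otil(m)$ time a spanning tree $T$ satisfying
\[
\sum_{e \in E_G} \hat{w}_e \cdot \str_T^{(\ell)}(e) \;\leq\; \Otil(\log n)\norm{\hat{w}}_1,
\]
where $\str_T^{(\ell)}(e) = \text{len}_T^{(\ell)}(e)/\ell_e$ denotes the stretch of $e$ with respect to $\ell$.

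The first key step is the algebraic identity
\[
\sum_{e \in E_G} \hat{w}_e \fcong_G^T(e) \;=\; \norm{\hat{w}}_1 \;+\; \sum_{e \notin T} \hat{w}_e \cdot \str_T^{(\ell)}(e),
\]
which I would prove by expanding $\fcong_G^T(e) = u^T(e)/u_G(e)$ for tree edges, substituting $u^T(e) = \sum_{e' : e \in T[e']} u_G(e')$, swapping the order of summation, and separating the contribution of tree vs.\ non-tree edges on the outer sum. Combined with the LSST guarantee above, this yields $\sum_e \hat{w}_e \fcong_G^T(e) \leq \Otil(\log n)\norm{w}_1$, and Property~$1$ of the lemma then follows immediately from the pointwise inequality $w_e \leq \hat{w}_e$.

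For Property~$2$, I will exploit the uniform lower bound $\hat{w}_e \geq \norm{w}_1/m$, which gives
\[
\frac{\norm{w}_1}{m}\sum_{e \in E_G} \fcong_G^T(e) \;\leq\; \sum_{e} \hat{w}_e \fcong_G^T(e) \;\leq\; \Otil(\log n)\norm{w}_1,
\]
so that $\sum_e \fcong_G^T(e) \leq \Otil(\log n) \cdot m$. By Markov's inequality, the number of tree edges with $\fcong_G^T(e) > \tau := \Otil(\log n) \cdot m/j$ is at most $j$ for a suitably large polylogarithmic factor hidden in $\Otil$. Taking $S$ to be exactly this set of high-congestion tree edges gives $|S| \leq j$, and since non-tree edges trivially satisfy $\fcong_G^T(e) = 1 \leq \tau$, we obtain $\max_{e \notin S}\fcong_G^T(e) \leq \tau$, as required. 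Extracting $S$ by sorting the tree edges costs an additional $\Otil(m)$ time.

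The main technical obstacle I anticipate is verifying that the LSST machinery produces the \emph{weighted} stretch guarantee $\sum_e d_e \str_T^{(\ell)}(e) \leq \Otil(\log n)\norm{d}_1$ for arbitrary positive weights $d$ and lengths $\ell$, since the classical Abraham-Neiman statement is phrased for unit weights. This will either follow from a direct extension of their construction, or---since the regularization keeps the dynamic range of $\hat{w}$ within a polynomial factor---from a black-box reduction to the unweighted case by formally duplicating edges proportionally to $\hat{w}_e$ after a common rescaling. Either route introduces at most an extra polylogarithmic overhead, which is absorbed into the $\Otil(\log n)$ factor of both properties.
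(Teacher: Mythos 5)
Your proposal is correct and follows essentially the same route as the paper: the paper likewise regularizes the weights by adding $\norm{w}_1/m$, runs a weighted low-stretch spanning tree with lengths $(w(e)+\norm{w}_1/m)/u_G(e)$ (obtaining the weighted stretch guarantee by exactly the edge-duplication reduction you mention), converts stretch to congestion by swapping the order of summation, and extracts $S$ via the same Markov-type argument from the uniform lower bound $\norm{w}_1/m$. The only nit is that your "algebraic identity" is really an inequality (since $d_G^\ell(u,v)\leq \ell_e$ rather than equality for non-tree edges), but it holds in the direction you need.
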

\begin{proof}[Proof of \Cref{thm:madry}]
    We follow the proof of Lemma $6.6$ in \cite{Chen:2023aa} closely, and construct the trees sequentially by a standard application of the Multiplicative Weights Update \cite{arora2012multiplicative}. 
    
    Let $\gamma \in \Otil(\log n)$ be such that the trees from \Cref{lma:MWUlemma} satisfy $\sum_{e \in E_T} w(e) \frac{u^T(e)}{u_G(e)} \leq \gamma \cdot \norm{w}_1$ and $\max_{e \in E_T \setminus S} \frac{u^T(e)}{u_G(e)} \leq \gamma \cdot m/j$. Let $k = 10 \gamma m/j = \Otil(\log n) m/j$. We initially set $w_0 \in \R^E$ as $w_0 = \vecone^E$ and compute a tree $T_0$ and set $S_0$ using \Cref{lma:MWUlemma} with respect to this weight function. 
    
    Afterwards for $t \in [k]$ iterations we define $w_{t}(e) = \exp(\frac{1}{k} \mathbf{1}\{e \not \in S_{t-1}\} \fcong_G^{T_{t-1}}(e))$ for all $e \in E_G$ and use \cref{lma:MWUlemma} to obtain tree $T_t$ and edge set $S_t$ using weights $w_t$.
    

    Observe that for all $t \in [k]:$
    \begin{align*}
        \max_{e \in E_G} \frac{1}{k} \sum_{i=0}^{t-1} \mathbf{1}\{e \not \in S_i\} \fcong_G^{T_i}(e) \leq \log \rbrack{\sum_{e \in E_G} \exp\rbrack{\frac{1}{k}\sum_{i=0}^{t-1} \mathbf{1}\{e \not \in S_i\} \fcong_G^{T_i}(e)}} = \log \norm{w_{t}}_1
    \end{align*}

    By our choice of $\gamma$ we have that for all $t \in [k]$:

    \begin{align*}
        \frac{1}{k}\mathbf{1}\{e \not \in S_t\} \fcong_G^{T_t}(e) \leq \gamma m / (jk) \leq 1/10
    \end{align*}

    Hence, we can conclude that for all $t \in [k]:$
    \begin{align*}
        \norm{w_{t}}_1 &= \sum_{e \in E_G} w_{t-1}(e) \exp\rbrack{\frac{1}{k}\mathbf{1}\{e \not \in S_{t-1}\} \fcong_G^{T_{t-1}}(e)} \\
        &\leq \sum_{e \in E_G} w_{t-1}(e) \rbrack{1+\frac{2}{k}\mathbf{1}\{e \not \in S_{t-1}\} \fcong_G^{T_{t-1}}(e)} \\
        & = \norm{w_{t-1}}_1 + \frac{2}{k} \sum_{e \in E_G} \mathbf{1}\{e \not \in S_{t-1}\} \fcong_G^{T_{t-1}}(e) \\
        & \leq (1 + 2 \gamma / k) \norm{w_{t-1}}_1
    \end{align*}


    Here we used that $\exp(x) \leq 1+ 2x$ for $x \leq 1/10$ and the second item of \Cref{lma:MWUlemma}.

    Finally, we can conclude that:
    
    \begin{align*}
        \exp \rbrack{\max_{e \in E_G} \frac{1}{k} \sum_{i=1}^k \mathbf{1}\{e \not \in S_i\} \fcong_G^{T_i}(e)} \leq \norm{w_{k}}_1 \leq (1+2 \gamma /k)^k \norm{w_1}_1 \leq \exp(2 \gamma) m
    \end{align*}

    Taking the logarithm of both sides concludes the lemma.
     
\end{proof}

\subsection{Proof of \Cref{lma:MWUlemma}}
The only building-block that our algorithm will require is a result on Low-Stretch-Spanning-Trees, which we describe now.

\paragraph{Low Stretch Spanning Trees} Let $G = (V, E)$ be a graph with a length vector $l \in \R_{> 0}^E$ on its edges. Let $T$ be a spanning tree of $G$, and let $d^l_G, d_T^l$ denote the shortest path metric induced by $l$ on $G$ and $T$, respectively. Of course, $d^l_G(u, v) \leq d^l_T(u,v)$ for all $u,v$. Define the stretch $\str_T^l(e)$ of an edge $e = (u, v) \in E_G$ as 
\[
\str_T^l(e) := \frac{d^l_T(u, v)}{d^l_G(u, v)}.
\]
Then the following theorem shows that one can efficiently compute a spanning tree $T$ for which the stretch \emph{on average} is a rather small factor of $\Otil(\log n)$.
\begin{theorem}[\cite{Abraham:2019aa}]\label{thm:LSST}
    Given a (multi-)graph $G = (V, E)$ and a length vector $l \in \R_{> 0}^E$, there is an algorithm that computes in $\Otil(m)$ time a spanning tree $T$ of $G$ such that,
    \[
    \frac{1}{m}\sum_{e \in E} \str_T^l(e) \leq \Otil(\log n).
    \]
\end{theorem}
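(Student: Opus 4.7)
The plan is to prove Theorem~\ref{thm:LSST} via the petal-decomposition framework of Abraham--Bartal--Neiman, implemented in near-linear time as in Abraham--Neiman. The algorithm is recursive. At each level it receives a connected weighted subgraph $X \subseteq G$ together with a designated entry vertex $x_0 \in X$, and carves $X$ into a central sub-cluster around $x_0$ together with a sequence of \emph{petals}, each a connected piece grown outward from a separate boundary point. The recursion is then applied inside the central sub-cluster and inside each petal, yielding spanning trees of each piece. These are finally glued together by adding, per petal, a single \emph{connector} edge of $G$ bridging the central piece to the petal. The base case is a singleton cluster, whose spanning tree is trivial.

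The petal carving is the key randomized step. For each boundary vertex $y$ picked as a petal root, we choose a cutoff radius $r_y$ from a suitably-tuned exponential distribution on a window of width $\Theta(\Delta / \log n)$, where $\Delta$ is the weighted diameter of the current cluster $X$, and define the petal rooted at $y$ as the set of vertices whose shortest-path distance from $y$ inside the remaining (uncarved) part of $X$ is at most $r_y$. A standard Bartal-style calculation then shows that for any edge $e = (u, v) \in E(X)$, the probability that $u$ and $v$ land in different petals at this level is $O(l(e) \log n / \Delta)$. If $e$ is \emph{not} separated at this level, its stretch is controlled by the recursive call on the petal containing it; if it \emph{is} separated, then its tree path in $T$ detours through one connector, the central sub-tree, and another connector, incurring stretch $O(\Delta)$. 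The expected per-edge stretch charged at this level is thus $O(l(e) \log n)$. Since each petal has weighted diameter at most a constant fraction of $\Delta$, the recursion has depth $O(\log n)$, and summing over levels yields expected total stretch $\Otil(m \log n)$, i.e.\ expected average stretch $\Otil(\log n)$ as required.

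The main obstacle, and the essential content of the near-linear-time implementation, is realizing all the ball-growings in total time $\Otil(m)$ across all $O(\log n)$ recursion levels. I would use two ingredients. First, each petal carving is implemented as a truncated Dijkstra search around its root $y$ that halts as soon as the cutoff radius $r_y$ is reached, so it pays only for the vertices and edges that actually end up in that petal. Second, the exploration cost of a petal is amortized against the vertices that the petal permanently removes from the live subproblem: each vertex is removed at most once per level, so the total work at any single recursion level is $O(m_\ell \log m_\ell)$ in the active edge count $m_\ell$. Combining this with the fact that an edge stays active for at most $O(\log n)$ levels (since once both endpoints are separated into different petals its cluster shrinks) gives total time $\Otil(m)$. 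Because the theorem asks for a bound on the \emph{average} stretch, the expected guarantee from the randomized carving suffices by linearity of expectation; if a deterministic bound is desired, one can run $O(\log n)$ independent constructions and output the tree of smallest total stretch, which preserves the $\Otil(m)$ running time.
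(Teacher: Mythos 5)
The paper does not actually prove \Cref{thm:LSST}: it is imported verbatim from Abraham and Neiman \cite{Abraham:2019aa} and used purely as a black box (in \Cref{appendix:madryproof} it feeds into \Cref{lma:LSSTweighted} and thence \Cref{thm:madry}), so there is no internal proof to compare you against. What you have written is an attempt to re-derive the cited result, and it does follow the same petal-decomposition route as \cite{Abraham:2019aa}; however, as a proof sketch it has two genuine gaps.

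First, the depth and activity claims. You assert the recursion has depth $O(\log n)$ and that ``an edge stays active for at most $O(\log n)$ levels,'' justified only by geometric shrinkage of the diameter. For arbitrary lengths $l \in \R_{>0}^E$ this gives a number of levels of order $\log\rbrack{\Delta_0 / d_G(u,v)}$, i.e.\ governed by the spread, which need not be polynomial in $n$; your per-level expected charge of $O(\log n)$ then only yields average stretch $O(\log n \cdot \log(\mathrm{spread}))$, and the same issue breaks the $\Otil(m)$ running-time accounting. Closing it needs the standard scale-reduction step (contracting, at each scale, edges of length at most $\Delta/\poly{n}$, or bucketing edges by length so that each edge is at risk in only $O(\log n)$ scales), which you never mention. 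Second, and more centrally, the claim that a cut edge pays tree distance $O(\Delta)$ silently assumes that the tree built recursively inside a cluster of diameter $\Delta$ has radius $O(\Delta)$. That invariant is the technical heart of the petal decomposition: a constant-factor radius blow-up per level would compound multiplicatively over the recursion depth, so the construction must guarantee a per-level radius increase of $1+O(1/\log n)$ (your choice of radius windows of width $\Theta(\Delta/\log n)$ is aimed at exactly this, together with the special treatment of the first petal in \cite{Abraham:2019aa}), and establishing it is most of the work in the cited paper; you state the consequence but neither formulate nor prove the invariant. Since this paper uses the theorem as a black box, re-proving it is unnecessary, but a self-contained proof would have to supply these two ingredients. (Minor point: your ``expected per-edge stretch charged at this level is $O(l(e)\log n)$'' is an expected tree-distance, not a stretch; the intended bound is $O(\log n)$ stretch per level.)
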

From this theorem, it is well known how to show the following.
\begin{lemma}\label{lma:LSSTweighted}
    Given a (multi-)graph $G = (V, E)$, a length vector $l \in \R_{> 0}^E$ and a weight vector $w \in \R_{> 0}^E$, there is an algorithm that computes in $\Otil(m)$ time a spanning tree $T$ of $G$ such that,
    \[
    \sum_{e \in E} w(e)\str_T^l(e) \leq \Otil(\log n) \cdot \norm{w}_1.
    \]
\end{lemma}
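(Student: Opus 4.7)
The plan is to deduce the weighted bound from the unweighted bound of \Cref{thm:LSST} by a black-box edge-replication (multigraph) argument, using rounding and bucketing to keep the multigraph of size $O(m)$.

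First, I would normalize and round the weights. Let $W = \|w\|_1$. Define $w'(e) := \max\{w(e), W/m\}$. Rounding tiny weights up to the average $W/m$ at most doubles the total weight ($\|w'\|_1 \le 2W$) while ensuring $w'(e) \ge W/m$ for every edge. Since the target inequality is monotone in the weights, it suffices to establish it for $w'$ instead of $w$.

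Next, I would construct a multigraph $G' = (V, E')$ whose edge multiplicities are proportional to $w'$. Concretely, for each $e \in E$ set $k(e) := \lfloor \log_2(w'(e) \cdot m/W) \rfloor$, so that $2^{k(e)} \le w'(e) m/W < 2^{k(e)+1}$ and $k(e)\ge 0$. Create $G'$ by replicating each edge $e$ exactly $2^{k(e)}$ times, assigning every copy the same length $l(e)$. The total edge count is
\[
|E(G')| \;=\; \sum_{e \in E} 2^{k(e)} \;\le\; \sum_{e\in E} \frac{w'(e) \, m}{W} \;\le\; 2m,
\]
so $G'$ is a multigraph on $n$ vertices with $O(m)$ edges. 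Crucially, the shortest-path metric on $G'$ with length function $l$ equals the shortest-path metric on $G$ with length function $l$, since $G'$ differs from $G$ only in edge multiplicities (all copies share the original length). Consequently, for any spanning tree $T$ of $G'$ and any copy $e'$ of the original edge $e$, we have $\str_T^l(e') = \str_T^l(e)$.

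Now I would invoke \Cref{thm:LSST} on $G'$ with length vector $l$. In $\tilde O(|E(G')|) = \tilde O(m)$ time it returns a spanning tree $T$ of $G'$ (which naturally identifies with a spanning tree of $G$) satisfying
\[
\sum_{e' \in E(G')} \str_T^l(e') \;\le\; \tilde O(\log n) \cdot |E(G')| \;\le\; \tilde O(\log n)\cdot m.
\]
Grouping copies by their underlying original edge and using $\str_T^l(e') = \str_T^l(e)$ yields $\sum_{e\in E} 2^{k(e)} \str_T^l(e) \le \tilde O(m \log n)$. Finally, from $2^{k(e)} \ge w'(e)\, m/(2W)$ we get
\[
\sum_{e \in E} w(e)\, \str_T^l(e) \;\le\; \sum_{e \in E} w'(e)\, \str_T^l(e) \;\le\; \frac{2W}{m}\cdot \tilde O(m \log n) \;=\; \tilde O(\log n)\cdot \|w\|_1,
\]
which is the desired bound. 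The only subtle step is the rounding in Step 1: handling tiny weights via $w'(e) \ge W/m$ is what guarantees that all bucket indices $k(e)$ are nonnegative and, equivalently, that the multigraph $G'$ has only $O(m)$ edges — without which the running time of the black-box call to \Cref{thm:LSST} would blow up. Everything else is straightforward bookkeeping.
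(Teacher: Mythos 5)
Your proposal is correct and follows essentially the same route as the paper: replicate each edge with multiplicity proportional to its normalized weight so that the multigraph has at most $2m$ edges, apply \Cref{thm:LSST} there, and use that edge copies inherit the original stretch. The only cosmetic difference is that you round multiplicities to powers of two after bumping tiny weights up to $\norm{w}_1/m$, whereas the paper simply takes $\lceil m\,w(e)/\norm{w}_1\rceil$ copies, which achieves the same $O(m)$ size and at least one copy per edge directly.
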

\begin{proof}
    Construct an auxiliary graph $G_w$ on vertex set $V$, where for each edge $e \in G$, we add $\lceil m w(e) / \norm{w}_1 \rceil$ multi-edges. Then any spanning tree $T$ of $G_w$ is also a spanning tree of $G$, and, moreover, the number of edges in $G_w$ can be bounded as
    \[
    \sum_{e \in E_G} \lceil m w(e) / \norm{w}_1 \rceil \leq \sum_{e \in E_G} (1 + m w(e) / \norm{w}_1) = 2m.
    \]
    We now invoke \Cref{thm:LSST} to compute in time $\Otil(m)$ a spanning tree $T$ on $G_w$ with length function $l$ (here, each multi-edge of $G_w$ inherits the length of its original weight in $G$.) Of course, for each copy $e' \in G_w$ of the edge $e \in G$, we have that its stretch with respect to $T$ is the same in $G_w$ as the stretch of $e$ in $G$. Hence, by the guarantees of the spanning tree $T$ in $G_w$, we have that 
    \begin{align*}
        \sum_{e \in E_G} \str_T^l(e) \cdot  \lceil m w(e) / \norm{w}_1 \rceil = \sum_{e \in G_w} \str_T^l(e) \leq \Otil(\log n) \cdot m.
    \end{align*}
    We can now finish by calculating
    \begin{align*}
         \sum_{e \in E_G} w(e) \str_T^l(e) &= \frac{\norm{w}_1}{m}\sum_{e \in E_G} \frac{w(e) m}{\norm{w}_1} \str_T^l(e) \leq \frac{\norm{w}_1}{m}\sum_{e \in E_G} \left\lceil \frac{w(e) m}{\norm{w}_1} \right\rceil \str_T^l(e) \\
        & \leq \frac{\norm{w}_1}{m} \cdot \Otil(\log n) \cdot m = \Otil(\log n) \norm{w}_1.
    \end{align*}
    
\end{proof}
Using this theorem, we now show the following.
\begin{lemma}[See Lemma $5.4$ in \cite{Madry:2010ab}]
    Given a (multi-)graph $G = (V, E_G)$, and a length vector $l \in \R_{> 0}^E$, there is an algorithm that computes in $\Otil(m)$ time a spanning tree $T$ of $G$ such that
    \[
    \sum_{e \in E_G} l(e)u^T(e) \leq \Otil(\log n) \cdot \sum_{e \in E_G} l(e) u_G(e)
    \]
\end{lemma}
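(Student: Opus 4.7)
The plan is to reduce this to \Cref{lma:LSSTweighted} by choosing the right weight function and manipulating $u^T(e)$ via a standard double-counting argument.

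First I would expand the left-hand side. Since $u^T(e) = 0$ for $e \not\in E_T$, we have
\[
\sum_{e \in E_G} l(e) u^T(e) = \sum_{e \in E_T} l(e) \sum_{\substack{e' = (u,v) \in E_G \\ e \in T[u,v]}} u_G(e') = \sum_{e' = (u,v) \in E_G} u_G(e') \sum_{e \in T[u,v]} l(e) = \sum_{e' = (u,v) \in E_G} u_G(e') \cdot d_T^l(u,v),
\]
by swapping the order of summation. Now, for any edge $e' = (u,v) \in E_G$, the shortest-path distance satisfies $d_G^l(u,v) \leq l(e')$, so by definition of stretch,
\[
d_T^l(u,v) = \str_T^l(e') \cdot d_G^l(u,v) \leq \str_T^l(e') \cdot l(e').
\]

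Next I would invoke \Cref{lma:LSSTweighted} with the same length vector $l$ and with weight vector $w(e) := l(e) u_G(e)$. This computes, in $\Otil(m)$ time, a spanning tree $T$ satisfying
\[
\sum_{e' \in E_G} l(e') u_G(e') \cdot \str_T^l(e') \leq \Otil(\log n) \cdot \sum_{e' \in E_G} l(e') u_G(e').
\]
Combining this with the previous bound yields
\[
\sum_{e \in E_G} l(e) u^T(e) \leq \sum_{e' \in E_G} u_G(e') \cdot l(e') \cdot \str_T^l(e') \leq \Otil(\log n) \cdot \sum_{e \in E_G} l(e) u_G(e),
\]
which is the desired inequality.

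There is no genuine obstacle here: the argument is essentially a one-line reduction to the weighted low-stretch spanning tree lemma once one notices that $\sum_e l(e) u^T(e)$ rewrites as a sum of $u_G$-weighted tree distances, which in turn are controlled by stretches times original edge lengths. The only thing to verify is that the chosen weight $w(e) = l(e) u_G(e)$ exactly makes the stretch inequality collapse to the target bound, which it does.
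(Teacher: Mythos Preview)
Your proof is correct and essentially identical to the paper's: both apply \Cref{lma:LSSTweighted} with weight $w(e) = l(e)u_G(e)$, rewrite $\sum_e l(e)u^T(e)$ via double-counting as $\sum_{e'=(u,v)} u_G(e')\,d_T^l(u,v)$, and then bound $d_T^l(u,v) \le \str_T^l(e')\cdot l(e')$ to conclude. The only difference is presentational order.
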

\begin{proof}
    Apply \Cref{lma:LSSTweighted} on $G$ with length vector $l$ and weight vector $w(e) = l(e) u_G(e)$. Notice that we have
    \begin{align*}
        \sum_{f \in E_T} l(f) u^T(f) & = \sum_{f \in E_T} l(f) \sum_{e = (u, v) \in E: f \in T[u, v]} u_G(e) \\
        & =  \sum_{e = (u, v) \in E} d_T^l(u, v) u_G(e) \\
        &\leq \sum_{e = (u, v) \in E} \frac{d_T^l(u, v)}{d_G^l(u, v)} u_G(e) l(e) \\
        & =  \sum_{e = (u, v) \in E} \str_T^l(e) w(e) \\
        &\leq \Otil(\log n) \sum_{e \in E_G} w(e) \\
        & = \Otil(\log n) \sum_{e \in E_G} l(e) u_G(e).
    \end{align*}
\end{proof}
We can conclude with the
\begin{proof}[Proof of \Cref{lma:MWUlemma}]
    Apply the previous lemma with length function $l(e) = \frac{w(e) + \norm{w}_1/m}{u_G(e)}$. Then we get a spanning tree $T$ satisfying
\begin{align*}
    \sum_{e \in E_G} l(e) u^T(e) &= \sum_{e \in E_G} \frac{u^T(e)}{u_G(e)} \rbrack{w(e) + \frac{\norm{w}_1}{m}} \leq \Otil(\log n) \cdot \sum_{e \in E_G} l(e) u_G(e) \\
    & = \Otil(\log n) \cdot \sum_{e \in E_G} \rbrack{w(e) + \frac{\norm{w}_1}{m}} = \Otil(\log n) \cdot 2 \norm{w}_1 := \gamma \cdot \norm{w}_1,
\end{align*}
which of course in particular implies that $\sum_{e \in E_G} w(e)\frac{u^T(e)}{u_G(e)} \leq \gamma \cdot \norm{w}_1$ for some $\gamma \in \Otil(\log n)$, and hence shows the first property of the theorem. Consider now the set $S = \{e \in E_T: u^T(e) / u_G(e) \geq \gamma m/j\}$ and observe that 
\begin{align*}
     \gamma \cdot \norm{w}_1 &\geq \sum_{e \in E_G} l(e) u^T(e) \geq \sum_{e \in S} l(e) u^T(e) = \sum_{e \in S} \frac{u^T(e)}{u_G(e)} \rbrack{w(e) + \frac{\norm{w}_1}{m}} \\
    & \geq \sum_{e \in S} \frac{u^T(e)}{u_G(e)} \frac{\norm{w}}{m} \geq |S| \cdot \frac{\norm{w}}{m} \cdot \gamma m/j.
\end{align*}
Consequently, $|S| \leq j$ and by design it satisfies the desired condition.
\end{proof}


 \section{Proof of \Cref{thm:multi}} \label{appendix:multicut}

In order to employ the cut sparsifier of \Cref{th:main} to work against multi and multi-way cuts we will first state some of the key claims and lemmas concerning cut preservation of $j$-trees in terms of multi-cuts.

\begin{lemma}\label{lma:multicutpreserve}
    Let $G, H$ be two graphs on the same vertex set and suppose that $H \preceq_{\alpha} G$. Then for any partition $\{S_i\}$ we have that $U_H(\{S_i\}) \leq \alpha \cdot U_G(\{S_i\})$. 
\end{lemma}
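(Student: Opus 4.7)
The plan is to reduce the partition statement to the single-cut statement (Lemma~\ref{lma:cutpreserve}) by linearity, exploiting the definition $U_G(\{S_i\}) = \frac{1}{2}\sum_i U_G(S_i)$. Concretely, each $S_i$ is itself a subset of $V$, so applying Lemma~\ref{lma:cutpreserve} to the cut $(S_i, V \setminus S_i)$ yields $U_H(S_i) \leq \alpha \cdot U_G(S_i)$. Summing over $i$ and dividing by $2$ gives
\[
U_H(\{S_i\}) \;=\; \tfrac{1}{2}\sum_{i=1}^k U_H(S_i) \;\leq\; \tfrac{1}{2}\sum_{i=1}^k \alpha \cdot U_G(S_i) \;=\; \alpha \cdot U_G(\{S_i\}),
\]
as desired.

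Thus there is essentially no extra work: the lemma is an immediate corollary of Lemma~\ref{lma:cutpreserve} via the standard ``partition equals half-sum of cuts'' identity, which in turn holds because every edge crossing between two distinct parts $S_i$ and $S_j$ is counted exactly twice in $\sum_i U_H(S_i)$ (once from $S_i$'s side and once from $S_j$'s side), while internal edges are not counted at all. I do not anticipate any obstacle; the only thing worth double-checking is that $\{S_i\}$ being a genuine partition of $V$ (so each edge is either internal to some $S_i$ or crosses exactly one pair $S_i,S_j$) is used implicitly when invoking the half-sum identity.
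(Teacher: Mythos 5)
Your proof is correct and is essentially identical to the paper's: both apply Lemma~\ref{lma:cutpreserve} to each part $S_i$ and sum, using the definition $U_G(\{S_i\}) = \frac{1}{2}\sum_i U_G(S_i)$. (Note that the half-sum expression is the paper's definition of $U_G(\{S_i\})$, so the "identity" you double-check needs no separate verification.)
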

\begin{proof}
    By \Cref{lma:cutpreserve}, we have that $u_H(S_i) \leq \alpha \cdot U_G(S_i)$ for all $i = 1, 2, \ldots, k$. In particular, $U_H(\{S_i\}) = \frac{1}{2} \sum_{i = 1}^k U_H(S_i) \leq \frac{1}{2} \sum_{i = 1}^k \alpha \cdot U_G(S_i) = \alpha \cdot U_G(\{S_i\})$.
\end{proof}

The following lemma can be seen as an extension of \cref{lma:jtreesample} to multi-way cuts. 

\begin{lemma}
\label{lma:jtreesample:multiway}
    Let $H_1, \ldots, H_k$ be the $O(j)$-trees from \Cref{thm:jtree} satisfying $G \preceq k^{-1} \sum_{i=1}^k H_i \preceq_{\alpha} G$. Let $\{S_i\}$ be a partition of $V$. Then, if we sample $l = O(\log n)$ $O(j)$-trees $H_{i_1}, \ldots, H_{i_l}$ uniformly at random, it holds that
    \[
    U_G(\{S_i\}) \leq \min_{1 \leq j \leq l} U_{H_{i_j}}(\{S_i\}) \leq 2 \alpha \cdot U_G(\{S_i\})
    \]
    with high probability. 
\end{lemma}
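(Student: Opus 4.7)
The plan is to mirror the proof of \Cref{lma:jtreesample} essentially verbatim, replacing single-cut quantities by partition quantities and using \Cref{lma:multicutpreserve} in place of \Cref{lma:cutpreserve}. The two key observations that make the proof go through unchanged are: (i) the quantity $U_H(\{S_i\})$ is a linear function of the edge capacities of $H$, so it behaves well under forming the convex combination $k^{-1}\sum_i H_i$; and (ii) $\Cref{lma:multicutpreserve}$ lifts the cut-embedding guarantee $H \preceq_\alpha G$ directly to an $\alpha$-factor bound on partition costs.

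Concretely, I would first establish the lower bound. Since $G \preceq H_i$ for every $i$ by \Cref{thm:jtree}, \Cref{lma:multicutpreserve} gives $U_G(\{S_i\}) \leq U_{H_{i_j}}(\{S_i\})$ for each sampled tree, and so also for the minimum over samples. Next, for the upper bound, I would use linearity to write
\[
U_{k^{-1} \sum_{i=1}^k H_i}(\{S_i\}) \;=\; \frac{1}{k}\sum_{i=1}^k U_{H_i}(\{S_i\}),
\]
and then apply \Cref{lma:multicutpreserve} to the embedding $k^{-1}\sum_i H_i \preceq_\alpha G$ to obtain
\[
\frac{1}{k}\sum_{i=1}^k U_{H_i}(\{S_i\}) \;\leq\; \alpha \cdot U_G(\{S_i\}).
\]

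With this averaging bound in hand, the probabilistic step is a verbatim copy of the single-cut argument: by Markov's inequality, a uniformly random index $i_j$ satisfies $\Pr[U_{H_{i_j}}(\{S_i\}) \geq 2\alpha \cdot U_G(\{S_i\})] \leq 1/2$. Taking $l = O(c \log n)$ independent samples then drives the probability that \emph{all} sampled trees exceed the $2\alpha$ threshold down to at most $2^{-l} \leq n^{-c}$, which yields the claimed high-probability upper bound on $\min_{1 \leq j \leq l} U_{H_{i_j}}(\{S_i\})$.

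There is no real obstacle here: the only conceptual issue to verify is linearity of $U_H(\{S_i\}) = \tfrac{1}{2}\sum_i U_H(S_i)$ in the edge capacities of $H$, which is immediate from the definition, and the compatibility of this linearity with the formal combination $k^{-1}\sum_i H_i$ of graphs on a common vertex set, which is how $\preceq_\alpha$ is set up in the preliminaries. Thus the proof should be essentially three lines, with the same structure and constants as \Cref{lma:jtreesample}.
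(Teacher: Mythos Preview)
Your proposal is correct and follows essentially the same approach as the paper: derive the two-sided averaging bound on $U_{H_i}(\{S_i\})$ from \Cref{lma:multicutpreserve} (applied to both $G \preceq H_i$ and $k^{-1}\sum_i H_i \preceq_\alpha G$), then apply Markov's inequality and sample $O(\log n)$ trees. The paper's proof is in fact the three-line version you anticipate; your extra remarks on linearity are sound but not needed explicitly.
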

\begin{proof}
    By the previous lemma, we have 
    \[
    U_G(\{S_i\}) \leq k^{-1} \sum_{i=1}^k U_{H_i}(\{S_i\}) \leq \alpha \cdot U_G(\{S_i\}).
    \]
    Consequently, by Markov's inequality, for any randomly sampled $O(j)$-tree $H_{i_j}$, it holds that $\Pr[U_{H_{i_j}}(\{S_i\}) \geq 2 \alpha U_G(\{S_i\})] \leq 1/2$. Hence, sampling $l = O(\log(n))$ $O(j)$-trees uniformly at random suffices.
\end{proof}

Recall that on a high level we have shown that the hierarchy $\mathcal{H}$ maintained by \cref{th:main} will approximately preserve the cuts of the underlying graph through an inductive argument: by \cref{thm:fullydynamicsparsifier} for any $i \in [L]$ the sparsified cores $\widetilde{C}(H_{i-1})$ approximately preserve the cuts in their denser counterparts $C(H_{i-1})$ and by \cref{lma:jtreesample} the set of $O(j_{i+1})$ trees of $\mathcal{H}(H_i) \subseteq \mathcal{H}_{i+1}$ approximately preserve the cuts of their larger counterparts in $\widetilde{C}(H_{i})$. One can confirm that throughout this inductive argument by replacing \cref{lma:jtreesample} with \cref{lma:jtreesample:multiway} it can be shown that $\mathcal{H}$ preserves the sizes of multi-way cuts as well up to an $\tilde{O}(\log n)^L$ factor.

Upon a query with a set of $k$ terminals or terminal pairs the data structure first moves all of the $O(k)$-terminals into the core of every $O(j)$-tree $H \in \mathcal{H}$ without rebuilding the lowest level of the hierarchy when the cores grow beyond of size $\Theta(j)$. This can be achieved by inserting an arbitrary edge between all terminal vertices and the core and then removing it. By \Cref{th:main} this takes $O(k \cdot (n/j)^{2/L} \poly{\log^L n} \log U)$ time.

Once the terminals are moved to the core, any minimum size cut separating the terminals or terminal pairs will not cut any of the edges of the forest incident on the core. Hence, we can restrict attention to the sparsified cores of each $O(j)$-tree in $H \in \mathcal{H}$. Due to the movement of the terminals to the cores, each of them has $O(k+j)$ vertices and $\atmosttilde{(j+k)}$ edges (recall that the cores of the $O(j)$-trees are sparsified through \cref{thm:fullydynamicsparsifier}). After the query is answered, we revert the data structure to its state before the query.

Finally, we re-insert all $O(k)$ edges again and we delete them from the graph, this time handling them normally with rebuilding the core if it grows above $\Theta(j)$ size. This seemingly redundant step serves as analytic trick to be able to amortize the cost of the $O(k)$-edge updates required to push the terminal vertices into the core. Recall that \cref{th:main} only shows that the amortized update time of edge updates is bounded by $O(k \cdot (n/j)^{2/L} \poly{\log^L n} \log U)$. Hence, it could be that the $O(k)$ edge updates required to pull the terminal vertices into the core is expensive.

If we were to simply reverse the affects of these updates in could occur that they are always expensive. However, note that the second set of updates are normal edge updates which are not reverted, hence their cost amortizes to be $O(k \cdot (n/j)^{2/L} \poly{\log^L n} \log U)$. Furthermore, the cost of the first set of updates is upper bounded by the second set of updates. Hence, we can move the terminals into the core in $O(k \cdot (n/j)^{2/L} \poly{\log^L n} \log U)$ update time.

It remains to show how to obtain approximate the value of a multi-way and multi-cut in the core up to $\polylog(n)$-factors, before we have reverted the first set of updates. Equivalently, we have to show how for a graph $G= (V,E,u)$ on $n$ vertices with $\tilde{O}(n)$ edges we may approximate the size of a minimum multi-way cut and multi-cut within a $O(\polylog(n))$-factor with respect to a terminal or terminal pair set of size $k \leq n$. Multiple algorithms exist for these purposes in literature with $\poly{n}$ running time. In order to optimize the $k$-dependency of our theorem we will describe some simple methods using standard techniques, for which we claim no novelty, running in $\tilde{O}(n)$ and $\tilde{O}(n^2)$ times respectively for multi-way cut and multi-cut.

\begin{lemma}

Let $G = (V,E,u)$ be a graph on $n$ vertices and $\tilde{O}(n)$ edges and $T \subseteq V$ be a set of terminals. There is an algorithm which returns the size of a $O(\log n)$-approximate multi-way cut of $G$ with respect to $T$ in $\tilde{O}(n)$ time.

\end{lemma}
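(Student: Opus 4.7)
The plan is to reduce multi-way cut on the capacitated graph $G$ to multi-way cut on a tree, where the problem admits a linear-time dynamic program, using a nearly-linear-time tree flow sparsifier of poly-logarithmic quality as the reduction.

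First, I would invoke the nearly-linear-time tree flow sparsifier of \cite{Racke:2014aa} (or its improved version in \cite{Li:2025aa}) to construct, in $\tilde{O}(n)$ time, a tree $T$ with $V \subseteq V(T)$ such that for every cut $(S, V \setminus S)$ in $G$ we have $U_G(S) \leq U_T(S)$, and $U_T(S) \leq O(\log n) \cdot U_G(S)$ holds in expectation over the randomness of the construction. By sampling $O(\log n)$ independent trees and taking the minimum multi-way cut value across them, Markov's inequality combined with a union bound yields the desired one-sided bound $U_T(\{S_i\}) \leq O(\log n) \cdot U_G(\{S_i\})$ for the optimal multi-way cut of $G$ with high probability, while the lower bound $U_G(\{S_i\}) \leq U_T(\{S_i\})$ holds deterministically.

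Second, I would solve minimum multi-way cut exactly on each sampled tree $T$ via a bottom-up dynamic program. Root $T$ at an arbitrary vertex and maintain for each vertex $v$ two quantities: $f_{\textnormal{w}}(v)$, the minimum cost of cuts in the subtree rooted at $v$ subject to $v$'s component containing exactly one terminal (with every other terminal in the subtree properly isolated); and $f_{\textnormal{wo}}(v)$, the analogous minimum subject to $v$'s component containing no terminal (this is undefined when $v$ itself is a terminal). The recursion at a vertex $v$ with children $c_1,\dots,c_d$ considers, for each child $c_i$ independently, either cutting the edge $(v,c_i)$ (paying $w(v,c_i)$ plus the best of $f_{\textnormal{w}}(c_i)$ and $f_{\textnormal{wo}}(c_i)$) or contracting the edge (forcing the child to contribute either $f_{\textnormal{w}}(c_i)$ or $f_{\textnormal{wo}}(c_i)$, depending on whether the child supplies $v$'s component's unique terminal). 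When $v$ is a terminal, no child may merge while bringing a terminal; when $v$ is not a terminal, exactly one child may be chosen to supply the terminal. Both cases can be implemented via a single $\min$-selection across the children, giving $O(\deg(v))$ work per vertex and $\tilde{O}(n)$ total time.

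The $O(\log n)$-approximation then follows directly from \Cref{lma:multicutpreserve}: for any multi-way cut $\{S_i\}$, $U_G(\{S_i\}) \leq U_T(\{S_i\}) \leq O(\log n) \cdot U_G(\{S_i\})$ with high probability, so the value returned, $\textnormal{OPT}(T)$, lies in $[\textnormal{OPT}(G),\, O(\log n) \cdot \textnormal{OPT}(G)]$. The main obstacle is setting up the DP transitions to correctly track the structural constraint that each component contains exactly one terminal: the subtlety is that when $v$ is a terminal the only admissible ``contraction'' option for a child is $f_{\textnormal{wo}}(c_i)$, and when $v$ is non-terminal the aggregation must pick at most one child to inherit a terminal from; both behaviors reduce to a clean $\min$-selection, which is why the per-vertex work remains linear in the degree and the whole algorithm fits in $\tilde{O}(n)$ time.
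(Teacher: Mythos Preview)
Your approach is correct but takes a genuinely different route from the paper, with one quantitative caveat. The paper does not use a tree flow sparsifier; instead it samples $O(\log n)$ random bipartitions of the terminal set so that every terminal pair is separated in at least one, solves an exact $s$--$t$ min-cut in $\tilde{O}(n)$ time for each bipartition (via a super-source and super-sink attached with high-capacity edges), and returns the union of these cut edge-sets. Each individual min-cut costs at most $\mathrm{OPT}$ since the optimal multi-way cut already separates the two sides of the bipartition, so the union costs at most $O(\log n)\cdot\mathrm{OPT}$, hitting the stated factor directly. Your tree-based reduction is cleaner and generalizes to any cut problem admitting a linear-time tree algorithm, but the nearly-linear-time tree constructions you cite (\cite{Racke:2014aa,Li:2025aa}) are only known to achieve \emph{poly-logarithmic} quality, not $O(\log n)$ in expectation; the $O(\log n)$-in-expectation bound belongs to R\"acke's 2008 distribution, and sampling a single tree from it in $\tilde{O}(m)$ time is not known. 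So as written your approximation factor is $O(\polylog(n))$ rather than $O(\log n)$, which suffices for the downstream applications here but is slightly weaker than the lemma claims.
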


\begin{proof}

Partition the terminals $T$ into two sets some $t = O(\log n)$ times such that any pair of vertices in $T$ in at least one of the partitions falls into a different set. This can be achieved, for example, through random sampling with high probability. Let the sets of partitions be $T_{i_1},T_{i_2} : i \in [t]$.

For each partitioning create a graph $G_i$, each are extensions of $G$ with a pair of added sink and source vertices $s,t$, each of which are connected to one side of the partitioning with an edge of very high capacity. In each $G_i$ calculate an $\alpha$-approximate $s-t$ min cut for some $\alpha = 1$ in $\tilde{O}(n)$ time (this can be done for example through \cite{Li2023NearLinear}). The algorithm returns the size of the union of the cuts.

Observe that the union of the cut is a multi-way cut of $G$ with respect to $T$ as any pair of vertices must be cut by one of the $s-t$ cuts. Furthermore, for any $G_i$ the minimum size multi-way cut of $G$ with respect to $T$ separates $T_{i_1}$ and $T_{i_2}$ hence all of the $s-t$ cuts calculated have cost at most $\alpha$ times the minimum size multi-way cut of $G$. Therefore, the union of the $s-t$ cuts is $O(\log n)$ approximate with respect to the min cost multi-way cut of $G$ with respect to $T$.

\end{proof}

\begin{lemma}

Let $G = (V,E,u)$ be a graph on $n$ vertices and $\tilde{O}(n)$ edges and $T = \{(s_1,t_1), \dots, (s_k,t_k)\}$ be a set of at most $n$ terminal pairs in $V$. There is an algorithm which returns the size of a $\tilde{O}(\log^2 n)$-approximate multi-cut of $G$ with respect to $T$ in $\tilde{O}(n^2)$ time.

\end{lemma}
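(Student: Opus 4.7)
The plan is to employ the classical recursive sparsest-cut-based framework for multi-cut (originating from Leighton-Rao and Garg-Vazirani-Yannakakis), instantiated with a near-linear time approximate sparsest cut subroutine. Concretely, I would design a recursive procedure $\textsc{RecurseMC}(G', T')$ taking as input a subgraph $G' \subseteq G$ and an active subset of terminal pairs $T' \subseteq T$ whose endpoints all lie within $V(G')$. The base case is when no pair in $T'$ remains connected in $G'$, in which case the procedure returns $\emptyset$. Otherwise, the procedure computes an $\alpha$-approximate sparsest cut $(S, V(G') \setminus S)$ in $G'$ with respect to the demand graph $H(T')$ that places a unit-weight edge on each active pair, i.e., approximately minimizes $U_{G'}(S)/U_{H(T')}(S)$. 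The edges of $G'$ crossing this cut are added to the output, and the procedure recurses on the two vertex-disjoint induced subgraphs with the remaining active pairs restricted to each side. The initial call is $\textsc{RecurseMC}(G, T)$.

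For the approximation guarantee, I would appeal to the standard Leighton-Rao analysis, which shows that if the sparsest cut subroutine achieves approximation factor $\alpha$, then the total cost of the edge set returned by the procedure is at most $O(\alpha \log k) \cdot \mathrm{OPT}$, where $\mathrm{OPT}$ denotes the optimum multi-cut value with respect to $T$. Instantiating $\alpha = O(\log n)$ via the classical Leighton-Rao $O(\log n)$-approximate sparsest cut algorithm yields the desired $\tilde{O}(\log^2 n)$ approximation factor; using instead the Sherman-type $O(\sqrt{\log n})$ subroutine referenced in the related works would even give the stronger bound $\tilde{O}(\log^{3/2} n)$, which also lies within the target.

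For the runtime, each recursive call either terminates immediately or successfully separates at least one active terminal pair from its input; since each of the $k \leq n$ pairs can be separated at most once, the total number of non-trivial calls is $O(n)$, and an equal number of trivial base-case calls suffices to cover them. Using a near-linear time approximate sparsest cut subroutine capable of handling general (non-uniform) demand graphs, each call on a subgraph with $n'$ vertices and $\tilde{O}(n')$ edges runs in $\tilde{O}(n')$ time. Summing over the $O(n)$ non-trivial calls with $n' \leq n$, the total runtime is bounded by $\tilde{O}(n^2)$, as required.

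The main technical obstacle will be instantiating the near-linear time sparsest cut subroutine for general demand graphs with the claimed approximation and runtime guarantees, which can be pieced together from the works referenced in the related-works section (Sherman, Leighton-Rao, Arora-Hazan-Kale); the rest of the argument -- maintaining active demands per component and the inductive approximation bound -- amounts to standard bookkeeping that parallels the multi-way cut argument given immediately above the lemma.
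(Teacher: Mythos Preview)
Your high-level strategy (recursive approximate generalized sparsest cut, \`a la Leighton--Rao / GVY) is sound, and your approximation analysis that an $\alpha$-approximate non-uniform sparsest cut oracle yields an $O(\alpha\log k)$-approximate multi-cut is correct. The gap is precisely where you flag it: the near-linear time $\tilde O(\log n)$-approximate \emph{non-uniform} sparsest cut subroutine. The works you cite (Sherman, Arora--Hazan--Kale) are for \emph{uniform} sparsest cut; Leighton--Rao does handle general demands with an $O(\log n)$ factor, but not in near-linear time. If you instead plug in a near-linear single-tree cut sparsifier (e.g.\ R\"acke--Shah--T\"aubig) you incur an $O(\log^{c} n)$ factor with $c>1$, overshooting the $\tilde O(\log^2 n)$ target; if you plug in the paper's own $\tilde O(m^2)$-time non-uniform sparsest cut lemma, your recursion runs in $\tilde O(n^3)$ rather than $\tilde O(n^2)$. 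So as written, the proposal does not close.

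The paper's proof takes a different and more self-contained route that avoids any sparsest-cut oracle. It applies Madry's theorem once with $j=1$ to produce $\tilde O(n)$ spanning trees in $\tilde O(n^2)$ time, samples $O(\log n)$ of them so that the optimal multi-cut is $\tilde O(\log n)$-preserved, and then solves multi-cut \emph{on each tree} by casting it as weighted set cover (tree edges are sets, terminal pairs are elements, an edge ``covers'' a pair iff it lies on their tree path) and running greedy in $O(n^2)$ time for an $O(\log n)$ approximation. The two $\log n$ factors (tree embedding and greedy set cover) give the stated $\tilde O(\log^2 n)$, and the total time is $\tilde O(n^2)$. This buys simplicity and stays entirely within the paper's own toolkit; your approach, if the oracle could be instantiated, would be more modular but currently leans on a black box that is not available at the required parameters.
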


\begin{proof}

Using \Cref{thm:madry} we obtain a collection of $n$ trees ($j$-trees for $j = 1$) in $\tilde{O}(n^2)$ time. By \cref{lma:multicutpreserve} and \cref{lma:jtreesample:multiway} a sample of $\tilde{O}(n)$ of these trees $\tilde{O}(\log n)$-preserves the multi-cut values of $G$ with respect to $T$. Hence, it remains to obtain a $O(\log n)$-approximation to the multi-cut value on each of the sampled trees in $\tilde{O}(n^2)$ time.

We may turn the multi-cut problem into an instance of the weighted set cover problem the following way: we define the sets to correspond to the edges with the edges weight, the elements to correspond to the terminal pairs and say that a set contains an element if the associated edge cuts the associated terminal pair. We may initialize the weighted set cover through enumerating the unique paths between any pairs of terminals of the tree in $O(n^2)$ time.

Finally, note that its a folklore fact that the greedy algorithm for weighted set cover obtains an $O(\log n)$-approximation and runs in $O(n^2)$ time (number of sets times the number of elements). 

\end{proof}

 \section{Proof of \Cref{thm:sparsest-cut}}\label{appendix:sparest-cut-proof}
In this section, we will provide a proof of \Cref{thm:sparsest-cut}, i.e., we will show how to use the dynamic $O(j)$-tree data structure from \Cref{th:main} in order to dynamically maintain query access to approximate sparsest cut values. 

Our proof will consist of two parts. First, we state and prove a simple structural lemma which shows that, for any $j$-tree $H$, there exists a sparsest cut that either cuts only edges with endpoints in the core $C(H)$, or it cuts a single forest edge $e \in F(H)$. Remember that for a $j$-tree $H = C(H) \cup F(H)$, the graph $F = F(H)$ is a rooted forest and the core $C = C(H)$ is an induced sub-graph on the root set $R$ of $F$ (in particular, $V(C) = R$). 

Next, we show how to efficiently maintain the sparsest cut among the tree edges, i.e., $\min_{e \in E_F} \psi(S(e))$. This will be done by slightly modifying the data structure from \Cref{thm:jtree} by employing ideas from \cite{Chen:2020aa} that force the paths in the forests $F_i$ to be sufficiently short. 

Using both parts, we can then show that upon query, we only need to run an $\Otil(\log n)$-approximate (generalized) sparsest cut algorithm on the (sparsified) cores $\Tilde{C}(H)$ for all $H \in \cH$. As the cores $\Tilde{C}(H)$ are only on $\Otil(j)$ edges, the sparsest cut algorithm takes only $\Otil(j^{2})$ time. 

\paragraph{The Structural Lemma}
Let $H \in \cH$ be an $O(j)$-tree with forest $F$ and (sparsified) core $\Tilde{C}$. For an edge $e =(u, v) \in E_F$ such that $u$ is the child of $v$, define $S(e) := u^{\downarrow} \subseteq V$ as the cut induced by considering the set of descendants of $u$ (which includes $u$). Remembering that each $v \in V(\Tilde{C})$ is the root of a unique tree in $F$, we may also define $S^{\downarrow} = \bigcup_{v \in S} v^{\downarrow}$ for sets $S\subseteq V(\Tilde{C})$. The main lemma shows that in order to find a sparsest cut on a $j$-tree, it suffices to consider cuts inside the core and forest separately. That is, it shows that there always exists a sparsest cut $S$ that only cuts \emph{either} edges in the core \emph{or} edges in the forest, but never both.

We start with an easy auxiliary lemma.
\begin{lemma}
Let $H$ be a graph. Then there always exists a sparsest cut $S$ such that $H[S]$ is connected.
\end{lemma}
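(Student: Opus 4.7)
The plan is a standard averaging/decomposition argument. I will start with an arbitrary sparsest cut $S$ of $H$, and if $H[S]$ is already connected I am done, so suppose otherwise. Without loss of generality assume $|S|\le|V\setminus S|$, so that $\psi_H(S)=U_H(S)/|S|$; if $|S|>|V\setminus S|$, I would simply work with the complementary side (the sparsity is symmetric in $S$ and $V\setminus S$), and since $H$ is not assumed connected, I may also replace $H$ by the union of components on each side as needed — but the cleaner route is just to fix the smaller side.

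Next, decompose $H[S]$ into its connected components $S_1,\dots,S_k$. The key structural observation is that by definition of connected components of $H[S]$, there are no edges of $H$ with both endpoints in $S$ lying in different $S_i$'s. Consequently, every edge of $H$ incident to $S_i$ either has both endpoints in $S_i$ or has its other endpoint in $V\setminus S$. In particular
\[
U_H(S_i)=|E_H(S_i,V\setminus S)|_u, \qquad \sum_{i=1}^k U_H(S_i)=U_H(S),
\]
where $|\cdot|_u$ denotes the total capacity. Also $\sum_i|S_i|=|S|$, and since $|S_i|\le|S|\le|V\setminus S|\le|V\setminus S_i|$, the smaller side of the cut $(S_i,V\setminus S_i)$ is $S_i$, so $\psi_H(S_i)=U_H(S_i)/|S_i|$.

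Finally, apply the mediant inequality: there must exist an index $i^\star$ with
\[
\frac{U_H(S_{i^\star})}{|S_{i^\star}|}\ \le\ \frac{\sum_i U_H(S_i)}{\sum_i|S_i|}\ =\ \frac{U_H(S)}{|S|}\ =\ \psi_H(S).
\]
Thus $\psi_H(S_{i^\star})\le\psi_H(S)=\psi(H)$, and by minimality of $\psi(H)$ equality must hold. Hence $S_{i^\star}$ is a sparsest cut and $H[S_{i^\star}]$ is connected by construction.

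I expect no real obstacle here; the only subtlety is making sure the smaller-side convention is handled cleanly so that $\psi_H(S_i)=U_H(S_i)/|S_i|$ without extra factors, which the chain of inequalities $|S_i|\le|S|\le|V\setminus S|\le|V\setminus S_i|$ settles immediately. This lemma will then feed into the subsequent structural claim that for a $j$-tree $H=C\cup F$, some sparsest cut cuts either only core edges (an induced-subgraph cut inside $C$) or a single forest edge, which is what the sparsest-cut application actually needs.
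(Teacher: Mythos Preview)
Your proof is correct and follows essentially the same averaging argument as the paper: take a sparsest cut on its smaller side, decompose $H[S]$ into connected components, and apply the mediant inequality to conclude that one component has sparsity at most $\psi_H(S)$. You are slightly more explicit than the paper in verifying that $\sum_i U_H(S_i)=U_H(S)$ (no edges between distinct $S_i$) and that $|S_i|$ is indeed the smaller side of its cut, but the underlying idea is identical.
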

\begin{proof}
Let $S$ be a sparsest cut with $|S| \leq n/2$, and suppose that $H[S]$ consists of connected components $S_1, \ldots, S_k$. Then
\[
\psi_G(S) = \frac{U_H(S_1) + \ldots + U_H(S_k)}{|S_1| + \ldots + |S_k|} \geq \min_{1 \leq j \leq k} \frac{U_H(S_j)}{|S_j|} = \min_{1 \leq j \leq k} \psi_H(S_j),
\]
and so one of the connected components must also constitute a sparsest cut.
\end{proof}

We can now state and proof the main structural lemma.
\begin{lemma}\label{lma:sparsest-cut-structure}
    Let $H$ be a $j$-tree with core $C(H)$ and forest $F(H)$. Then there exists a sparsest cut $S$ such that, either
    \begin{enumerate}
        \item $S = S(e)$ for some forest edge $e \in E_F$, or,
        \item $S = X^{\downarrow}$ for some set $X \subseteq V(C(H))$.
    \end{enumerate}
\end{lemma}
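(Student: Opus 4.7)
My plan is to prove this structural lemma by starting from an arbitrary sparsest cut and then rearranging it into a cut of one of the two stated forms, without ever increasing sparsity. The auxiliary lemma immediately preceding the statement lets me fix at the outset a sparsest cut $S^{*}$ such that $H[S^{*}]$ is connected, and the symmetry $\psi_{H}(S) = \psi_{H}(V \setminus S)$ lets me further assume $|S^{*}| \leq n/2$, so that $\psi(S^{*}) = U_{H}(S^{*})/|S^{*}|$ throughout.

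The proof then splits on whether $S^{*}$ cuts any edge of the forest $F(H)$. If no forest edge is cut, then within each tree $T$ of $F$ any two vertices are joined by a forest path whose edges all survive, so every tree must lie entirely on one side of $S^{*}$. Setting $X = S^{*} \cap V(C(H))$ immediately gives $S^{*} = X^{\downarrow}$, which is form 2. Otherwise $S^{*}$ cuts at least one forest edge, and I apply a swap argument tree by tree: for each tree $T_{r}$ of $F$ with $B_{r} := S^{*} \cap V(T_{r})$ a proper nonempty subset of $V(T_{r})$, I replace $B_{r}$ by $V(T_{r})$ if $r \in S^{*}$, or by $u^{\downarrow}$ (for a suitably chosen descendant $u$ of $r$) or $\emptyset$ if $r \notin S^{*}$. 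The engine driving the analysis is the mediant inequality applied locally: decomposing $B$ into its forest-connected components $A_{1}, \ldots, A_{k}$ with top vertices $u_{1}, \ldots, u_{k}$,
\[
\frac{U_{T_{r}}(B)}{|B|} \;=\; \frac{\sum_{i} U_{T_{r}}(A_{i})}{\sum_{i} |A_{i}|} \;\geq\; \min_{i} \frac{U_{T_{r}}(A_{i})}{|A_{i}|} \;\geq\; \min_{i} \frac{u_{H}(e_{u_{i}})}{|u_{i}^{\downarrow}|},
\]
where the last inequality follows from $U_{T_{r}}(A_{i}) \geq u_{H}(e_{u_{i}})$ and $|A_{i}| \leq |u_{i}^{\downarrow}|$ (noting $A_{i} \subseteq u_{i}^{\downarrow}$). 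Feeding this local bound into the global sparsity $\psi(S^{*}) = (U_{C}(X) + \sum_{r} U_{T_{r}}(B_{r}))/|S^{*}|$ via another mediant step shows that at least one of the candidate replacements produces a cut of sparsity at most $\psi(S^{*})$.

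Iterating these swaps, $S^{*}$ is eventually of the form $X^{\downarrow} \cup \bigcup_{r} u_{r}^{\downarrow}$ (a union of full trees rooted at $X$ together with some descendant-subtrees not containing any root). At this point the connectivity of $H[S^{*}]$ finishes the argument: each summand $u_{r}^{\downarrow}$ has $u_{r}^{\downarrow} \cap V(C(H)) = \emptyset$ and so is not connected to any other summand either through core edges or through forest edges of its parent tree (the only "exit" forest edge $(u_{r}, \mathrm{parent}(u_{r}))$ has been cut by construction). Hence the only possibilities are that all the stray $u_{r}^{\downarrow}$ terms vanish, leaving $S^{*} = X^{\downarrow}$ (form 2), or that $X = \emptyset$ and exactly one $u_{r}^{\downarrow}$ survives, giving $S^{*} = u_{r}^{\downarrow} = S(e_{u_{r}})$ (form 1).

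The main obstacle I expect is bookkeeping around the "small side" constraint $|S^{*}| \leq n/2$: a swap that enlarges $S^{*}$ (e.g., absorbing a whole tree) may push $|S^{*}|$ past $n/2$, flipping the denominator in $\psi$ from $|S|$ to $n - |S|$ and breaking the direct mediant comparison. The delicate part of the argument will be to verify that for each offending tree, at least one of the two swap directions (include vs.\ excise) produces a cut whose small-side size stays $\leq n/2$ and whose sparsity is at most $\psi(S^{*})$; when neither direction works immediately, one symmetrizes by replacing the current $S^{*}$ with $V \setminus S^{*}$ and re-identifies the small side before continuing. Checking this case split carefully is what makes the otherwise-clean mediant proof a bit tedious.
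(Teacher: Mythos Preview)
Your approach differs from the paper's in a meaningful way: you propose an iterative tree-by-tree swap driven by mediant inequalities, whereas the paper gives a one-shot contradiction. The paper argues as follows. Assume no $S(e)$ is itself a sparsest cut (else form~1 holds). Take $S$ sparsest, connected, $|S|\le n/2$. If some forest edge $e$ lies in the cut, connectivity forces the orientation ``parent in $S$, child out,'' so $S\cap S(e)=\emptyset$ and $S':=S\cup S(e)$ satisfies $U_H(S')=U_H(S)-U_H(S(e))$. A direct three-case check on which of $|S'|,n-|S'|,|S(e)|,n-|S(e)|$ realizes the relevant minima then yields $\psi_H(S')<\psi_H(S)$, contradicting minimality. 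Hence no forest edge is cut and $S=X^{\downarrow}$.

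Two remarks on your plan. First, you under-use connectivity. With $H[S^*]$ connected, the case ``$r\notin S^*$ and $B_r\neq\emptyset$'' already forces $S^*\subseteq V(T_r)$ (the only exit from $T_r$ is through $r$), so $X=\emptyset$ and there is a single offending tree; and for $r\in X$, each $B_r$ is a connected subtree of $T_r$ containing $r$, so your decomposition $B_r=A_1\cup\cdots\cup A_k$ always has $k=1$. This collapses your case analysis and makes the terminal form ``$X^{\downarrow}\cup\bigcup_r u_r^{\downarrow}$'' always a single piece, so no separate connectivity argument is needed at the end.

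Second, the small-side difficulty you flag is genuine, but your proposed fix (pick the swap direction that stays below $n/2$, else ``symmetrize'') is not clearly sound: for $r\in X$ the natural ``excise'' direction $B_r\to\{r\}$ need not decrease sparsity, and passing to $V\setminus S^*$ destroys connectivity. What actually resolves it is precisely the paper's computation: the single absorption $S\mapsto S\cup S(e)$ satisfies the size inequality
\[
|S|-\min\{|S|+|S(e)|,\,n-|S|-|S(e)|\}\;\le\;\min\{|S(e)|,\,n-|S(e)|\}
\]
in every regime, and combined with $\psi_H(S(e))>\psi_H(S)$ this gives a \emph{strict} decrease in sparsity. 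One absorption therefore already contradicts minimality, so your iteration and mediant bookkeeping are unnecessary.
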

\begin{proof}
Assume that there does not exist a forest edge $e \in E_F$ such that $\psi(S(e)) = \psi_H(S)$, i.e., that $U_H(S(e))/\min\{|S(e)|, n - |S(e)|\} > U_H(S) / \min\{|S|, n-|S|\}$, as otherwise there is nothing to show. Furthermore, we can assume that $|S| \leq n/2$, and, by the previous lemma, that $H[S]$ is connected. 

Our goal is now to show that $S = X^{\downarrow}$ for some set $X \subseteq V(C(H))$. This is equivalent to showing that $E(S, V \setminus S)$ does not contain any forest edge $e \in E_F$. We will prove the claim by contradiction. Assume that there exists a forest edge $e \in E_F$ such that $e \in E(S, V \setminus S)$. Then, we will show that the cut $S' = S \cup S(e)$ satisfies $\psi_H(S') < \psi_H(S)$, which constitutes the contradiction. 

Notice that by assumption of $H[S]$ being connected, we know that $S \cap S(e) = \emptyset$. In particular, this implies that $U_H(S') = U_H(S) - U_H(S(e))$. We now first note that

\begin{align*}
& \psi_H(S') = \frac{U_H(S) - U_H(S(e))}{\min\{|S'|, n - |S'|\}} = \frac{U_H(S) - U_H(S(e))}{\min\{|S|+|S(e)|, n - |S| - |S(e)|\}}< \frac{U_H(S)}{|S|} = \psi_H(S)\\
& \Leftrightarrow \quad \rbrack{U_H(S) - U_H(S(e))} |S| < U_H(S) \min\{|S|+|S(e)|, n - |S| - |S(e)|\} \\
& \Leftrightarrow \quad U_H(S) \rbrack{|S| - \min\{|S|+|S(e)|, n - |S| - |S(e)|\}} < U_H(S(e)) |S|.
\end{align*}

Using the assumption that $\psi_H(S) < \psi_H(S(e))$, it thus suffices to show that 
\[U_H(S) \rbrack{|S| - \min\{|S|+|S(e)|, n - |S| - |S(e)|\}} \leq U_H(S) \min\{|S(e)|, n - |S(e)|\},\]
i.e., that 
\[
|S| - \min\{|S|+|S(e)|, n - |S| - |S(e)|\} \leq \min\{|S(e)|, n - |S(e)|\}.
\]
Consider first the case that $\min\{|S|+|S(e)|, n - |S| - |S(e)|\} = |S|+|S(e)|$. Then the statement is trivially true as $|S| - (|S|+|S(e)|) = -|S(e)| \leq \min\{|S(e)|, n - |S(e)|\}$. Consider now the case that $\min\{|S|+|S(e)|, n - |S| - |S(e)|\} = n-|S|-|S(e)|$ and that $\min\{|S(e)|, n - |S(e)|\} = |S(e)|$. Then we need to show that $2|S| - n + |S(e)| \leq |S(e)|$, i.e., that $2|S| \leq n$, which is true as $|S| \leq n/2$. Finally, consider the case that $\min\{|S|+|S(e)|, n - |S| - |S(e)|\} = n-|S|-|S(e)|$ and that $\min\{|S(e)|, n - |S(e)|\} = n - |S(e)|$. Then we need to show that $2 |S| + |S(e)| - n \leq n - |S(e)|$, i.e., that $2|S| + 2|S(e)| \leq 2n$. This is true as $2|S|+2|S(e)| = 2 (|S|+|S(e)|) = 2 |S \cup S(e)| \leq 2n$.
\end{proof}

\paragraph{Maintaining The Sparsest Cut in The Forest}
We now want to  prove an extension to \Cref{th:single-level}. Before we can do this, let us generalize the notion of a sparsest cut. Suppose that the graph $H$ is additionally equipped with vertex weights $w \in \R_{>0}^{n}$, and for  a subset $S \subseteq V$ define $w(S) = \sum_{v \in S} w(v)$. Then the sparsity of a cut $S$ is defined as $\psi_H(S) = U_G(S)/\min \{w(S), w(V) -w(S)\}$, and the sparsest cut value as $\psi(H) := \min_{S \subseteq V} \psi_H(S)$. Notice that we recover the previous definition by setting $w = \vecone$.

\begin{lemma}\label{lma:maintain-forest-cuts}
    Let $H_1, \ldots, H_k$ be the set of $k = O(\log n)$ many $O(j)$-trees maintained by \Cref{th:single-level}. Suppose further that the underlying dynamic graph $G$ on which the algorithm is running is equipped with vertex weights $w \in \R_{>0}^{n}$. 
    
    Suppose that whenever the underlying graph $G$ on which \Cref{th:single-level} is running changes by an edge insertion or deletion, additionally $O(1)$ entries of the weight function $w$ get updated in an arbitrary fashion. 
    
    Then, for every $H_i$, we can dynamically maintain the minimum (weighted) sparsity among its forest edges, i.e., we maintain the value $\min_{e \in E(F(H_i))} U_{H_i}(S(e)) / \min \{w(S), w(V) -w(S)\}$. This can be done in time $\Otil(n/j)$ per update. In particular, the asymptotic runtime of \Cref{th:single-level} does not change.
\end{lemma}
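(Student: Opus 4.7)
The plan is to combine a structural observation about the $j$-tree with an Euler Tour Tree (\Cref{thm:ett}) maintained on each forest $F_i$. I will first argue that the cut value $U_{H_i}(S(e))$ simplifies dramatically for every forest edge, then describe the augmented data structure together with its amortized cost.

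The key structural observation is that, for a forest edge $e = (p, c) \in E(F(H_i))$ with $c$ the child of $p$, the set $S(e) = c^{\downarrow}$ lies entirely outside the root set $R_i$: $c$ is a child, hence not a root, and descendants of non-roots cannot be roots since every vertex lies in exactly one tree of $F_i$. Because the core $C(H_i) = G/F_i$ is, by \Cref{def:jtree}, the subgraph of $H_i$ induced on $R_i$, no core edge can have an endpoint in $S(e)$, and the only forest edge with exactly one endpoint in $S(e)$ is $e$ itself. Hence $U_{H_i}(S(e)) = u_{F_i}(e)$, and the sparsity of edge $e$ reduces to $u_{F_i}(e)/\min\{w(c^{\downarrow}),\, W - w(c^{\downarrow})\}$, where $W := w(V)$ is tracked as a scalar.

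Next, I will augment the maintenance of each $F_i$ from \Cref{algo:maintain} with an Euler Tour Tree storing the vertex weights $z(v) = w(v)$. Each of the $O(1)$ weight updates accompanying a graph update is propagated by an $O(\log n)$ call to $\textsc{Add}$, and each structural $\textsc{Cut}$ in $F_i$ is likewise propagated in $O(\log n)$; $W$ is adjusted in constant time. Thus the per-update overhead is $O(\log n)$. To evaluate the minimum over $e \in E(F_i)$ of the sparsity, we iterate over all forest edges, call $\textsc{Sum}(c)$ in $O(\log n)$ for each, evaluate the formula above, and take the minimum, which costs $O(|E(F_i)|\log n) = O(n\log n)$ per scan.

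Because \Cref{th:single-level} rebuilds from scratch every $\Theta(j)$ updates, I can afford one full scan per rebuild phase (or, equivalently, per sparsest-cut query) to refresh the stored minimum-sparsity value. Amortized over $\Theta(j)$ updates, the scan contributes $O(n\log n / j) = \Otil(n/j)$ per update, which together with the $O(\log n)$ per-update ETT cost yields the claimed amortized bound. Since $\Otil(n/j)$ is dominated by the existing $\Otil(n^2/j^2)$ amortized update cost of \Cref{th:single-level} (for $j \leq n$), the asymptotic runtime is preserved. The main hurdle is the structural simplification $U_{H_i}(S(e)) = u_{F_i}(e)$; once verified, the rest is routine Euler-Tour-Tree bookkeeping combined with the rebuild-based amortization already present in the single-level scheme.
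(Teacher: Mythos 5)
Your structural observation that $U_{H_i}(S(e)) = u_{F_i}(e)$ for every forest edge $e=(p,c)$ is correct and worth stating explicitly: since $c^{\downarrow}$ is disjoint from the root set, no core edge crosses the cut and the only crossing forest edge is $e$ itself. The ETT bookkeeping for $w(v^{\downarrow})$ also matches what the paper does. However, there is a genuine gap in how you keep the \emph{minimum} current. Your plan refreshes the stored minimum only by a full $O(n\log n)$ scan once per rebuild phase (or per query). Between scans the stored value is simply wrong: a single weight update $w(x)\gets w(x)+\Delta$ changes $w(S(e'))$ --- and hence the sparsity --- of \emph{every} edge $e'$ on the $x$-to-$\root_{F_i}(x)$ path, and an edge deletion in $F_i$ changes $w(S(e'))$ for every edge on the root path of the detached subtree's former parent. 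Without further structure these paths can have length $\Omega(n)$, so there is no way to keep a priority queue (or any representation of the minimum) current in $\Otil(n/j)$ per update, and deferring the scan to query time gives $\Otil(n)$ query cost, which breaks the downstream bound of $O(j^{2}\log^{O(L)}n)$ in \Cref{thm:sparsest-cut} whenever $j\ll\sqrt{n}$.

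The missing idea is the path-length reduction borrowed from \cite{Chen:2020aa}: at initialization one adds $O(j)$ extra vertices to the branch-free root set $R_i$ so that every leaf-to-root path in $F_i$ has length $O(n/j)$ (repeatedly cutting any root path of length exceeding $n/j$ roughly in half; each cut shrinks a component by at least $n/(2j)$, so only $O(j)$ cuts are needed, and the forest being decremental means paths never grow back). With this invariant, each update affects the sparsities of only the $O(n/j)$ edges on a single root path, and these can be recomputed exactly, each in $O(\log n)$ time via \Cref{thm:ett}, yielding a correctly maintained minimum in $\Otil(n/j)$ worst-case time per update. Your amortization arithmetic is not the problem; the problem is that without this invariant the quantity you claim to maintain is stale for up to $\Theta(j)$ updates at a time.
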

\begin{proof}
    We employ a path-length reduction idea from \cite{Chen:2020aa} by making sure that after initializing the forest $F_i$, all paths on the forest have length at most $\Otil(n/j)$. Since the forest is decremental, the paths can only get shorter over time. Assuming such short path lengths, note first that \Cref{thm:madry} guarantees us that after the underlying graph receives an update, the forest $F_i$ changes by at most $O(1)$ edge deletions. If the forest $F_i$ changes by the deletion of an edge $e = (u, v)$, where $v$ is the parent of $u$, the only edges in the forest whose sparsity could have changed are those on the $v$-$\root_{F_i}(v)$ path. By running the data structure from \Cref{thm:ett} on the dynamic forest $F_i$, we can furthermore maintain under updates to $F_i$ and $w$, in $O(\log n)$ update time query access to $w(u^{\downarrow})$, i.e., the sum of the weights of the sub-tree rooted at vertex $u$. Thus, we can (re-)compute the sparsity of each edge on the $v$-$\root_{F_i}(v)$ path in $O(\log n)$ time each. As the path has length at most $O(n/j)$, this shows that we can maintain the sparsity of every edge in $F$ in $\Otil(n/j)$ time.

    For completeness, we now also show how to initialize $F_i$ to satisfy the bound on its path lengths. To do so, remember the proof of \Cref{thm:madry}, where we initially specified a branch-root set $R_i$ of size $|R_i| = O(j)$, and received from it the forest $F_i$ by cutting tree edges $e_{\min}^{T_i}(u, v)$ between vertices $u, v$ in the initial spanning tree $T_i$ of \Cref{thm:madry}. In the following, we show that we can find a set $S$ of size $|S| = O(j)$, such that after setting $R_i' = R_i \cup S$, making $R_i'$ branch-free, by using the set $R_i'$ as the initial set in \Cref{thm:jtree}, the corresponding forest $F_i$ satisfies that all its paths have length at most $O(n/j)$.

    To find the set $S$, we proceed as follows. If there exists a path of length $n/j$ from a leaf $v$ to its root $r$, we find an edge $e = (x, y)$ along the path such that the length of the path from $v$ to $x$ is at least $n/(2j)$. Then, we add both endpoints of $x, y$ to the set $R'$. We then know that \Cref{thm:jtree} updates the forest $F_i$ by removing the edge $e$. We then repeat this procedure until there are no more such paths. Each time this procedure is performed, a connected component $T$ of $F_i$ decreases by size at least $n/(2j)$. As for connected components of size less than $n/(2j)$, all paths have length at most $n/j$, it thus follows that the total number of iterations are $O(j)$, as desired. Each iteration can be performed in $O(n/j)$ time, so the total amount of time spent to find the set is $O(n)$.
\end{proof}

\paragraph{Putting Everything Together}


Let $H$ be an $O(j)$-tree from the set $\cH$, such that $\Tilde{C}_L$ is its core and $F = F_L \cup \ldots \cup F_1$ the corresponding forest. Let $w \in \R_{>0}^{V(\Tilde{C}_L)}$ be vertex weights satisfying $w(v) = |v^{\downarrow}|$, i.e., the size of the component $T$ in $F$ with root $v$. Then \Cref{lma:sparsest-cut-structure} tells us that we can find a sparsest cut in $H$ by
\begin{enumerate}
    \item Computing the sparsity of the cuts induced by removing a single forest edge, i.e., compute $\min_{e \in E_F} \psi_H(S(e))$.
    \item Computing the value of the sparsest cut that only cuts edges with endpoints in the core. This is equivalent to computing the value of a sparsest cut in the graph $\Tilde{C}_L$ with vertex weights $w(v) = |v^{\downarrow}|$.
    \item Finally, return the minimum value of both type of cuts.
\end{enumerate}

Our algorithm thus works by, for each $H \in \cH$, (1) maintaining the edge $e \in E_F$ that minimizes $\psi_H(S(e))$ (2) upon query employing a standard non-uniform sparsest cut algorithm. We first state the guarantees of the static algorithm we are using, which is obtained as a special case of approximate generalized sparsest cut algorithms. 

\begin{lemma}\label{thm:non-uniform-sparsest-cut}
There is an algorithm that, on a graph $G$ with vertex weights $w$, returns an $\Otil(\log n)$-approximate value to the vertex-weighted sparsest cut in $\Otil(m^2)$ time. 
\end{lemma}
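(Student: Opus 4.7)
The plan is to reduce the vertex-weighted sparsest cut problem to the standard non-uniform (demand-weighted) sparsest cut problem and then invoke a well-known $O(\log n)$-approximation algorithm. In the demand-weighted version, one is given non-negative demands $d_{uv}$ between pairs of vertices and wishes to minimize $U_G(S)/D(S)$, where $D(S) = \sum_{u \in S, v \in V \setminus S} d_{uv}$. To reduce our problem, set $d_{uv} = w(u)\,w(v)$ for every pair $u,v$. Then $D(S) = w(S)\,w(V\setminus S)$, and since
\[
\tfrac{1}{2}\,w(V)\cdot \min\{w(S),\,w(V\setminus S)\} \;\le\; w(S)\,w(V\setminus S) \;\le\; w(V)\cdot \min\{w(S),\,w(V\setminus S)\},
\]
the ratio $U_G(S)/D(S)$ agrees with $U_G(S)/\min\{w(S),w(V\setminus S)\}$ up to an absolute constant factor (after dividing out the global constant $w(V)$). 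So any $\Otil(\log n)$-approximation for demand-weighted sparsest cut carries over to our vertex-weighted sparsest cut with the same asymptotic guarantee.

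Next, I would invoke the classical $O(\log n)$-approximation pipeline of Linial--London--Rabinovich combined with the LP relaxation of Leighton--Rao: solve the metric LP relaxation that assigns a semimetric $\delta$ on $V$ minimizing $\sum_{e=(u,v)\in E} u_G(e)\,\delta(u,v)$ subject to $\sum_{u,v} d_{uv}\,\delta(u,v) = 1$ and the triangle inequalities; then embed the resulting finite metric into $\ell_1$ via Bourgain's embedding (with distortion $O(\log n)$); finally, for each coordinate of the embedding, sweep over all threshold cuts and return the minimum-sparsity one encountered. The output is an $O(\log n)$-approximate sparsest cut with respect to the demands $d_{uv}$, which by the reduction above gives the desired approximation for vertex-weighted sparsest cut.

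The main obstacle is the running time. The metric LP has $\Theta(n^2)$ distance variables and $\Theta(n^3)$ triangle-inequality constraints, so off-the-shelf interior-point methods are too slow for an $\Otil(m^2)$ bound. To obtain $\Otil(m^2)$, I would instead use a multiplicative-weights / min-cost concurrent-flow solver in the style of Garg--K\"onemann or Fleischer applied to the dual max-concurrent-flow LP, which computes an $(1+\epsilon)$-approximate optimum in $\Otil(m^2)$ time for general demands. The sweep over the Bourgain embedding (constructed from the approximate LP metric by $O(\log^2 n)$ random subsets) takes $\Otil(m)$ per coordinate and $\Otil(m\log^2 n)$ in total, which is negligible compared to the LP-solving step. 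Combining, the algorithm returns an $\Otil(\log n)$-approximate vertex-weighted sparsest cut value in $\Otil(m^2)$ time, as claimed.
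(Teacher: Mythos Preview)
Your approach is correct in spirit but takes a genuinely different route from the paper. The paper stays self-contained and reuses its own machinery: it invokes \Cref{thm:madry} with $j=1$ to produce, in $\Otil(m^2)$ time, a collection of $\Otil(m)$ spanning trees whose average $\Otil(\log n)$-embeds into $G$; it then samples $O(\log n)$ of them uniformly, computes the (vertex-weighted) sparsest cut on each tree in $O(n)$ time by checking every tree edge, and returns the minimum. Correctness follows from the same Markov-inequality argument as in \Cref{lma:jtreesample}. No LP, no embedding---just the R\"acke/Madry tree approximation that the paper has already built.

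Your LP-plus-Bourgain pipeline is the classical alternative and is conceptually fine, with one caveat worth tightening: you assert an $\Otil(m^2)$ bound for the concurrent-flow solver ``for general demands,'' but the standard Garg--K\"onemann/Fleischer bounds carry a dependence on the number of commodities $k$, and with all-pairs demands $k=\Theta(n^2)$ this can exceed $\Otil(m^2)$ on sparse inputs (which is exactly the regime where the lemma is invoked, on the $\Otil(j)$-edge sparsified cores). For the product-demand case $d_{uv}=w(u)w(v)$ one can indeed recover an $\Otil(m^2)$-type bound (the flow decomposes by source, so effectively $k=n$), but you should make that reduction explicit rather than citing the general-demand runtime. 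The paper's tree-based argument sidesteps this bookkeeping entirely, which is its main advantage here; your route, on the other hand, is deterministic and independent of the rest of the paper's framework.
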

\begin{proof}
Using \Cref{thm:madry} we obtain a collection of $\Otil(m)$ many trees ($j$-trees for $j = 1$) in $\tilde{O}(m^2)$ time. By \Cref{lma:cutsample}, sampling $O(\log n)$ of these trees uniformly at random guarantees that, with probability $1-1/n^c$, the vertex-weighted sparsest cut is preserved up to an $\Otil(\log n)$ factor by computing it for each tree separately and then returning the minimum value found. On a tree, however, this can trivially be done in $O(n)$ time by separately checking each edge.
\end{proof}

\medbreak


\begin{proof}[Proof of \Cref{thm:sparsest-cut}]
The data structure from \Cref{th:main} guarantees us that after any update to $G$ and for for any cut $S$, we have $U_G(S) \leq U_H(S)$ for all $H \in \cH$. Additionally, there with high probability exists an $H \in \cH$ such that $U_H(S) \leq \Otil(\log n)^L U_G(S)$. In particular, with high probability there exists an $H \in \cH$ for which the sparsest cut $S$ in $G$ satisfies $\psi_H(S) \leq \Otil(\log n)^L \psi_G(S)$. We can union bound the failure probability of over all updates performed to $G$, and thus assume that at all times there exists an $H$ such that $\psi(H) \leq \Otil(\log n)^L \psi(G)$. Hence, if we can compute an $\Otil(\log n)$-approximation of the sparsest cut for all $H \in \cH$, returning the minimum value found will yield the desired approximation.

Given a graph $H = \Tilde{C}_L \cup F$ with $F = F_L \cup \ldots \cup F_1$, by \Cref{lma:sparsest-cut-structure} we can achieve this goal by (1) maintaining $\min_{e \in E_F} \psi_H(S(e))$ and (2) upon query computing an $\Otil(\log n)$ approximation to the minimum sparsity of all cuts $S$ that cut only edges inside the core. We start by discussing the former.

\underline{Sparsest Cut in $F$}: In order to compute $\min_{e \in E_F} \psi_H(S(e))$, we compute $\min_{e \in E_{F_i}} \psi_H(S(e))$ for all forests $F_i$ and then return the minimum over all $i = 1, \ldots, L$. Remember, however, that $F_i$ as maintained by \Cref{thm:jtree} is a forest on vertex set $V(\Tilde{C}_{i-1})$ and \emph{not} on the full vertex set $V(H)$. That means that if we want to use \Cref{lma:maintain-forest-cuts} to maintain the sparsities, we need to introduce vertex weights $w(v) = |v^{\downarrow}|$, the size of the sub-tree of the \emph{full forest} $F$ rooted at $v$. However, this is precisely what the data structure from \Cref{lma:maintain-forest-cuts} enables us to do. Consequently, we can maintain the values of the sparsest cut in $F$ at all times. 

\underline{Sparsest Cut in $\Tilde{C}_L$}: Each graph $\Tilde{C}_L$ is a sparsified core and contains only $\Otil(j)$ edges. Consequently, we can employ the algorithm of \Cref{thm:non-uniform-sparsest-cut} with vertex weights $w(v) = |v^{\downarrow}|$ to compute in time $\Otil(j^{2})$ an $\Otil(\log n)$-approximation to the minimum sparsity of all cuts $S$ that cut only edges inside the core.


We can now conclude the proof by noting that $|\cH| = O(\log^L n)$, so that the total time spent upon query is $O(j^{2} \log^{O(L)}n)$.





\end{proof}

\bibliographystyle{siamplain}
\bibliography{references.bib}
\end{document}